\newcommand{\ix}{\textsc{Exp3-IX}\xspace}
\newcommand{\ex}{\textsc{Exp3}\xspace}
\newcommand{\xmark}{\ding{55}}%
\definecolor{Green}{rgb}{0.13, 0.65, 0.3}
\DeclareMathOperator*{\argmin}{argmin} 
\newcommand{\calA}{\mathcal{A}}
\newcommand{\calZ}{\mathcal{Z}}
\newcommand{\calS}{\mathcal{S}}
\newcommand{\E}{\mathbb{E}}
\newcommand{\R}{\mathbb{R}}
\newcommand{\order}{\mathcal{O}}
\newcommand{\hatell}{\widehat{\ell}}
\newcommand{\calP}{\mathcal{P}}
\newcommand{\overV}{\overline{V}}
\newcommand{\underV}{\underline{V}}
\newcommand{\overv}{\overline{v}}
\newcommand{\underv}{\underline{v}}
\newcommand{\one}{\mathbf{1}}
\newcommand{\halfcheck}{\checkmark\kern-1.1ex\raisebox{.7ex}{\rotatebox[origin=c]{125}{--}}}
\newcommand{\fillcell}{\cellcolor{gray!19}}
\newcommand{\KL}{\text{KL}}
\newcommand{\Vup}{\overline{V}}
\newcommand{\Vlow}{\underline{V}}
\newcommand{\fup}{\overline{f}}
\newcommand{\flow}{\underline{f}}
\newcommand{\hatx}{\hat{x}}
\newcommand{\bns}{\textsf{bns}}
\newcommand{\xstar}{x^\star}
\newcommand{\ystar}{y^\star}
\newcommand{\zstar}{z^\star}
\newcommand{\calF}{\mathcal{F}}
\newcommand{\term}{\textbf{term}}
\newcommand{\x}{\textsf{x}}
\newcommand{\y}{\textsf{y}}
\newcommand{\dualgap}{\textsc{Gap}}
\newcommand{\xhat}{\hat{x}}
\newcommand{\yhat}{\hat{y}}
\newcommand{\zhat}{\hat{z}}
\newcommand{\ahat}{\hat{a}}
\newcommand{\bhat}{\hat{b}}
\newcommand{\Vuptil}{\accentset{\sim}{V}}
\newcommand{\Vlowtil}{\underaccent{\sim}{V}}
\newcommand{\underlambda}{\underline{\lambda}}
\newcommand{\underxi}{\underline{\xi}}
\newcommand{\underzeta}{\underline{\zeta}}
\newcommand{\overlambda}{\overline{\lambda}}
\newcommand{\overxi}{\overline{\xi}}
\newcommand{\overzeta}{\overline{\zeta}}
\newtheorem{theorem}{Theorem}
\newtheorem{assumption}{Assumption}
\newtheorem{lemma}{Lemma}
\newtheorem{proposition}{Proposition}
\newtheorem{corollary}{Corollary}
\newtheorem{definition}{Definition}
\newcommand{\chenyu}[1]{\textcolor{red}{[CY: #1]}}
\def\QED{{\phantom{x}} \hfill \ensuremath{\rule{1.3ex}{1.3ex}}}
\newcommand{\extraproof}[1]{\rm \trivlist 

\item[\hskip \labelsep{\bf Proof of #1. }]}
\def\endextraproof{\QED \endtrivlist}
\newcommand{\notshow}[1]{{}}
\newcommand{\AutoAdjust}[3]{{ \mathchoice{ \left #1 #2  \right #3}{#1 #2 #3}{#1 #2 #3}{#1 #2 #3} }}
\newcommand{\Xcomment}[1]{{}}
\newcommand{\InParentheses}[1]{\AutoAdjust{(}{#1}{)}}
\newcommand{\InBrackets}[1]{\AutoAdjust{[}{#1}{]}}
\newcommand{\InAngles}[1]{\AutoAdjust{\langle}{#1}{\rangle}}
\newcommand{\InNorms}[1]{\AutoAdjust{\|}{#1}{\|}}
\newcommand{\pref}[1]{\prettyref{#1}}
\newcommand{\savehyperref}[2]{\texorpdfstring{\hyperref[#1]{#2}}{#2}}
\title{Uncoupled and Convergent Learning in Two-Player Zero-Sum Markov Games with Bandit Feedback}
\author{
  Yang Cai \\
  Yale University \\
  \texttt{yang.cai@yale.edu} \\
  \And
  Haipeng Luo \\
  University of Southern California \\
  \texttt{haipengl@usc.edu} \\
  \AND
  Chen-Yu Wei \\
  University of Virginia \\
  \texttt{chenyu.wei@virginia.edu} \\
  \And
  Weiqiang Zheng \\
  Yale University \\
  \texttt{weiqiang.zheng@yale.edu} \\
}
\begin{document}

\maketitle

\begin{abstract}
  We revisit the problem of learning in two-player zero-sum Markov games, focusing on developing an algorithm that is \emph{uncoupled}, \emph{convergent}, and \emph{rational}, with non-asymptotic convergence rates to Nash equilibrium. We start from the case of stateless matrix game with bandit feedback as a warm-up, showing an $\order(t^{-\frac{1}{8}})$ last-iterate convergence rate. {To the best of our knowledge, this is the first result that obtains finite last-iterate convergence rate given access to only bandit feedback.} We extend our result to the case of irreducible Markov games, providing a last-iterate convergence rate of $\order(t^{-\frac{1}{9+\varepsilon}})$ for any $\varepsilon>0$. Finally, we study Markov games without any assumptions on the dynamics, and show a \textit{path convergence} rate, a new notion of convergence we define, of $\order(t^{-\frac{1}{10}})$. Our algorithm removes the coordination and prior knowledge requirement of \citep{wei2021last}, which pursued the same goals as us for irreducible Markov games. Our algorithm is related to \citep{chen2021sample, cen2021fast} and also builds on the entropy regularization technique. However, we remove their requirement of communications on the entropy values, making our algorithm entirely uncoupled. 
\end{abstract}

\section{Introduction}
{In multi-agent learning, a central question is how to design algorithms so that agents can \emph{independently} learn (i.e., with little coordination overhead) how to interact with each other.} Additionally, it is  desirable to maximally reuse existing single-agent learning algorithms, so that the multi-agent system can be built in a modular way. {Motivated by this question, \emph{decentralized} multi-agent learning emerges with the goal to design decentralized systems, in which no central controller governs the policies of the agents, and each agent learns based on only their local information -- just like in a single-agent algorithm.}  
In recent years, we have witnessed significant success of this new decentralized learning paradigm. For example, \emph{self-play}, where each agent independently deploys the same single-agent algorithm to play against each other without further direct supervision, plays a crucial role in the training of AlphaGo~\citep{silver2017mastering} and AI for Stratego~\citep{perolat2022mastering}. 

Despite the recent success, many important questions remain open in decentralized multi-agent learning. Indeed, unless the decentralized algorithm is carefully designed, self-play often falls short of attaining certain sought-after global characteristics, such as convergence to the global optimum or stability as seen in, for example, \citep{mertikopoulos2018cycles, bailey2018multiplicative}.

In this work, we revisit the problem of learning in two-player zero-sum Markov games, which has received extensive attention recently. Our goal is to design a decentralized algorithm that resembles standard single-agent reinforcement learning (RL) algorithms, but with an additional crucial assurance, that is, \emph{guaranteed convergence} when both players deploy the algorithm. The simultaneous pursuit of independence and convergence has been advocated widely  \citep{bowling2001rational, arslan2016decentralized, wei2021last, sayin2021decentralized}, while the results are still not entirely satisfactory. In particular, all of these results rely on  assumptions on the dynamics of the Markov game. Our paper takes the first step to remove such assumptions. 
 
More specifically, our goal is to design algorithms that simultaneously satisfy the following three properties (the definitions are adapted from \cite{bowling2001rational, daskalakis2011near}): 
\begin{itemize}[leftmargin=*]
    \item \textbf{Uncoupled}:
    Each player $i$'s action is generated by a standalone procedure $\calP_i$ which, in every round, only receives the current state and player $i$'s own reward as feedback (in particular, it has no knowledge about the actions or policies used by the opponent). 
    There is no communication or shared randomness between the  players.  
    \item \textbf{Convergent}: 
    The policy pair of the two players converges to a Nash equilibrium.
    \item \textbf{Rational}:
    If $\calP_i$ competes with an opponent who uses a policy sequence that converges to a stationary one, {then $\calP_i$ converges to the best response of this stationary policy}.  
\end{itemize}
The uncoupledness and rationality property capture the independence of the algorithm, while the convergence property provides a desirable global guarantee.  Interestingly, as argued in \cite{wei2021last}, if an algorithm is uncoupled and convergent, then it is also rational, so we only need to ensure that the algorithm is uncoupled and convergent. Regarding the notion of convergence, the standard definition above only allows \emph{last-iterate} convergence. Considering the difficulty of achieving such convergence, in the related work review (\pref{sec: related work}) and in the design of our algorithm for general Markov games (\pref{sec: general MG}), we also consider weaker notions of convergence, including the \emph{best-iterate} convergence, which only requires that the Cesaro mean of the duality gap is convergent, and the \emph{path} convergence, which only requires the convergence of the Cesaro mean of the duality gap \emph{assuming minimax/maximin policies are followed in future steps}. The precise definitions of these convergence notions are given at the end of \pref{sec: prelim}. 

\subsection{Our Contributions}
The main results in this work are as follows (see also \pref{tab:my_label} for comparisons with prior works): 
\begin{itemize}[leftmargin=*]
    \item As a warm-up, for the special case of matrix games with bandit feedback, we develop an uncoupled algorithm with a last-iterate convergence rate of $\order(t^{-\frac{1}{8}})$ under self-play (\pref{sec: matrix game}). To the best of our knowledge, this is the first algorithm with provable last-iterate convergence rate in the setting.
    \item Generalizing the ideas from matrix games, we further develop an uncoupled algorithm for irreducible Markov games with a last-iterate convergence rate of $\order(t^{-\frac{1}{9+\varepsilon}})$ for any $\varepsilon>0$ under self-play (\pref{sec: irreducible}).
    \item Finally, for general Markov games without additional assumptions, we develop an uncoupled algorithm with a path convergence rate of $\order(t^{-\frac{1}{10}})$ under self-play (\pref{sec: general MG}).
\end{itemize}
Our algorithms leverage recent advances on using entropy to regularize the policy updates \citep{cen2021fast, chen2021sample} and the Nash-V-styled value updates \citep{bai2020near}. 
On the one hand, compared to  \citep{cen2021fast, chen2021sample}, our algorithm has the following advantages: 1) it does not require the two players to exchange their entropy information, which allows our algorithm to be fully uncoupled; 2) it does not require the players to have coordinated policy updates, 3) it naturally extends to general Markov games without any assumptions on the dynamics (e.g., irreducibility). On the other hand, our algorithm inherits appealing properties of Nash-V \citep{bai2020near}, but additionally guarantees path convergence during execution. 

\notshow{
\begin{table}[ht]
        \caption{(Sample-based) Learning algorithms for finding NE in two-player zero-sum games. A halfcheck ``\halfcheck'' in the convergent column means that the policy convergence is proven only for one player (typically this is a result of asymmetric updates). (L) and (B) stand for last-iterate convergence and best-iterate convergence, respectively. (P) stands for path convergence, a weaker convergence notion we introduce (see \pref{sec: prelim}). \\ *: While~\citep{wei2021last} also proposes an uncoupled and convergent algorithm for irreducible Markov games, their algorithm requires coordinated updates and some prior knowledge of the game, while ours does not. See \pref{sec:non-asymptotic related work} for a more detailed discussion.}
    \begin{center}
    \addtolength{\leftskip} {-2cm} 
    \addtolength{\rightskip}{-2cm}
    \begin{tabular}{|c|c|c|c|}
        \hline
        \multirow{1}{*}{Environment} &
        \multirow{1}{*}{Algorithm} &
        \multirow{1}{*}{Uncoupled} &
        Convergent  \\
        \hline
        \multirow{2}{*}{Matrix game} & 
        \ex vs. \ex & 
        \checkmark &  
          \\ 
        \cline{2-4}
        & 
        \fillcell \pref{alg: bandit matrix game} & 
        \checkmark & 
        \checkmark (L)  
         \\  
        \hline
        \multirow{8}{*}{\makecell{Markov game + \\assumptions on dynamics} }  
        &  \cite{daskalakis2020independent}   & \checkmark & \halfcheck (B)   \\ 
        \cline{2-4}
         &  
        \makecell{\\ 
        \cite{zhao2022provably} \\ \cite{alacaoglu2022natural}} & 
        \checkmark 
        & 
        \halfcheck (L)   
         \\
        \cline{2-4}  & \cite{sayin2021decentralized}  & \checkmark  &    \\
          \cline{2-4}
         & \makecell{ \cite{chen2021sample} } &  & \checkmark (L)   \\
         \cline{2-4}
         & \makecell{ \cite{wei2021last} } & \checkmark$^*$  & \checkmark (L)   \\
         \cline{2-4}
         &  \fillcell\pref{alg: irreducible markov game} & \checkmark  & \checkmark (L)   \\
        \hline
        \multirow{8}{*}{Markov game} & \makecell{\cite{wei2017online} \\
        \cite{jafarnia2021learning} \\
        \cite{huang2022towards} \\ \cite{jin2022power} \\ \cite{xiong2022self}} &  & \halfcheck(B)   \\
        \cline{2-4} 
        & \makecell{\cite{bai2020provable} \\ \cite{xie2020} \\ \cite{liu2021sharp} \\ \cite{chen2022almost}} &  & \checkmark(B)   \\
        \cline{2-4} 
        & \makecell{\cite{bai2020near} \\ \cite{jin2021v}} & \checkmark  &   \\
        \cline{2-4}
        & \fillcell\pref{alg: general markov game}  & \checkmark  & \checkmark(P)   \\
        \hline
    \end{tabular}
    \end{center} 
    \label{tab:my_label}
\end{table}
}

\begin{table}[ht]
\centering
\caption{(Sample-based) Learning algorithms for finding NE in two-player zero-sum games. Our results are shaded. A halfcheck ``\halfcheck'' in the convergent column means that the policy convergence is proven only for one player (typically this is a result of asymmetric updates). (L) and (B) stand for last-iterate convergence and best-iterate convergence, respectively. (P) stands for path convergence, a weaker convergence notion we introduce (see \pref{sec: prelim}, \ref{subsec:path convergence}). \\ *: While~\citep{wei2021last} also proposes an uncoupled and convergent algorithm for irreducible Markov games, their algorithm requires coordinated updates and some prior knowledge of the game, while ours does not. See \pref{sec:non-asymptotic related work} for a more detailed discussion.}
\label{tab:my_label}
\begin{tabular}{llcc}
\hline
Setting &
  Algorithm &
  Uncoupled? &
  Converegent? \\ \hline
 &
  \ex vs. \ex &
  \checkmark &
  \xmark \\ \cline{2-4} 
\multirow{-2}{*}{Matrix Game} &
  \cellcolor[HTML]{C0C0C0}\pref{alg: bandit matrix game} &
  \cellcolor[HTML]{C0C0C0}\checkmark &
  \cellcolor[HTML]{C0C0C0}\checkmark (L) \\ \hline
 &
  \cite{daskalakis2020independent} &
  \checkmark &
  \halfcheck (B) \\ \cline{2-4} 
 &
  \cite{zhao2022provably, alacaoglu2022natural} &
  \checkmark &
  \halfcheck (L) \\ \cline{2-4} 
 &
  \cite{sayin2021decentralized} &
  \checkmark &
  \xmark \\ \cline{2-4} 
 &
  \cite{chen2021sample} &
  \xmark &
  \checkmark (L) \\ \cline{2-4} 
 &
  \cite{wei2021last} &
  \checkmark $^*$ &
  \checkmark (L) \\ \cline{2-4} 
\multirow{-6}{*}{\makecell[l]{Markov game + \\assumptions on dynamics}} &
  \cellcolor[HTML]{C0C0C0}\pref{alg: irreducible markov game} &
  \cellcolor[HTML]{C0C0C0}\checkmark &
  \cellcolor[HTML]{C0C0C0}\checkmark (L) \\ \hline
 &
  \begin{tabular}[c]{@{}c@{}}\makecell[l]{\cite{wei2017online, jafarnia2021learning, huang2022towards}\\     \cite{jin2022power, xiong2022self}}\end{tabular} &
  \xmark &
  \halfcheck (B) \\ \cline{2-4} 
 &
  \begin{tabular}[c]{@{}c@{}}\cite{bai2020provable, xie2020}\\ \cite{liu2021sharp, chen2022almost}\end{tabular} &
  \xmark &
  \checkmark (B) \\ \cline{2-4} 
 &
  \cite{bai2020near, jin2021v} &
  \checkmark &
  \xmark \\ \cline{2-4} 
\multirow{-4}{*}{Markov Game} &
  \cellcolor[HTML]{C0C0C0}\pref{alg: general markov game} &
  \cellcolor[HTML]{C0C0C0}\checkmark &
  \cellcolor[HTML]{C0C0C0}\checkmark (P) \\ \hline
\end{tabular}
\end{table}

\section{Related Work}\label{sec: related work}
The study of two-player zero-sum Markov games originated from~\cite{shapley1953stochastic}, with many other works further developing algorithms and establishing convergence properties~\cite{hoffman1966nonterminating,pollatschek1969algorithms, van1978discounted, filar1991algorithm}. However, these works primarily focused on \emph{solving} the game with full knowledge of its parameters (i.e., payoff function and transition kernel). The problem of \emph{learning} in zero-sum games was first formalized by~\cite{littman1994markov}. Designing a \emph{provably} uncoupled, rational, and convergent algorithm is challenging, with many attempts~\cite{szepesvari1999unified, bowling2001rational,hu2003nash, conitzer2007awesome, arslan2016decentralized, sayin2022fictitious} falling short in one aspect or another, often lacking either uncoupledness or convergence. Moreover, these works only establish asymptotic convergence without providing a concrete convergence rate.

\notshow{
\sloppy The study of two-player zero-sum Markov game originated from \cite{shapley1953stochastic}, who introduced the model and proposed a value-iteration-based algorithm to solve it. Later, \cite{hoffman1966nonterminating,pollatschek1969algorithms, van1978discounted, filar1991algorithm} proposed policy iteration variants and established their convergence properties. These works mainly focused on \emph{solving} the game with full knowledge of the parameters of the game (i.e., payoff function and transition kernel). 
\fussy

The problem of \emph{learning} in two-player zero-sum Markov games is, to our knowledge, first formalized by \cite{littman1994markov}, who proposed the Minimax-Q algorithm. 
To execute Minimax-Q, each player needs to observe the actions taken by the opponent and approximates the $Q$-function 
using samples. In every round, each player uses minimax/maximin policies 
with $\epsilon$-greedy exploration. It is guaranteed that the $Q$-function 
maintained by each player converges to the optimal $Q$-function under the assumption that every state is visited infinitely often  \citep{szepesvari1999unified}. \cite{hu2003nash} generalized the Minimax-Q idea to general-sum Markov games, and called their algorithm Nash-Q. 

While the convergence of the $Q$-function implies that the game can be approximately solved, \cite{bowling2001rational} pointed out potential weaknesses of Minimax-Q in the online setting. They argued that if only one player uses Minimax-Q, his policy converges to the minimax policy instead of the best response to the opponent. Hence, Minimax-Q is not \emph{rational}. Thus, they proposed the Policy Hill-Climbing (PHC) strategy to fix this issue. Unlike minimax-Q, PHC is a single-agent Q-learning algorithm which only relies on the reward feedback (i.e., it is uncoupled). 
However, another issue emerges when both players use the PHC strategy --- the policy pair cycles and never converges to an equilibrium. To resolve this issue, \cite{bowling2001rational} further proposed the Win-or-Learn-Fast (WoLF) PHC variant, which is PHC with adaptive, non-monotone learning rate. They empirically showed that WoLF-PHC is not only rational, but also \emph{convergent}, in the sense that the when both players use this algorithm, the policy pair converges to an equilibrium. This convergence property, however, did not come with a theoretical guarantee even for the special case of matrix games.

Designing an algorithm that is \emph{provably} uncoupled, rational, and convergent is not as easy as it may seem.  \cite{conitzer2007awesome} designed a new algorithm and showed that their algorithm is convergent and rational simultaneously. However, the algorithm is not uncoupled and involves a complicated protocol between the players. \cite{arslan2016decentralized} designed an actor-critic algorithm that is convergent, rational, and uncoupled, but the result only holds for multi-stage games (i.e., tree-structured Markov games). \cite{sayin2022fictitious} analyzed the fictitious-play algorithm, and showed that the belief for the opponent's policy converges to an equilibrium. The algorithm is naturally rational, but is not uncoupled. Moreover, these works only establish asymptotic convergence without providing a concrete convergence rate. 
}

\subsection{Non-asymptotic convergence guarantees} \label{sec:non-asymptotic related work}
Recently, a large body of works on learning two-player zero-sum Markov games use regret minimization techniques to establish \emph{non-asymptotic} guarantees. They focus on fast computation under full information of payoff and transitions \citep{cen2021fast, cen2023faster, zhang2022policy, song2023can, yang2023ot}, though many of their algorithms are decentralized and can be viewed as the first step towards the learning setting.  

With rationality and uncoupledness satisfied, \cite{daskalakis2020independent} established one-sided policy convergence for players using independent policy gradient with asymmetric learning rates.
Such an asymmetric update rule is also adopted by \cite{zhao2022provably, alacaoglu2022natural} to establish one-sided policy convergence guarantees. When using a symmetric update rule, \cite{sayin2021decentralized} developed a decentralized-Q learning algorithm. However, the convergence is only shown for the $V$-function maintained by the players instead of the policies being used, so the policies may still cycle and are not provably convergent in our definition. \cite{erez2023regret} studied regret minimization in general-sum Markov games and provided an algorithm with sublinear regret under self-play and \emph{average-iterate} convergence rates to equibria, while our work focuses on last-iterate convergence rates to Nash equilibria.

To our knowledge, \cite{wei2021last} first provided an uncoupled, rational, and convergent algorithm with non-asymptotic convergence guarantee, albeit only for irreducible Markov game. They achieved this via \emph{optimistic gradient descent/ascent}. Despite satisfying all our criteria, their algorithm still has unnatural coordination between the players and a requirement on some prior knowledge of the game such as the maximum revisiting time of the Markov game.
Our algorithm removes all these extra requirements. A follow-up work by \cite{chen2021sample} improved the rate of \cite{wei2021last} using entropy regularization; however, this requires their players to inform the opponent about the entropy of their own policy, making the algorithm coupled again. We show that such an exchange of information is unnecessary under entropy regularization.

\subsection{Further handling exploration}
The algorithms introduced above all require full information or some assumption on the dynamics of the Markov game. To handle exploration, some works design coupled learning algorithms which guarantee that the player's long-term payoff is at least the minimax value \citep{brafman2002r, wei2017online, xie2020, huang2022towards,jin2022power, jafarnia2021learning, xiong2022self}.
Interestingly, as shown in \citep{wei2017online, huang2022towards,jin2022power, xiong2022self}, if the player is paired with an optimistic best-response opponent (instead of using the same algorithm), the first player's strategy can converge to the minimax policy. \cite{xie2020, bai2020provable,  liu2021sharp, chen2022almost} developed another coupled learning framework to handle exploration, but with symmetric updates on both players. In each round, the players need to jointly solve a general-sum equilibrium problem due to the different exploration bonus added by each player. Hence, the execution of these algorithms is more similar to the Nash-Q algorithm by \cite{hu2003nash}. 

So far, exploration has been handled through coupled approaches that are also not rational. To our knowledge, the first uncoupled and rational algorithm that handles exploration is the Nash-V algorithm by \cite{bai2020near}. Nash-V 
can output a nearly-minimax policy through weighted averaging \citep{jin2021v}; however, it is not provably convergent during execution. A major remaining open problem is whether one can design a natural algorithm that is provably rational, uncoupled, and convergent with exploration capability. 
Our work provides the first progress towards this goal.

\subsection{Other works on last-iterate convergence}
Uncoupled Learning dynamics in normal-form games with provable last-iterate convergence rate receives extensive attention recently. Most of the works assume that the players receive gradient feedback, and convergence results under bandit feedback remain sparse. Linear convergence is shown for strongly monotone games or bilinear games under gradient feedback~\citep{tseng1995linear,liang2019interaction,mokhtari2020convergence,wei2021linear} and sublinear rates are proven for strongly monotone games with bandit feedback~\citep{bravo2018bandit,hsieh2019convergence,lin2021doubly,tatarenko2022rate, drusvyatskiy2022improved, huang2023zeroth}. Convergence rate to strict Nash equilibrium is analyzed by \cite{giannou2021rate}. For monotone games that includes two-player zero-sum games as a special case, the last-iterate convergence rate of no-regret learning under gradient feedback has been shown recently~\citep{golowich2020tight,cai_finite_22,gorbunov2022last,cai2023doubly}. With bandit feedback, \cite{muthukumar2020impossibility} showed an impossibility result that certain algorithms with optimal $\order({\sqrt{T}})$ regret do not converge in last-iterate. To the best of our knowledge, there is no natural uncoupled learning dynamics with provable last-iterate convergence rate in two-player zero-sum games with bandit feedback.

\section{Preliminaries}\label{sec: prelim}
\paragraph{Basic Notations}  Throughout the paper, we assume for simplicity that the action set for the two players are the same, denoted by $\calA$ with cardinality $A = |\calA|$.\footnote{We make this assumption only to simplify notations; our proofs can be easily extended to the case where the action sets of the two players are different. } We usually call player 1 the $x$-player and player $2$ the $y$-player. The set of mixed strategies over an action set $\calA$ is denoted as $\Delta_\calA := \{x: \sum_{a \in \calA} x_a = 1; 0 \le x_a \le 1, \forall a \in \calA\}$. To simplify notation, we denote by $z = (x,y)$ the concatenated strategy of the players. 
We use $\phi$ as the entropy function such that $\phi(x) = -\sum_{a\in \calA} x_a \ln x_a$, and $\KL$ as the Kullback–Leibler (KL) divergence such that $\KL(x,x')= \sum_{a\in \calA} x_a \ln \frac{x_a}{x'_a}$.  The all-one vector is denoted by $\one = (1,1,\cdots,1)$ .

\paragraph{Matrix Games} In a two-player zero-sum matrix game with a loss matrix $G \in [0,1]^{A\times A}$, when the $x$-player chooses action $a$ and the $y$-player chooses action $b$, the $x$-player suffers loss $G_{a,b}$ and the $y$-player suffers loss $-G_{a.b}$. 
A pair of mixed strategy $(x^\star, y^\star)$ is a \emph{Nash equilibrium} for $G$ if for any strategy profile $(x,y) \in \Delta_\calA \times \Delta_\calA$, it holds that $(\xstar)^\top G y \le (\xstar)^\top G \ystar \le x^\top G\ystar$. Similarly, $(\xstar, \ystar)$ is a Nash equilibrium for a two-player zero-sum game with a general convex-concave loss function $f(x,y): \Delta_\calA \times \Delta_\calA \rightarrow \R$ if for all $ (x,y) \in \Delta_\calA \times \Delta_\calA$, $f(\xstar,y) \le f(\xstar, y^\star) \le f(x, \ystar)$. The celebrated minimax theorem~\citep{v1928theorie} guarantees the existence of Nash equilibria in two-player zero-sum games. For a pair of strategy $(x, y)$, we use \emph{duality gap} defined as $\dualgap(G,x,y) \triangleq \max_{y'} x^\top Gy' - \min_{x'} x'^\top Gy$ to measure its proximity to Nash equilibria. 

\paragraph{Markov Games} 
A generalization of matrix games, which models dynamically changing environment, is \emph{Markov games}.  We consider infinite-horizon discounted two-player zero-sum Markov games, denoted by a tuple $(\calS, \calA, (G^s)_{s\in\calS}, (P^s)_{s\in\calS}, \gamma)$ where (1) $\calS$ is a finite state space; (2) $\calA$ is a finite action space for both players; (3) Player 1 suffers loss $G^s_{a,b} \in [0,1]$ (respectively player 2 suffers loss $-G^s_{a,b}$) when player 1 chooses action $a$ and player 2 chooses action $b$ at state $s$; (4) $P$ is the transition function such that $P^s_{a,b}(s')$ is the probability of transiting to state $s'$ when player 1 plays $a$ and player 2 plays $b$ at state $s$; (5)  $\gamma \in [\frac{1}{2},1)$ is a discount factor.

A stationary policy for player 1 is a mapping $\calS \rightarrow \Delta_\calA$ that specifies player 1's strategy $x^s \in \Delta_\calA$ at each state $s \in \calS$. We denote $x = (x^s)_{s \in \calS}$. Similar notations apply to player 2. We denote $z^s = (x^s, y^s)$ as the concatenated strategy for the players and $z = (x,y)$. The value function $V^s_{x,y}$ denotes the expected loss of player 1 (or the expected payoff of player 2) given a pair of stationary policy $(x, y)$ and initial state $s$:
\[V^s_{x,y} = \mathbb{E} \InBrackets{\sum_{t=1}^\infty \gamma^{t-1} G^{s_t}_{a_t, b_t} | s_1 = s, a_t \sim x^{s_t}, b_t \sim y^{s_t}, s_{t+1} \sim P^{s_t}_{a_t,b_t}(\cdot) , \forall t \ge 1}.\]

The \emph{minimax game value} on state $s$ is defined as $V^s_\star = \min_x \max_y V^s_{x,y} = \max_y \min_x V^s_{x,y}.$
We call a pair of policy $(x_\star, y_\star)$ a \emph{Nash equilibrium} if it attains minimax game value of a state $s$ (such policy pair necessarily attains the minimax game value over all states). The \emph{duality gap} of $(x,y)$ is $\max_s \InParentheses{\max_{y'} V^s_{x,y'} - \min_{x'} V^s_{x', y}}$.
The $Q$-function on state $s$ under policy pair $(x,y)$ is defined via $Q^s_{x,y}(a,b) = G^s_{a,b} + \gamma\cdot\mathbb{E}_{s' \sim P^s_{a,b}(\cdot )} \InBrackets{V^{s'}_{x,y}}$, which can be rewritten as a matrix $Q^s_{x,y}$ such that $V^s_{x,y} = x^s Q^s_{x,y} y^s$. We denote $Q^s_\star = Q^s_{x_\star, y_\star}$ the $Q$-function under a Nash equilibrium $(x_\star, y_\star)$. It is known that $Q^s_\star$ is unique for any $s$ even when multiple equilibria exist.

\paragraph{Uncoupled Learning with Bandit Feedback} We assume the following uncoupled interaction protocol: at each round $t=1, \ldots, T$, the players both observe the current state $s_t$, and then, with the policy $x_t$ and $y_t$ in mind, they independently choose actions $a_t\sim x_t^{s_t}$ and $b_t\sim y_t^{s_t}$, respectively. Both of them then observe $\sigma_t\in[0,1]$ with $\E[\sigma_t]=G^{s_t}_{a_t,b_t}$, and proceed to the next state $s_{t+1}\sim P^{s_t}_{a_t,b_t}(\cdot)$. Importantly, they do not observe each other's action.

\paragraph{Notions of Convergence} \sloppy For Markov games with the irreducible assumption (\pref{assum: irreducible}), given players' history of play $(s_t, x_t, y_t)_{t\in [T]}$, the \emph{best-iterate} convergence rate is measured by the average duality gap $\frac{1}{T}\sum_{t=1}^T \max_{s,x,y} \InParentheses{ V^s_{x_t,y} - V^s_{x,y_t}}$, while the stronger \emph{last-iterate} convergence rate is measured by $\max_{s,x,y} \InParentheses{ V^s_{x_T,y} - V^s_{x,y_T}}$, i.e., the duality gap of $(x_T, y_T)$. For general Markov games, we propose the \emph{path} convergence rate, which is measured by the average duality gap at the visited states with respect to the optimal $Q$-function: $\frac{1}{T}\sum_{t=1}^T \max_{x,y} \InParentheses{x^{s_t^\top}_t Q^{s_t}_\star y^{s_t} - x^{s_t^\top} Q^{s_t}_\star y_t^{s_t}}$. 
We remark that the path convergence guarantee is weaker than the counterpart of the other two notions of convergence in general Markov games, but still provides meaningful implications (see detailed discussion in \pref{subsec:path convergence} and \pref{app: implication}). 
\fussy

\section{Matrix Games}\label{sec: matrix game}
In this section, we consider two-player zero-sum matrix games. We propose \pref{alg: bandit matrix game} for decentralized learning of Nash equilibria. We only present the algorithm for the $x$-player as the algorithm for the $y$-player is symmetric. 
\notshow{
\begin{algorithm}
    \caption{Bandit Game}\label{alg: bandit matrix game}
    \textbf{Define}: $\eta_t= t^{-k_\eta}$, $\beta_t =  t^{-k_\beta}$, $\epsilon_t =  t^{-k_\epsilon}$ where $k_\eta=\frac{5}{8}$, $k_\beta=\frac{3}{8}$, $k_\epsilon=\frac{1}{8}$. \\ $\Omega_t\triangleq \left\{x\in \Delta_\calA: x_a\geq \frac{1}{At^2},\, \forall a\in\calA\right\}$.  \\
    \textbf{Initialization}: $x_1=\frac{1}{A}\one$. \\
    \For{$t=1, 2, \ldots$}{
        Sample $a_t\sim x_t$, and receive $\sigma_t\in[0,1]$ with $\E\left[\sigma_t\right]=G_{a_t,b_t}$. \\
        Update 
        \begin{align}
            x_{t+1} &\leftarrow \argmin_{x\in \Omega_{t+1}}\left\{x^\top g_t + \frac{1}{\eta_t}\KL(x, x_t)\right\}   \label{eq: x t+1 }
        \end{align}
        where 
        \begin{align}
            g_{t,a} &= \frac{\one[a_t=a]\sigma_t}{x_{t,a} + \beta_t} + \epsilon_t\ln x_{t,a}.  \label{eq: loss estimator matrix game} 
        \end{align}
    }
\end{algorithm}}
\begin{algorithm}
    \caption{Matrix Game with Bandit Feedback}\label{alg: bandit matrix game}
    \begin{algorithmic}[1]
        \STATE {\bfseries Define:} $\eta_t= t^{-k_\eta}$, $\beta_t =  t^{-k_\beta}$, $\epsilon_t =  t^{-k_\epsilon}$ where $k_\eta=\frac{5}{8}$, $k_\beta=\frac{3}{8}$, $k_\epsilon=\frac{1}{8}$. \\ $\Omega_t= \left\{x\in \Delta_\calA: x_a\geq \frac{1}{At^2},\, \forall a\in\calA\right\}$.  \\
    \STATE {\bfseries Initialization:}: $x_1=\frac{1}{A}\one$. \\
    \FOR{$t=1, 2, \ldots$}
        \STATE Sample $a_t\sim x_t$, and receive $\sigma_t\in[0,1]$ with $\E\left[\sigma_t\right]=G_{a_t,b_t}$.
        \STATE Compute $g_t$ where $g_{t,a} = \frac{\one[a_t=a]\sigma_t}{x_{t,a} + \beta_t} + \epsilon_t\ln x_{t,a}, \forall a \in \calA$.\label{eq: loss estimator matrix game} 
        \STATE Update $x_{t+1} \leftarrow \argmin_{x\in \Omega_{t+1}}\left\{x^\top g_t + \frac{1}{\eta_t}\KL(x, x_t)\right\}$ \label{eq: x t+1 }.
    \ENDFOR
    \end{algorithmic}
\end{algorithm}

The algorithm is similar to the \ix algorithm by \cite{neu2015explore} that achieves a high-probability regret bound for adversarial multi-armed bandits, but with several  modifications. First (and most importantly), in addition to the standard loss estimators used in \citep{neu2015explore}, we add another negative term $\epsilon_t \ln x_{t,a}$ to the loss estimator of action $a$ (see Line~\ref{eq: loss estimator matrix game}). This is equivalent to the entropy regularization approach in, e.g., \citep{cen2021fast, chen2021sample}, since the gradient of the negative entropy $-\phi(x_t)$ is $(\ln x_{t,a}+1)_{a\in\calA}$ and the constant $1$ takes no effect in Line~\ref{eq: x t+1 }. Like \citep{cen2021fast, chen2021sample}, the entropy regularization drives last-iterate convergence; however, while their results require full-information feedback, our result holds in the bandit feedback setting. The second difference is that instead of choosing the players' strategies in the full probability simplex $\Delta_{\calA}$, our algorithm chooses from $\Omega_{t}$, a subset of $\Delta_{\calA}$ where every coordinate is lower bounded by $\frac{1}{At^2}$. The third is the choices of the learning rate $\eta_t$, clipping factor $\beta_t$, and the amount of regularization $\epsilon_t$. The main result of this section is the following last-iterate convergence rate of \pref{alg: bandit matrix game}.
\begin{theorem}[Last-Iterate Convergence Rate]
\label{thm: main theorem for bandit}
    \pref{alg: bandit matrix game} guarantees with probability at least $1-\order(\delta)$, for any $t\geq 1$, \[ \max_{x,y\in \Delta_A}\InParentheses{x_t^\top Gy - x^\top Gy_t} = \order\left(\sqrt{A}\ln^{3/2}(At/\delta)t^{-\frac{1}{8}}\right).\]
\end{theorem}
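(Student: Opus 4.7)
The plan is to compare the iterate $z_t=(x_t,y_t)$ to the quantal response equilibrium (QRE) $z^\star_{\epsilon_t}=(x^\star_{\epsilon_t},y^\star_{\epsilon_t})$ of the entropy-regularized game $\min_x\max_y\{x^\top Gy+\epsilon_t\phi(y)-\epsilon_t\phi(x)\}$, which is the unique fixed point that the regularizer $\epsilon_t\ln x_{t,a}$ in Line~\ref{eq: loss estimator matrix game} is pulling the iterates toward (unique by $\epsilon_t$-strong convexity-concavity in the KL sense). Since $0\le\phi\le\ln A$ on the simplex, $\dualgap(G,z^\star_{\epsilon_t})=\order(\epsilon_t\ln A)$, and combining this with the Lipschitz continuity of $(x,y)\mapsto x^\top Gy$ in $\|\cdot\|_1$ together with Pinsker's inequality yields
\[\dualgap(G,z_t)\le\order(\epsilon_t\ln A)+\order\bigl(\sqrt{\KL(z^\star_{\epsilon_t},z_t)}\bigr).\]
So it suffices to establish $\KL(z^\star_{\epsilon_t},z_t)=\otil(At^{-1/4})$ with probability $\ge 1-\order(\delta)$; Pinsker will then give $\|z_t-z^\star_{\epsilon_t}\|_1=\otil(\sqrt{A}t^{-1/8})$, matching the $\epsilon_t\ln A=t^{-1/8}\ln A$ piece and producing the claimed rate.

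The workhorse is a contracting recursion for $U_t\triangleq\KL(z^\star_{\epsilon_t},z_t)$. Applying the three-point identity to the KL-regularized update in Line~\ref{eq: x t+1 } (aggregated over the two symmetric players) and decomposing $g_t=F_{\epsilon_t}(z_t)+\text{bias}_t+\text{noise}_t$---where $F_{\epsilon_t}$ is the gradient of the regularized objective, $\text{bias}_t$ is the coordinate-wise \ix clipping bias of magnitude at most $\beta_t/(z_{t,a}+\beta_t)$, and $\text{noise}_t$ is a zero-mean conditional martingale increment---the entropy-induced strong monotonicity $\langle F_{\epsilon_t}(z_t),z_t-z^\star_{\epsilon_t}\rangle\ge\epsilon_t U_t$ yields
\[U_{t+1}\le(1-c\eta_t\epsilon_t)U_t+\eta_t^2\cdot\order(A)+\eta_t|\langle\text{bias}_t,z_t-z^\star_{\epsilon_t}\rangle|+\eta_t\langle\text{noise}_t,z_t-z^\star_{\epsilon_t}\rangle+D_t,\]
where the $\order(A)$ bound uses the standard \ix second-moment inequality $\E\sum_a x_{t,a}g_{t,a}^2\le A$, and $D_t\lesssim|\epsilon_{t+1}-\epsilon_t|/\epsilon_t$ absorbs the drift of the moving target $z^\star_{\epsilon_t}$ together with the effect of projecting onto $\Omega_{t+1}$. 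With the chosen schedules, the contraction rate $\eta_t\epsilon_t=t^{-3/4}$ balances against the variance $\eta_t^2 A=At^{-5/4}$ and the bias budget $\eta_t\beta_t=t^{-1}$, yielding the target steady state $U_t=\otil(At^{-1/4})$. Upgrading to high probability uses the sub-exponential tail inequality for \ix of~\citep{neu2015explore} on the cumulative noise plus Freedman's inequality, and the constraint set $\Omega_t$ caps $|\ln x_{t,a}|$ at $\order(\ln(At))$ so the regularizer term in $g_t$ is uniformly bounded, producing the $\ln^{3/2}(At/\delta)$ factor.

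The main obstacle I anticipate is controlling $|\langle\text{bias}_t,z_t-z^\star_{\epsilon_t}\rangle|$ when $z^\star_{\epsilon_t}$ has coordinates below the clipping threshold $1/(At^2)$ of $\Omega_t$---this genuinely arises because $z^\star_{\epsilon_t,a}\propto\exp(-(Gy^\star_{\epsilon_t})_a/\epsilon_t)$ can be exponentially small in $t^{1/8}$, so a naive per-step split of the bias inner product will not close against the $\eta_t\epsilon_t U_t$ contraction. My planned resolution is to track an $\Omega_t$-projected surrogate $\widetilde z^\star_{\epsilon_t}$ in place of $z^\star_{\epsilon_t}$, which perturbs the KL potential by only $\otil(1/t)$ per step (the clipped mass is $\le 1/t$) and preserves the strong-monotonicity inequality up to a negligible additive loss, combined with the \ix anti-concentration inequality of~\citep{neu2015explore} for the cumulative bias so it concentrates better than its worst-case per-step bound. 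This delicate bookkeeping of projection bias, \ix bias, variance, drift, and contraction---together with the AM-GM balancing of $\eta_t$, $\beta_t$, and $\epsilon_t$---is precisely what forces the specific exponents $5/8$, $3/8$, $1/8$ and the resulting $t^{-1/8}$ last-iterate rate.
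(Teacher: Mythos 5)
Your proposal is correct and follows essentially the same route as the paper's proof: a KL-contraction recursion toward the equilibrium of the entropy-regularized game driven by the $-\epsilon_t\KL(\cdot,\cdot)$ strong-convexity term, \ix-style concentration for the bias, variance, and noise terms, a drift term of order $\ln^2(At)/t$ for the moving equilibrium, and finally Pinsker's inequality plus the $\order(\epsilon_t\ln A)$ approximation of the regularized equilibrium. The only cosmetic difference is that the paper sidesteps your anticipated small-coordinate obstacle by defining the comparator from the outset as the Nash equilibrium of $f_t$ restricted to $\Omega_t\times\Omega_t$ (rather than projecting the unconstrained quantal response equilibrium), so the variational inequality and the mirror-descent three-point bound apply exactly without a surrogate.
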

\pref{alg: bandit matrix game} also guarantees $\order(t^{-\frac{1}{8}})$ regret even when the other player is adversarial. If we only target at an \emph{expected} bound instead of a high-probability bound, the last-iterate convergence rate can be improved to $\order(\sqrt{A}\ln^{3/2}(At)t^{-\frac{1}{6}})$. The details are provided in \pref{app: improved expected rate}. 

\notshow{ Our choices lead to average regret of $\order(t^{-\frac{1}{8}})$ when competing with an arbitrary opponent, which is sub-optimal for adversarial bandits (with optimal rate being $\order(t^{-\frac{1}{2}})$); however, they are the best tuning for the duality gap convergence under our algorithmic/analysis framework.  As mentioned in \pref{sec: related work}, the problem of achieving ``best of both worlds'' such that the rate is optimal both in regret and in last-iterate convergence remains open, and there is some negative result in \citep{muthukumar2020impossibility}. The main result of this section is the following last-iterate convergence rate of \pref{alg: bandit matrix game}.}

\subsection{Analysis Overview}
We define a regularized zero-sum game with loss function $f_t(x,y) = x^\top G y - \epsilon_t \phi(x) + \epsilon_t \phi(y)$ over domain $\Omega_t \times \Omega_t$, and denote by $\zstar_t = (\xstar_t, \ystar_t)$ its unique Nash equilibrium since $f_t$ is strongly convex-strongly concave. The regularized game is a slight perturbation of the original matrix game $G$ over a smaller domain $\Omega_t \times \Omega_t$, and we prove that $\zstar_t$ is an $\order(\epsilon_t)$-approximate Nash equilibrium of the original matrix game $G$ (\pref{lem: diff between equilibrium}). Therefore, it suffices to bound $\KL(\zstar_t,z_t)$ since the duality gap of $z_t$ is at most $\order(\sqrt{\KL(\zstar_t,z_t)}+ \epsilon_t)$.

\notshow{
\begin{definition}
    Define $t_0 :=  \left(\frac{24}{1-k_\eta-k_\epsilon}\ln\frac{12}{1-k_\eta-k_\epsilon}\right)^{\frac{1}{1-k_\eta-k_\epsilon}} = \InParentheses{96\ln 48}^4$.  
\end{definition}
\begin{definition}
    For $t\ge 1$,  define $f_t(x,y) := x^\top G y - \epsilon_t \phi(x) + \epsilon_t \phi(y)$. Let $z_t^\star = (x_t^\star, y_t^\star)$ be the Nash equilibrium of $f_t$. 
\end{definition}}

\paragraph{Step 1: Single-Step Analysis} We start with a single-step analysis of \pref{alg: bandit matrix game}, which shows: 
\[
    \KL(\zstar_{t+1}, z_{t+1})\leq (1-\eta_t\epsilon_t)\KL(\zstar_t, z_t) + \underbrace{ 20\eta_t^2 A\ln^2\left(At\right) + 2\eta_t^2 A\lambda_t}_{\text{instability penalty}} + \underbrace{\eta_t \xi_t + \eta_t \zeta_t}_{\text{estimation error}} + v_t
\]
where we define $v_t = \KL(\zstar_{t+1}, z_{t+1}) - \KL(\zstar_t, z_{t+1})$ (see \pref{app:bandit} for definitions of $\lambda_t, \xi_t, \zeta_t$)
\notshow{and $\square= \underline{\square} + \overline{\square}$ for $\square=\lambda_t, \xi_t, \zeta_t$ with
\begin{align*}
    \underlambda_t &= \frac{1}{A}\sum_a \left(\frac{\one[a_t=a]}{x_{t,a}+\beta_t} -1 \right), &\underxi_t = \sum_a x_{t,a}\left( (Gy_t)_a - \hatell_{t,a} \right), \\
    \underzeta_t &= \sum_a \xstar_{t,a}\left( \hatell_{t,a} - (Gy_t)_a \right),
    &\hatell_{t,a} = \frac{\one[a_t=a]\sigma_t}{x_{t,a} + \beta_t},
\end{align*}
and the definition for $\overlambda_t, \overxi_t, \overzeta_t$ being symmetric for $y$-player.} The instability penalty comes from some local-norm of the gradient estimator $g_t$. The estimation error comes from the bias between the gradient estimator $g_t$ and the real gradient $Gy_t$. We pay the last term $v_t$ since the Nash equilibrium $z_t^*$ of the regularized game $f_t$ is changing over time.  

\paragraph{Step 2: Strategy Convergence to NE of the Regularized Game} Expanding the above recursion up to $t_0$, we get 
\begin{align}
    &\KL(\zstar_{t+1}, z_{t+1}) \leq \order\Big( \underbrace{\sum_{i=1}^t   w^i_t\eta_i^2}_{\term_1} + \underbrace{2A\sum_{i=1}^t w^i_t \eta_i^2\lambda_i}_{\term_2} + \underbrace{\sum_{i=1}^t w^i_t \eta_i\xi_i}_{\term_3} + \underbrace{\sum_{i=1}^t w^i_t \eta_i\zeta_i}_{\term_4} + \underbrace{\sum_{i=1}^t w^i_tv_i}_{\term_5}\Big), \label{eq: matrix sketch-1}
\end{align}
where $w^i_t \triangleq \prod_{j=i+1}^t (1-\eta_j \epsilon_j)$. To upper bound $\term_1$-$\term_4$, we apply careful sequence analysis (\pref{app:sequence properties}) and properties of the \ix algorithm with changing step size (\pref{app:Properties related to EXP3-IX}). The analysis of $\term_5$ uses \pref{lem: bound v_t}, which states $v_t= \KL(\zstar_{t+1}, z_{t+1})- \KL(\zstar_t, z_{t+1}) \le \order(\ln(At) \InNorms{\zstar_{t+1}- \zstar_t}_1) = \order(\frac{\ln^2(At)}{t})$ and is slightly involved as $\Omega_t$ and $\epsilon_t$  are both changing. With these steps, we conclude that with probability at least $1-\order(\delta)$, 
$\KL(\zstar_{t}, z_{t}) = \order\left(  A\ln^3(At/\delta)t^{-\frac{1}{4}} \right)$.

\section{Irreducible Markov Games}\label{sec: irreducible}
We now extend our results on matrix games to two-player zero-sum Markov games. 
Similarly to many previous works, our first result makes the assumption that the Markov game is \emph{irreducible} with bounded travel time between any pair of states. The assumption is formally stated below: 
\begin{assumption}[Irreducible Game]\label{assum: irreducible} We assume that under any pair of stationary policies of the two players, and any pair of states $s,s'$, the expected time to reach $s'$ from $s$ is upper bounded by $L$. 
\end{assumption}
We propose \pref{alg: irreducible markov game} for uncoupled learning in irreducible two-player zero-sum games, which is closely related to the Nash-V algorithm by \cite{bai2020near}, but with additional entropy regularization. It can also be seen as players using \pref{alg: bandit matrix game} on each state $s$ to update the policies $(x^s_t, y^s_t)$ whenever state $s$ is visited, but with $\sigma_t + \gamma V^{s_{t+1}}_t$ as the observed loss to construct loss estimators. Importantly, $V^s_1, V^s_2, \ldots$ is a slowly changing 
sequence of value estimations that ensures stable policy updates \citep{bai2020near, wei2021last, sayin2021decentralized}. Note that in \pref{alg: irreducible markov game}, the updates of $V_t^s$ only use players' local information (Line~\ref{line: irreducible V update}). 

\begin{algorithm}
    \caption{Irreducible Markov Game}\label{alg: irreducible markov game}
    \begin{algorithmic}[1]
    \STATE {\bfseries Define:} $\eta_t= (1-\gamma)t^{-k_\eta}$, $\beta_t =  t^{-k_\beta}$, $\epsilon_t =  \frac{1}{1-\gamma} t^{-k_\epsilon}$, $\alpha_t=t^{-k_\alpha}$ with $k_\alpha, k_\epsilon, k_\beta, k_\eta \in (0,1)$, $\Omega_t= \left\{x\in \Delta_\calA: x_a\geq \frac{1}{At^2},\, \forall a\in\calA\right\}$.
    \STATE {\bfseries Initialization:} $x_1^s\leftarrow \frac{1}{A}\one$,\, $n_1^s\leftarrow 0$, \, $V^s_1 \leftarrow \frac{1}{2(1-\gamma)}$,\; $\forall s$. 
    \FOR{$t=1, 2, \ldots, $}
    \STATE    $\tau=n_{t+1}^{s_t} \leftarrow n_{t}^{s_t}+1$ (the number of visits to state $s_t$ up to time $t$). 
    \STATE    Draw $a_t\sim x_{t}^{s_t}$, observe $\sigma_t\in[0,1]$ with $\E\left[\sigma_t\right]=G^{s_t}_{a_t, b_t}$, and observe $s_{t+1}\sim P^{s_t}_{a_t,b_t}(\cdot)$.   \\
    \STATE Compute $g_t$ where $ g_{t,a} =  \frac{\one[a_t=a]\left(\sigma_t + \gamma V_t^{s_{t+1}}\right)}{x^{s_t}_{t,a} + \beta_\tau} + \epsilon_\tau \ln x^{s_t}_{t,a},\;\forall a \in \calA$.
    \STATE Update $ x_{t+1}^{s_t} \leftarrow  \argmin_{x\in \Omega_{\tau+1}}\left\{x^\top g_{t} + \frac{1}{\eta_\tau}\KL(x, x_{t}^{s_t})\right\}$. 
    \STATE Update $ V_{t+1}^{s_t} \leftarrow (1-\alpha_\tau)V_t^{s_t} + \alpha_\tau \left(\sigma_t + \gamma V_t^{s_{t+1}}\right)$. \label{line: irreducible V update}
    \STATE For all $s\neq s_t$,\, $x_{t+1}^s\leftarrow x_t^s$,\,
        $n^s_{t+1}\leftarrow n^s_t$,\, $V^s_{t+1}\leftarrow V^s_t$\,. 
    \ENDFOR
    \end{algorithmic}
\end{algorithm}

\paragraph{Comparison to Previous Works} Although \pref{alg: irreducible markov game} shares similarity with previous works that also use entropy regularization, we believe that both the design and the analysis of our algorithm are novel and non-trivial. To the best of our knowledge, all previous entropy regularized two-player zero-sum Markov game algorithms are coupled (e.g., \citep{cen2021fast, chen2021sample,cen2023faster}), while ours is the first that achieves uncoupledness under entropy regularization. We further discuss this by comparing our algorithm to those in \citep{cen2023faster}, highlighting the new technical challenges we encounter.

The entropy-regularized OMWU algorithm in \citep{cen2023faster} is tailored to the full-information setting. Moreover, in the value function update step  both players need to know the entropy value of the other player's policy, which is unnatural. 
Indeed, the authors explicitly present the removal of this information sharing as an open question. 
We answer this open question affirmatively by giving a fully decentralized algorithm for zero-sum Markov games with provable last-iterate convergence rates. In \pref{alg: irreducible markov game} (Line \ref{line: irreducible V update}), the update of the value function $V$ is simple and does not require any entropy information: $V_{t+1}^{s_t} \leftarrow (1-\alpha_\tau)V_t^{s_t} + \alpha_\tau \left(\sigma_t + \gamma V_t^{s_{t+1}}\right)$. This modification results in a discrepancy between the policy update and the value update. While the policy now incorporates a regularization term, the value function does not. Such a mismatch is unprecedented in earlier studies and necessitates a non-trivial approach to resolve. Additionally, \pref{alg: irreducible markov game} operates on bandit feedback instead of full-information feedback, presenting further technical challenges.

\pref{alg: irreducible markov game} also offers improvement over the uncoupled algorithm of~\cite{wei2021last}. The algorithm of  ~\cite{wei2021last} requires coordinated policy update where the players interact with each other using the current policy for several iterations to get an approximately accurate gradient (the number of iterations required depends on $L$ as defined in \pref{assum: irreducible}), and then simultaneously update the policy pair on all states. We do not require such unnatural coordination between the players or prior knowledge on $L$.

Our main result is the following theorem on the last-iterate convergence rate of \pref{alg: irreducible markov game}.

\begin{theorem}[Last-Iterate Convergence Rate]
\label{thm:convegence irreducible games}
    For any $\varepsilon, \delta > 0$, \pref{alg: irreducible markov game} with $k_\alpha = \frac{9}{9+\varepsilon}$, $k_\epsilon = \frac{1}{9+\varepsilon}$, $k_\beta = \frac{3}{9+\varepsilon}$, and $k_\eta = \frac{5}{9+\varepsilon}$ guarantees,  with probability at least $1-\order(\delta)$, for any time $t \ge 1$,
   \[
        \max_{s, x,y} \InParentheses{V^s_{x_t, y} - V^s_{x, y_t}} \leq \order\Big(\frac{AL^{2+1/\varepsilon}\ln^{4+1/\varepsilon}(SAt/\delta)\ln^{1/\varepsilon}(t/(1-\gamma))}{(1-\gamma)^{2+1/\varepsilon}} \cdot t^{-\frac{1}{9+\varepsilon}}\Big) .
   \]
\end{theorem}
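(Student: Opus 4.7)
The plan is to view \pref{alg: irreducible markov game} as running a per-state copy of \pref{alg: bandit matrix game}, with visit count $\tau = n_t^s$ playing the role of the matrix-game iteration index and with effective loss matrix equal to $Q^s_\star$ up to a bias coming from using the estimate $V_t^{s_{t+1}}$ in place of the true continuation value. For each state $s$ and visit count $\tau$, I would define the regularized matrix game
\[
    f_\tau^s(x,y) = x^\top Q^s_\star y - \epsilon_\tau \phi(x) + \epsilon_\tau \phi(y)
\]
on $\Omega_\tau \times \Omega_\tau$, denote by $z_\tau^{s,\star}$ its unique saddle point, and, as in \pref{sec: matrix game}, observe that $z_\tau^{s,\star}$ is an $\order(\epsilon_\tau)$-approximate Nash of the matrix game with payoff $Q^s_\star$. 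Reducing the Markov-game duality gap to the per-state $\KL(z_\tau^{s,\star}, z_\tau^s)$ goes through a performance-difference / one-step-deviation argument: $\max_{x,y}(V^s_{x_t, y} - V^s_{x, y_t})$ is controlled by $\sqrt{\KL(z_\tau^{s,\star}, z_\tau^s)} + \epsilon_\tau + \|V_t - V_\star\|_\infty$ at every state, unrolled through the Bellman contraction, which is where the $1/(1-\gamma)^{2+1/\varepsilon}$ factor will enter.

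Next, I would reproduce the single-step recursion from the matrix-game proof at each state:
\[
    \KL(z_{\tau+1}^{s,\star}, z_{\tau+1}^s) \leq (1 - \eta_\tau\epsilon_\tau)\,\KL(z_\tau^{s,\star}, z_\tau^s) + (\text{instability}) + (\text{estimation error}) + v_\tau^s + (\text{value bias}),
\]
where the new last term captures the gap between the empirical loss $\sigma_t + \gamma V_t^{s_{t+1}}$ used in \pref{alg: irreducible markov game} and a noisy sample of $Q^s_\star(a_t, b_t)$, and therefore scales with $\|V_t - V_\star\|_\infty$. The instability, estimation-error, and $v_\tau^s$ terms are handled exactly as in the matrix-game proof via \ix-style local-norm bounds and the sequence-analysis lemmas in \pref{app:sequence properties}. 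The irreducibility assumption (\pref{assum: irreducible}) gives, by standard concentration for visits to states of a Markov chain with bounded hitting times, $n_t^s = \Omega(t/\poly(L))$ for every $s$ simultaneously with probability at least $1 - \delta$, which lets us replace $\tau$ by $t/\poly(L)$ up to polylogarithmic factors.

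The remaining ingredient is a matching bound on $\|V_t - V_\star\|_\infty$. The update in Line~\ref{line: irreducible V update} is an exponentially-weighted running average of empirical Bellman targets along the played trajectory with step size $\alpha_\tau = \tau^{-k_\alpha}$ and $k_\alpha = 9/(9+\varepsilon)$ close to $1$. A standard stochastic-approximation argument shows that $V_t^s$ tracks $V^s_{x_t, y_t}$ up to error of order $\tau^{-(1-k_\alpha)/2}$ plus the drift of the policies, and a Bellman-contraction step then converts closeness to $V^s_{x_t, y_t}$ into closeness to $V^s_\star$ modulo the duality gap of $(x_t, y_t)$. Combined with the per-state $\KL$ recursion, this sets up a joint induction on $t$: assume both $\max_s \KL(z_{n_t^s}^{s,\star}, z_t^s)$ and $\|V_t - V_\star\|_\infty$ are $\otil(t^{-1/(9+\varepsilon)})$ at time $t$, substitute into the two recursions, and verify that the chosen exponents $k_\alpha, k_\epsilon, k_\beta, k_\eta$ close the induction at time $t+1$.

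The main obstacle is exactly this coupled induction: neither the policy nor the value bound is self-contained, and the parameters must balance so that the value-bias term in the $\KL$ recursion is absorbed by the contraction factor $\eta_\tau\epsilon_\tau$, while simultaneously the policy drift in the value recursion is absorbed by $\alpha_\tau$. The parameter $\varepsilon$ quantifies the slack required to close this loop: compared to the matrix-game tuning $k_\eta + k_\epsilon = \tfrac{3}{4}$, the Markov game needs the smaller choice $k_\eta + k_\epsilon = 6/(9+\varepsilon)$ to accommodate the value bias, yielding a correspondingly slower rate, while the factor $L^{2+1/\varepsilon}(1-\gamma)^{-(2+1/\varepsilon)}$ reflects the cost of propagating per-state errors through the Bellman unrolling and the near-$1$ step size $\alpha_\tau$.
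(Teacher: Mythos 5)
There is a genuine gap, and it is structural rather than a matter of missing details. You define the per-state regularized game against the \emph{fixed} matrix $Q^s_\star$, so the discrepancy between the loss the algorithm actually estimates (whose conditional mean is $(G^s+\gamma\E_{s'\sim P^s}[V^{s'}_t])y_t$) and the gradient of your comparator game enters the $\KL$ recursion as a per-step bias of order $\eta_\tau\|V_t-V_\star\|_\infty$. But $\|V_t-V_\star\|_\infty$ cannot be driven below $\Theta(\epsilon_\tau)$: the value update averages payoffs generated by policies converging to the \emph{regularized} equilibrium, whose value differs from $V^s_\star$ by $\Theta(\epsilon_\tau\ln A)$ (this is exactly why the paper's value-convergence exponent is $k_*=\min\{\dots,k_\epsilon\}$). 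Plugging $\|V_{t_i}-V_\star\|_\infty\gtrsim i^{-k_\epsilon}$ into the weighted sum $\sum_i w^i_\tau\eta_i\,\|V_{t_i}-V_\star\|_\infty$ and applying \pref{lem: useful lemma 1} gives $\tilde{\order}(\tau^{-k_\epsilon+k_\epsilon})=\tilde{\order}(1)$ (in fact $\Theta(\ln\tau)$), so the $\KL$ bound does not vanish and the coupled induction you flag as ``the main obstacle'' cannot be closed for any choice of exponents.

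The paper avoids this by defining $f^s_\tau$ with payoff $G^s+\gamma\E_{s'\sim P^s}[V^{s'}_{t_\tau(s)}]$, i.e.\ it tracks the equilibrium of the \emph{drifting} game the algorithm is actually playing. Then the policy-convergence step has no gradient bias at all; the price is a comparator-drift term $v^s_\tau$, which is controlled not ``exactly as in the matrix-game proof'' but by combining irreducibility (at most $\order(L\ln(St/\delta))$ value updates between consecutive visits to $s$, each of size $\order(\alpha_n/(1-\gamma))$ with $n=\Omega(\tau/(L\ln(St/\delta)))$) with the near-unit exponent $k_\alpha$, yielding a weighted contribution $\tau^{-k_\alpha+k_\eta+2k_\epsilon}$ that does vanish. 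The mismatch with $Q^s_\star$ is then paid only \emph{once}, additively, in the final combining step as $\order(\epsilon_\tau+|V^s_t-V^s_\star|)$ via \pref{lem: diff between equilibrium} and \pref{lem:reduce-duality-to-state}, rather than being accumulated through the recursion. Relatedly, the paper's value-convergence step is self-contained: it bounds $|V^s_{t_\tau(s)}-V^s_\star|$ through a weighted-regret/minimax argument (\pref{lem: irredu weighted regret}, \pref{lem: irreducible value diff}) that does not require policy convergence as an input, whereas your route through $V^s_{x_t,y_t}$ ``modulo the duality gap of $(x_t,y_t)$'' reintroduces the circularity. If you restructure around the drifting regularized game, the rest of your outline (per-state \textsc{Exp3-IX} analysis, sequence lemmas, $n_t^s=\Omega(t/\mathrm{poly}(L))$) matches the paper.
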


\subsection{Analysis Overview}
We introduce some notations for simplicity. We denote by $\E_{s'\sim P^s}[V^{s'}_t]$ the $A \times A$ matrix such that $(\E_{s'\sim P^s}[V^{s'}_t])_{a,b} =\E_{s'\sim P^s_{a,b}}[V^{s'}_t]$. Let $t_\tau(s)$ be the $\tau$-th time the players visit state $s$, and define $\xhat^s_\tau = x^s_{t_\tau(s)}$ and $\yhat^s_\tau = y^s_{t_\tau(s)}$.
Then, define the regularized game for each state $s$ via the loss function
$f^s_\tau(x,y) =  x^\top (G^s + \gamma \E_{s'\sim P^s} [V^{s'}_{t_\tau(s)}])y - \epsilon_\tau \phi(x) +  \epsilon_\tau \phi(y)$. Furthermore, let $\zhat^s_{\tau\star} = (\xhat^s_{\tau\star}, \yhat^s_{\tau\star})$ be the equilibrium of $f^s_\tau(x,y)$ over $\Omega_{\tau} \times \Omega_{\tau}$. 
In the following analysis, we fix some $t \ge 1$.

\paragraph{Step 1: Policy Convergence to NE of Regularized Game} Using similar techniques to Step 1 and Step 2 in the analysis of \pref{alg: bandit matrix game}, we can upper bound $\KL(\zhat^s_{\tau+1\star}, \zhat^s_{\tau+1})$ like \pref{eq: matrix sketch-1} with similar subsequent analysis for $\term_1$-$\term_4$. The analysis for $\term_5$ where $v^s_i = \KL(\zhat_{i+1\star}^s, \zhat^s_{i+1}) - \KL(\zhat_{i\star}^s, \zhat^s_{i+1})$ is more challenging compared to the matrix game case since here $V^s_{t_i(s)}$ is changing between two visits to state $s$. To handle this term, we leverage the following facts for any $s'$: (1) the irreducibility assumption ensures that $t_{i+1}(s)- t_i(s) \le \order(L\ln(St/\delta))$ thus the number of updates of the value function at state $s'$ is bounded;
(2) until time $t_i(s) \ge i$, state $s'$ has been visited at least $\Omega(\frac{i}{L\ln(St/\delta)})$ times thus each change of the value function between $t_i(s)$ and $t_{i+1}(s)$ is at most  $\order((\frac{i}{L\ln(St/\delta)})^{-k_\alpha})$. With these arguments, we can bound $\term_5$ by $\order\left( \ln^4(SAt/\delta) L\tau^{-k_\alpha + k_\eta + 2k_\epsilon} \right)$. Overall, we have the following policy convergence of NE of the regularized game (\pref{lem: irreducible KL convergence}): $\KL(\zhat^s_{\tau\star}, \zhat^s_\tau)\leq \order\left(A\ln^4(SAt / \delta) L\tau^{-k_{\sharp}}\right)$,
where $k_{\sharp}=\min\{k_\beta-k_\epsilon, k_\eta-k_\beta, k_\alpha - k_\eta-2k_\epsilon\}$.

\paragraph{Step 2: Value Convergence} Unlike matrix games, policy convergence to NE of the regularized game is not enough for convergence in duality gap. We also need to bound $|V^s_t - V^s_\star|$ since the regularized game is defined using $V^s_{t}$, the value function maintained by the algorithm, instead of the minimax game value $V^s_\star$. We use the following weighted regret quantities as a proxy: $
    \text{Reg}^s_{\tau} \triangleq$$ \max_{x,y} \big(\sum_{i=1}^\tau \alpha^i_\tau \left(f^s_{i}(\xhat^s_{i}, \yhat^s_{i}) - f^s_{i}(x^s, \yhat^s_{i})\right),$ $\sum_{i=1}^\tau \alpha^i_\tau \left(f^s_{i}(\xhat^s_{i}, y^s_{i}) - f^s_{i}(\xhat^s_i, \yhat^s_{i})\right)\big)$,
where $\alpha^i_\tau = \alpha_i \prod_{j=i+1}^\tau (1-\alpha_j)$. We can upper bound the weighted regret $\text{Reg}^s_{\tau}$ using a similar analysis as in Step 1 (\pref{lem: irredu weighted regret}).
We then show a contraction for $|V^s_{t_\tau(s)} - V^s_{\star}|$ with the weighted regret quantities: $|V^s_{t_\tau(s)} - V^s_{\star}|\leq \gamma \sum_{i=1}^\tau \alpha^i_\tau \max_{s'}|V^{s'}_{t_i(s)} - V^{s'}_{\star}| + \Tilde{\order}(\epsilon_\tau + \text{Reg}^s_{\tau})$. 
This leads to the following convergence of $V^s_t$ (\pref{lem: irreducible value diff}):$|V_t^s - V_\star^s| \leq \Tilde{\order} \InParentheses{t^{-k_{*}}}$,  
where $k_*=\min\left\{k_\eta, k_\beta, k_\alpha-k_\beta, k_\epsilon\right\}$. 

\paragraph{Obtaining Last-Iterate Convergence Rate} Fix any $t$ and let $\tau$ be the number of visits to $s$ before time $t$. 
So far we have shown (1) policy convergence of $\KL(\zhat^s_{\tau\star}, \zhat^s_\tau)$ in the regularized game; (2) and value convergence of $\left|V_t^s - V_\star^s\right|$. Using the fact that the regularized game is at most $\order(\epsilon_\tau + |V^s_t - V^s_\star|)$ away from the minimax game martrix $Q^\star$ and appropriate choices of parameters proves \pref{thm:convegence irreducible games}.

\section{General Markov Games}\label{sec: general MG}
In this section, we consider general two-player zero-sum Markov games without \pref{assum: irreducible}. We propose \pref{alg: general markov game}, an uncoupled learning algorithm that handles exploration and has path convergence rate. Compared to \pref{alg: irreducible markov game}, the update of value function in \pref{alg: general markov game} uses a bonus term $\bns_\tau$ based on the optimism principle to handle exploration. 

\begin{algorithm}[ht]
    \caption{General Markov Game}  \label{alg: general markov game}
    \begin{algorithmic}[1]
        \STATE {\bfseries Input:} $\eta\leq \beta\leq \epsilon$ and $T$. 
        \STATE {\bfseries Define:} $\Omega =\left\{x\in \Delta_\calA: x_a\geq \frac{1}{AT},\, \forall a\in\calA\right\}$, $\alpha_\tau = \frac{H+1}{H+\tau}$, where $H=\frac{\ln(T)}{1-\gamma}$. \\
    $\bns_\tau=\kappa A\ln^2(SAT/\delta)(\beta+\eta^{-1}\alpha_\tau)/(1-\gamma)^2$ for a sufficiently large absolute constant $\kappa>0$\hspace{-1pt}
    \STATE {\bfseries Initialization:}  $\Vlow^s_1, n^s_1 \leftarrow 0$, $x_1^s\leftarrow \frac{1}{A}\one $, $\forall s$.  
    \FOR{$t=1, 2, \ldots, $}
        \STATE $\tau=n_{t+1}^{s_t} \leftarrow n_{t}^{s_t}+1$. 
        \STATE Sample $a_t\sim x_{t}^{s_t}$, observe $\sigma_t\in[0,1]$ with $\E\left[\sigma_t\right]=G^{s_t}_{a_t, b_t}$, and observe $s_{t+1}\sim P^{s_t}_{a_t,b_t}(\cdot)$.   
        \STATE Compute $g_t$ where $ g_{t,a} =  \frac{\one[a_t=a]\left(\sigma_t + \gamma \Vlow_t^{s_{t+1}}\right)}{x^{s_t}_{t,a} + \beta} + \epsilon \ln x_{t,a}^{s_t}, \forall a \in \calA$.
        \STATE Update $ x_{t+1}^{s_t} \leftarrow  \argmin_{x\in \Omega}\left\{x^\top g_{t} + \frac{1}{\eta}\KL(x, x_{t}^{s_t})\right\}$.
        \STATE Update $ \Vlowtil_{t+1}^{s_t} \leftarrow (1-\alpha_\tau)\Vlowtil_t^{s_t} + \alpha_\tau \left(\sigma_t + \gamma \Vlow_t^{s_{t+1}} - \bns_\tau\right)$ and 
        $ \Vlow_{t+1}^{s_t} \leftarrow \max\left\{ \Vlowtil^{s_t}_{t+1}, 0 \right\}$.
        \STATE  For all $s \ne s_t$,\, $x_{t+1}^s\leftarrow x_t^s$,\, $\Vlow^s_{t+1}\leftarrow \Vlow^s_t$,\,
        $\Vlowtil^s_{t+1}\leftarrow \Vlowtil^s_t$, \,$n_{t+1}^s\leftarrow n_t^s$. 
    \ENDFOR
    \end{algorithmic}
\end{algorithm}
\pref{thm:general} below implies that we can achieve $\frac{1}{t}\sum_{\tau=1}^t \max_{x,y}  \InParentheses{x_\tau^{s_\tau^\top}Q^{s_\tau}_\star y^{s_\tau} - x^{s_\tau^\top} Q^{s_\tau}_\star y_\tau^{s_\tau}} = \order(t^{-\frac{1}{10}})$ path convergence rate if we use the doubling trick to tune down $u$ at a rate of $t^{-\frac{1}{10}}$. 
\begin{theorem}\label{thm:general}
    For any $u\in\left[0,\frac{1}{1-\gamma}\right]$ and $T \ge 1$, there exists a proper choice of parameters $\epsilon, \beta, \eta$ such that \pref{alg: general markov game} guarantees with probability at least $1-\order(\delta)$,
    \begin{equation}
       \sum_{t=1}^T \one\left[\max_{x,y}  \left(x_t^{s_t^\top}Q^{s_t}_\star y^{s_t} - x^{s_t^\top} Q^{s_t}_\star y_t^{s_t}\right) > u \right] \leq  \order\left(\frac{S^2A^3\ln^{20}(SAT/\delta)}{ u^9(1-\gamma)^{16}}\right).  \label{eq:thm general}
   \end{equation}
\end{theorem}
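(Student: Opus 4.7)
My overall approach is to decompose the $Q_\star$-duality gap at each round $t$ into three pieces: (i) the distance of the iterate $(x^{s_t}_t, y^{s_t}_t)$ from the Nash equilibrium $\zhat^{s_t}_{\tau\star}$ of a per-state regularized game defined using the current $\Vlow$ values, (ii) the value-function error $V^{s'}_\star - \Vlow^{s'}_t$ at successor states $s'$, and (iii) the entropy-regularization bias of order $\epsilon$. I will bound the cumulative contribution of each component and then convert the gap sum into a bad-round count via the identity $\sum_t \mathbf{1}[a_t > u] \le u^{-1}\sum_t a_t$. Because the parameters $\epsilon,\beta,\eta$ are held \emph{constant} here (unlike the vanishing schedules of \pref{sec: matrix game} and \pref{sec: irreducible}), the per-state policy error will contract geometrically to a fixed floor within $\tilde O(1/(\eta\epsilon))$ visits, and every cumulative error term will depend on $T$ only logarithmically---this is what produces a bound on bad rounds that is $T$-independent up to $\ln T$ factors.

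\textbf{Step 1 (per-state policy convergence).} For each state $s$, let $\tau = n^s_t$ and define the regularized loss $x^\top(G^s + \gamma\,\E_{s'\sim P^s}[\Vlow^{s'}_{t_\tau(s)}])y - \epsilon\phi(x) + \epsilon\phi(y)$ over $\Omega\times\Omega$ with NE $\zhat^s_{\tau\star}$. Reusing the single-step KL recursion and \ix-style estimator analysis from the matrix- and irreducible-game cases, I will derive $\KL(\zhat^s_{\tau+1\star},\zhat^s_{\tau+1}) \le (1-\eta\epsilon)\KL(\zhat^s_{\tau\star},\zhat^s_\tau) + (\text{instability} + \text{estimation noise}) + v^s_\tau$, where the new drift term $v^s_\tau$ captures the shift of $\zhat^s_{\tau\star}$ caused by the change of $\Vlow^{s'}$ between consecutive visits to $s$. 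Unrolling with constant step sizes will yield a per-visit bound that splits into a constant ``floor'' $\tilde O(\text{poly}(A,\eta,\beta,\epsilon))$ plus a term proportional to $(\eta\epsilon)^{-1}$ times the cumulative value drift at the neighbors of $s$, with the latter handed off to Step~2.

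\textbf{Step 2 (optimism, value telescoping, and combining).} The bonus $\bns_\tau$ is designed to enforce $\Vlow^s_t \le V^s_\star$ with high probability, which I will prove by induction on $t$ using a Freedman-type concentration bound on the $\alpha_\tau$-weighted martingale deviation of $\sigma_t + \gamma\Vlow^{s'}_t$ from the Bellman target; the magnitude of $\bns_\tau$ is precisely tuned to dominate this deviation plus the per-visit policy instability from Step~1. Given optimism, defining $\xi_t = V^{s_t}_\star - \Vlow^{s_t}_t \ge 0$ and expanding via the $\Vlowtil$ update and the Bellman identity for $V^{s_t}_\star$ produces a recursive bound that telescopes across rounds to $\sum_{t=1}^T\xi_t \le \tilde O(\tfrac{1}{1-\gamma})\bigl(\sum_t\alpha_{n^{s_t}_t}\bns_{n^{s_t}_t} + (\text{Step~1 gap sum}) + \text{martingale noise}\bigr)$, each component being polynomial in $S,A,1/\eta,1/\beta,1/\epsilon,1/(1-\gamma)$ and polylogarithmic in $T$. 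Summing the three components of the gap decomposition, dividing by $u$, and tuning $\eta \le \beta \le \epsilon$ as carefully chosen powers of $u$ to balance all terms will yield the stated $O(S^2 A^3 \ln^{20}(SAT/\delta)/(u^9(1-\gamma)^{16}))$.

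The main obstacle will be the tight coupling between Steps~1 and~2 in the absence of \pref{assum: irreducible}: the policy target $\zhat^s_{\tau\star}$ at state $s$ depends on $\Vlow^{s'}$ for every neighbor $s'$, yet without irreducibility a rarely-visited $s'$ may retain a very stale value for arbitrarily many rounds, so the drift term $v^s_\tau$ in Step~1 cannot be controlled by a direct visit-frequency argument as in \pref{sec: irreducible}. The resolution is to exploit the one-sided optimism $\Vlow^{s'} \le V^{s'}_\star$: inaccuracy at an infrequently-visited neighbor biases the regularized game only in the direction already accounted for by the Bellman telescoping in Step~2, so rarely-visited states contribute non-negatively to both sides of the relevant inequalities and their errors are absorbed without requiring any explicit visit lower bound.
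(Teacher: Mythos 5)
There is a genuine gap, and it is the central one. \pref{alg: general markov game} is symmetric: while the $x$-player maintains $\Vlow$ and updates against $G^s+\gamma\E_{s'\sim P^s}[\Vlow^{s'}]$, the $y$-player maintains its own pessimistic estimate $\Vup$ and updates against $G^s+\gamma\E_{s'\sim P^s}[\Vup^{s'}]$. Hence the pair $(\xhat^s_\tau,\yhat^s_\tau)$ is \emph{not} running mirror descent on any single regularized game: when the two one-sided KL recursions are added, the comparator term that would be $\le 0$ at a Nash equilibrium becomes $\flow^s_i(\xhat^s_{i\star},\yhat^s_i)-\fup^s_i(\xhat^s_i,\yhat^s_{i\star})$ plus a non-negative residual of the form $\eta\gamma\, x^\top\E_{s'\sim P^s}[\Vup^{s'}-\Vlow^{s'}]\,y$, and the contraction does not close without separately controlling $\Vup-\Vlow$. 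The paper devotes \pref{lem: vup > vlow}, \pref{lem: self bounding lemma} and \pref{corr: Vup - Vlow} to exactly this (a self-bounding argument that unrolls the value recursion $H=\frac{\ln T}{1-\gamma}$ times), and the same quantity reappears as the middle indicator sum in \pref{eq: duality decompostion-2} and as $\term_6$ in \pref{lem: convergence on frequent state}. Your plan defines the regularized game using $\Vlow$ only and never mentions $\Vup$; one-sided optimism $\Vlow\le V_\star$ does not substitute for this control, so Step~1 of your argument does not go through as stated.

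Two further problems. First, your proposed resolution of the equilibrium-drift term $v^s_\tau$ via sign/optimism reasoning is not the right mechanism: $v^s_\tau=\KL(\zhat^s_{\tau+1\star},\zhat^s_{\tau+1})-\KL(\zhat^s_{\tau\star},\zhat^s_{\tau+1})$ measures the \emph{movement} of the equilibrium point, which can increase the KL regardless of the sign of the value error at a stale neighbor. The paper's actual fix is a total-variation argument: $\sum_i|v^s_i|\lesssim\frac{\ln^2(AT)}{\epsilon}\sum_{s'}\sum_{t}\left(|\Vlow^{s'}_{t+1}-\Vlow^{s'}_t|+|\Vup^{s'}_{t+1}-\Vup^{s'}_t|\right)=\order\left(\frac{S\ln^3(AT)}{\epsilon(1-\gamma)^2}\right)$, which holds because each update moves the value by at most $\frac{\alpha_\tau}{1-\gamma}$ and $\sum_\tau\alpha_\tau=\order(H\ln T)$ --- no visitation lower bound is needed, and no appeal to optimism. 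Second, the conversion $\sum_t\one[a_t>u]\le u^{-1}\sum_t a_t$ cannot yield a $T$-independent bound here: with constant $\eta,\beta,\epsilon$ the per-state policy error contracts only to a positive floor, so $\sum_t a_t=\Theta(T\cdot\text{floor})$ and Markov's inequality gives a bound linear in $T$ (your claim that every cumulative error term is polylogarithmic in $T$ contradicts the fixed-floor picture in the same sentence). The paper instead tunes $\epsilon,\beta,\eta$ so that the floor sits below the threshold (e.g.\ $\frac{3}{4}\epsilon'$ in \pref{lem: convergence on frequent state}, $u/8$ in \pref{eq: 3/4 terms}) and directly \emph{counts} the rounds on which the transient and noise parts exceed the remaining slack.
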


\subsection{Path Convergence}\label{subsec:path convergence}

Path convergence has multiple meaningful game-theoretic implications. By definition, It implies that frequent visits to a state bring players' policies closer to equilibrium, leading to both players using near-equilibrium policies for all but $o(T)$ number of steps over time.

Path convergence also implies that both players have no regret compared to the game value $V_\star^s$, which has been considered and motivated in previous works such as \citep{brafman2002r, tian2020provably}. To see this, we apply the results to the \emph{episodic} setting, where in every step, with probability $1-\gamma$, the state is redrawn from $s\sim \rho$ for some initial distribution $\rho$. If the learning dynamics enjoys path convergence, then $
\E[\sum_{t=1}^{T} x_t^{s_t^\top}G^{s_t}y_t^{s_t}] = (1-\gamma)\E_{s\sim\rho}[V^s_\star] T \pm o(T)$. Hence the one-step average reward is $(1-\gamma)\E_{s\sim\rho}[V^s_\star]$ and both players have no regret compared to the game value. A more important implication of path convergence is that it guarantees stability of players' policies, while cycling behaviour is inevitable for any FTRL-type algorithms even in zero-sum matrix games~\citep{mertikopoulos2018cycles, bailey2018multiplicative}. We defer the proof and more discussion of path convergence to \pref{app: implication}.

Finally, we remark that our algorithm is built upon Nash V-learning~\citep{bai2020near}, so it inherits properties of Nash V-learning, e.g., one can still output near-equilibrium policies through policy averaging~\citep{jin2021v}, or having no regret compared to the game value when competing with an arbitrary opponent~\citep{tian2020provably}. We demonstrate extra benefits brought by entropy regularization regarding the stability of the dynamics.

\subsection{Analysis Overview of \pref{thm:general}}

For general Markov games, it no longer holds that every state $s$ is visited often, and thus the analysis is much more challenging. We first define two regularized games based on $\Vlow^s_t$ and the corresponding quantity $\Vup^s_t$ for the $y$-player. Define $t_\tau(s)$, $\xhat^s_\tau$, $\yhat^s_\tau$ the same way as in the previous section. Then define $\flow^s_\tau(x,y) \triangleq  x^\top (G^s + \gamma \E_{s'\sim P^s} [\Vlow^{s'}_{t_\tau(s)}])y - \epsilon \phi(x) +  \epsilon \phi(y)$, $\fup^s_\tau(x,y) \triangleq x^\top (G^s + \gamma \E_{s'\sim P^s} [\Vup^{s'}_{t_\tau(s)}]) y  - \epsilon \phi(x) +  \epsilon \phi(y)$ and denote $J_t = \max_{x,y}  (x_t^{s_t^\top}(G^{s_t} + \gamma \E_{s'\sim P^{s_t}}[\overV_{t}^{s'}] y^{s_t} - x_t^{s_t^\top} (G^s + \gamma \E_{s'\sim P^{s_t}}[\underV^{s'}_t]) y_t^{s_t})$.  We first bound the ``path duality gap" as follows
\begin{equation}
       \max_{x,y}  \left(x_t^{s_t^\top}Q^{s_t}_\star y^{s} - x^{s_t^\top} Q^{s_t}_\star y_t^{s}\right) 
       \leq J_t + \order\InParentheses{\max_{s'}\left(V^{s'}_\star - \Vup^{s'}_t, \Vlow^{s'}_t - V^{s'}_\star\right)}. \label{eq: duality decompostion-1}
\end{equation}

\paragraph{Value Convergence: Bounding $\Vlow^s_t - V^s_\star$ and $ V^s_\star - \Vup^s_t$} This step is similar to Step 2 in the analysis of \pref{alg: irreducible markov game}. We first show an upper bound of the weighted regret (\pref{lem: weighted regret bound}): $
    \sum_{i=1}^\tau \alpha^i_\tau (\flow^s_{i}(\xhat^s_{i}, \yhat^s_{i}) - \flow^s_{i}(x^s, \yhat^s_{i})) \leq \frac{1}{2}\bns_\tau$, 
where $\alpha^i_\tau = \alpha_i \prod_{j=i+1}^\tau (1-\alpha_j)$.
Note that the value function $\Vlow^s_t$ is updated using $\sigma_t + \gamma \Vlow_t^{s_t+1} - \bns_\tau$. Thus when relating $|\Vlow^s_t - V^s_\star|$ to the regret, the regret term and the bonus term cancel out and we get $\Vlow^s_t \leq V_\star^s + \order(\frac{\epsilon \ln(AT)}{1-\gamma})$ (\pref{lem: closeness between f and Q}). The analysis for  $V^s_\star - \Vup^s_t$ is symmetric. By proper choice of $\epsilon$, both terms are bounded by $\frac{1}{8}u$. Combining the above with \pref{eq: duality decompostion-1}, we can upper bound the left-hand side of the desired inequality \pref{eq:thm general} by $\sum_{t=1}^T  \one\left[J_t\geq \frac{3}{4}u\right]$, which is further upper bounded in \pref{eq: 3/4 terms} by
\begin{align}
    &
    \sum_s\sum_{\tau=1}^{n_{T+1}(s)} \one\left[ \max_y \fup^s_\tau(\xhat_\tau^{s}, y^{s}) -\fup^s_\tau(\zhat_\tau^{s})  \geq \frac{u}{8} \right] + \sum_s\sum_{\tau=1}^{n_{T+1}(s)} \one\left[ \flow^s_\tau(\zhat_\tau^{s}) - \min_x \flow^s_\tau(x^{s}, \yhat_\tau^{s})   \geq \frac{u}{8}\right] \nonumber \\
    & +\sum_{t=1}^T \one\left[x_t^{s_t^\top} \left(\gamma \E_{s'\sim P^{s_t}}\left[\overV_{t}^{s'} - \underV_t^{s'}\right]\right) y_t^{s_t} \geq \frac{u}{4} \right]. \label{eq: duality decompostion-2}
\end{align}

\paragraph{Policy Convergence to NE of Regularized Games} To bound the first two terms, we show convergence of the policy $(\xhat^s_\tau, \yhat^s_\tau)$ to Nash equilibria of both games $\flow^s_\tau$ and $\fup^s_\tau$. To this end, fix any $p \in [0,1]$, we define $f^s_\tau = p \flow^s_\tau + (1-p) \fup^s_\tau $ and let $\zhat^s_{\tau\star} = (\xhat^s_{\tau\star}, \yhat^s_{\tau\star})$ be the equilibrium of $f_\tau^s(x,y)$.  
The analysis is similar to previous algorithms where we first conduct single-step analysis (\pref{lem: single step regret general}) and then carefully bound the weighted recursive terms. We show in \pref{lem: convergence on frequent state} that for any $0<\epsilon'\leq 1$:  $
    \sum_s \sum_{\tau=1}^{n_{T+1}(s)}  \one\left[\KL(\zhat_{\tau\star}^s, \zhat_\tau^s) \geq \epsilon'\right] \leq  \order(\frac{S^2A\ln^5(SAT/\delta)}{\eta\epsilon^2\epsilon'(1-\gamma)^3})$. 
This proves policy convergence: the number of iterations where the policy is far away from Nash equilibria of the regularized games is bounded, which can then be translated to upper bounds on the first two terms.

\paragraph{Value Convergence: Bounding $|\Vup^s_t - \Vlow^s_t|$} It remains to bound the last term in \pref{eq: duality decompostion-2}. Define 
$c_t =  \one[ x_t^{s_t}( \E_{s'\sim P^{s_t}}[\Vup^{s'}_t - \Vlow^{s'}_t] ) y_t^{s_t} \geq \tilde{\epsilon}]$ where $\tilde{\epsilon} = \frac{u}{4}$. Then we only need to bound $C \triangleq \sum_{t=1}^T c_t$. We use the weighted sum $P_T \triangleq \sum_{t=1}^T c_t x_t^{s_t}(\E_{s'\sim P^{s_t}}[ \Vup_t^{s'} - \Vlow_t^{s'} ])y^{s_t}_t$ as a proxy. On the one hand, $P_T \ge C \tilde{\epsilon}$. On the other hand, in \pref{lem: self bounding lemma}, by recursively tracking the update of the value function and carefully choosing $\eta$ and $\beta$, we upper bound $P_T$ by $\le \frac{C\tilde{\epsilon}}{2} + \order\InParentheses{\frac{AS \ln^4(AST/\delta)}{\eta(1-\gamma)^3}}$.
Combining the upper and lower bound of $P_T$ gives $C \le \order\InParentheses{\frac{AS \ln^4(AST/\delta)}{\eta u(1-\gamma)^3}}$ (\pref{corr: Vup - Vlow}). Plugging appropriate choices of $\epsilon$, $\eta$, and $\beta$ in the above bounds proves \pref{thm:general} (see  \pref{app:general}).

\section{Conclusion and Future Directions}
In this work, we study decentralized learning in two-player zero-sum Markov games with bandit feedback. We propose the first uncoupled and convergent algorithms with non-asymptotic last-iterate convergence rates for matrix games and irreducible Markov games, respectively. We also introduce a novel notion of path convergence and provide algorithm with path convergence in Markov games without any assumption on the dynamics. Previous results either focus on average-iterate convergence or require stronger feedback/coordination or lack non-asymptotic convergence rates. Our results contribute to the theoretical understanding of the practical success of regularization and last-iterate convergence in multi-agent reinforcement learning. 

Settling the optimal last-iterate convergence rate that is achievable by uncoupled learning dynamics is an important open question.  The following directions are promising towards closing the gap between current upper bounds $\mathcal{O}(T^{-1/8})$ and $O(T^{-1/(9+\varepsilon)})$ and lower bound $\Omega(T^{-\frac{1}{2}})$, The impossibility result by \citep{muthukumar2020impossibility} demonstrates that certain algorithms with $\mathcal{O}(\sqrt{T})$ regret diverge in last-iterate. Their result indicates that the current $\Omega(\frac{1}{\sqrt{T}})$ lower bound on convergence rate may not be tight. On the other hand, our algorithms provides insights and useful templates to potential improvements on the upper bound. For instance, instead of using EXP3-IX update, adapting optimistic policy update or other accelerated first-order methods to the bandit feedback setting is an interesting future direction.

\paragraph{Acknowledgement}
We thank Chanwoo Park and Kaiqing Zhang for pointing out a mistake in our previous proof. We also thank the anonymous reviewers for their constructive feedback.
HL is supported by NSF Award IIS-1943607 and a Google Research Scholar Award.

\newpage
\printbibliography

\newpage
\startcontents[section]
\printcontents[section]{l}{1}{\setcounter{tocdepth}{2}}
\appendix

\section{Auxiliary Lemmas}
\subsection{Sequence Properties}\label{app:sequence properties}
\begin{lemma}\label{lem: useful lemma 1}
     Let $0< h < 1$, $0\leq k\leq 2$, and let $t \geq \left(\frac{24}{1-h}\ln\frac{12}{1-h}\right)^{\frac{1}{1-h}}$. Then 
    \begin{align*}
        \sum_{i=1}^t \left(i^{-k}\prod_{j=i+1}^t (1-j^{-h})\right) \leq 9\ln (t)t^{-k+h}.   
    \end{align*}
\end{lemma}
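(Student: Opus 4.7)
}

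The plan is to bound the product $\prod_{j=i+1}^t(1-j^{-h})$ by an exponential and then split the outer sum according to how close $i$ is to $t$. First I would apply the inequality $1-x \leq e^{-x}$ to get
\[
\prod_{j=i+1}^t (1-j^{-h}) \;\leq\; \exp\!\left(-\sum_{j=i+1}^t j^{-h}\right),
\]
and then lower-bound the exponent by the integral comparison $\sum_{j=i+1}^t j^{-h} \geq \int_{i+1}^{t+1} x^{-h}\,dx = \frac{(t+1)^{1-h}-(i+1)^{1-h}}{1-h}$. Applying the mean value theorem to $g(x)=x^{1-h}$ and using that $x^{-h}$ is decreasing gives the clean lower bound $\sum_{j=i+1}^t j^{-h} \geq (t-i)(t+1)^{-h}$, so the product is at most $\exp(-(t-i)(t+1)^{-h})$. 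The heuristic is that the product is essentially constant when $t-i \lesssim t^h$ and decays exponentially otherwise, so only $O(t^h \log t)$ terms near $i=t$ matter.

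Next I would split the sum at $i^\star = t - N$ where $N = \lceil c\,(t+1)^h \ln t \rceil$ for a constant $c$ (roughly $c=3$) chosen so that the condition $t \geq \bigl(\tfrac{24}{1-h}\ln\tfrac{12}{1-h}\bigr)^{1/(1-h)}$ ensures both $N \leq t/2$ and that $\exp(-c\ln t) = t^{-c}$ is small enough to absorb the tail. For the \emph{head} $i \in [t-N+1,t]$, I would use $\prod_{j=i+1}^t(1-j^{-h}) \leq 1$, $i^{-k} \leq (t-N+1)^{-k} \leq 2^k t^{-k}$ (since $N \leq t/2$), and the fact that there are $N$ terms, yielding a bound of order $2^k N t^{-k} = O(\ln(t)\,t^{h-k})$. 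For the \emph{tail} $i \leq t-N$, I would use the exponential bound $\exp(-N(t+1)^{-h}) \leq t^{-c}$ uniformly, and the crude estimate $\sum_{i=1}^{t-N} i^{-k} \leq O(t^{\max(1-k,0)} + \ln t) \leq O(t)$, giving an overall tail contribution of $O(t^{1-c})$, which for $c=3$ is $O(t^{-2})$ and hence dominated by the head.

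The main obstacle will be squeezing out the precise constant $9$ in the final bound rather than some larger constant like $16$. This requires carefully choosing the split index $N$ and tracking the inequalities with the explicit constants $24$ and $12$ in the threshold on $t$: the threshold is designed exactly so that $N = c(t+1)^h \ln t$ satisfies $N \leq \tfrac{t}{2}$ with margin and $(t-N+1)^{-k} \leq (1+\epsilon) t^{-k}$ for suitably small $\epsilon$, so that the head bound collapses from a factor of $2^k \cdot c$ to something closer to $c$ times a modest constant. Once these constants are lined up and the tail contribution is verified to be absorbed into a single $\ln(t)\,t^{h-k}$ term, the stated bound follows. The rest is routine calculus of sequences (and fits well with the auxiliary bounds collected in \pref{app:sequence properties}).
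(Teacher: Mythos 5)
Your proposal is correct and follows essentially the same route as the paper's proof: bound the product by $\exp(-\sum_j j^{-h})\le \exp(-(t-i)t^{-h})$, split the sum at $t-N$ with $N=\Theta(t^h\ln t)$, bound the head by $N\cdot 2^k t^{-k}$ using $N\le t/2$ (which is exactly what the threshold on $t$ guarantees, via $t^{1-h}\ge 12\ln t$), and absorb the tail into $t^{-k}$ via the uniform exponential decay. The only cosmetic difference is that the paper takes $N=\lceil(k+1)t^h\ln t\rceil$ rather than a fixed $c=3$, and the precise constant $9$ is not worth agonizing over since every downstream use is up to $\order(\cdot)$.
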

\begin{proof}
    Define 
    \begin{align*}
         s \triangleq \left\lceil (k+1)t^h\ln t\right\rceil 
    \end{align*}
    We first show that $s\leq \frac{t}{2}$. Suppose not, then we have  
    \begin{align*}
        (k+1)t^h\ln t > \frac{t}{2} -1 \geq \frac{t}{4}  \tag{because $t\geq 12 > 4$}
    \end{align*}
    and thus 
    $ t^{1-h} < 4(k+1)\ln t \leq 12\ln t$. However, by the condition for $t$ and \pref{lem: little calcaultion lemma}, it holds that $t^{1-h}\geq 12\ln t$, which leads to contradiction. 
    
    Then the sum can be decomposed as
    \begin{align*}
       &\sum_{i=1}^{t-s} i^{-k}\prod_{j=i+1}^t (1-j^{-h}) + \sum_{i=t-s+1}^t i^{-k}\prod_{j=i+1}^t (1-j^{-h}) \\
       &\leq t\times (1-t^{-h})^{s} + s\left(t-s+1\right)^{-k} \\
       &\leq  t\times (e^{-t^{-h}})^{s} + s\times \left(\frac{t}{2}\right)^{-k}\\
       &\leq t\times e^{-(k+1)\ln t} + s\times 2^k\times t^{-k} \\
       &\leq t^{-k} +  \left((k+1)t^h\ln t+1\right) \times 2^k \times t^{-k} \\ 
       &\leq 9\ln (t)t^{-k+h}. 
    \end{align*}
\end{proof}

\begin{lemma}\label{lem: max over i lemma}
   Let $0< h < 1$, $0\leq k\leq 2$, and let $t \geq \left(\frac{24}{1-h}\ln\frac{12}{1-h}\right)^{\frac{1}{1-h}}$. Then 
    \begin{align*}
        \max_{1\leq i\leq t} \left(i^{-k}\prod_{j=i+1}^t (1-j^{-h})\right) \leq 4t^{-k}.   
    \end{align*}
\end{lemma}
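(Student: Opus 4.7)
The proof would closely parallel the approach used for \pref{lem: useful lemma 1}, but with the summation replaced by a maximum, which actually simplifies matters. The intuition is that the expression $i^{-k}\prod_{j=i+1}^t(1-j^{-h})$ exhibits a tradeoff: as $i$ decreases, the prefactor $i^{-k}$ grows, but the product shrinks much faster than the prefactor can grow. Thus the maximum is essentially attained near $i = t$, where the product is at most $1$ and $i^{-k}$ is comparable to $t^{-k}$.

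Concretely, my plan is to split the range $1 \leq i \leq t$ at a threshold $i^* = t-s$ with $s = \lceil (k+1) t^h \ln t \rceil$, the same cutoff used in \pref{lem: useful lemma 1}. The hypothesis on $t$, together with \pref{lem: little calcaultion lemma}, guarantees $t^{1-h} \geq 12 \ln t$, which in turn yields $s \leq t/2$ by the same contradiction argument as in the previous lemma. This preparatory step is identical to what has already been carried out, so I would cite it rather than redo it.

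Then I handle the two regimes separately. For $i > t-s$, we have $i \geq t/2$, so $i^{-k} \leq 2^k t^{-k} \leq 4 t^{-k}$ (since $k \leq 2$), and trivially $\prod_{j=i+1}^t (1-j^{-h}) \leq 1$, yielding the bound $4t^{-k}$. For $i \leq t-s$, I bound the product using $1-x \leq e^{-x}$, obtaining
\[
\prod_{j=i+1}^t (1-j^{-h}) \;\leq\; \prod_{j=t-s+1}^t e^{-j^{-h}} \;\leq\; e^{-s\,t^{-h}} \;\leq\; e^{-(k+1)\ln t} \;=\; t^{-k-1}.
\]
Combined with $i^{-k} \leq 1$ (since $i \geq 1$ and $k \geq 0$), this gives $i^{-k}\prod_{j=i+1}^t(1-j^{-h}) \leq t^{-k-1} \leq t^{-k}$ in the second regime.

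Taking the maximum over the two regimes yields the claimed bound of $4t^{-k}$. There is no serious obstacle here: the only nontrivial ingredient is the verification $s \leq t/2$, which is already established in the proof of \pref{lem: useful lemma 1} and follows from the hypothesis on $t$. The rest is elementary bookkeeping using monotonicity of the product and the exponential inequality.
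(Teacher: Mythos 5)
Your proof is correct and follows essentially the same strategy as the paper's: split the range of $i$ into a region near $t$ (where $i^{-k}\leq (t/2)^{-k}\leq 4t^{-k}$ and the product is at most $1$) and a region far from $t$ (where the product decays like $e^{-\Omega(t^{1-h})}$, which is at most $t^{-k}$ by the hypothesis on $t$). The only cosmetic difference is that the paper splits directly at $i=t/2$ and bounds the product by $\exp(-\tfrac12 t^{1-h})\leq t^{-6}$ via \pref{lem: little calcaultion lemma}, whereas you reuse the cutoff $t-s$ from \pref{lem: useful lemma 1}; both routes are valid.
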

\begin{proof}
\begin{align*}
    \max_{\frac{t}{2}\leq i\leq t} \left(i^{-k}\prod_{j=i+1}^t(1-j^{-h})\right) \leq \left(\frac{t}{2}\right)^{-k} \leq 2^2 t^{-k}=4t^{-k}
\end{align*}
\begin{align*}
     \max_{1\leq i\leq \frac{t}{2}} \left(i^{-k}\prod_{j=i+1}^t(1-j^{-h})\right)  
     &\leq  \left(1-t^{-h}\right)^{\frac{t}{2}} 
     \leq  \left(\exp\left(-t^{-h}\right)\right)^{\frac{t}{2}} 
     =  \exp\left(-\frac{1}{2}t^{1-h}\right) \\
     &\overset{(a)}{\leq} \exp\left(-\frac{1}{2}\times 12\ln t\right) = \frac{1}{t^6}\leq t^{-k}. 
\end{align*}
where in $(a)$ we use \pref{lem: little calcaultion lemma}. Combining the two inequalities finishes the proof. 
\end{proof}

\begin{lemma}\label{lem: little calcaultion lemma}
    Let $0<h<1$ and $t \geq \left(\frac{24}{1-h}\ln\frac{12}{1-h}\right)^{\frac{1}{1-h}}$. Then $t^{1-h}\geq 12\ln t$. 
\end{lemma}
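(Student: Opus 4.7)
The plan is to reduce the inequality to a clean one-variable statement by a substitution that absorbs the exponent $1-h$. Set $a := 1-h \in (0,1)$ and $u := t^{a}$, so that $\ln t = \tfrac{1}{a}\ln u$. The hypothesis $t \geq \left(\tfrac{24}{a}\ln\tfrac{12}{a}\right)^{1/a}$ becomes $u \geq \tfrac{24}{a}\ln\tfrac{12}{a}$, and the conclusion $t^{1-h} \geq 12 \ln t$ becomes $u \geq \tfrac{12}{a}\ln u$. Writing $b := 12/a > 12$, the lemma is thus equivalent to the implication
\begin{equation*}
u \geq 2b \ln b \;\Longrightarrow\; u \geq b \ln u.
\end{equation*}

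Next, I would handle this implication by monotonicity. Define $g(u) := u - b\ln u$; then $g'(u) = 1 - b/u$, so $g$ is nondecreasing on $[b,\infty)$. I first note that the hypothesis threshold lies in this increasing region: $2b\ln b \geq b$ holds iff $\ln b \geq 1/2$, which is clear since $b > 12$. Hence, for any $u \geq 2b \ln b$, $g(u) \geq g(2b\ln b)$, and it suffices to check $g(2b\ln b) \geq 0$.

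Finally, I would verify the base case by a direct computation. We have
\begin{equation*}
g(2b\ln b) \;=\; 2b\ln b - b\bigl(\ln 2 + \ln b + \ln\ln b\bigr) \;=\; b\bigl(\ln b - \ln 2 - \ln\ln b\bigr),
\end{equation*}
so the claim reduces to $\ln b \geq \ln 2 + \ln\ln b$, i.e.\ $b \geq 2\ln b$. This is a standard inequality: the function $b \mapsto b - 2\ln b$ is minimized at $b=2$ with value $2 - 2\ln 2 > 0$, so $b \geq 2\ln b$ for every $b > 0$, and in particular for our $b > 12$. Combining the three steps yields the lemma.

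\textbf{Main obstacle.} There is no serious obstacle; the only subtlety is choosing the right substitution so that the exponent $1-h$ and the factor $12/a$ in the hypothesis both collapse into a single parameter $b$, after which the rest is routine calculus (monotonicity plus the textbook bound $b \geq 2\ln b$). One small care point is to make sure $2b \ln b \geq b$ before invoking monotonicity of $g$, but this is immediate from $b > 12$.
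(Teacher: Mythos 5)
Your proof is correct and follows essentially the same route as the paper: both reduce, via the substitution $u=t^{1-h}$, to the implication $u\geq 2b\ln b \Rightarrow u\geq b\ln u$ with $b=12/(1-h)$. The only difference is that the paper invokes this implication as a cited lemma (Lemma A.1 of \cite{shalev2014understanding}, stated as \pref{lem: little lemma}), whereas you prove it from scratch by monotonicity of $u\mapsto u-b\ln u$; your self-contained argument is valid.
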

\begin{proof}
    By the condition, we have 
    \begin{align*}
        t^{1-h} \geq 2\times \frac{12}{1-h} \ln \frac{12}{1-h}. 
    \end{align*}
    Applying \pref{lem: little lemma}, we get
    \begin{align*}
        t^{1-h}\geq \frac{12}{1-h}\ln(t^{1-h}) = 12\ln t. 
    \end{align*}
\end{proof}

\begin{lemma}[Lemma A.1 of \cite{shalev2014understanding}]\label{lem: little lemma}
    Let $a>0$. Then $x\geq 2a\ln(a) \Rightarrow x\geq a\ln(x)$. 
\end{lemma}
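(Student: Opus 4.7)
My plan is to analyze the single-variable function $g(x) = x - a\ln x$ for $x > 0$ and show that $g(x) \ge 0$ whenever $x \ge 2a\ln a$. Differentiating gives $g'(x) = 1 - a/x$, so $g$ is decreasing on $(0,a)$ and increasing on $(a,\infty)$, with global minimum $g(a) = a(1-\ln a)$. If $a \le e$, this minimum is already nonnegative, hence $g(x) \ge 0$ for every $x > 0$ and the implication is trivial; I would dispose of this case immediately.

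In the nontrivial regime $a > e$ (which covers the paper's intended application, where $a = 12/(1-h)$ satisfies $a \ge 12$), we have $\ln a > 1$ and hence $2a\ln a > 2a > a$, which places the threshold and everything above it inside the region on which $g$ is increasing. It therefore suffices to verify $g(2a\ln a) \ge 0$. A direct expansion gives
\[
g(2a\ln a) \;=\; 2a\ln a - a\ln(2a\ln a) \;=\; a\bigl(\ln a - \ln 2 - \ln\ln a\bigr),
\]
so the desired inequality reduces to $\ln a \ge \ln 2 + \ln\ln a$, equivalently $a \ge 2\ln a$.

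To finish, I would observe that $a \ge 2\ln a$ holds for all $a > 0$: the function $h(a) = a - 2\ln a$ has $h'(a) = 1 - 2/a$ with unique critical point at $a = 2$ and minimum value $h(2) = 2 - 2\ln 2 > 0$, so $h \ge 0$ on $(0,\infty)$. Applying this with the $a$ at hand (which is greater than $e > 2$) gives $g(2a\ln a) \ge 0$, which together with monotonicity of $g$ on $[2a\ln a, \infty)$ yields $g(x) \ge 0$ for all $x \ge 2a\ln a$, i.e., $x \ge a\ln x$. The only mildly delicate step is noticing that after the monotonicity reduction, the remaining one-point check unfolds into a self-similar inequality of the same shape as the original, which is then cleanly settled by a second elementary minimization; I do not anticipate any real obstacle beyond this observation.
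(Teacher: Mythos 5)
Your proof is correct. The paper does not prove this lemma itself—it simply cites Lemma A.1 of \cite{shalev2014understanding}—and your argument (dispose of $a\le e$ via the global minimum of $x-a\ln x$ at $x=a$; for $a>e$ use monotonicity on $(a,\infty)$ to reduce to the single check $g(2a\ln a)=a(\ln a-\ln 2-\ln\ln a)\ge 0$, i.e.\ $a\ge 2\ln a$) is essentially the standard proof of that cited lemma. All steps check out, including the verification that $2a\ln a>a$ in the nontrivial regime and that $a\ge 2\ln a$ holds for all $a>0$.
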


\begin{lemma}[Freedman's Inequality]\label{lem: azuma}
    Let $\calF_0\subset \calF_1\subset \cdots \subset \calF_n$ be a filtration, and $X_1, \ldots, X_n$ be real random variables such that $X_i$ is $\calF_i$-measurable, $\E[X_i|\calF_{i-1}]=0$, $|X_i|\leq b$, and $\sum_{i=1}^n \E[X_i^2|\calF_{i-1}]\leq V$ for some fixed $b>0$ and $V>0$. Then with probability at least $1-\delta$, 
    \begin{align*}
        \sum_{i=1}^n X_i \leq 2\sqrt{V\log(1/\delta)} + b\log(1/\delta). 
    \end{align*}
\end{lemma}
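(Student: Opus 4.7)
The plan is to prove Freedman's inequality via the classical Chernoff-style exponential martingale method, which is the standard route to Bernstein-type tail bounds for martingale difference sequences. First I would introduce, for a free parameter $\lambda>0$ to be chosen later, the process
\[
M_n \;=\; \exp\!\left(\lambda S_n - \lambda^2\phi(\lambda b)\,V_n\right),
\qquad S_n = \sum_{i=1}^n X_i, \quad V_n = \sum_{i=1}^n \E[X_i^2\mid\calF_{i-1}],
\]
where $\phi(u) = (e^u-1-u)/u^2$ (with $\phi(0)=1/2$), and argue that $M_n$ is an $(\calF_n)$-supermartingale with $\E[M_0]=1$.

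The key analytic input is the pointwise inequality $e^{\lambda x} \le 1 + \lambda x + \lambda^2 x^2 \phi(\lambda b)$ valid for every $|x|\le b$, which follows from the fact that $u\mapsto\phi(u)$ is nondecreasing on $\R$ (a short Taylor-series computation). Taking conditional expectations and invoking $\E[X_i\mid\calF_{i-1}]=0$ then gives $\E[e^{\lambda X_i}\mid\calF_{i-1}]\le 1+\lambda^2\phi(\lambda b)\,\E[X_i^2\mid\calF_{i-1}]\le \exp(\lambda^2\phi(\lambda b)\,\E[X_i^2\mid\calF_{i-1}])$, which is exactly the supermartingale property of $M_n$. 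Markov's inequality applied to $M_n$, together with the deterministic upper bound $V_n\le V$ built into the hypothesis, yields
\[
\Pr\!\left(S_n \ge a\right) \;\le\; \exp\!\left(-\lambda a + \lambda^2\phi(\lambda b)\,V\right)
\qquad\text{for every } a>0,\ \lambda>0.
\]

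Next I would choose $\lambda$ to balance the two terms in the exponent. Using that $\phi(u)\le 1$ for $u\le 1$ (since $e^u\le 1+u+u^2$ in that range), I restrict to $\lambda\le 1/b$ and pick $\lambda=\min\{1/b,\sqrt{\log(1/\delta)/V}\}$. Two cases arise: if $\sqrt{\log(1/\delta)/V}\le 1/b$, then $\lambda^2 V=\log(1/\delta)$ and requiring $\lambda a \ge 2\log(1/\delta)$ gives $a \ge 2\sqrt{V\log(1/\delta)}$; otherwise $\lambda=1/b$ and $\lambda^2 V < \log(1/\delta)/b\cdot 1/b \cdot b =\log(1/\delta)$, so the requirement $\lambda a -\lambda^2 V \ge \log(1/\delta)$ is met once $a\ge b\log(1/\delta)+V/b$, which is dominated by $b\log(1/\delta)$ plus a $\sqrt{V\log(1/\delta)}$-type term. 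Combining both cases, $a = 2\sqrt{V\log(1/\delta)}+b\log(1/\delta)$ always suffices to force the right-hand side below $\delta$, establishing the claimed high-probability bound.

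The main obstacle is bookkeeping the constants in the regime $\lambda b\approx 1$: one must ensure the bound $\phi(\lambda b)\le 1$ (or the slightly sharper $\phi(u)\le 1/(2(1-u/3))$ if one wants the textbook $2V+2ab/3$ Bernstein denominator) is used consistently so that the two regimes of $\lambda$ glue together into the clean additive form $2\sqrt{V\log(1/\delta)}+b\log(1/\delta)$, without introducing extra multiplicative constants. Everything else is a routine application of the supermartingale property and Markov's inequality, and no additional structure of the filtration beyond measurability and the given moment/boundedness conditions is needed.
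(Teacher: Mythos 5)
Your proof is correct. The paper states this lemma (Freedman's inequality) as a classical auxiliary result and gives no proof of its own, so there is no in-paper argument to compare against; your exponential-supermartingale/Chernoff derivation with $\phi(u)=(e^u-1-u)/u^2$ is the standard route, and the two-case tuning of $\lambda$ (using that $V/b\le\sqrt{V\log(1/\delta)}$ in the regime $V<b^2\log(1/\delta)$) does recover the stated constant $2\sqrt{V\log(1/\delta)}+b\log(1/\delta)$, with the deterministic hypothesis $\sum_i\E[X_i^2\mid\calF_{i-1}]\le V$ sparing you any stopping-time argument.
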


\subsection{Properties Related to \ix}\label{app:Properties related to EXP3-IX}

In \pref{lem: excessive loss part} and \pref{lem: optimistm part}, we assume that $\calF_0 \subset \calF_1 \subset \calF_2 \subset \cdots$ is a filtration, and assume that $x_i, \ell_i$ are $\calF_{i-1}$-measurable, where $x_i\in\Delta_A, \ell_i\in[0,1]^A$. Besides, $a_i\in[A]$ and $\sigma_i$ are $\calF_i$-measurable with $\E[a_i=a|\calF_{i-1}]=x_{i,a}$ and $\E[\sigma_i|\calF_{i-1}]=\ell_i$. Define $\hatell_{i,a}=\frac{\sigma_{i,a}\one[a_i=a]}{x_{i,a} + \beta_i}$ where $\beta_i$ is non-increasing. 

\begin{lemma}[Lemma 20 of \cite{bai2020near}]\label{lem: excessive loss part}
    Let $c_1, c_2, \ldots, c_t$ be fixed positive numbers. Then with probability at least $1-\delta$, 
    \begin{align*}
        \sum_{i=1}^t c_i \left\langle  x_{i}, \ell_{i} - \hatell_{i}\right\rangle =\order\left( A\sum_{i=1}^t \beta_i c_i + \sqrt{\ln(A/\delta)\sum_{i=1}^t c_i^2}\right).
    \end{align*}
\end{lemma}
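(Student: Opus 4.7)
My approach will be a bias-variance decomposition tailored to the \ix-style estimator. First I would compute the conditional mean
\[
\mu_{i,a} \;:=\; \E[\hatell_{i,a}\mid \calF_{i-1}] \;=\; \frac{x_{i,a}\ell_{i,a}}{x_{i,a}+\beta_i},
\]
and split
\[
c_i\langle x_i, \ell_i - \hatell_i\rangle \;=\; \underbrace{c_i\langle x_i, \ell_i - \mu_i\rangle}_{\text{predictable bias}} \;+\; \underbrace{c_i\langle x_i, \mu_i - \hatell_i\rangle}_{\text{martingale difference}},
\]
so that the first piece captures the deterministic bias introduced by the additive clipping $\beta_i$, while the second is a martingale difference with respect to $(\calF_i)_{i\ge 0}$.

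For the bias, direct calculation yields $\ell_{i,a}-\mu_{i,a} = \beta_i\ell_{i,a}/(x_{i,a}+\beta_i)$, and hence
\[
\langle x_i,\ell_i-\mu_i\rangle \;=\; \sum_a \frac{x_{i,a}\beta_i\ell_{i,a}}{x_{i,a}+\beta_i} \;\le\; \beta_i\sum_a \ell_{i,a} \;\le\; A\beta_i,
\]
using only $\ell_{i,a}\in[0,1]$ and $x_{i,a}/(x_{i,a}+\beta_i)\le 1$. Weighting by $c_i$ and summing immediately produces the deterministic contribution $A\sum_{i=1}^t c_i\beta_i$, which exactly matches the first term in the claimed bound.

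Next, for the martingale piece $D_i := c_i\langle x_i,\mu_i-\hatell_i\rangle$, I would verify the two standard Freedman inputs. Since only $a_i$ fires, $\langle x_i,\hatell_i\rangle = x_{i,a_i}\sigma_{i,a_i}/(x_{i,a_i}+\beta_i)\in[0,1]$, so the pointwise bound $|D_i|\le 2c_i$ is immediate, and a routine computation gives
\[
\E[D_i^2\mid\calF_{i-1}] \;\le\; c_i^2\,\E[\langle x_i,\hatell_i\rangle^2\mid\calF_{i-1}] \;\le\; c_i^2 \sum_a \frac{x_{i,a}^3}{(x_{i,a}+\beta_i)^2} \;\le\; c_i^2.
\]
Invoking Freedman's inequality (\pref{lem: azuma}) with $V=\sum_i c_i^2$ and $b=2\max_i c_i$ then gives, with probability at least $1-\delta$, that $\sum_i D_i = O\bigl(\sqrt{\log(1/\delta)\sum_i c_i^2}+(\max_i c_i)\log(1/\delta)\bigr)$.

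The main obstacle, which is also the only subtle step, is absorbing the lower-order additive term $(\max_i c_i)\log(1/\delta)$ into the pure square-root form displayed in the lemma. I would handle this either by (i) using $\max_i c_i \le \sqrt{\sum_i c_i^2}$ and trading a $\sqrt{\log(1/\delta)}$ factor against the $\log(A)$ slack already present in the stated $\log(A/\delta)$, or (ii) replacing Freedman by an \ix-tailored exponential-moment argument in the spirit of \cite{neu2015explore}, where a union bound over the $A$ actions produces the $\log(A/\delta)$ factor naturally inside the square root with no additive remainder. Either route yields the claimed bound, and the remaining algebra is routine.
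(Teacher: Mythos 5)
The paper does not prove this lemma itself --- it imports it verbatim as Lemma 20 of \cite{bai2020near} --- so I am judging your argument on its own merits. Your decomposition into a predictable bias and a martingale difference is exactly the right (and the standard) route, and both of your main computations are correct: the per-step bias $\langle x_i,\ell_i-\mu_i\rangle=\sum_a \frac{x_{i,a}\beta_i\ell_{i,a}}{x_{i,a}+\beta_i}\le A\beta_i$ gives the first term, and the conditional second moment of $\langle x_i,\hatell_i\rangle$ is indeed at most $\sum_a x_{i,a}^3/(x_{i,a}+\beta_i)^2\le 1$.

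The one genuine gap is your proposed absorption of the additive Freedman term, and specifically route (i) does not work as written. From $\max_i c_i\le\sqrt{\sum_i c_i^2}$ you get $\max_i c_i\,\ln(1/\delta)\le\sqrt{\sum_i c_i^2}\,\ln(1/\delta)$, and this is \emph{not} $\order\bigl(\sqrt{\ln(A/\delta)\sum_i c_i^2}\bigr)$: the ratio of the two is of order $\ln(1/\delta)/\sqrt{\ln(A/\delta)}\ge\sqrt{\ln(1/\delta)}$, which diverges as $\delta\to 0$. The slack between $\ln(1/\delta)$ and $\ln(A/\delta)$ is only an additive $\ln A$ inside the square root and cannot pay for a multiplicative $\sqrt{\ln(1/\delta)}$. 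The fix, however, is immediate and makes the whole difficulty disappear: since only the drawn arm fires, $\langle x_i,\hatell_i\rangle=\frac{x_{i,a_i}\sigma_{i,a_i}}{x_{i,a_i}+\beta_i}\in[0,1]$ and $\langle x_i,\mu_i\rangle\in[0,1]$ hold \emph{almost surely}, so $|D_i|\le c_i$ deterministically and plain Azuma--Hoeffding already yields $\sum_i D_i=\order\bigl(\sqrt{\ln(1/\delta)\sum_i c_i^2}\bigr)$ with no additive remainder. Freedman buys you nothing here because your variance proxy $c_i^2$ coincides (up to constants) with the square of the almost-sure bound; invoking it only creates the spurious $b\ln(1/\delta)$ term you then struggle to remove. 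Your route (ii) (a Neu-style exponential-moment argument) would also close the gap, but it is unnecessary machinery for this one-sided bound. With Azuma in place of Freedman, your proof is complete and matches the argument behind the cited lemma.
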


\begin{lemma}[Adapted from Lemma 18 of \cite{bai2020near}]\label{lem: optimistm part}
    Let $c_1, c_2, \ldots, c_t$ be fixed positive numbers. Then for any sequence $\xstar_1 ,\ldots, \xstar_t \in\Delta_A$ such that $\xstar_i$ is $\calF_{i-1}$-measurable, with probability at least $1-\delta$,  
    \begin{align*}
        \sum_{i=1}^t c_i \left \langle \xstar_i, \hatell_i - \ell_i\right\rangle = \order\left(\max_{i\leq t}\frac{c_i\ln(1/\delta)}{\beta_t}\right).
    \end{align*} 
\end{lemma}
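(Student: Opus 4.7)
The plan is to mirror the standard Exp3-IX high-probability argument of Neu (2015). The backbone will be a one-step moment generating function (MGF) bound: for any $\mathcal{F}_{i-1}$-measurable non-negative vector $\alpha_i = (\alpha_{i,a})_{a\in\calA}$ satisfying $\alpha_{i,a}\leq 2\beta_i$, I will show
\[
\mathbb{E}\!\left[\exp\!\left(\sum_a \alpha_{i,a}\bigl(\hatell_{i,a}-\ell_{i,a}\bigr)\right)\;\Big|\;\calF_{i-1}\right] \leq 1.
\]
The derivation uses $\sigma_{i,a}\in[0,1]$, the definition $\hatell_{i,a}=\frac{\sigma_{i,a}\one[a_i=a]}{x_{i,a}+\beta_i}$, the elementary inequality $\frac{z}{1+z/2}\leq \log(1+z)$ for $z\geq 0$, and $\E[\one[a_i=a]\mid\calF_{i-1}]=x_{i,a}$. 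The critical ingredient here is that the implicit-exploration shift $\beta_i$ in the denominator offsets the heavy tail of the importance-weighted estimator, so long as the exponent per coordinate is at most $2\beta_i$.

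Next, I would telescope this MGF bound across $i=1,\dots,t$: taking the product of conditional MGFs, the quantity $M_t \triangleq \exp\!\left(\sum_{i=1}^t\sum_a \alpha_{i,a}(\hatell_{i,a}-\ell_{i,a})\right)$ becomes a supermartingale with $\E[M_t]\leq 1$, whence Markov's inequality yields, with probability at least $1-\delta$,
\[
\sum_{i=1}^t \sum_a \alpha_{i,a}\bigl(\hatell_{i,a}-\ell_{i,a}\bigr)\leq \log(1/\delta).
\]
This is a uniform martingale-Chernoff bound valid for any admissible predictable sequence of exponents $\alpha_{i,a}$.

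Finally, to translate this to the statement of the lemma, I would specialize to $\alpha_{i,a}=\lambda c_i \xstar_{i,a}$, where $\lambda \triangleq 2\beta_t / \max_{i\leq t} c_i$. Since $\beta_i$ is non-increasing in $i$, we have $\beta_i\geq \beta_t$, and since $\xstar_{i,a}\leq 1$, we get $\alpha_{i,a}\leq \lambda\, c_i \leq 2\beta_t \leq 2\beta_i$, so the admissibility constraint is met. Plugging in and dividing by $\lambda$ gives
\[
\sum_{i=1}^t c_i \inner{\xstar_i, \hatell_i-\ell_i} \leq \frac{\log(1/\delta)}{\lambda} = \order\!\left(\frac{\max_{i\leq t} c_i\,\log(1/\delta)}{\beta_t}\right),
\]
which is the desired bound. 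The main technical obstacle is the verification of the one-step MGF inequality with the stochastic reward $\sigma_{i,a}\in[0,1]$ rather than a deterministic indicator; however, since the inequality $\frac{z}{1+z/2}\leq\log(1+z)$ can be applied pointwise inside the expectation over $a_i$ and $\sigma_i$, the extra randomness from $\sigma_{i,a}$ only enters through its $[0,1]$-boundedness and does not change the conclusion. The non-increasing assumption on $\beta_i$ is what allows the single choice $\lambda=2\beta_t/\max_i c_i$ to work uniformly across all $i\leq t$.
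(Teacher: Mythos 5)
Your proof is correct and follows essentially the same route as the paper: the paper reduces the claim to Lemma 18 of \cite{bai2020near} by rescaling the comparator to $w_i = 2\beta_t\, \xstar_i \in [0,2\beta_i]^A$ (valid since $\beta_i$ is non-increasing and $\xstar_{i,a}\le 1$), which is exactly your choice $\lambda = 2\beta_t/\max_{i\le t}c_i$. The only difference is that you re-derive the cited black-box concentration bound via the Neu-style one-step MGF/supermartingale argument instead of invoking it, and that derivation is the standard and correct one.
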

\begin{proof}
    Lemma 18 of \cite{bai2020near} states that for any sequence of coefficients $w_1, w_2, \cdots, w_t$ such that $w_i \in [0, 2\beta_i]^A$ is $\calF_{i-1}$-measurable, we have with probability $1-\delta$, 
    \begin{align*}
        \sum_{i=1}^t c_i \InAngles{w_i, \hatell_i - \ell_i} \le \max_{i \le t} c_i \log(1/\delta).
    \end{align*}
    Since $\xstar_i \in \Delta_A$ and $\beta_i$ is decreasing, we know $2\beta_t \cdot \xstar_i \in [0, 2\beta_i]$. Thus we can apply Lemma 18 of \cite{bai2020near} and get with probability $1-\delta$,
    \begin{align*}
        \sum_{i=1}^t c_i \InAngles{ \xstar_i, \hatell_i - \ell_i}= \sum_{i=1}^t \frac{c_i}{2\beta_t} \InAngles{ 2\beta_t \cdot \xstar_i, \hatell_i - \ell_i}  \le \max_{i \le t} \frac{c_i}{\beta_t} \log(1/\delta).
    \end{align*}
\end{proof}

\begin{lemma}[Lemma 21 of \cite{bai2020near}]\label{lem: optimistm part 2}
    Let $c_1, c_2, \ldots, c_t$ be fixed positive numbers. Then with probability at least $1-\delta$, for all $\xstar\in\Delta_A$, 
    \begin{align*}
        \sum_{i=1}^t c_i \left \langle \xstar, \hatell_i - \ell_i\right\rangle = \order\left(\max_{i\leq t}\frac{c_i\ln(A/\delta)}{\beta_t}\right). 
    \end{align*}
\end{lemma}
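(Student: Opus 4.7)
The plan is to derive the uniform-over-$\xstar$ bound from the pointwise bound given by \pref{lem: optimistm part} together with a union bound and linearity. The key observation is that the function $\xstar \mapsto \sum_{i=1}^t c_i \langle \xstar, \hatell_i - \ell_i\rangle$ is linear on the simplex $\Delta_A$, so its maximum over $\Delta_A$ is attained at one of the vertices $e_1, \ldots, e_A$. It therefore suffices to establish the stated inequality simultaneously for the $A$ deterministic sequences $\xstar_i \equiv e_a$ (for $a = 1,\ldots,A$) and then take a union bound.

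First I would instantiate \pref{lem: optimistm part} with the (constant, hence $\calF_{i-1}$-measurable) sequence $\xstar_i = e_a$ for a fixed $a\in[A]$, applied at confidence level $\delta/A$. This yields, with probability at least $1-\delta/A$,
\[
\sum_{i=1}^t c_i (\hatell_{i,a} - \ell_{i,a}) \;=\; \order\!\left(\max_{i\leq t}\frac{c_i\ln(A/\delta)}{\beta_t}\right).
\]
A union bound over $a \in [A]$ then gives, with probability at least $1-\delta$, that the same inequality holds simultaneously for all $a\in[A]$, absorbing the factor of $A$ inside the logarithm (so the bound is still $\order(\max_{i\leq t} c_i \ln(A/\delta)/\beta_t)$).

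Second, for an arbitrary $\xstar \in \Delta_A$, write $\xstar = \sum_a \xstar_a e_a$ with $\xstar_a \geq 0$ and $\sum_a \xstar_a = 1$. By linearity,
\[
\sum_{i=1}^t c_i \langle \xstar, \hatell_i - \ell_i \rangle \;=\; \sum_{a\in[A]} \xstar_a \sum_{i=1}^t c_i (\hatell_{i,a} - \ell_{i,a}) \;\leq\; \max_{a\in[A]} \sum_{i=1}^t c_i (\hatell_{i,a} - \ell_{i,a}),
\]
and on the high-probability event established above, the right-hand side is $\order(\max_{i\leq t} c_i \ln(A/\delta)/\beta_t)$. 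Since the inequality holds for every $\xstar \in \Delta_A$ on this single event, we obtain the claimed uniform bound.

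There is no real obstacle here; the only subtlety is making sure the quantifier ``for all $\xstar$'' is pulled inside the high-probability event, which is why the argument must go through the finitely many vertices of the simplex before invoking linearity, rather than trying to apply \pref{lem: optimistm part} with a data-dependent $\xstar$. A cosmetic point is the $\log A$ overhead from the union bound, which is harmless because it is already absorbed into $\ln(A/\delta)$.
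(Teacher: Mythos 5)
Your proposal is correct. Note that the paper does not actually prove this lemma at all---it is imported verbatim as Lemma~21 of \cite{bai2020near}---so there is no in-paper argument to compare against; what you have done is reconstruct a self-contained proof from the neighboring pointwise result, \pref{lem: optimistm part}. Your reduction is sound: the constant sequences $\xstar_i \equiv e_a$ are trivially $\calF_{i-1}$-measurable, so \pref{lem: optimistm part} applies at level $\delta/A$ to each coordinate, the union bound over the $A$ vertices costs only the $\ln A$ already present in $\ln(A/\delta)$, and the passage to arbitrary $\xstar\in\Delta_A$ is just the fact that a convex combination of the coordinate sums $S_a=\sum_{i=1}^t c_i(\hatell_{i,a}-\ell_{i,a})$ is at most $\max_a S_a$, which is controlled on the intersection event. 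This is essentially how Lemma~21 is obtained from Lemma~18 in \cite{bai2020near} itself, so your argument is a faithful (and correctly quantified) expansion of the citation rather than a genuinely new route; its main value is making explicit that the uniformity over $\xstar$ comes for free from linearity plus finitely many vertices, with no need for a covering argument or a data-dependent application of the pointwise lemma.
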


\begin{lemma}\label{lem: diff between equilibrium}
     Let $(x_1,y_1)$ and $(x_2,y_2)$ be equilibria of $f_1(\cdot,\cdot)$ in the domain $\calZ_1$ and $f_2(\cdot,\cdot)$ in the domain $\calZ_2$ respectively. Suppose that $\calZ_1\subseteq \calZ_2$, and that $\sup_{(x,y)\in \calZ_1}|f_1(x,y)-f_2(x,y)|\leq \epsilon$. Then for any $(x,y)\in\calZ_2$, 
     \begin{align*}
         f_2(x_1, y) - f_2(x, y_1) \leq 2\epsilon + 2d\sup_{(\tilde{x},\tilde{y})\in\calZ_2}\|\nabla f_2(\tilde{x},\tilde{y})\|_\infty
     \end{align*}
     where $d = \max_{z\in\calZ_2}\min_{z'\in\calZ_1}\|z-z'\|_1$
\end{lemma}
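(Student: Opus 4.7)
The plan is to prove the inequality by a six-term telescoping decomposition of the ``$f_2$-duality gap'' at $(x_1,y_1)$, in which we shift the test points $(x_1,y)$ and $(x,y_1)$ from the outer domain $\calZ_2$ to nearby points in the inner domain $\calZ_1$ where we can then invoke (i) the closeness hypothesis $|f_1-f_2|\le \epsilon$ on $\calZ_1$ and (ii) the equilibrium property of $(x_1,y_1)$ for $f_1$ on $\calZ_1$. Concretely, for any $(x,y)\in\calZ_2$, I would use the definition of $d$ to pick some $x''$ and $y''$ that are each $\|\cdot\|_1$-within $d$ of $x$ and $y$ respectively, and moreover are such that $(x_1,y'')$ and $(x'',y_1)$ both lie in $\calZ_1$. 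The latter requires a mild product-set structure of $\calZ_1\subseteq \calZ_2$, which holds in all applications of the lemma in the paper since $\calZ_1=\Omega_t\times \Omega_t$ is a product set.

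With these choices, I would write
\begin{align*}
f_2(x_1, y) - f_2(x, y_1)
&= \bigl[f_2(x_1, y) - f_2(x_1, y'')\bigr] + \bigl[f_2(x_1, y'') - f_1(x_1, y'')\bigr] \\
&\quad + \bigl[f_1(x_1, y'') - f_1(x_1, y_1)\bigr] + \bigl[f_1(x_1, y_1) - f_1(x'', y_1)\bigr] \\
&\quad + \bigl[f_1(x'', y_1) - f_2(x'', y_1)\bigr] + \bigl[f_2(x'', y_1) - f_2(x, y_1)\bigr].
\end{align*}
The third and fourth bracketed terms are non-positive by the Nash-equilibrium inequalities $f_1(x_1,y'')\le f_1(x_1,y_1)\le f_1(x'',y_1)$ (which use $(x_1,y''),(x'',y_1)\in\calZ_1$). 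The second and fifth are each at most $\epsilon$ by the closeness hypothesis, again because the arguments lie in $\calZ_1$. For the first and sixth, I would apply the mean-value theorem together with H\"older's inequality (pairing $\|\cdot\|_1$ with $\|\cdot\|_\infty$) to get bounds of $\|y-y''\|_1 \sup_{\calZ_2}\|\nabla f_2\|_\infty$ and $\|x-x''\|_1 \sup_{\calZ_2}\|\nabla f_2\|_\infty$, each at most $d\sup_{\calZ_2}\|\nabla f_2\|_\infty$.

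Adding the six estimates yields exactly $2\epsilon + 2d\sup_{(\tilde x,\tilde y)\in\calZ_2}\|\nabla f_2(\tilde x,\tilde y)\|_\infty$, as claimed. The only delicate point, and the main thing to check carefully when writing the proof, is the construction of $x''$ and $y''$ so that the ``cross'' points $(x_1,y'')$ and $(x'',y_1)$ actually lie in $\calZ_1$; this is essentially automatic under a product-set assumption on $\calZ_1$, which is the relevant setting throughout the paper. Everything else is a routine telescoping calculation.
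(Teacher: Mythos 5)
Your proof is correct and is essentially the same argument as the paper's: the paper likewise invokes the equilibrium inequality $f_1(x_1,y')-f_1(x',y_1)\le 0$ for $(x',y')\in\calZ_1$, converts it to $f_2(x_1,y')-f_2(x',y_1)\le 2\epsilon$ via the closeness hypothesis, and then pays $\|x-x'\|_1\|\nabla_{\x}f_2\|_\infty+\|y-y'\|_1\|\nabla_{\y}f_2\|_\infty\le 2d\sup\|\nabla f_2\|_\infty$ to move from $(x',y')$ back to $(x,y)$; your six-term telescoping just makes these steps explicit. The product-structure caveat you flag (that $(x_1,y'')$ and $(x'',y_1)$ must themselves lie in $\calZ_1$) is also implicitly assumed in the paper's proof and holds in all its applications, where $\calZ_1=\Omega_t\times\Omega_t$.
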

\begin{proof}
    Since $(x_1, y_1)$ is an equilibrium of $f_1$, we have for any $(x',y')\in\calZ_1$, 
    \begin{align*}
        f_1(x_1, y') - f_1(x', y_1) \leq 0, 
    \end{align*}
    which implies 
    \begin{align*}
        f_2(x_1, y') - f_2(x', y_1) \leq 2\epsilon. 
    \end{align*}
    For any $(x,y)\in\calZ_2$, we can find $(x',y')\in\calZ_1$ such that $\|(x,y)-(x',y')\|_1\leq d$. Therefore, for any $(x,y)\in\calZ_2$, 
    \begin{align*}
        &f_2(x_1, y) - f_2(x, y_1) \\
        &\leq  f_2(x_1, y') - f_2(x', y_1) + \|x-x'\|_1\|\nabla_{\x} f_2(x,y)\|_\infty + \|y-y'\|_1\|\nabla_{\y} f_2(x,y)\|_\infty \\
        &\leq 2\epsilon + 2d\sup_{(\tilde{x},\tilde{y})\in\calZ_2}\|\nabla f_2(\tilde{x},\tilde{y})\|_\infty. 
    \end{align*}
\end{proof}

\subsection{Markov Games}
\begin{lemma}[\cite{wei2021last}]
\label{lem:reduce-duality-to-state}
For any policy pair $x, y$, the duality gap on a two player zero-sum game can be related to duality gap on individual states:
\begin{align*}
    \max_{s,x',y'} \InParentheses{ V^s_{x,y'} - V^s_{x',y}} \le \frac{2}{1-\gamma} \max_{s,x',y'}\InParentheses{x^sQ^s_\star y'^{s} - x'^s Q^s_{\star} y^s}.
\end{align*}
\end{lemma}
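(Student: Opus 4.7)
The plan is to reduce the Markov-game duality gap to the per-state matrix-game duality gap against $Q^s_\star$ by unrolling the Bellman equation once and invoking the Shapley minimax identity $V^s_\star = \min_{x''}\max_{y''} x''^s Q^s_\star y''^s$. Since $V^s_{x,y'}$ only depends on $y'$ and $V^s_{x',y}$ only on $x'$, I would first split
\begin{align*}
\max_{s,x',y'} \bigl(V^s_{x,y'} - V^s_{x',y}\bigr)
&= \max_s \Bigl[\bigl(\max_{y'} V^s_{x,y'} - V^s_\star\bigr) + \bigl(V^s_\star - \min_{x'} V^s_{x',y}\bigr)\Bigr] \\
&\le \max_s \max_{y'}\bigl(V^s_{x,y'} - V^s_\star\bigr) + \max_s \max_{x'}\bigl(V^s_\star - V^s_{x',y}\bigr).
\end{align*}
Setting $\delta \triangleq \max_{s,x',y'}\bigl(x^s Q^s_\star y'^s - x'^s Q^s_\star y^s\bigr)$, it then suffices to bound each of these two non-negative quantities by $\delta/(1-\gamma)$.

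For the first quantity, denote it $\Delta_1$ and pick $(s_0, y')$ attaining its maximum. Expanding $V^{s_0}_{x,y'}$ by its Bellman equation and adding and subtracting $\gamma\,\E[V^{s'}_\star]$ yields
\[
V^{s_0}_{x,y'} = x^{s_0} Q^{s_0}_\star y'^{s_0} + \gamma\,\E_{a\sim x^{s_0},\,b\sim y'^{s_0},\, s'\sim P^{s_0}_{a,b}}\!\bigl[V^{s'}_{x,y'} - V^{s'}_\star\bigr] \le x^{s_0} Q^{s_0}_\star y'^{s_0} + \gamma\, \Delta_1,
\]
where the last step bounds the averaged future gap by its maximum. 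The Shapley identity then yields $V^{s_0}_\star \ge \min_{x''} x''^{s_0} Q^{s_0}_\star y^{s_0}$, hence
\[
\Delta_1 = V^{s_0}_{x,y'} - V^{s_0}_\star \le \bigl(x^{s_0} Q^{s_0}_\star y'^{s_0} - \min_{x''} x''^{s_0} Q^{s_0}_\star y^{s_0}\bigr) + \gamma\, \Delta_1 \le \delta + \gamma\, \Delta_1,
\]
which rearranges to $\Delta_1 \le \delta/(1-\gamma)$. An entirely symmetric argument starting from $V^{s_0}_{x',y}$ and using the matching half of the Shapley sandwich $V^{s_0}_\star \le \max_{y''} x^{s_0} Q^{s_0}_\star y''^{s_0}$ bounds the second quantity by the same $\delta/(1-\gamma)$, and summing gives the claim.

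The main conceptual ingredient is the Shapley sandwich $\min_{x''} x''^s Q^s_\star y^s \le V^s_\star \le \max_{y''} x^s Q^s_\star y''^s$, which is what allows the intermediate quantity $V^s_\star$ introduced by the decomposition to be absorbed into the matrix-game duality gap evaluated at the \emph{actual} policy pair $(x,y)$. Once that bridge is in place, the remainder is a single Bellman-style contraction, so I do not anticipate any serious technical obstacle beyond carefully carrying the best-response policy $y'$ (respectively $x'$) through the argument on both sides.
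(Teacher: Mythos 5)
Your proof is correct, and it is essentially the standard argument: the paper does not reprove this lemma but cites it from \cite{wei2021last}, whose proof uses the same decomposition through $V^s_\star$, the same one-step Bellman unrolling against $Q^s_\star$, and the same $\gamma$-contraction to get the $\frac{1}{1-\gamma}$ factor on each of the two halves. All the steps check out, including the nonnegativity of $\Delta_1,\Delta_2$ needed to sum the two bounds and the use of $\min_{x''} x''^{s}Q^{s}_\star y^{s} \le V^{s}_\star \le \max_{y''} x^{s}Q^{s}_\star y''^{s}$ to absorb $V^s_\star$ into the per-state matrix duality gap at the actual pair $(x,y)$.
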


\subsection{Online Mirror Descent}

\begin{lemma}\label{lem: mirror descent 1 step}
Let 
$$
    x'= \argmin_{\tilde{x}\in\Omega}\left\{\sum_{a\in \mathcal{A}} \tilde{x}_a\left(\ell_a + \epsilon_a \ln x_a\right) + \frac{1}{\eta}\KL(\tilde{x},x)\right\}
$$ for some convex set $\Omega\subseteq \Delta_{\mathcal{A}}$, $\ell\in [0, \infty)^A$, and $\epsilon\in [0, \frac{1}{\eta}]^A$.    
Then 
\begin{align*}
    (x-u)^\top (\ell+\epsilon \ln x) \leq \frac{\KL(u,x) - \KL(u,x')}{\eta} + \eta \sum_{a\in \mathcal{A}}x_a (\ell_a)^2 + \eta  \sum_{a\in \mathcal{A}} \epsilon_a^2 \ln^2 x_a.  
\end{align*}
for any $u\in\Omega$, where $\epsilon \ln x$ denotes the vector $(\epsilon_a \ln x_a)_{a\in \mathcal{A}}$. 

\end{lemma}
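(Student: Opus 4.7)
The plan is to mimic the standard one-step regret analysis for online mirror descent with the negative entropy regularizer, adapted to handle the ``effective loss'' $\tilde{\ell}_a := \ell_a + \epsilon_a \ln x_a$, which contains an entropy-gradient term on top of the true loss. The three ingredients will be (i) first-order optimality of $x'$, (ii) the three-point identity for KL divergence, and (iii) a local-norm Young inequality whose stability term is cancelled by a negative $\KL(x',x)$ contribution.

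First I would invoke the first-order optimality condition for $x'$ as the minimizer of a strictly convex objective over the convex set $\Omega$: for every $u\in\Omega$,
\[
    \Bigl\langle \tilde\ell + \tfrac{1}{\eta}(\ln x' - \ln x + \mathbf{1}),\; u - x'\Bigr\rangle \;\geq\; 0,
\]
where the $\mathbf{1}$ term drops out since $u,x'\in\Delta_\calA$. Combined with the three-point identity $\langle \ln x'-\ln x,\, u - x'\rangle = \KL(u,x) - \KL(u,x') - \KL(x',x)$, this gives
\[
    (x'-u)^\top \tilde\ell \;\leq\; \tfrac{1}{\eta}\bigl(\KL(u,x) - \KL(u,x') - \KL(x',x)\bigr).
\]
I would then split $(x-u)^\top\tilde\ell = (x-x')^\top\tilde\ell + (x'-u)^\top\tilde\ell$ and use the leftover $-\KL(x',x)/\eta$ as a budget to absorb the stability term $(x-x')^\top\tilde\ell$.

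For the stability step, Young's inequality applied coordinatewise yields $(x_a - x'_a)\tilde\ell_a \leq \tfrac{(x_a-x'_a)^2}{2\eta x_a} + \tfrac{\eta x_a \tilde\ell_a^2}{2}$, and the elementary inequality $t - 1 - \ln t \geq (t-1)^2/(2t)$ applied with $t=x_a/x'_a$ (then multiplied by $x'_a$ and summed) gives $\sum_a (x_a-x'_a)^2/x_a \leq 2\KL(x',x)$. Plugging this in exactly cancels the stability contribution against $-\KL(x',x)/\eta$, leaving a local-norm penalty $\tfrac{\eta}{2}\sum_a x_a \tilde\ell_a^2$.

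The main (and only) delicate point is to obtain the clean additive form $\eta \sum_a x_a \ell_a^2 + \eta\sum_a \epsilon_a^2 \ln^2 x_a$ on the right-hand side, rather than a messier bound with factor-of-two losses from expanding $\tilde\ell_a^2$. The key observation is that under the lemma's hypotheses $\ell_a \geq 0$ and $\epsilon_a\ln x_a \leq 0$ (since $x_a\leq 1$), so the cross term in $\tilde\ell_a^2 = (\ell_a + \epsilon_a\ln x_a)^2$ is nonpositive, giving $\tilde\ell_a^2 \leq \ell_a^2 + \epsilon_a^2 \ln^2 x_a$ with \emph{no} doubling constant. Bounding $x_a \epsilon_a^2 \ln^2 x_a \leq \epsilon_a^2 \ln^2 x_a$ using $x_a \leq 1$ then yields exactly the stated RHS.
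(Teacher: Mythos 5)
Your opening steps (first-order optimality of $x'$ plus the three-point identity, leaving a $-\tfrac{1}{\eta}\KL(x',x)$ budget) match the paper's, and your closing observation that the cross term in $\tilde\ell_a^2$ is nonpositive is correct. The gap is in the absorption step. The inequality $\sum_a (x_a-x'_a)^2/x_a \le 2\KL(x',x)$ that you derive from ``$t-1-\ln t\ge (t-1)^2/(2t)$ with $t=x_a/x'_a$'' is false: the defect $g(t)=t-1-\ln t-(t-1)^2/(2t)$ satisfies $g'(t)=(t-1)^2/(2t^2)\ge 0$ and $g(1)=0$, so the elementary inequality holds only for $t\ge 1$ and reverses for $t<1$, i.e., exactly on coordinates where $x'_a>x_a$. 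The correct general statement puts $\max(x_a,x'_a)$ in the denominator, and the aggregate claim fails too: for $x=(\varepsilon,1-\varepsilon)$ and $x'=(1-\varepsilon,\varepsilon)$ the left side is $\Theta(1/\varepsilon)$ while $2\KL(x',x)=\Theta(\ln(1/\varepsilon))$.

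One might hope the OMD structure rules out such configurations, but it does not here: the effective loss $\tilde\ell_a=\ell_a+\epsilon_a\ln x_a$ has a negative part of magnitude up to $|\ln x_a|$ when $\eta\epsilon_a=1$, so the (unconstrained) update multiplies $x_a$ by $e^{-\eta\ell_a}x_a^{-\eta\epsilon_a}$, i.e., by a factor as large as $1/x_a$ up to normalization. The iterate can therefore move multiplicatively by an unbounded amount on small coordinates, which is precisely the regime where your bound breaks. This, rather than the cross term, is the delicate point of the lemma. The paper's proof confronts it by evaluating $\max_{y\in\mathbb{R}_+^A}\{(x-y)^\top\tilde\ell-\tfrac1\eta\KL(y,x)\}=\tfrac1\eta\sum_a x_a(\eta\tilde\ell_a-1+e^{-\eta\tilde\ell_a})$ in closed form and taming the potentially exploding factor via $x_a e^{-\eta\epsilon_a\ln x_a}=x_a^{1-\eta\epsilon_a}\le 1$ together with $x^{1-y}-x\le -yx^{1-y}\ln x$ (\pref{lem: little lemma MD}); the hypothesis $\epsilon_a\le 1/\eta$ is used essentially there, whereas your argument never invokes it --- a telltale sign that something is missing. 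Repairing your route would require a multiplicative stability bound $x'_a\le O(x_a)$, which is unavailable in this setting.
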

\begin{proof}
    By the standard analysis of online mirror descent, we have for any $u \in \Omega$
    \begin{align*}
        (x-u)^\top (\ell +\epsilon \ln x) \leq \frac{\KL(u,x) - \KL(u,x')}{\eta} + (x-x')^\top (\ell + \epsilon\ln x) - \frac{1}{\eta} \KL(x',x). 
    \end{align*}
    Below, we abuse the notation by defining $\KL(\tilde{x},x)=\sum_a (\tilde{x}_a\ln \frac{\tilde{x}_a}{x_a} - \tilde{x}_a + x_a)$ without restricting $\tilde{x}$ to be a probability vector. 
    Then following the analysis in the proof of Lemma~1 of \cite{chen2021impossible}, we have 
    \begin{align*}
         &(x-x')^\top (\ell + \epsilon\ln x) - \frac{1}{\eta} \KL(x',x) \\
         &\leq \max_{y \in\mathbb{R}_+^A} \left\{ (x-y)^\top (\ell + \epsilon \ln x) - \frac{1}{\eta} \KL(y,x) \right\} \\
         &= \frac{1}{\eta} \sum_a x_a\left( \eta (\ell_a + \epsilon_a \ln x_a)  
-1 + e^{-\eta (\ell_a + \epsilon_a \ln x_a)} \right)  \\
&\leq \frac{1}{\eta} \sum_a x_a\left( \eta \epsilon_a \ln x_a + \eta^2 \ell_a^2 - e^{-\eta \ell_a}  
+ e^{-\eta \ell_a}x_a^{-\eta \epsilon_a} \right)  \tag{$z-1\leq z^2-e^{-z}$ for $z\geq 0$} \\
 &= \eta \sum_a x_a \ell_a^2 + \frac{1}{\eta}\sum_a \left(\eta \epsilon_a x_a \ln x_a + e^{-\eta \ell_a} \left(x_a^{1-\eta \epsilon_a}- x_a\right) \right)\\  
 &\leq \eta \sum_a x_a \ell_a^2 + \frac{1}{\eta}\sum_a \left(\eta \epsilon_a x_a \ln x_a + \left(x_a^{1-\eta \epsilon_a}- x_a\right) \right)  \tag{$\eta \ell_a\geq 0$ and $x_a^{1-\eta \epsilon_a} - x_a\geq 0$}\\
 &\leq \eta \sum_a x_a \ell_a^2 + \frac{1}{\eta}\sum_a \left(\eta \epsilon_a x_a \ln x_a - \eta \epsilon_a x^{1-\eta \epsilon_a} \ln x_a \right) \tag{by \pref{lem: little lemma MD}}  \\
 &\leq \eta \sum_{a} x_a \ell_a^2 + \frac{1}{\eta} \sum_a (\eta \epsilon_a \ln x_a)^2 x_a^{1-\eta \epsilon_a}   \tag{by \pref{lem: little lemma MD}}  \\
 &\leq \eta \sum_{a} x_a \ell_a^2 + \frac{1}{\eta} \sum_a (\eta \epsilon_a \ln x_a)^2.  \tag{$\eta\epsilon_a\leq 1$ and $x_a \in(0,1)$}
    \end{align*}
\end{proof}

\begin{lemma}  \label{lem: little lemma MD}
    For $x\in(0,1)$ and $y>0$, we have $x^{1-y} - x \leq -yx^{1-y} \ln x$. 
\end{lemma}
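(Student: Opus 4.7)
\textbf{Proof proposal for \pref{lem: little lemma MD}.} The plan is to divide both sides of the desired inequality by the positive quantity $x^{1-y}$, which reduces the claim to a one-variable elementary inequality that depends only on the combination $y \ln x$. Concretely, dividing through by $x^{1-y}>0$ converts the target $x^{1-y}-x \le -y x^{1-y}\ln x$ into the equivalent form $1 - x^{y} \le -y\ln x$, i.e.\ $1 + y\ln x \le x^{y}$.

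Next, I would introduce the substitution $u \triangleq -y\ln x$. Since $x\in(0,1)$ and $y>0$, we have $\ln x < 0$ and hence $u>0$. Under this substitution, $x^{y} = e^{y\ln x} = e^{-u}$ and $1 + y\ln x = 1 - u$, so the inequality to prove becomes the classical elementary bound $1 - u \le e^{-u}$, valid for all real $u$ (and in particular for $u > 0$). This is the standard convexity inequality $e^{z} \ge 1 + z$ applied at $z = -u$, which can be established in one line by noting that $g(u) \triangleq e^{-u} - (1-u)$ satisfies $g(0)=0$ and $g'(u) = 1 - e^{-u} \ge 0$ for $u \ge 0$.

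There is essentially no obstacle here: the only things to check are that the substitution direction preserves the inequality (which it does because $x^{1-y}>0$) and that $u$ lies in the range where the elementary bound applies (here $u>0$, but the inequality $1-u \le e^{-u}$ in fact holds for all real $u$, so the sign of $y$ or of $\ln x$ would not matter). Reversing the chain of equivalences then yields the stated inequality $x^{1-y} - x \le -y x^{1-y}\ln x$, completing the proof.
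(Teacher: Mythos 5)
Your proof is correct, and it takes a slightly different route from the paper's. You divide through by $x^{1-y}>0$ and substitute $u=-y\ln x>0$, reducing the claim to the tangent-line inequality $1-u\le e^{-u}$ (equivalently $e^{z}\ge 1+z$). The paper instead writes $x^{1-y}-x = e^{(\ln\frac{1}{x})(y-1)} - e^{(\ln\frac{1}{x})(-1)}$ and applies the mean value theorem to the exponential, obtaining $y\left(\ln\frac{1}{x}\right)e^{(\ln\frac{1}{x})\tilde{y}}$ for some $\tilde{y}\in[-1,y-1]$, and then bounds the intermediate point by monotonicity of the exponential to land on $-y(\ln x)x^{1-y}$. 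Both arguments are elementary and about equally short; yours has the advantage of isolating the inequality into a single standard convexity bound with no intermediate point to control, while the paper's MVT form makes it transparent exactly where the factor $x^{1-y}$ in the bound comes from (it is the worst-case value of the derivative over the interval). Your normalization step is valid since $x^{1-y}>0$, and the substitution indeed lands in the range $u>0$ where (and beyond which) the elementary bound holds, so there is no gap.
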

\begin{proof}
    \begin{align*}
        x^{1-y} - x &= e^{(\ln \frac{1}{x})(y-1)} - e^{(\ln \frac{1}{x})(-1)} \\
        &= y\left(\ln \frac{1}{x} \right) e^{(\ln \frac{1}{x})\tilde{y}}    \tag{for some $\tilde{y}\in[-1, y-1]$} \\
        &\leq y\left(\ln \frac{1}{x} \right) e^{(\ln \frac{1}{x})(-1+y)} \\
        &= -y(\ln x) x^{1-y}
    \end{align*}
    where the second equality is by the mean value theorem. 
\end{proof}

\section{Last-Iterate Convergence Rate of \pref{alg: bandit matrix game}}\label{app:bandit}

\extraproof{\pref{thm: main theorem for bandit}} 
The proof is divided into three parts. In Part I, we establish a descent inequality for $\KL(z_t^\star, z_t)$. In Part II, we give an upper bound $\KL(z_t^\star, z_t)$ by recursively applying the descent inequality. Finally in Part III, we show last-iterate convergence rate on the duality gap of $z_t = (x_t, y_t)$. In the proof, we assume without loss of generality that $t \ge t_0 = (\frac{24}{1-k_\eta - k_\epsilon}\ln(\frac{12}{1-k_\eta - k_\epsilon}))^{\frac{1}{1-k_\eta - k_\epsilon}} = (96\ln(48))^4$ since the theorem holds trivially for constant $t$. 
     \paragraph{Part I. }
    \allowdisplaybreaks
    \begin{align}
        &f_t(x_t, y_t) - f_t(\xstar_t, y_t) \nonumber \\
        &= (x_t-x_t^\star)^\top Gy_t + \epsilon_t\left(\sum_a x_{t,a}\ln x_{t,a} - \sum_a x^\star_{t,a}\ln \xstar_{t,a} \right) \nonumber \\
        &= (x_t-\xstar_t)^\top Gy_t + \epsilon_t \left( \sum_a  (x_{t,a} - \xstar_{t,a})\ln x_{t,a} \right) - \epsilon_t \underbrace{\sum_{a}\xstar_{t,a}\left(\ln \xstar_{t,a} - \ln x_{t,a}\right)}_{=\KL(x_t^\star, x_t)}  \nonumber \\
        &= (x_t-\xstar_t)^\top g_t  - \epsilon_t \KL(\xstar_t, x_t) + \underbrace{\sum_a x_{t,a}\left( (Gy_t)_a - \frac{\one[a_t=a]\sigma_t}{x_{t,a} + \beta_t} \right)}_{\triangleq ~\underxi_t} + \underbrace{\sum_a \xstar_{t,a}\left( \frac{\one[a_t=a]\sigma_t}{x_{t,a} + \beta_t} - (Gy_t)_a \right)}_{\triangleq~\underzeta_t} \tag{by the definition of $g_t$} \\
        &\leq \frac{\KL(\xstar_t, x_t) - \KL(\xstar_t, x_{t+1})}{\eta_t} + \eta_t \sum_a x_{t,a}\left(\frac{\one[a_t=a]}{x_{t,a}+\beta_t}\right)^2 + \eta_t \sum_{a} \epsilon_t^2 \ln^2(x_{t,a}) - \epsilon_t \KL(\xstar_t, x_t) + \underxi_t + \underzeta_t \nonumber \tag{by \pref{lem: mirror descent 1 step}} \\
        &\leq \frac{(1-\eta_t\epsilon_t)\KL(\xstar_t, x_t) - \KL(\xstar_t,  x_{t+1})}{\eta_t} + 2\eta_t \sum_a \left( \frac{\one[a_t=a]}{x_{t,a}+\beta_t} + \epsilon_t^2\ln^2(x_{t,a})\right) + \underxi_t + \underzeta_t  \nonumber \\
        &\leq \frac{(1-\eta_t\epsilon_t)\KL(\xstar_t, x_t) - \KL(\xstar_t,  x_{t+1})}{\eta_t} \nonumber \\
        &\qquad + 2\eta_t A \times \underbrace{\frac{1}{A}\sum_a \left(\frac{\one[a_t=a]}{x_{t,a}+\beta_t} -1 \right)}_{\triangleq\underlambda_t} + 2\eta_t A + 2\eta_t A\epsilon_t^2 \ln^2\left(At^2\right) + \underxi_t + \underzeta_t \nonumber  \\
        &\leq \frac{(1-\eta_t\epsilon_t)\KL(\xstar_t, x_t) - \KL(\xstar_t,  x_{t+1})}{\eta_t} + 10\eta_t A\ln^2\left(At\right) + 2\eta_t A\underlambda_t +  \underxi_t + \underzeta_t. \label{eq: single step convergence}
    \end{align}
    Rearranging the above inequality, we get 
    \begin{align*}
        &\KL(\xstar_{t+1}, x_{t+1})\\ 
        &\leq (1-\eta_t\epsilon_t)\KL(\xstar_t, x_t) + \eta_t(f_t(\xstar_t, y_t) - f_t(x_t,y_t)) + 10\eta_t^2 A\ln^2\left(At\right) + 2\eta_t^2 A\underlambda_t  + \eta_t \underxi_t+ \eta_t \underzeta_t + \underv_t, 
    \end{align*}
    where $\underv_t\triangleq \KL(\xstar_{t+1}, x_{t+1}) - \KL(\xstar_t, x_{t+1})$. Similarly, since the algorithm for the $y$-player is symmetric, we have the following:  
    \begin{align*}
         &\KL(\ystar_{t+1}, y_{t+1}) \\
         &\leq (1-\eta_t\epsilon_t)\KL(\ystar_t, y_t) + \eta_t(f_t(x_t,y_t) - f_t(x_t, \ystar_t)) + 10\eta_t^2 A\ln^2\left(At\right) + 2\eta_t^2 A\overlambda_t  + \eta_t \overxi_t + \eta_t \overzeta_t + \overv_t
    \end{align*}
    where 
    \begin{align*}
        &\overlambda_t \triangleq \frac{1}{A}\sum_{b}\left(\frac{\one[b_t=b]}{y_{t,b}+\beta_t} -1 \right)\\
        &\overxi_t \triangleq \sum_b y_{t,b}\left( \InParentheses{-(G^\top x_t)_b+1} - \frac{\one[b_t=b](-\sigma_t+1)}{y_{t,b} + \beta_t} \right) \\
        &\overzeta_t \triangleq \sum_b y^\star_{t,b}\left( \frac{\one[b_t=b](-\sigma_t+1)}{y_{t,b} + \beta_t} - \InParentheses{-(G^\top x_t)_b+1} \right) \\
        &\overv_t \triangleq \KL(\ystar_{t+1}, y_{t+1}) - \KL(\ystar_t, y_{t+1}).
    \end{align*}
    Adding the two inequalities above up and using the fact that $f_t(\xstar_t, y_t) - f_t(x_t, \ystar_t)\leq 0$, we get 
    \begin{align}
        \KL(\zstar_{t+1}, z_{t+1})\leq (1-\eta_t\epsilon_t)\KL(\zstar_t, z_t) + 20\eta_t^2 A\ln^2\left(At\right) + 2\eta_t^2 A\lambda_t + \eta_t \xi_t + \eta_t \zeta_t + v_t, \label{eq: bandit recursion}
    \end{align}
    where $\square\triangleq \underline{\square} + \overline{\square}$ for  $\square=\lambda_t, \xi_t, \zeta_t, v_t$. 
    \paragraph{Part II. }
    Expanding the recursion in \pref{eq: bandit recursion}, and using the fact that $1-\eta_1\epsilon_1=0$,  we get
    \begin{align*}
        &\KL(\zstar_{t+1}, z_{t+1}) \leq  \underbrace{20A\ln^2(At)\sum_{i=1}^t   w^i_t\eta_i^2}_{\term_1} + \underbrace{2A\sum_{i=1}^t w^i_t \eta_i^2\lambda_i}_{\term_2} + \underbrace{\sum_{i=1}^t w^i_t \eta_i\xi_i}_{\term_3} + \underbrace{\sum_{i=1}^t w^i_t \eta_i\zeta_i}_{\term_4} + \underbrace{\sum_{i=1}^t w^i_tv_i}_{\term_5}
    \end{align*}
    where 
    $w^i_t \triangleq \prod_{j=i+1}^t (1-\eta_j \epsilon_j)$. We can bound each term as follows. 
    \ \\
    \ \\
    By \pref{lem: useful lemma 1} and the fact that that $t\geq t_0$, we have
    \begin{align*}
        \term_1 
        &\leq \order\left(A\ln^2(At) \ln(t) t^{ - 2k_\eta  +(k_\eta + k_\epsilon)}\right) =\order\left(A\ln^3(At)t^{-k_\eta + k_\epsilon}\right) = \order\left(A\ln^3(At)t^{-\frac{1}{2}}\right).  
    \end{align*}
    \ \\
    Using \pref{lem: optimistm part} with $\xstar = \frac{1}{A}\one$, $\ell_i=\one$ for all $i$, and $c_i = w^i_t\eta_i^2$, we have with probability $1-\frac{\delta}{t^2}$, 
    \begin{align*}
        \term_2 &= \order\left( \frac{A\ln(At/\delta)\max_{i\leq t} c_i}{\beta_t}\right) \overset{(a)}{=} \order\left(A\ln(At/\delta)t^{k_\beta}\times t^{-2k_\eta}\right) = \order\left(A\ln(At/\delta)t^{-\frac{1}{2}}\right)
    \end{align*}
    where in $(a)$ we use \pref{lem: max over i lemma} with the fact that $t\geq t_0$. 
    \ \\
    Using \pref{lem: excessive loss part} with
    $c_i =  w^i_t\eta_i$, we have with probability at least $1-\frac{\delta}{t^2}$, 
\begin{align*}
    \term_3 &\leq \order\left(A\sum_{i=1}^t \beta_i c_i + \sqrt{\ln(At/\delta)\sum_{i=1}^t c_i^2}\right) \\
    &= \order\left( A\sum_{i=1}^t\left[ i^{-k_\beta-k_\eta}\prod_{j=i+1}^t \left(1-j^{-k_\eta - k_\epsilon}\right)\right]  + \sqrt{\ln(At/\delta)\sum_{i=1}^t \left[i^{-2k_\eta}\prod_{j=i+1}^t\left(1-j^{-k_\eta-k_\epsilon}\right)\right] }\right)\\
    &= \order\left( A\ln (t)t^{-k_\beta + k_\epsilon} + t^{-\frac{1}{2}k_\eta + \frac{1}{2}k_\epsilon}\log(At/\delta)\right) \tag{by \pref{lem: useful lemma 1} and $t\geq t_0$}\\
    &= \order\left(A\log(At/\delta)t^{-\frac{1}{4}}\right). 
    \end{align*}
    \ \\
    Using \pref{lem: optimistm part} with $c_i = w^i_t\eta_i$, we get with probability at least $1-\frac{\delta}{t^2}$, 
\begin{align*}
    \term_4 &=  \order\left(\frac{\ln(At/\delta)\max_{i\leq t} c_i}{\beta_t}\right) \overset{(a)}{\leq} \order\left(\ln(At/\delta)t^{-k_\eta + k_\beta} \right) =  \order\left( \ln(At/\delta)t^{-\frac{1}{4}} \right)
\end{align*}
where $(a)$ is by \pref{lem: max over i lemma} and $t\geq t_0$.
    \ \\
    By \pref{lem: bound v_t} and  \pref{lem: useful lemma 1}, 
    \begin{align*}
        \term_5 = \order\left(\ln^2(At)\sum_{i=1}^t w^i_t t^{-1}\right) = \order\left(\ln^3(At)t^{-1 + k_\eta + k_\epsilon}\right) = \order\left(\ln^3(At)t^{-\frac{1}{4}}\right). 
    \end{align*}
    Combining all terms above, we get that with probability at least $1-\frac{3\delta}{t^2}$, 
    \begin{align}
        \KL(\zstar_{t+1}, z_{t+1}) = \order\left(  A\ln^3(At/\delta)t^{-\frac{1}{4}} \right).   \label{eq: z* and z}
    \end{align}
    Using an union bound over $t$, we see that \pref{eq: z* and z} holds for all $t\geq t_0$ with probability at least $1-\order(\delta)$. 
    \paragraph{Part III.} 
    Using \pref{lem: diff between equilibrium} with $f_t(x,y)$ and $x^\top Gy$ with domains $\Omega_t\times \Omega_t$ and $\Delta_A\times \Delta_A$, we get that for any $(x,y)\in\Delta_A\times \Delta_A$, 
    \begin{align*}
        x_{t}^{\star\top} Gy - x^\top G\ystar_t \leq \order\left(\epsilon_t\ln(A) + \frac{1}{t}\right) = \order\left(\ln(A)t^{-k_\epsilon}\right) = \order\left(\ln(A)t^{-\frac{1}{8}}\right). 
    \end{align*}
    Further using \pref{eq: z* and z}, we get that with probability at least $1-3\delta$, for any $t$ and any $(x,y)\in\Delta_A\times \Delta_A$,
    \begin{align*}
        x_t^\top Gy - x^\top Gy_t
        &\leq \order\left(\ln(A)t^{-\frac{1}{8}} + \|z_t-\zstar_t\|_1\right) \overset{(a)}{=} \order\left(\ln(A)t^{-\frac{1}{8}} +  \sqrt{\KL(\zstar_t, z_t)}\right) \\
        &= \order\left(\sqrt{A}\ln^{3/2}(At/\delta)t^{-\frac{1}{8}}\right)
    \end{align*}
    where $(a)$ is by Pinsker's inequality. This completes the proof of \pref{thm: main theorem for bandit}.
\endextraproof

\begin{lemma}\label{lem: bound v_t}
    $|v_t|= \order\left(\ln^{2}(At)t^{-1}\right)$. 
\end{lemma}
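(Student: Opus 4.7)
The plan is to decompose $v_t = \underv_t + \overv_t = [\KL(\xstar_{t+1}, x_{t+1}) - \KL(\xstar_t, x_{t+1})] + [\KL(\ystar_{t+1}, y_{t+1}) - \KL(\ystar_t, y_{t+1})]$ and then bound each piece as a product of two factors: the Lipschitz constant of $\KL(\cdot, q)$ in its first argument (which turns out to be logarithmic), and the movement of the regularized Nash equilibrium between consecutive rounds (which should be of order $1/t$). For the first factor, the gradient $\nabla_p \KL(p,q) = \ln(p/q) + \one$ has $\ell_\infty$-norm at most $\order(\ln(At))$ whenever $p,q\in\Omega_{t+1}$, since every coordinate of such $p,q$ is at least $1/(A(t+1)^2)$. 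The segment from $\xstar_t$ to $\xstar_{t+1}$ lies inside the convex set $\Omega_{t+1}$ (note both $\Omega_t$ and $\Omega_{t+1}$ sit inside $\Omega_{t+1}$, as $\Omega_t\subseteq\Omega_{t+1}$), so the mean value theorem yields $|\KL(\xstar_{t+1}, x_{t+1}) - \KL(\xstar_t, x_{t+1})| \le \order(\ln(At))\cdot\|\xstar_{t+1}-\xstar_t\|_1$, and analogously for the $y$-terms. Therefore it suffices to prove $\|\zstar_{t+1}-\zstar_t\|_1 = \order(\ln(At)/t)$.

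Second, I would establish this equilibrium shift bound via a variational-inequality (VI) perturbation argument. The saddle-point operator $F^{(t)}(z) = (\nabla_x f_t, -\nabla_y f_t) = (Gy + \epsilon_t(\ln x + \one),\, -G^\top x - \epsilon_t(\ln y + \one))$ is $\epsilon_t$-strongly monotone in $\|\cdot\|_1$, because $-\phi$ is $1$-strongly convex in $\|\cdot\|_1$ and the bilinear cross-term $x^\top Gy$ cancels inside $\langle F^{(t)}(z)-F^{(t)}(z'), z-z'\rangle$. The inter-round discrepancy is $\|F^{(t+1)}(z)-F^{(t)}(z)\|_\infty \le |\epsilon_{t+1}-\epsilon_t|\cdot\max_a|\ln p_a + 1| = \order(\epsilon_t\ln(At)/t)$ on $\Omega_{t+1}$, using $|\epsilon_{t+1}-\epsilon_t| \le \tfrac{1}{8}t^{-9/8} = \order(\epsilon_t/t)$ by the derivative of $t^{-k_\epsilon}$. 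Adding the VI inequalities $\langle F^{(t)}(\zstar_t), \zstar_{t+1}-\zstar_t\rangle\ge 0$ and $\langle F^{(t+1)}(\zstar_{t+1}), \zstar_t-\zstar_{t+1}\rangle\ge 0$, combining with strong monotonicity, and plugging the discrepancy bound above gives $\|\zstar_{t+1}-\zstar_t\|_1 \le \order(\|F^{(t+1)}(\zstar_{t+1})-F^{(t)}(\zstar_{t+1})\|_\infty/\epsilon_t) = \order(\ln(At)/t)$.

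Combining the two estimates gives $|v_t|\le\order(\ln(At))\cdot\order(\ln(At)/t) = \order(\ln^2(At)/t)$, the claimed bound. The main subtlety, and the step I expect to be the hardest, is that $\Omega_t$ and $\Omega_{t+1}$ differ, so $\zstar_{t+1}$ may not be feasible for the VI defining $\zstar_t$ and cannot be used directly as a test vector. I plan to handle this by projecting $\zstar_{t+1}$ into $\Omega_t$: every point of $\Omega_{t+1}$ lies at $\ell_1$-distance at most $2A(\tfrac{1}{At^2}-\tfrac{1}{A(t+1)^2}) = \order(1/t^3)$ from $\Omega_t$, so the projection introduces an additional correction of order $\|F^{(t)}(\zstar_t)\|_\infty\cdot\order(1/t^3) = \order(\ln(At)/t^3)$ in the VI inequality. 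Since $\order(1/t^3)\ll\order(1/t)$, this correction is dominated by the regularization-induced shift and leaves the leading $\order(\ln(At)/t)$ rate, and thus the final $\order(\ln^2(At)/t)$ bound on $|v_t|$, intact.
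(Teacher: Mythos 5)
Your proposal is correct and follows essentially the same route as the paper: the same two-factor decomposition into an $\order(\ln(At))$ Lipschitz bound for $\KL(\cdot,z_{t+1})$ (the paper's \pref{lem: KL diff}) times the equilibrium drift $\|\zstar_{t+1}-\zstar_t\|_1=\order(\ln(At)/t)$ (the paper's \pref{lem: evolve equilibrium}), with the same three ingredients for the drift bound --- $\epsilon_t$-strong convexity/monotonicity, the $\order(\epsilon_t\ln(At)/t)$ perturbation from the changing regularizer, and an $\order(1/t^3)$ correction from mixing into the smaller domain $\Omega_t$. The only (cosmetic) difference is that you phrase the drift bound via the variational-inequality first-order conditions for the saddle-point operator, whereas the paper argues directly with the gap function $f(x,\ystar)-f(\xstar,y)$; the resulting quadratic inequality and rates are identical.
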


\begin{proof}
\begin{align*}
    \left|v_t\right| &= \left|\KL(\zstar_{t+1}, z_{t+1}) - \KL(\zstar_t, z_{t+1})\right| \\ 
    &\leq \order\left(\ln(At)\|\zstar_{t+1}-\zstar_{t}\|_1\right) \tag{by \pref{lem: KL diff}} \\
    &= \order\left(\ln^2(At)t^{-1}\right).  \tag{by \pref{lem: evolve equilibrium}}
\end{align*}

\end{proof}

\begin{lemma}\label{lem: KL diff}
    Let $x, x_1, x_2\in\Omega_{t}$. Then 
    \begin{align*}
        \left|\KL(x_1, x) - \KL(x_2, x)\right| \leq \order\left(\ln(At)\|x_1-x_2\|_1\right). 
    \end{align*}
\end{lemma}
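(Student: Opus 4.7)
The plan is to treat $\KL(\cdot, x)$ as a function of its first argument and bound the difference by the mean value theorem (or equivalently, by the supremum of the gradient's dual norm). Specifically, define $F: \Omega_t \to \R$ by $F(w) = \KL(w, x) = \sum_a w_a \ln w_a - \sum_a w_a \ln x_a$. Then for any coordinate $a$,
\[
\nabla F(w)_a = \ln w_a + 1 - \ln x_a.
\]
Since $\Omega_t$ is convex, by the fundamental theorem of calculus along the segment from $x_2$ to $x_1$, we have
\[
|F(x_1) - F(x_2)| \;\leq\; \|x_1 - x_2\|_1 \cdot \sup_{w \in \Omega_t} \|\nabla F(w)\|_\infty.
\]

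The next step is to bound $\|\nabla F(w)\|_\infty$ uniformly over $w \in \Omega_t$. Any $w \in \Omega_t$ satisfies $\frac{1}{At^2} \leq w_a \leq 1$, and the same holds for $x$ by assumption. Hence $\ln w_a, \ln x_a \in [-\ln(At^2), 0]$, which gives
\[
|\ln w_a + 1 - \ln x_a| \;\leq\; 2\ln(At^2) + 1 \;=\; \order(\ln(At)).
\]
Plugging this back yields $|F(x_1) - F(x_2)| \leq \order(\ln(At)) \|x_1 - x_2\|_1$, which is exactly the claim.

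There is no real obstacle here; the lemma is essentially a Lipschitz continuity statement for $w \mapsto \KL(w,x)$ on $\Omega_t$, and the crucial quantitative input is just the lower bound $w_a, x_a \geq \frac{1}{At^2}$ baked into the definition of $\Omega_t$, which prevents the logarithmic terms from blowing up faster than $\ln(At)$. The only minor care needed is to note that $\Omega_t$ is indeed convex, so the segment integral is well-defined and stays inside the domain where the gradient bound applies.
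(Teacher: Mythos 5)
Your proof is correct and follows essentially the same route as the paper's: both are first-order arguments that exploit the uniform $\order(\ln(At))$ bound on the gradient of $w\mapsto \KL(w,x)$ over $\Omega_t$, which comes from the coordinate lower bound $w_a \ge \frac{1}{At^2}$. The only cosmetic difference is that you integrate the gradient along the segment (requiring the convexity of $\Omega_t$, which you correctly note), whereas the paper evaluates the gradient at one endpoint and discards the remainder $-\KL(x_2,x_1)\le 0$ by convexity of the KL divergence.
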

\begin{proof}
    \begin{align*}
    &\KL(x_1, x) - \KL(x_2, x) \\
    &= \sum_a \left(x_{1,a}\ln\frac{x_{1,a}}{x_a} - x_{2,a}\ln\frac{x_{2,a}}{x_a} \right) \\
    &= \sum_a (x_{1,a} - x_{2,a})\ln\frac{x_{1,a}}{x_a} + \sum_a x_{2,a} \left(\ln\frac{x_{1,a}}{x_a} - \ln\frac{x_{2,a}}{x_a}\right) \\
    &\leq \order\left(\ln(At)\|x_1 - x_2\|_1\right) -\KL(x_2, x_1) \\
    &\leq \order\left(\ln(At)\|x_1 - x_2\|_1\right). 
    \end{align*}
    Similarly, $\KL(x_2,x) - \KL(x_1, x) \leq \order\left(\ln(At)\|x_1 - x_2\|_1\right)$. 
\end{proof}

\begin{lemma}\label{lem: evolve equilibrium}
    $\|\zstar_t-\zstar_{t+1}\|_1 =  \order\left(\frac{\ln(At)}{t}\right)$. 
\end{lemma}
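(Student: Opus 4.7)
The plan is a standard sensitivity analysis for saddle points of strongly convex-concave regularized games, applied to the two regularized games $f_t$ and $f_{t+1}$. The key structural ingredients are: (i) $f_t$ is $\epsilon_t$-strongly convex in $x$ and $\epsilon_t$-strongly concave in $y$ in the KL sense (equivalently, by Pinsker, the saddle operator
\[
    F_t(x,y) := (\nabla_x f_t(x,y), -\nabla_y f_t(x,y)) = \big(Gy + \epsilon_t(\ln x + \one),\; -G^\top x + \epsilon_t(\ln y + \one)\big)
\]
is $\tfrac{\epsilon_t}{2}$-strongly monotone w.r.t. $\|z-z'\|_1 := \|x-x'\|_1 + \|y-y'\|_1$, since the bilinear part is monotone and $-\phi$ is $1$-strongly convex in $\|\cdot\|_1$); and (ii) the domains are nested: $\Omega_t \subset \Omega_{t+1}$ since $\tfrac{1}{At^2} > \tfrac{1}{A(t+1)^2}$, so $\zstar_t$ is automatically feasible for the $(t+1)$-game.

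Using the variational inequality characterization of equilibria, nestedness directly gives $\langle F_{t+1}(\zstar_{t+1}),\, \zstar_t - \zstar_{t+1}\rangle \geq 0$. For the reverse direction, $\zstar_{t+1}$ may violate the tighter constraint $x_a \geq \tfrac{1}{At^2}$, so I would project it into $\Omega_t \times \Omega_t$ by mixing with the uniform pair: $\tilde z := (1-\lambda)\zstar_{t+1} + \lambda\,(\tfrac{1}{A}\one, \tfrac{1}{A}\one)$. A direct calculation shows $\lambda = \order(1/t^3)$ suffices (the per-coordinate gap between the lower bounds of $\Omega_t$ and $\Omega_{t+1}$ is $\tfrac{2t+1}{At^2(t+1)^2} = \order(1/(At^3))$), which introduces only the lower-order error $\|\tilde z - \zstar_{t+1}\|_1 = \order(1/t^3)$. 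The VI for $\zstar_t$ then gives $\langle F_t(\zstar_t), \tilde z - \zstar_t\rangle \geq 0$, and adding the two VIs and invoking strong monotonicity yields
\[
    \tfrac{\epsilon_t}{2}\,\|\zstar_t - \zstar_{t+1}\|_1^2 \;\leq\; \langle F_t(\zstar_t),\, \tilde z - \zstar_{t+1}\rangle + \langle F_t(\zstar_{t+1}) - F_{t+1}(\zstar_{t+1}),\, \zstar_{t+1} - \zstar_t\rangle.
\]

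To finish, I would bound the two terms on the right. Since $\|G\ystar_t\|_\infty \leq 1$ and $|\ln x_a| \leq \order(\ln(At))$ on $\Omega_t$, we get $\|F_t(\zstar_t)\|_\infty = \order(\ln(At))$, so the first term is $\order(\ln(At)/t^3)$. For the second, $F_t(z) - F_{t+1}(z) = (\epsilon_t - \epsilon_{t+1})\cdot (\ln x + \one,\; \ln y + \one)$, whose $\ell_\infty$ norm at $\zstar_{t+1} \in \Omega_{t+1}\times \Omega_{t+1}$ is $\order(|\epsilon_t - \epsilon_{t+1}|\ln(At)) = \order(\ln(At)\cdot t^{-1-k_\epsilon})$ (using $|\epsilon_t - \epsilon_{t+1}| = \order(t^{-1-k_\epsilon})$ from the mean value theorem). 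Letting $u := \|\zstar_t - \zstar_{t+1}\|_1$ and $\epsilon_t = t^{-k_\epsilon}$, the resulting inequality
\[
    \tfrac{1}{2}\,t^{-k_\epsilon} u^2 \;\leq\; \order(\ln(At)/t^3) + \order\!\big(\ln(At)\cdot t^{-1-k_\epsilon}\big)\,u
\]
gives $u = \order(\ln(At)/t)$, with the linear-in-$u$ term dominating since $k_\epsilon < 1$ (the square-root contribution from the other term is $\order(\sqrt{\ln(At)}\,t^{(k_\epsilon-3)/2})$, which is strictly lower order). The only non-routine ingredient is the asymmetric-domain projection handling; once absorbed into the negligible $\order(1/t^3)$ term, everything else is the standard strong-monotonicity perturbation argument for parametric VIs.
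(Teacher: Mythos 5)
Your proposal is correct and follows essentially the same route as the paper's proof: the paper phrases the perturbation argument via $\epsilon_t$-strong convexity/concavity of $f_t$ in the KL sense (plus Pinsker) applied to duality-gap quantities, while you use the equivalent strongly-monotone variational-inequality formulation, but the substance is identical --- the same $\order(t^{-3})$ uniform-mixing trick to handle the domain mismatch $\Omega_{t+1}\not\subseteq\Omega_t$, the same bound $\order(|\epsilon_t-\epsilon_{t+1}|\ln(At))$ on the game perturbation, and the same quadratic-inequality solve yielding $\order(\ln(At)/t)$ with the $t^{(k_\epsilon-3)/2}$ term being lower order.
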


\begin{proof}
Notice that the feasible sets for the two time steps are different. Let $(x_{t+1}',y_{t+1}')$ be such that $x_{t+1}'=\frac{p_{t+1}}{A}\one + (1-p_{t+1})\xstar_{t+1}$ and  $y_{t+1}'=\frac{p_{t+1}}{A}\one + \left(1-p_{t+1}\right)\ystar_{t+1}$ where $p_{t+1}=\min\{1, 2t^{-3}\}$. Since $(x_{t+1}^*, y_{t+1}^*)\in\Omega_{t+1}\times \Omega_{t+1}$, we have that for any $a$, $x_{t+1,a}'\geq \frac{p_{t+1}}{A} + (1-p_{t+1})\frac{1}{A(t+1)^2}\geq \frac{1}{At^2}$. Hence, $(x_{t+1}',  y_{t+1}')\in \Omega_t\times \Omega_t$. 

Because $(\xstar_{t+1}, \ystar_{t+1})$ is the equilibrium of $f_{t+1}$ in $\Omega_{t+1}\times \Omega_{t+1}$, we have that for any $(x,y)\in\Omega_{t+1}\times \Omega_{t+1}$,
\begin{align*}
    &f_{t+1}(x,\ystar_{t+1}) - f_{t+1}(\xstar_{t+1}, y) \\
    &= f_{t+1}(x,\ystar_{t+1}) - f_{t+1}(\xstar_{t+1}, \ystar_{t+1}) + f_{t+1}(\xstar_{t+1}, \ystar_{t+1}) - f_{t+1}(\xstar_{t+1}, y) \\
    &\geq \epsilon_{t+1}\KL(x, \xstar_{t+1}) + \epsilon_{t+1}\KL(y, \ystar_{t+1})\\
    &\geq \frac{1}{2}\epsilon_{t+1}\left(\|x-\xstar_{t+1}\|_1^2 + \|y-\ystar_{t+1}\|_1^2\right) \tag{Pinsker's inequality} \\
    &\geq \frac{1}{4}\epsilon_{t+1}\|z-\zstar_{t+1}\|_1^2. 
\end{align*}
where the first inequality is due to the following calculation: 
\begin{align*}
    \epsilon_{t+1}\KL(x, \xstar_{t+1}) &= 
    f_{t+1}(x, \ystar_{t+1}) - f_{t+1}(\xstar_{t+1}, \ystar_{t+1}) - \nabla_{\x} f_{t+1}(\xstar_{t+1}, \ystar_{t+1})^\top (x-\xstar_{t+1}) \\
    &\leq f_{t+1}(x, \ystar_{t+1}) - f_{t+1}(\xstar_{t+1}, \ystar_{t+1})
\end{align*}
where we use $\nabla_{\x} f_{t+1}(\xstar_{t+1}, \ystar_{t+1})^\top (x-\xstar_{t+1})\geq 0$ since $\xstar_{t+1}$ is the minimizer of $f_{t+1}(\cdot, \ystar_{t+1})$ in $\Omega_{t+1}$.  
Specially, we have 
\begin{align}
    f_{t+1}(\xstar_t,\ystar_{t+1}) - f_{t+1}(\xstar_{t+1}, \ystar_t) \geq \frac{1}{4}\epsilon_{t+1}\|\zstar_t- \zstar_{t+1}\|_1^2.  \label{eq: direct 1}
\end{align}
Similarly, because $(\xstar_t, \ystar_t)$ is the equilibrium of $f_t$ in $\Omega_t\times \Omega_t$, we have 
\begin{align*}
    f_t(x_{t+1}',\ystar_t) - f_t(\xstar_t, y_{t+1}') \geq \frac{1}{4}\epsilon_t  \|z_{t+1}'-\zstar_t\|_1^2, 
\end{align*}
which implies 
\begin{align}
    &f_t(\xstar_{t+1},\ystar_t) - f_t(\xstar_t, \ystar_{t+1}) \nonumber \\ 
    &= f_t(x_{t+1}',\ystar_t) - f_t(\xstar_t, y_{t+1}') + f_t(\xstar_{t+1},\ystar_t) - f_t(x_{t+1}',\ystar_t) + f_t(\xstar_t, y_{t+1}') - f_t(\xstar_t, \ystar_{t+1}) \nonumber \\
    &\geq \frac{1}{4}\epsilon_t  \|z_{t+1}'-\zstar_t\|_1^2 - \sup_{x\in\Omega_{t+1}}\|\nabla_{\x} f_t(x,\ystar_t)\|_\infty\|x_{t+1}'-\xstar_{t+1}\|_1 - \sup_{y\in\Omega_{t+1}}\|\nabla_{\y} f_t(\xstar_t,y)\|_\infty\|y_{t+1}'-\ystar_{t+1}\|_1 \nonumber \\
    &\geq \frac{1}{8}\epsilon_t  \|\zstar_{t+1}-\zstar_t\|_1^2 - \frac{1}{4}\epsilon_t \|z_{t+1}'-\zstar_{t+1}\|_1^2-  \order\left(\ln(At)\times \frac{1}{t^3}\right) \nonumber  \\
    &\geq \frac{1}{8}\epsilon_t  \|\zstar_{t+1}-\zstar_t\|_1^2 - \order\left( \frac{\ln(At)}{t^3}\right).  
    \label{eq: direct 2}
\end{align}
In the first inequality, we use the fact that $f_t(x,y)$ is convex in $x$ and concave in $y$ and H\"older's inequality. In the second inequality, we use the triangle inequality, $\InNorms{\nabla_{\x}f_t(x,y)}_\infty \le \max_{a} \{ (Gy)_a + \ln(x_a)\} \le \order(\ln(At))$, and $\InNorms{\nabla_{\y}f_t(x,y)}_\infty \le \max_{b} \{ (G^\top x)_b + \ln(y_b)\} \le \order(\ln(At))$. In the second and third inequality, we use $\InNorms{z'_{t+1}-\zstar_{t+1}}_1 = \order(\frac{1}{t^3})$ by the definition of $z'_{t+1}$.

Combining \pref{eq: direct 1} and \pref{eq: direct 2}, we get 
\begin{align}
    &\frac{3}{8}\epsilon_{t+1}\|\zstar_t-\zstar_{t+1}\|_1^2 \nonumber \\
    &\leq f_{t+1}(x_t^\star, y_{t+1}^\star) - f_t(x_t^\star, y_{t+1}^\star) - f_{t+1}(x_{t+1}^\star, y_t^\star) +  f_{t}(x_{t+1}^\star, y_t^\star)+ \order\left( \frac{\ln(At)}{t^3}\right) \nonumber \\
    &= (f_{t+1}-f_t)(x_t^\star, y_{t+1}^\star) - (f_{t+1}-f_t)(x_{t+1}^\star, y_t^\star) + \order\left( \frac{\ln(At)}{t^3}\right) \nonumber  \\
    &\leq \sup_{x,y\in\Omega_{t+1}\times \Omega_{t+1}}\|\nabla f_{t+1}(x, y) - \nabla f_{t}(x, y)\|_\infty\|(x_t^\star, y_{t+1}^\star) - (x_{t+1}^\star, y_{t}^\star)\|_1 + \order\left( \frac{\ln(At)}{t^3}\right)   \nonumber
 \\    
    &= \sup_{x,y\in\Omega_{t+1}\times \Omega_{t+1}}\|\nabla f_{t+1}(x, y) - \nabla f_{t}(x, y)\|_\infty\|z_t^\star - z_{t+1}^\star\|_1 + \order\left( \frac{\ln(At)}{t^3}\right)   \nonumber 
\end{align}
Solving the inequality, we get
\begin{align}
    \|z_t^\star - z_{t+1}^\star\|_1 
    &\leq \order\left( \frac{1}{\epsilon_{t+1}}\sup_{x,y\in\Omega_{t+1}\times \Omega_{t+1}}\|\nabla f_{t+1}(x, y) - \nabla f_{t}(x, y)\|_\infty + \frac{\ln^{1/2}(At)}{\sqrt{\epsilon_{t+1}}t^{3/2}}\right)\label{eq: distance temp}\\
    &\leq \order\left(\frac{(\epsilon_t-\epsilon_{t+1})\ln(At)}{\epsilon_{t+1}} + \frac{\ln^{1/2}(At)}{\sqrt{\epsilon_{t+1}}t^{3/2}}\right) \nonumber \\
    &= \order\left(\frac{t^{-k_\epsilon-1}\ln(At)}{t^{-k_\epsilon}} + \frac{\ln^{1/2}(At)}{\sqrt{\epsilon_{t+1}}t^{3/2}}\right) \nonumber \\
    &= \order\left(\frac{\ln(At)}{t}\right). \nonumber
\end{align}

\end{proof}

\section{Improved Last-Iterate Convergence under Expectation}\label{app: improved expected rate}
In this section, we analyze \pref{alg: bandit matrix game expected}, which is almost identical to \pref{alg: bandit matrix game} but does not involve the parameter $\beta_t$. The choices of stepsize $\eta_t$ and amount of regularization $\epsilon_t$ are also tuned differently to obtain the best convergence rate. 
\begin{algorithm}
    \caption{Matrix Game with Bandit Feedback}\label{alg: bandit matrix game expected}
    \begin{algorithmic}[1]
        \STATE {\bfseries Define:} $\eta_t= t^{-k_\eta}$, $\epsilon_t =  t^{-k_\epsilon}$ where $k_\eta=\frac{1}{2}$, $k_\epsilon=\frac{1}{6}$. \\ $\Omega_t= \left\{x\in \Delta_\calA: x_a\geq \frac{1}{At^2},\, \forall a\in\calA\right\}$.  \\
    \STATE {\bfseries Initialization:}: $x_1=\frac{1}{A}\one$. \\
    \FOR{$t=1, 2, \ldots$}
        \STATE Sample $a_t\sim x_t$, and receive $\sigma_t\in[0,1]$ with $\E\left[\sigma_t\right]=G_{a_t,b_t}$.
        \STATE Compute $g_t$ where $g_{t,a} = \frac{\one[a_t=a]\sigma_t}{x_{t,a}} + \epsilon_t\ln x_{t,a}, \forall a \in \calA$.
        \STATE Update $x_{t+1} \leftarrow \argmin_{x\in \Omega_{t+1}}\left\{x^\top g_t + \frac{1}{\eta_t}\KL(x, x_t)\right\}$. 
    \ENDFOR
    \end{algorithmic}
\end{algorithm}

\begin{theorem}
\pref{alg: bandit matrix game expected} guarantees $ \E\left[\max_{x,y\in \Delta_A}\InParentheses{x_t^\top Gy - x^\top Gy_t}\right] = \order\left(\sqrt{A}\ln^{3/2}(At)t^{-\frac{1}{6}}\right)$ for any $t$. 
\end{theorem}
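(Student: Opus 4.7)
The plan is to mirror the three-part analysis of \pref{thm: main theorem for bandit} and work in expectation throughout, which is precisely what lets us remove the clipping factor $\beta_t$. For the single-step analysis (Part~I), the key observation is that with $\beta_t=0$ the loss estimator $\hatell_{t,a}=\one[a_t=a]\sigma_t/x_{t,a}$ is unbiased: $\E[\hatell_{t,a}\mid\calF_{t-1}]=(Gy_t)_a$. Applying \pref{lem: mirror descent 1 step} and symmetrizing over the two players exactly as in Part~I of \pfref{thm: main theorem for bandit} still yields
\begin{align*}
    \KL(\zstar_{t+1},z_{t+1})
    \le (1-\eta_t\epsilon_t)\,\KL(\zstar_t,z_t) + O\bigl(\eta_t^2 A\ln^2(At)\bigr) + 2\eta_t^2 A\lambda_t + \eta_t(\xi_t+\zeta_t) + v_t,
\end{align*}
with $\lambda_t,\xi_t,\zeta_t,v_t$ defined as in \pref{app:bandit} but evaluated at $\beta_t=0$. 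Unbiasedness then yields $\E[\xi_t\mid\calF_{t-1}]=\E[\zeta_t\mid\calF_{t-1}]=0$ and $\E[\lambda_t\mid\calF_{t-1}]=0$, so after taking expectations all three stochastic terms vanish and we never incur the $1/\beta_t$ penalty from \pref{lem: optimistm part} that forced the $k_\beta$-$k_\epsilon$ balancing in the high-probability analysis.

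Taking expectations and unrolling the recursion (Part~II) with $w^i_t\triangleq\prod_{j=i+1}^t(1-\eta_j\epsilon_j)$ gives
\begin{align*}
    \E[\KL(\zstar_{t+1},z_{t+1})] \le O\Bigl(A\ln^2(At)\sum_{i=1}^t w^i_t\eta_i^2\Bigr) + \sum_{i=1}^t w^i_t\,|v_i|.
\end{align*}
With the new schedule, $\eta_t\epsilon_t=t^{-2/3}$ and $\eta_t^2=t^{-1}$, so \pref{lem: useful lemma 1} (applied with $h=2/3,\,k=1$) bounds $\sum_i w^i_t\eta_i^2$ by $O(\ln(t)\,t^{-1/3})$. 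Re-running the calculation in \pref{lem: evolve equilibrium} under $\epsilon_t=t^{-1/6}$ shows that the dominant ratio $(\epsilon_t-\epsilon_{t+1})/\epsilon_{t+1}$ in \pref{eq: distance temp} is still $\Theta(1/t)$ and the auxiliary domain-shift contribution is of strictly smaller order, so $\|\zstar_{t+1}-\zstar_t\|_1=O(\ln(At)/t)$ and \pref{lem: bound v_t} still yields $|v_t|=O(\ln^2(At)/t)$ deterministically. A second application of \pref{lem: useful lemma 1} absorbs $\sum_i w^i_t|v_i|$ into the same $O(\ln^3(At)\,t^{-1/3})$ rate, so $\E[\KL(\zstar_t,z_t)]=O(A\ln^3(At)\,t^{-1/3})$.

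Finally (Part~III), Jensen's inequality (since $\sqrt{\cdot}$ is concave) and Pinsker's inequality together give
\begin{align*}
    \E[\|z_t-\zstar_t\|_1] \le \sqrt{2\,\E[\KL(\zstar_t,z_t)]} = O\bigl(\sqrt{A}\,\ln^{3/2}(At)\,t^{-1/6}\bigr),
\end{align*}
and invoking \pref{lem: diff between equilibrium} exactly as in Part~III of \pfref{thm: main theorem for bandit} (noting $\zstar_t$ is an $O(\epsilon_t\ln A)=O(\ln(A)\,t^{-1/6})$-approximate Nash equilibrium of $G$) yields the claimed expected duality-gap bound. The source of the improvement from $t^{-1/8}$ to $t^{-1/6}$ is entirely concentrated in Part~I: expectation kills the three stochastic terms that previously required optimism bounds with a $1/\beta_t$ factor, removing the constraint $k_\beta>k_\epsilon$ and allowing us to push the contraction $\eta_t\epsilon_t$ up to $t^{-2/3}$. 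The only mildly non-routine verification is that \pref{lem: evolve equilibrium} continues to hold under the new schedule; this is where the constraint $\epsilon_t^{-1}\cdot(\epsilon_t-\epsilon_{t+1})=O(1/t)$ must be re-checked, and it is the main (but easy) obstacle to substituting parameters into the existing template.
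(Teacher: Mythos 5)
Your proposal is correct and follows essentially the same route as the paper's own proof in \pref{app: improved expected rate}: reuse the Part~I recursion, observe that with $\beta_t=0$ the terms $\lambda_t,\xi_t,\zeta_t$ are conditionally unbiased so they vanish in expectation, unroll with the retuned schedule $k_\eta=\tfrac12$, $k_\epsilon=\tfrac16$ to get $\E[\KL(\zstar_t,z_t)]=\order(A\ln^3(At)t^{-1/3})$, and finish via Pinsker/Jensen plus \pref{lem: diff between equilibrium}. Your explicit re-verification that \pref{lem: evolve equilibrium} survives the new $\epsilon_t$ schedule is a point the paper leaves implicit, but it does not change the argument.
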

\begin{proof}
With the same analysis as in Part I of the proof of \pref{thm: main theorem for bandit}, we have 
\begin{align*}
        &f_t(x_t, y_t) - f_t(\xstar_t, y_t) \nonumber \\
        &\leq \frac{(1-\eta_t\epsilon_t)\KL(\xstar_t, x_t) - \KL(\xstar_t,  x_{t+1})}{\eta_t} + 10\eta_t A\ln^2\left(At\right) + 2\eta_t A\underlambda_t +  \underxi_t + \underzeta_t. 
\end{align*}
where 
\begin{align*}
    \underxi_t &\triangleq \sum_a x_{t,a}\left( (Gy_t)_a - \frac{\one[a_t=a]\sigma_t}{x_{t,a}} \right),\qquad  \zeta_t \triangleq \sum_a \xstar_{t,a}\left( \frac{\one[a_t=a]\sigma_t}{x_{t,a}} - (Gy_t)_a \right), \\
    \underlambda_t &\triangleq \frac{1}{A}\sum_a \left(\frac{\one[a_t=a]}{x_{t,a}} -1 \right). 
\end{align*}
Unlike in \pref{thm: main theorem for bandit}, here these three terms all have zero mean. Thus, following the same arguments that obtain \pref{eq: bandit recursion} and taking expectations, we get 
\begin{align}
        \E_t[\KL(\zstar_{t+1}, z_{t+1})] &\leq (1-\eta_t\epsilon_t)\KL(\zstar_t, z_t) + 20\eta_t^2 A\ln^2\left(At\right) + \E_t[v_t]  \nonumber \\
        &\leq  (1-\eta_t\epsilon_t)\KL(\zstar_t, z_t) + \order\left( \eta_t^2 A\ln^2\left(At\right) + \frac{\ln^2(At)}{t}\right) \tag{by \pref{lem: bound v_t}}
\end{align}
where $v_t=\KL(z_{t+1}^\star, z_{t+1}) - \KL(z_t^\star, z_{t+1})$ and $\E_t[\cdot]$ is the expectation conditioned on history up to round $t$. Then following the same arguments as in Part II of the proof of \pref{thm: main theorem for bandit}, we get 
\begin{align*}
     \E[\KL(z_{t+1}^\star, z_{t+1})] &\leq \order\left(A\ln^2(At)\sum_{i=1}^t w^i_t \eta_i^2 + \ln^2(At)\sum_{i=1}^t w^i_tt^{-1}\right)  \tag{define $w^i_t \triangleq \prod_{j=i+1}^t (1-\eta_j \epsilon_j)$} \\
     &\leq \order\left(A\ln^3(At)t^{-k_\eta + k_\epsilon} + \ln^3(At)t^{-1+k_\eta + k_\epsilon}\right) = \order\left(A\ln^3(At)t^{-\frac{1}{3}}\right). 
\end{align*}

Finally, following the arguments in Part III, we get 
\begin{align*}
    \E\left[\max_{x,y} \left(x_t^\top Gy - x^\top Gy_t \right)\right] \leq \order\left( \ln(A)t^{-k_\epsilon} + \sqrt{\E\left[\KL(z_t^\star, z_t)\right]} \right) = \order\left(\sqrt{A}\ln^{3/2}(At)t^{-\frac{1}{6}}\right). 
\end{align*}
\end{proof}

\section{Last-Iterate Convergence Rate of \pref{alg: irreducible markov game}}\label{app:irreducible}

\subsection{On the Assumption of Irreducible Markov Game}
\begin{proposition}\label{prop: irreducible}
    If \pref{assum: irreducible} holds, then for any $L'=2L\log_2(S/\delta)$ consecutive steps, under any (non-stationary) policies of the two players,  with probability at least $1-\delta$, every state is visited at least once.
\end{proposition}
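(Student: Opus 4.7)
\medskip
\noindent\textbf{Proof plan for \pref{prop: irreducible}.} The strategy is to fix an arbitrary target state $s^\star\in\calS$, show that the probability it fails to be visited within any window of $L'$ consecutive steps is at most $\delta/S$, and then union bound over $s^\star$. By the Markov property applied at the start of the window, it suffices to consider windows of the form $[1,L']$ starting from an arbitrary state $s_1$ under an arbitrary non-stationary policy sequence $(x_t,y_t)_{t\ge 1}$.

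The only non-trivial ingredient is to show that, even under non-stationary product policies, the expected hitting time of $s^\star$ from any state $s$ remains bounded by $L$. I would establish this by viewing the game as a (single-controller) MDP on $\calS$ with joint action space $\calA\times\calA$, in which maximizing the expected hitting time of $s^\star$ is a standard positive-bounded-cost problem. Classical MDP theory then guarantees a \emph{stationary deterministic} maximizer, and any deterministic joint policy is trivially a product of deterministic single-player policies, so \pref{assum: irreducible} in fact upper-bounds the worst-case expected hitting time over all non-stationary product policies (not just stationary product ones) by $L$. A self-contained alternative is to verify that the stationary optimum $M^\star(s):=\sup_{\pi\text{ stat.}}\E^\pi[T_{s^\star}\mid s_0=s]$ satisfies the Bellman-type inequality $M^\star(s)\ge 1+\sup_{x,y\in\Delta_\calA}\sum_{s',a,b} x_a y_b P^s_{a,b}(s') M^\star(s')$ for $s\ne s^\star$, and then show by induction on the horizon that the same bound carries over to any non-stationary product policy.

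With this bound in hand, Markov's inequality yields $\Pr[T_{s^\star}>2L\mid s_0=s]\le 1/2$ for every $s$ and every non-stationary policy. I would then partition the $L'=2L\log_2(S/\delta)$ consecutive steps into $N=\log_2(S/\delta)$ blocks of length $2L$ and apply the bound recursively: conditioning on failure in the first $k-1$ blocks and on the state at the start of block $k$, the conditional probability of failing to visit $s^\star$ in block $k$ is still $\le 1/2$ by the Markov property. Iterating over the $N$ blocks gives $\Pr[s^\star\text{ not visited in }L'\text{ steps}]\le 2^{-N}=\delta/S$, and a union bound over $s^\star\in\calS$ completes the proof. The only subtle step is the stationary-to-non-stationary extension above; everything else is a standard Markov-inequality-plus-union-bound template.
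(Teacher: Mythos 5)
Your proposal is correct and follows essentially the same route as the paper's own proof: reduce to bounding the expected hitting time under non-stationary policies by casting hitting-time maximization as an MDP with a stationary (deterministic, hence product) optimal policy, then apply Markov's inequality on blocks of length $2L$, iterate over $\log_2(S/\delta)$ blocks via the Markov property, and union bound over states. Your explicit remark that the deterministic joint maximizer factors into a product of single-player policies is a welcome clarification of a step the paper leaves implicit, but it does not change the argument.
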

\begin{proof}
    We first show that for any pair of states $s',s''$, under any non-stationary policy pair, the expected time to reach $s''$ from $s'$ is upper bounded by $L$.  For a particular pair of states $(s',s'')$, consider the following modified MDP: let the reward be $r(s,a)=\one[s\neq s'']$, and the transition be the same as the original MDP on all $s\neq s''$, while $P(s''|s'',a)=1$ (i.e., making $s''$ an absorbing state). Also, let $s'$ be the initial state. By construction, the expected total reward of this MDP is the travelling time from $s'$ to $s''$. By Theorem 7.1.9 of \cite{puterman2014markov}, there exists a stationary optimal policy in this MDP. The optimal expected total value is then upper bounded by $L$ by \pref{assum: irreducible}. Therefore, for any (possibly sub-optimal) non-stationary policies, the travelling time from $s'$ to $s''$ must also be upper bounded by $L$. 

    Divide $L'$ steps into $\log_2(S/\delta)$ intervals each of length $2L$, and consider a particualr $s$. Conditioned on $s$ not visited in all intervals $1,2,\ldots, i-1$, the probability of still not visiting $s$ in interval $i$ is smaller than $\frac{1}{2}$ (because for any $s'$, $\Pr[T_{s'\rightarrow s}> 2L]\leq \frac{\E[T_{s'\rightarrow s}]}{2L}\leq \frac{L}{2L}=\frac{1}{2}$, where $T_{s'\rightarrow s}$ denotes the travelling time from $s'$ to $s$). Therefore, the probability of not visiting $s$ in all $\log_2(S/\delta)$ intervals is upper bounded by $2^{-\log_2(S/\delta)}=\frac{\delta}{S}$. Using a union bound, we conclude that with probability at least $1-\delta$, every state is visited at least once within $L'$ steps.  
\end{proof}

\begin{corollary}\label{cor: irreducible}
    If \pref{assum: irreducible} holds, then with probability $1-\delta$, for any $t\ge 1$, players visit every state at least once in every $6L\ln(St/\delta)$ consecutive iterations before time $t$. 
\end{corollary}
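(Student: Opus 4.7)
The plan is to derive Corollary~\ref{cor: irreducible} from Proposition~\ref{prop: irreducible} by a union bound over all possible starting times $i\geq 1$ of a window. The idea is that Proposition~\ref{prop: irreducible} gives the desired property for a \emph{single} window once we fix a starting time and pay $\delta'$ in failure probability; we then aggregate these bounds with a summable schedule of $\delta'$-values so as to cover every window simultaneously.

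Concretely, for each integer $i\geq 1$, I would invoke \pref{prop: irreducible} with failure parameter $\delta_i \triangleq \delta/(2i^2)$ on the window of length $L'_i \triangleq 2L\log_2(S/\delta_i) = 2L\log_2(2Si^2/\delta)$ that begins at iteration $i$. Let $A_i$ be the event that this window visits every state. Then $\Pr[A_i]\geq 1-\delta_i$, and since
\[
\sum_{i\geq 1}\delta_i \;=\; \frac{\delta}{2}\sum_{i\geq 1}\frac{1}{i^2} \;=\; \frac{\pi^2}{12}\,\delta \;<\; \delta,
\]
a union bound gives $\Pr\bigl[\bigcap_{i\geq 1}A_i\bigr]\geq 1-\delta$. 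Condition on this global event $E$ for the rest of the argument.

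On $E$, fix any $t\geq 1$ and consider any window of length $W_t \triangleq 6L\ln(St/\delta)$ lying inside the first $t$ iterations, starting at some $i\in[1,t]$. It suffices to show $L'_i \leq W_t$, since then $A_i$ already guarantees that the shorter prefix of length $L'_i$ of this window visits every state. Using $\log_2 x = \ln x/\ln 2 < 1.45\ln x$ together with $i\leq t$,
\[
L'_i \;=\; 2L\log_2\!\bigl(2Si^2/\delta\bigr) \;\leq\; 3L\ln\!\bigl(2St^2/\delta\bigr) \;\leq\; 6L\ln(St/\delta) \;=\; W_t,
\]
where the final inequality uses $\ln(2St^2/\delta)\leq 2\ln(St/\delta)$, valid whenever $\ln(St/\delta)\geq \ln 2$ (which is ensured by, say, $\delta\leq 1/2$; otherwise the statement is vacuous since $W_t$ already exceeds any reasonable window under the assumed constants).

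The main (and essentially only) obstacle is the clean numerical verification that the constant $6$ suffices after converting $\log_2$ to $\ln$ and compressing $\ln(2Si^2/\delta)$ into a multiple of $\ln(St/\delta)$; this is routine but requires care with the constants and with the mild lower bound on $\ln(St/\delta)$. Everything else is a standard Borel--Cantelli style aggregation of \pref{prop: irreducible} across starting times.
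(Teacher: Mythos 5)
Your proposal is correct and follows essentially the same route as the paper: both reduce the claim to \pref{prop: irreducible} applied to a countable family of fixed windows, aggregate via a union bound, and conclude by observing that any window of length $6L\ln(St/\delta)$ inside $[1,t]$ contains a sub-window for which the proposition's guarantee has already been secured. The only difference is bookkeeping --- you index the union bound by window start positions $i$ with a $\delta/(2i^2)$ schedule, whereas the paper tiles $[1,t]$ with disjoint intervals of length $3L\ln(St^3/\delta)$ and then unions over $t$ --- and your constant verification (converting $\log_2$ to $\ln$ under $\delta\le 1/2$) is no looser than the paper's own.
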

\begin{proof}
First, we fix time $t \ge 1$ and define $t' = 3L\ln(St^3/\delta)$. Let us consider the following time intervals: $[1, t'], [t', 2t'], \ldots, [t-t', t]$. Using \pref{prop: irreducible}, we known for each interval, with probability at least $1- \frac{\delta}{t^3}$, players visit every state $s$. Using a union bound over all intervals, we have with probability at least $1-\frac{\delta}{t^2}$, in every interval, players visit every state $s$. Since every $2t'$ consecutive iterations must contain an interval of length $L'$, we have with probability at least $1-\frac{\delta}{t^2}$,  players visit every state $s$ in every $2t'$ consecutive iterations until time $t$. Applying union bound over all $t \ge 1$ completes the proof.
\end{proof}

According to \pref{cor: irreducible}, in the remaining of this section , we assume that for any $t\ge 1$, players visit every state at least once in every $ 6L\ln(St/\delta)$ iterations until time $t$. 

\subsection{Part I. Basic Iteration Properties}
\begin{lemma}\label{lem: single step regret irreducible}
    For any $x^s\in\Omega_{\tau+1}$, 
    \begin{align*}
       &f_\tau^s(\xhat_\tau^s, \yhat_\tau^s) - f_\tau^s(x^s, \yhat_\tau^s)\\
       &\leq \frac{(1-\eta_\tau\epsilon_\tau)\KL(x^s, \xhat_\tau^s) - \KL(x^s, \xhat_{\tau+1}^s)}{\eta_\tau} + \frac{10\eta_\tau A\ln^2(A\tau)}{(1-\gamma)^2} + \frac{2\eta_\tau A}{(1-\gamma)^2}  \underlambda_\tau^{s}  + \underxi_\tau^{s} + \underzeta_\tau^{s}(x^s).
    \end{align*}
    (see the proof for the definitions of $\underlambda_\tau^{s}, \underxi_\tau^{s}, \underzeta_\tau^{s}(\cdot)$)
\end{lemma}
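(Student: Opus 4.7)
The plan is to replay the single-step descent argument from Part~I of the proof of \pref{thm: main theorem for bandit} (the matrix-game case), with three modifications: (i) the effective loss in $f^s_\tau$ now contains an additional $\gamma\,\E_{s'\sim P^s}[V^{s'}_{t_\tau(s)}]$ term; (ii) the bandit target $\sigma_t + \gamma V^{s_{t+1}}_t$ lies in $[0,\tfrac{1}{1-\gamma}]$ instead of $[0,1]$, which is the source of the extra $(1-\gamma)^{-2}$ pre-factor on the instability terms; and (iii) the round index at state $s$ advances only on visit times, hence the use of $\tau$ rather than $t$. The overall template of the proof is unchanged.

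Concretely, I would first expand the left-hand side by the definition of $f^s_\tau$, splitting it into a bilinear part $(\xhat^s_\tau - x^s)^\top (G^s + \gamma \E_{s'\sim P^s}[V^{s'}_{t_\tau(s)}])\yhat^s_\tau$ and an entropy part $\epsilon_\tau(\phi(x^s) - \phi(\xhat^s_\tau))$. Exactly as in the matrix-game case, the entropy part rewrites as $\epsilon_\tau \sum_a (\xhat^s_{\tau,a} - x^s_a)\ln \xhat^s_{\tau,a} - \epsilon_\tau\KL(x^s,\xhat^s_\tau)$. Then I would substitute the bandit estimator $\hatell_{t,a} \triangleq \frac{\one[a_t=a](\sigma_t+\gamma V^{s_{t+1}}_t)}{\xhat^s_{\tau,a}+\beta_\tau}$ for the true linear loss, defining the bias terms
\[
    \underxi^s_\tau \triangleq \sum_a \xhat^s_{\tau,a}\Bigl[\bigl((G^s+\gamma\E_{s'\sim P^s}[V^{s'}_{t_\tau(s)}])\yhat^s_\tau\bigr)_a - \hatell_{t,a}\Bigr], \qquad \underzeta^s_\tau(x^s) \triangleq \sum_a x^s_a\Bigl[\hatell_{t,a} - \bigl((G^s+\gamma\E_{s'\sim P^s}[V^{s'}_{t_\tau(s)}])\yhat^s_\tau\bigr)_a\Bigr],
\]
so that the linear part becomes $\inner{\xhat^s_\tau - x^s,\, g_t} + \underxi^s_\tau + \underzeta^s_\tau(x^s)$, with $g_t = \hatell_t + \epsilon_\tau\ln\xhat^s_\tau$ being exactly the vector fed into the OMD step that produces $\xhat^s_{\tau+1}$.

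Next I would apply \pref{lem: mirror descent 1 step} to the OMD step over $\Omega_{\tau+1}$ with comparator $x^s$ and loss vector $\hatell_t$ (after checking $\eta_\tau\epsilon_\tau\le 1$ in the chosen parameter regime), obtaining $\inner{\xhat^s_\tau - x^s,\, g_t} \le \tfrac{1}{\eta_\tau}\bigl(\KL(x^s,\xhat^s_\tau) - \KL(x^s,\xhat^s_{\tau+1})\bigr) + \eta_\tau \sum_a \xhat^s_{\tau,a}\hatell_{t,a}^2 + \eta_\tau \epsilon_\tau^2 \sum_a \ln^2 \xhat^s_{\tau,a}$. This is where the $(1-\gamma)^{-2}$ factor enters: since $\sigma_t + \gamma V^{s_{t+1}}_t \le \tfrac{1}{1-\gamma}$, one has $\xhat^s_{\tau,a}\hatell^2_{t,a} \le \tfrac{1}{(1-\gamma)^2}\cdot \tfrac{\one[a_t=a]}{\xhat^s_{\tau,a}+\beta_\tau}$, and summing over $a$ and centering around $A$ yields $\tfrac{A}{(1-\gamma)^2}(1 + \underlambda^s_\tau)$ with $\underlambda^s_\tau \triangleq \tfrac{1}{A}\sum_a(\tfrac{\one[a_t=a]}{\xhat^s_{\tau,a}+\beta_\tau} - 1)$. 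The entropic quadratic term is bounded via $\xhat^s_{\tau,a} \ge \tfrac{1}{A\tau^2}$ by $O(A\ln^2(A\tau))$, and both contributions combine into the stated $\tfrac{10\eta_\tau A\ln^2(A\tau)}{(1-\gamma)^2} + \tfrac{2\eta_\tau A}{(1-\gamma)^2}\underlambda^s_\tau$ term. Finally, the $-\epsilon_\tau \KL(x^s,\xhat^s_\tau)$ summand from the entropy expansion merges with the leading $\tfrac{1}{\eta_\tau}\KL$ term to give the $(1-\eta_\tau\epsilon_\tau)$ coefficient in the statement.

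No major obstacle is expected: this is essentially a mechanical generalization of the matrix-game single-step descent to $\tfrac{1}{1-\gamma}$-scaled losses. The main care is to keep the bias terms $\underxi^s_\tau, \underzeta^s_\tau(x^s)$ cleanly separated from the pure local-norm quantity $\underlambda^s_\tau$, since those biases will later have to be summed via the concentration lemmas of \pref{app:Properties related to EXP3-IX}, and to handle the bookkeeping of visit-indexed quantities (in particular that $V^{s_{t+1}}_t$ evaluated at $t=t_\tau(s)$ has conditional expectation $\bigl(\E_{s'\sim P^s}[V^{s'}_{t_\tau(s)}]\bigr)_{a_t,b_t}$ given the history up to that visit) correctly.
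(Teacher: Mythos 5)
Your proposal is correct and follows essentially the same route as the paper: expand $f^s_\tau$ into its bilinear and entropy parts, rewrite the entropy difference to extract $-\epsilon_\tau\KL(x^s,\xhat^s_\tau)$, replace the true loss by the estimator to define $\underxi^s_\tau$ and $\underzeta^s_\tau(x^s)$, and apply \pref{lem: mirror descent 1 step}, with the $(1-\gamma)^{-2}$ factor coming from the $[0,\tfrac{1}{1-\gamma}]$ range of $\sigma_t+\gamma V^{s_{t+1}}_t$ and the $\ln^2(A\tau)$ factor from the clipping $\xhat^s_{\tau,a}\ge\frac{1}{A\tau^2}$. The paper itself presents this lemma as a direct adaptation of \pref{eq: single step convergence} with exactly these substitutions, so no further comparison is needed.
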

\begin{proof}
    Consider a fixed $s$ and a fixed $\tau$, and let $t=t_\tau(s)$ be the time when the players visit $s$ at the $\tau$-th time. 
    \begin{align*}
        &f_\tau^s(\xhat_\tau^s, \yhat_\tau^s) - f_\tau^s(x^s, \yhat_\tau^s) \\
        &= (\xhat_\tau^s - x^s)^\top \left(G^s + \gamma \E_{s'\sim P^s}\left[V^{s'}_{t}\right]\right) \yhat^s_\tau - \epsilon_\tau \phi(\xhat^s_\tau) + \epsilon_\tau \phi(x^s) \\
        &= (\xhat_\tau^s - x^s)^\top \left[\left(G^s + \gamma \E_{s'\sim P^s}\left[V^{s'}_{t}\right]\right) \yhat^s_\tau + \epsilon_\tau \ln \xhat_{\tau}^s  \right] - \epsilon_\tau \KL(x^s, \xhat_\tau^s) \\
        &= (\xhat_\tau^s - x^s)^\top g_{t} - \epsilon_\tau \KL(x^s, \xhat_\tau^s) + \underbrace{(\xhat_\tau^s)^\top \left( \left(G^s + \gamma \E_{s'\sim P^s}\left[V^{s'}_{t}\right]\right) \yhat_\tau^s - \frac{\one[\ahat^s_{\tau}=a]\left(\sigma_{t} + \gamma V_{t}^{s_{t+1}}\right)}{\xhat_{\tau,a}^s+\beta_\tau}\right)}_{\underxi^{s}_\tau} \\
        &\qquad \qquad + \underbrace{(x^s)^\top \left(\frac{\one[\ahat^s_\tau=a]\left(\sigma_{t} + \gamma V_t^{s_{t+1}}\right)}{\xhat_{\tau,a}^s+\beta_\tau} - \left(G^s + \gamma \E_{s'\sim P^s}\left[V^{s'}_{t}\right]\right) \yhat_\tau^s \right)}_{\underzeta^{s}_\tau(x^s)} \\
        &\leq \frac{(1-\eta_\tau\epsilon_\tau)\KL(x^s, \xhat_\tau^s) - \KL(x^s, \hatx_{\tau+1}^s)}{\eta_\tau} \\
        &\qquad + \frac{10\eta_\tau A\ln^2(A\tau)}{(1-\gamma)^2} + \frac{2\eta_\tau A}{(1-\gamma)^2}\times \underbrace{\frac{1}{|\calA|}\sum_a  \left(\frac{\one[\ahat^s_\tau=a]}{\xhat^s_{\tau,a}+\beta_\tau} - 1\right)}_{\underlambda_\tau^{s}}  + \underxi_\tau^{s} + \underzeta_\tau^{s}(x^s),
    \end{align*}
    where we omit some calculation steps due to the similarity to \pref{eq: single step convergence}. 
\end{proof}

\subsection{Part II. Policy Convergence to the Nash of Regularized Game}

\begin{lemma}\label{lem: irreducible KL convergence}
    With probability at least $1 - \order(\delta)$, for all $s \in S$, $t\ge 1$ and $\tau \ge 1$ such that $t_\tau(s) \le t$,  we have  
    \begin{align*}
        \KL(\zhat^s_{\tau\star}, \zhat^s_\tau)\leq \order\left(A\ln^5(SAt / \delta) L^2\tau^{-k_{\sharp}}\right), 
    \end{align*}    
    where $k_{\sharp}=\min\{k_\beta-k_\epsilon, k_\eta-k_\beta, k_\alpha - k_\eta-2k_\epsilon\}$. 
\end{lemma}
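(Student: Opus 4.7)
The plan is to mirror the three-step argument used for \pref{thm: main theorem for bandit}, now applied per state $s$ to the regularized per-state game $f^s_\tau$. First, combining \pref{lem: single step regret irreducible} with its symmetric counterpart for the $y$-player and using that $\zhat^s_{\tau\star}$ is the saddle point of $f^s_\tau$ over $\Omega_\tau\times\Omega_\tau$, I derive a single-step descent inequality of the form
\begin{align*}
\KL(\zhat^s_{\tau+1\star}, \zhat^s_{\tau+1})
&\le (1-\eta_\tau\epsilon_\tau)\, \KL(\zhat^s_{\tau\star}, \zhat^s_\tau) + \tfrac{20 \eta_\tau^2 A \ln^2(A\tau)}{(1-\gamma)^2} + \tfrac{2 \eta_\tau^2 A \lambda^s_\tau}{(1-\gamma)^2} + \eta_\tau (\xi^s_\tau + \zeta^s_\tau) + v^s_\tau,
\end{align*}
where $v^s_\tau = \KL(\zhat^s_{\tau+1\star}, \zhat^s_{\tau+1}) - \KL(\zhat^s_{\tau\star}, \zhat^s_{\tau+1})$ measures the drift of the regularized equilibrium between consecutive visits to $s$, and $\lambda^s_\tau,\xi^s_\tau,\zeta^s_\tau$ are the natural additive extensions of the quantities introduced in \pref{lem: single step regret irreducible}.

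Expanding this recursion with weights $w^i_\tau = \prod_{j=i+1}^\tau (1-\eta_j\epsilon_j)$ yields five weighted sums exactly analogous to Terms~1–5 in the matrix-game proof. The contributions from the instability term, the \ix self-normalization $\lambda^s$, and the two martingale pieces $\xi^s,\zeta^s$ can be controlled essentially verbatim by Part~II of \pref{thm: main theorem for bandit}: the sequence estimates \pref{lem: useful lemma 1}–\pref{lem: max over i lemma} handle the deterministic parts, while \pref{lem: excessive loss part}–\pref{lem: optimistm part} supply high-probability bounds on the martingale parts. The only differences are an extra $(1-\gamma)^{-2}$ from the fact that $\sigma_t+\gamma V^{s_{t+1}}_t$ ranges in $[0,1/(1-\gamma)]$, and an extra $\ln(S)$ picked up by union-bounding over states; these four terms collectively contribute $\tilde{\order}(\tau^{-(k_\beta-k_\epsilon)} + \tau^{-(k_\eta-k_\beta)})$ to $\KL(\zhat^s_{\tau+1\star},\zhat^s_{\tau+1})$.

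The main obstacle is bounding $\sum_i w^i_\tau v^s_i$. In the matrix-game setting \pref{lem: bound v_t} exploited that $\zhat_{i\star}$ drifts only through the slowly shrinking $\epsilon_i$ and $\Omega_i$; here $\zhat^s_{i\star}$ also drifts because $V^{s'}_{t_i(s)}$ moves between visits to $s$. I handle this with the two observations sketched in the analysis overview: (i) by \pref{cor: irreducible} the gap $t_{i+1}(s)-t_i(s)$ is at most $\order(L\ln(St/\delta))$, which bounds the \emph{number} of updates of each $V^{s'}$ occurring in the gap; and (ii) by the same corollary, at the $i$-th visit to $s$ every $s'$ has already been updated $\Omega(i/(L\ln(St/\delta)))$ times, so the learning-rate schedule $\alpha_\tau=\tau^{-k_\alpha}$ guarantees that each such update perturbs $V^{s'}$ by at most $\order((i/(L\ln(St/\delta)))^{-k_\alpha})$ (using $V^{s'}\in[0,1/(1-\gamma)]$). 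Multiplying (i) and (ii) gives $\|V^{s'}_{t_{i+1}(s)}-V^{s'}_{t_i(s)}\|_\infty = \tilde{\order}(L\cdot i^{-k_\alpha})$ uniformly in $s'$. Plugging this change into a Markov-game analog of \pref{lem: evolve equilibrium}—in which $\|\zhat^s_{i+1\star}-\zhat^s_{i\star}\|_1$ is bounded by $\order(\epsilon_i^{-1})$ times the change in $\nabla f^s_i$—and combining with the Lipschitz-type bound of \pref{lem: KL diff}, I obtain $|v^s_i| = \tilde{\order}(L^2\cdot i^{-k_\alpha+k_\epsilon})$, where the extra factor of $L$ comes from absorbing $L^{k_\alpha}\le L$ into $\tilde{\order}$.

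Summing $w^i_\tau |v^s_i|$ via \pref{lem: useful lemma 1} then contributes the dominant $\tilde{\order}(L^2\, \tau^{-k_\alpha + k_\eta + 2k_\epsilon})$ piece, which combined with the bounds on Terms~1–4 yields $\tau^{-k_\sharp}$ with $k_\sharp=\min\{k_\beta-k_\epsilon,\ k_\eta-k_\beta,\ k_\alpha-k_\eta-2k_\epsilon\}$. A union bound over $s\in\calS$ and over $\tau$ with $t_\tau(s)\le t$ drives the failure probability to $\order(\delta)$ and only inflates the polylogarithmic prefactor, giving the stated $\order(A\ln^5(SAt/\delta)\, L^2\, \tau^{-k_\sharp})$ rate.
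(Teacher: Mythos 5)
Your proposal is correct and follows essentially the same route as the paper's proof: the same single-step recursion from \pref{lem: single step regret irreducible} summed over both players, the same five-term expansion handled by the matrix-game machinery (\pref{lem: useful lemma 1}--\pref{lem: optimistm part}), and the same treatment of the equilibrium-drift term $v^s_i$ via \pref{cor: irreducible} bounding both the number and the magnitude of value-function updates between consecutive visits, exactly as in \pref{lem: equilibrium change irreducible}. The only nit is bookkeeping: multiplying your items (i) and (ii) gives $\tilde{\order}\bigl(L^{1+k_\alpha} i^{-k_\alpha}\bigr)$ rather than $\tilde{\order}\bigl(L\, i^{-k_\alpha}\bigr)$ for the value drift, but since you absorb $L^{k_\alpha}\le L$ at the next step your final $|v^s_i|=\tilde{\order}\bigl(L^2 i^{-k_\alpha+k_\epsilon}\bigr)$ matches the paper.
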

\begin{proof}
    In this proof, we abbreviate $\underzeta^{s}_{i}(\xhat^s_{i\star})$ as $\underzeta^s_{i}$. 
    By \pref{lem: single step regret irreducible}, for all $i \le \tau$ we have 
    \begin{align*}
       \KL(\xhat_{i\star}^s, \xhat_{i+1}^s) &\leq (1-\eta_i\epsilon_i)\KL(\xhat_{i\star}^s, \xhat_{i}^s) +  \eta_i \left(f_{i}^s(\hatx_{i\star}^s, \yhat_{i}^s) - f_{i}^s(\xhat_{i}^s, \yhat_{i}^s)\right) \\
       &\qquad + \frac{10\eta_i^2 A\ln^2(A\tau)}{(1-\gamma)^2} + \frac{2\eta_i^2 A}{(1-\gamma)^2}\underlambda_{i}^{s}  + \eta_i\underxi_{i}^{s} + \eta_i\underzeta_{i}^{s}.
    \end{align*}
    Similarly, for all $i \le \tau$, we have
    \begin{align*}
       \KL(\yhat_{i\star}^{s}, \yhat_{i+1}^s) &\leq (1-\eta_i\epsilon_i)\KL(\yhat_{i\star}^s, \yhat_{i}^s) +  \eta_i \left(f_i^s(\xhat_i^s, \yhat_{i}^s) - f_i^s(\xhat_{i}^s,  \yhat_{i\star}^s)\right) \\
       &\qquad  + \frac{10\eta_i^2 A\ln^2(A\tau)}{(1-\gamma)^2} + \frac{2\eta_i^2 A}{(1-\gamma)^2}\overlambda_{i}^{s} + \eta_i\overxi_{i}^{s} + \eta_i\overzeta_{i}^{s}.  
    \end{align*}
    Adding the two inequalities up, and using $f_i^s(\xhat_{i\star}^s, \yhat_{i}^s) - f_i^s(\xhat_{i}^s, \yhat_{i\star}^s)\leq 0$ because $(\xhat^s_{i\star}, \yhat^s_{i\star})$ is the equilibrium of $f^s_i$, we get for $i \le \tau$
    \begin{align}
        \KL(\zhat_{i+1\star}^s, \zhat^s_{i+1}) &\leq (1-\eta_i\epsilon_i)\KL(\zhat^s_{i\star}, \zhat^s_{i}) + \frac{20\eta_i^2A\ln^2(A\tau)}{(1-\gamma)^2} + \frac{2\eta_i^2 A}{(1-\gamma)^2}\lambda^s_{i} + \eta_i \xi^s_{i} + \eta_i \zeta^s_{i} + v^s_i, \label{eq: recursion for irreducible MG} 
    \end{align}
    where $v^s_i = \KL(\zhat_{i+1\star}^s, \zhat^s_{i+1}) - \KL(\zhat_{i\star}^s, \zhat^s_{i+1})$ and $\square^{s}=\underline{\square}^{s} + \overline{\square}^{s}$ for $\square =\xi_i, \zeta_i$. 
    
    Expanding \pref{eq: recursion for irreducible MG}, we get 
    \begin{align*}
        &\KL(\zhat^s_{\tau+1\star}, \zhat^s_{\tau+1}) \leq  \underbrace{\frac{20A\ln^2(A\tau)}{(1-\gamma)^2}\sum_{i=1}^\tau   w^i_\tau\eta_i^2}_{\term_1} + \underbrace{\frac{2A}{(1-\gamma)^2}\sum_{i=1}^\tau w^i_\tau \eta_i^2\lambda^s_i}_{\term_2} + \underbrace{\sum_{i=1}^\tau w^i_\tau \eta_i\xi^s_i}_{\term_3} + \underbrace{\sum_{i=1}^t w^i_\tau \eta_i\zeta^s_i}_{\term_4} + \underbrace{\sum_{i=1}^\tau w^i_\tau v^s_i}_{\term_5}.
    \end{align*}
    These five terms correspond to those in \pref{eq: bandit recursion}, and can be handled in the same way. For $\term_1$ to $\term_4$, we follow exactly the same arguments there, and bound their sum as with probability at least $1-\order\left(\frac{\delta}{S\tau^2}\right)$,
    \begin{align*}
        \sum_{j=1}^4 \term_j = \order\left(A\ln^3(SA\tau/\delta)\left(\tau^{-k_\eta + k_\epsilon} + \tau^{-2k_\eta + k_\beta} + \tau^{-k_\beta + k_\epsilon} + \tau^{-\frac{1}{2}k_\eta+\frac{1}{2}k_\epsilon} + \tau^{-k_\eta + k_\beta}\right) \right).
    \end{align*}
    To bound $\term_5$, by \pref{lem: KL diff} and \pref{lem: equilibrium change irreducible}, we have
    \begin{align*}
        |v_\tau^s| &=
        \order\left(\ln(A\tau)\right)\cdot \|\zhat^s_{\tau\star}-\zhat^s_{\tau+1\star}\|_1 
        =\order\left( \ln^4(SAt/\delta) L^2\cdot \tau^{-k_\alpha + k_\epsilon}\right).  
    \end{align*}
    Therefore, by \pref{lem: useful lemma 1}, 
    \begin{align*}
        \term_5 = \sum_{i=1}^\tau w^i_\tau v^s_i = \order\left( \ln^5(SAt/\delta) L^2\cdot \tau^{-k_\alpha + k_\eta + 2k_\epsilon} \right).  
    \end{align*}
    Combining all the terms with union bound over $s \in S$ and $\tau \ge 1$ finishes the proof. 
\end{proof}

\begin{lemma}\label{lem: equilibrium change irreducible}
    For any $s$and  $\tau \ge 0$ such that $t_\tau(s) \le t$, 
    $\|\zhat_{\tau\star}^s - \zhat_{\tau+1\star}^s\|_1 =  \order\left( \ln^3(SAt/\delta) L^2\cdot \tau^{-k_\alpha + k_\epsilon}\right)$. 
\end{lemma}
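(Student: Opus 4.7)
My plan is to mirror the analysis of \pref{lem: evolve equilibrium} from the matrix-game section, adapted to account for the additional source of instability specific to Markov games, namely the changing value function $V^{s'}_{t_\tau(s)}$ appearing in $f^s_\tau$. The high-level idea is to exploit the fact that each $f^s_\tau$ is $2\epsilon_\tau$-strongly convex-concave (so its equilibrium moves by at most a $1/\epsilon_{\tau+1}$-factor times the change in the gradient of the objective), and then to carefully bound the per-step change $\|\nabla f^s_{\tau+1} - \nabla f^s_\tau\|_\infty$.

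Concretely, following the derivation that produced \pref{eq: distance temp}, strong convex-concavity and the fact that the domains $\Omega_{\tau}, \Omega_{\tau+1}$ only differ by a vanishing amount (both contain a $\frac{1}{A\tau^2}$-interior) gives a bound of the form
\[
\|\zhat^s_{\tau\star} - \zhat^s_{\tau+1\star}\|_1
\;\lesssim\; \frac{1}{\epsilon_{\tau+1}}\sup_{(x,y)\in \Omega_{\tau+1}^2}\|\nabla f^s_{\tau+1}(x,y) - \nabla f^s_\tau(x,y)\|_\infty \;+\; \text{lower-order domain terms},
\]
where the domain terms are of order $\ln(A\tau)/\tau$ as in the matrix-game proof. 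The gradient difference then decomposes into (i) the regularization change, bounded by $|\epsilon_{\tau+1}-\epsilon_\tau|\cdot\ln(A\tau) = O(\tau^{-1-k_\epsilon}\ln(A\tau))$, and (ii) the value-function change $\gamma\,\E_{s'\sim P^s}\bigl[V^{s'}_{t_{\tau+1}(s)} - V^{s'}_{t_\tau(s)}\bigr]$, which is the main obstacle.

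The hard part is bounding $|V^{s'}_{t_{\tau+1}(s)} - V^{s'}_{t_\tau(s)}|$ uniformly in $s'$. I would combine two consequences of irreducibility (specifically \pref{cor: irreducible}) with the structure of the running-average update for $V^{s'}$. First, since players visit state $s$ at least once every $O(L\ln(St/\delta))$ steps, the gap $t_{\tau+1}(s) - t_\tau(s)$ is $O(L\ln(St/\delta))$, bounding the number of updates that $V^{s'}$ can undergo between the two visits. Second, by time $t_\tau(s) \ge \tau$, every state $s'$ has been visited at least $\Omega(\tau/(L\ln(St/\delta)))$ times, so when an update to $V^{s'}$ occurs its step size is $\alpha_\nu = O((\tau/(L\ln(St/\delta)))^{-k_\alpha})$, and each single update changes $V^{s'}$ by at most $\alpha_\nu/(1-\gamma)$ since $|\sigma_t + \gamma V^{s_{t+1}}_t - V^{s'}_t| \le 1/(1-\gamma)$. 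Multiplying the number of updates by the per-update magnitude yields
\[
|V^{s'}_{t_{\tau+1}(s)} - V^{s'}_{t_\tau(s)}| \;\lesssim\; \frac{L\ln(St/\delta)}{1-\gamma}\cdot\Bigl(\frac{\tau}{L\ln(St/\delta)}\Bigr)^{-k_\alpha}
\;\lesssim\; \frac{L^{1+k_\alpha}\ln^{1+k_\alpha}(St/\delta)}{1-\gamma}\,\tau^{-k_\alpha}.
\]

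Combining the pieces, the dominant term of the gradient gap is $O(L^2\ln^2(SAt/\delta)\,\tau^{-k_\alpha})$ (absorbing $k_\alpha\le 1$ and the $(1-\gamma)^{-1}$ factor into constants as is done elsewhere in the section), and dividing by $\epsilon_{\tau+1}\asymp \tau^{-k_\epsilon}$ produces the advertised $O(L^2\ln^3(SAt/\delta)\,\tau^{-k_\alpha+k_\epsilon})$ rate after absorbing the regularization-change and domain-change terms, which are both of lower order. The $\ln^3$ factor emerges from the $\ln^{1+k_\alpha}$ in the value-function bound together with an extra $\ln(A\tau)$ introduced by the $\|\nabla f\|_\infty$ bound on the entropy gradient, matching what is claimed. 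The argument is conditional on the high-probability event of \pref{cor: irreducible}, which is already absorbed into the $1-\order(\delta)$ event used throughout the section.
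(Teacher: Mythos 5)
Your proposal follows essentially the same route as the paper's proof: the strong-convexity perturbation bound from \pref{lem: evolve equilibrium} reducing the equilibrium drift to $\frac{1}{\epsilon_{\tau+1}}\sup\|\nabla f^s_{\tau+1}-\nabla f^s_\tau\|_\infty$ plus lower-order domain terms, and then bounding the value-function drift by multiplying the $O(L\ln(St/\delta))$ update count between consecutive visits to $s$ by the per-update step size $O\bigl((\tau/(L\ln(St/\delta)))^{-k_\alpha}\bigr)/(1-\gamma)$. One small correction: the $(1-\gamma)^{-1}$ in the value-drift bound is not absorbed into a constant but cancels exactly against the $(1-\gamma)$ coming from $1/\epsilon_{\tau+1}$, since in this section $\epsilon_\tau = \frac{1}{1-\gamma}\tau^{-k_\epsilon}$; with that adjustment your argument matches the paper's.
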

\begin{proof}
    The bound holds trivially when $\tau\leq 2L$. Below we focus on the case with $\tau > 2L$. 
    By exactly the same arguments as in the proof of \pref{lem: evolve equilibrium}, we have an inequality similar to \pref{eq: distance temp}:
    \begin{align}
        &\|z^s_{\tau\star} - z^s_{\tau+1\star}\|_1 \nonumber \\
        &= \order\left(\frac{1}{\epsilon_{\tau+1}} \sup_{x^s, y^s}  \|\nabla f_{\tau}^s(x^s, y^s) - \nabla f_{\tau+1}^s(x^s, y^s)\|_\infty + \frac{\ln^{1/2}(A\tau)}{\sqrt{\epsilon_{\tau+1}}\tau^{3/2}} \right) \nonumber \\
        &\leq \order\left(\frac{1}{\epsilon_{\tau+1}}\sup_{s'}\left|V_{t_\tau(s)}^{s'} - V_{t_{\tau+1}(s)}^{s'}\right| + \frac{(\epsilon_{\tau} - \epsilon_{\tau+1})\ln(A\tau)}{\epsilon_{\tau+1}} + \frac{\ln^{1/2}(A\tau)}{\sqrt{\epsilon_{\tau+1}}\tau^{3/2}} \right)  \nonumber  \\
        &\leq \order\left(\frac{1}{\epsilon_{\tau+1}}\sup_{s'}\left|V_{t_\tau(s)}^{s'} - V_{t_{\tau+1}(s)}^{s'}\right| + \frac{\ln(A\tau)}{\tau} \right).
        \label{eq: value diff in irreducible}
    \end{align}
    Since $t_{\tau}(s) \le t$ and we assume that every state is visited at least once in $6L\log(St/\delta)$ steps (\pref{cor: irreducible}), we have that for any state $s'$, $n_{t_\tau(s)}^{s'} \geq \frac{t_\tau(s)}{6L\log(St/\delta)}-1$. Thus, whenever $V_t^{s'}$ updates between $t_\tau(s)$ and $t_{\tau+1}(s)$, the change is upper bounded by $\frac{1}{1-\gamma}(\frac{t_\tau(s)}{6L\log(St/\delta)}-1)^{-k_\alpha}$. Besides, between $t_\tau(s)$ and $t_{\tau+1}(s)$, $V^{s'}_t$ can change at most $6L\log(St/\delta)$ times. Therefore, 
    \begin{align}
        &\left|V_{t_\tau(s)}^{s'} - V_{t_{\tau+1}(s)}^{s'}\right| \nonumber \\
        &\leq \frac{1}{1-\gamma} \times 6L\log(St/\delta) \times \left(\frac{t_\tau(s)}{6L\log(St/\delta)}-1\right)^{-k_\alpha} \leq \frac{1}{1-\gamma}\times 6L\log(St/\delta) \times \left(\frac{\tau}{6L\log(St/\delta)}-1\right)^{-k_\alpha} \nonumber \\
        & = \order\left(\frac{L^2\ln^2(St/\delta)\tau^{-k_\alpha}}{1-\gamma}\right) \label{eq: value diff irr},
    \end{align}
    where the last inequality holds since $k_\alpha < 1$. 
    Combining \pref{eq: value diff in irreducible} and \pref{eq: value diff irr} with the fact that  $\epsilon_\tau = \frac{1}{1-\gamma} \tau^{-k_\epsilon}$ finishes the proof. 
\end{proof}

\subsection{Part III. Value Convergence}
For positive integers $\tau \ge i$, we define $\alpha_{\tau}^i = \alpha_i \prod_{j=i+1}^\tau (1-\alpha_j)$.
\begin{lemma}[weighted regret bound]\label{lem: irredu weighted regret}
    With probability $1-\order(\delta)$, for any $s$, any visitation count $\tau\geq \tau_0$, and any $x^s\in\Omega_{\tau+1}$, 
    \begin{align*}
        \sum_{i=1}^\tau \alpha^i_\tau \left(f^s_{i}(\xhat^s_{i}, \yhat^s_{i}) - f^s_{i}(x^s, \yhat^s_{i})\right) \leq \order\left(\frac{A\ln^3(SA\tau/\delta)\tau^{-k'}}{1-\gamma }\right). 
    \end{align*}
    where $k'=\min\left\{k_\eta, k_\beta, k_\alpha-k_\beta\right\}$. 
\end{lemma}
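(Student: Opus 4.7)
The plan is to apply the single-step regret bound of \pref{lem: single step regret irreducible} to each of the first $\tau$ visits to state $s$ with the chosen comparator $x^s\in\Omega_{\tau+1}$, multiply the $i$-th inequality by $\alpha^i_\tau$, and sum over $i\in[1,\tau]$. Writing $K_i=\KL(x^s,\hat{x}^s_i)$, the single-step bound gives
\begin{align*}
f^s_i(\hat{x}^s_i,\hat{y}^s_i)-f^s_i(x^s,\hat{y}^s_i) \le \frac{(1-\eta_i\epsilon_i)K_i - K_{i+1}}{\eta_i} + \frac{10\eta_i A\ln^2(A\tau)}{(1-\gamma)^2} + \frac{2\eta_i A}{(1-\gamma)^2}\underlambda^s_i + \underxi^s_i + \underzeta^s_i(x^s),
\end{align*}
so after the weighted sum the proof reduces to separately controlling (T1) a KL-telescope, (T2) the instability penalty, (T3) the $\underlambda$-fluctuation, (T4) the bandit noise $\underxi$, and (T5) the optimism noise $\underzeta(x^s)$. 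Throughout I will use that $K_i\le \order(\ln(A\tau))$ since $\hat{x}^s_i\in\Omega_i$.

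For (T1), I would apply Abel summation to $\sum_i \frac{\alpha^i_\tau}{\eta_i}(K_i-K_{i+1})$, expressing it as $\frac{\alpha^1_\tau}{\eta_1}K_1 + \sum_{i\ge 2}K_i\bigl(\tfrac{\alpha^i_\tau}{\eta_i}-\tfrac{\alpha^{i-1}_\tau}{\eta_{i-1}}\bigr) - \frac{\alpha^\tau_\tau}{\eta_\tau}K_{\tau+1}$. Under the theorem's parameter choice $k_\alpha>k_\eta$, the ratio $\alpha^i_\tau/\eta_i = \frac{i^{k_\eta-k_\alpha}}{1-\gamma}\prod_{j>i}(1-\alpha_j)$ is non-decreasing in $i$ once $i\ge \tau_0$, so the middle sum telescopes and the whole piece is bounded by $\order(\ln(A\tau)\cdot \alpha_\tau/\eta_\tau) = \order(\ln(A\tau)\,\tau^{-(k_\alpha-k_\eta)}/(1-\gamma))$, which is dominated by the target $\tau^{-k'}$. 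For (T2), \pref{lem: useful lemma 1} (with $h=k_\alpha$ and $k=k_\alpha+k_\eta$) gives $\sum_i \alpha^i_\tau\eta_i = \order((1-\gamma)\ln(\tau)\,\tau^{-k_\eta})$, yielding contribution $\order(A\ln^3(A\tau)\,\tau^{-k_\eta}/(1-\gamma))$.

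For (T3), recognize $\underlambda^s_i$ as $\langle \tfrac{1}{A}\one,\hat{\ell}_i-\ell_i\rangle$ with $\ell_i=\one$ and apply \pref{lem: optimistm part} with $c_i=\tfrac{2A}{(1-\gamma)^2}\alpha^i_\tau\eta_i$; \pref{lem: max over i lemma} bounds $\max_i c_i = \order(A\tau^{-k_\alpha-k_\eta}/(1-\gamma))$, producing the strictly subdominant rate $\tau^{k_\beta-k_\alpha-k_\eta}$. For (T4), \pref{lem: excessive loss part} with $c_i=\alpha^i_\tau$ gives $\order\bigl(A\sum_i \beta_i\alpha^i_\tau + \sqrt{\ln(A/\delta)\sum_i(\alpha^i_\tau)^2}\bigr)=\order(A\ln(\tau)\,\tau^{-k_\beta}+\sqrt{\ln(A/\delta)}\,\tau^{-k_\alpha/2})$ via \pref{lem: useful lemma 1}, whose dominant term has rate $k_\beta$. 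For (T5), since $x^s$ is a deterministic comparator (hence $\mathcal{F}_{i-1}$-measurable), \pref{lem: optimistm part} with $c_i=\alpha^i_\tau$ and $\max_i \alpha^i_\tau = \order(\tau^{-k_\alpha})$ gives $\order(\ln(1/\delta)\,\tau^{-(k_\alpha-k_\beta)})$, matching the third entry of $k'$. Combining (T1)–(T5), using failure probability $\delta/(S\tau^2)$ in each invocation of Lemmas \ref{lem: excessive loss part} and \ref{lem: optimistm part}, and union-bounding over $s\in\calS$ and $\tau\ge \tau_0$, yields the stated bound $\order(A\ln^3(SA\tau/\delta)\,\tau^{-k'}/(1-\gamma))$ with probability $1-\order(\delta)$.

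The main obstacle I anticipate is justifying the monotonicity of $\alpha^i_\tau/\eta_i$ used in (T1). Because the ratio of consecutive terms equals $(i/(i-1))^{k_\eta-k_\alpha}/(1-\alpha_i)$, monotonicity reduces to $1+i^{-k_\alpha}+O(i^{-1})\ge 1+(k_\alpha-k_\eta)/i+O(i^{-2})$, which holds for all $i$ beyond an absolute threshold depending only on $k_\alpha,k_\eta$; the lemma's assumption $\tau\ge \tau_0$ is chosen to swallow precisely this threshold, and any finitely many small-$i$ terms contribute at most $\order(\tau_0\ln(A\tau)/(1-\gamma))$ which is absorbed into lower-order error. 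The remaining estimates are mechanical applications of the sequence bounds in \pref{app:sequence properties} and the high-probability inequalities in \pref{app:Properties related to EXP3-IX}.
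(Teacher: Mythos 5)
Your overall architecture matches the paper's: take the weighted sum of the single-step bounds from \pref{lem: single step regret irreducible}, telescope the KL terms, and control the four stochastic/penalty terms via \pref{lem: useful lemma 1}, \pref{lem: max over i lemma}, \pref{lem: excessive loss part} and \pref{lem: optimistm part}, with a union bound over $s$ and $\tau$. Your bounds for (T2)--(T5) coincide with the paper's $\term_1$--$\term_4$.

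However, there is a genuine gap at the very first step. You apply \pref{lem: single step regret irreducible} at visit $i$ with the comparator $x^s\in\Omega_{\tau+1}$, but that lemma (via the online mirror descent inequality, \pref{lem: mirror descent 1 step}) is only valid for comparators lying in the constraint set of the $i$-th update, which is $\Omega_{i+1}$. Since $\Omega_t=\{x:x_a\geq \frac{1}{At^2}\}$ \emph{grows} with $t$, we have $\Omega_{i+1}\subsetneq\Omega_{\tau+1}$ for $i<\tau$, so your fixed $x^s$ need not be feasible for the earlier updates and the single-step inequality simply does not apply to it. The paper resolves this by introducing interpolated comparators $\widetilde{x}^s_i=\frac{p_i}{A}\one+(1-p_i)x^s$ with $p_i=\frac{(\tau+1)^2-(i+1)^2}{(i+1)^2[(\tau+1)^2-1]}$, chosen so that $\widetilde{x}^s_i\in\Omega_{i+1}$; it then pays for the drift of the comparator twice: once inside the KL telescope (via \pref{lem: KL diff}, contributing $\order(\ln(A\tau)\sum_i \frac{\alpha^{i-1}_\tau}{\eta_{i-1}}(i-1)^{-2})=\order(\frac{\ln(A\tau)}{1-\gamma}\tau^{k_\eta-2})$), and once at the end to convert the regret against $\widetilde{x}^s_i$ into regret against $x^s$ (contributing $\order(\frac{\ln(A\tau)}{\tau^2(1-\gamma)})$). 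The obstacle you flag as the main difficulty (monotonicity of $\alpha^i_\tau/\eta_i$) is not the real one; the paper sidesteps it entirely by keeping the factor $(1-\eta_i\epsilon_i)$ and checking $\frac{\alpha^i_\tau(1-\eta_i\epsilon_i)}{\eta_i}\leq\frac{\alpha^{i-1}_\tau}{\eta_{i-1}}$ directly (using $k_\alpha\geq k_\eta+k_\epsilon$), so every $\KL$ coefficient is nonpositive and no upper bound on $\KL$ is needed for that part. Two smaller issues in your (T1): the claimed contribution $\order(\tau_0\ln(A\tau)/(1-\gamma))$ of the small-$i$ terms does not decay in $\tau$ as written and would destroy the bound unless you retain the factor $\max_i \alpha^i_\tau/\eta_i=\order(\tau^{-(k_\alpha-k_\eta)}/(1-\gamma))$; and the resulting rate $\tau^{-(k_\alpha-k_\eta)}$ is dominated by $\tau^{-k'}$ only for the specific parameter choices of \pref{thm:convegence irreducible games}, not for arbitrary exponents in $(0,1)$.
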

\begin{proof}
    We will be considering a weighted sum of the instantaneous regret bound established in \pref{lem: single step regret irreducible}. However, notice that for $f_i^s$, \pref{lem: single step regret irreducible} only provides a regret bound with comparators in $\Omega_{i+1}$. 
    Therefore, for a fixed $x^s\in\Omega_{\tau+1}$, we define the following auxiliary comparators for all $i=1,\ldots, \tau$: 
    \begin{align*}
        \widetilde{x}_{i}^s = \frac{p_i}{A}\one + \left(1 - p_i\right)x^s
    \end{align*}
    where $p_i\triangleq \frac{(\tau+1)^2 - (i+1)^2}{(i+1)^2 \left[(\tau+1)^2-1\right]}$. Since $x^s\in\Omega_{\tau+1}$, we have that for any $a$, $ \widetilde{x}^s_{i,a} \geq \frac{p_i}{A} + \frac{1-p_i}{A(\tau+1)^2} = \frac{1}{A(i+1)^2}$, and thus $\widetilde{x}^s_{i}\in \Omega_{i+1}$. 
    
    Applying \pref{lem: single step regret irreducible} and considering the weighted sum of the bounds, we get 
    \begin{align*}
        &\sum_{i=1}^\tau \alpha^i_\tau \left(f^s_{i}(\xhat^s_{i}, \yhat^s_{i}) - f^s_{i}(\widetilde{x}^s_{i}, \yhat^s_{i})\right) \\
        &\leq \sum_{i=1}^\tau \alpha^i_\tau \left( \frac{(1-\eta_i\epsilon_i)\KL(\widetilde{x}_{i}^s, \xhat_{i}^s) - \KL(\widetilde{x}_{i}^s, \xhat_{i+1}^s)}{\eta_i} + \frac{10\eta_i A\ln^2(A\tau)}{(1-\gamma)^2} + \frac{2\eta_i A}{(1-\gamma)^2} \underlambda_{i}^{s} + \underxi_{i}^{s} + \underzeta_{i}^{s}(\widetilde{x}^s_{i}) \right)  \\
        &\leq \underbrace{\sum_{i=2}^{\tau} \left(\frac{\alpha^i_\tau(1-\eta_i\epsilon_i)}{\eta_i}\KL(\widetilde{x}^s_{i}, \xhat^s_{i}) - \frac{\alpha^{i-1}_\tau}{\eta_{i-1}}\KL(\widetilde{x}^s_{i-1},  \xhat^s_{i})\right)}_{\term_0} \tag{notice that $(1-\eta_1\epsilon_1)=0$}\\
        &\qquad + \underbrace{\frac{10A\ln^2(A\tau)}{(1-\gamma)^2}\sum_{i=1}^\tau \alpha^i_\tau \eta_i }_{\term_1}  + \underbrace{\frac{2A}{(1-\gamma)^2}\sum_{i=1}^\tau \alpha^i_\tau \eta_i \underlambda_{i}^{s} }_{\term_2} + 
        \underbrace{\sum_{i=1}^\tau \alpha^i_\tau  \underxi_{i}^{s}}_{\term_3} + \underbrace{\sum_{i=1}^\tau \alpha^i_\tau \underzeta_{i}^{s}(\widetilde{x}^s_{i})}_{\term_4}.    
    \end{align*}
    \begin{align*}
        \term_0 &= \sum_{i=2}^\tau \KL(\widetilde{x}_{i}^s, \xhat_{i}^s)\left(\frac{\alpha^i_\tau(1-\eta_i\epsilon_i)}{\eta_i} - \frac{\alpha^{i-1}_\tau}{\eta_{i-1}}\right) + \sum_{i=2}^\tau \frac{\alpha^{i-1}_\tau}{\eta_{i-1}}\left(\KL(\widetilde{x}^s_{i-1}, \xhat_{i}^s) - \KL(\widetilde{x}^s_{i}, \xhat_{i}^s)\right) \\
        &\overset{(a)}{\leq} 0 + \order\left(\ln(A\tau)\sum_{i=2}^\tau \frac{\alpha^{i-1}_\tau}{\eta_{i-1}}\|\widetilde{x}^s_{i-1} - \widetilde{x}^s_{i}\|_1\right) 
        &= \order\left(\ln(A\tau)\sum_{i=2}^\tau \frac{\alpha^{i-1}_\tau}{\eta_{i-1}} \left|p_{i-1} - p_{i}\right| \right) \\
        &\leq \order\left(\ln(A\tau)\sum_{i=2}^\tau \frac{\alpha^{i-1}_\tau}{\eta_{i-1}}  \frac{1}{(i-1)^2} \right)\\ &= \order\left(\frac{\ln(A\tau)}{1-\gamma}\tau^{k_\eta-2}\right) \tag{by \pref{lem: useful lemma 1}}
    \end{align*}
    where $(a)$ is by \pref{lem: KL diff} and the following calculation: 
    \begin{align*}
         \frac{\alpha^i_\tau(1-\eta_i\epsilon_i)}{\eta_i} \times \frac{\eta_{i-1}}{\alpha^{i-1}_\tau} 
        &= \frac{\eta_{i-1}}{\eta_i}\times \frac{\alpha_i}{\alpha_{i-1}}\times \frac{1-\eta_i\epsilon_i}{1-\alpha_i} 
        = \left(\frac{i-1}{i}\right)^{-k_\eta + k_\alpha} \times \frac{1-i^{-k_\eta-k_\epsilon}}{1-i^{-k_\alpha}} 
        \leq 1\times 1=1. 
    \end{align*}
    We proceed to bound other terms as follows: with probability at least $1-\order\left(\frac{\delta}{ S\tau^2}\right)$ 
    \begin{align*}
        \term_1 &= \order\left(\frac{A\ln^3(A\tau)}{1-\gamma} \tau^{-k_\eta}\right),   \tag{\pref{lem: useful lemma 1}}\\
        \term_2 
        &= \order\left(\frac{A\ln(SA\tau/\delta)}{1-\gamma} \times\max_{i\leq \tau} \frac{\alpha^i_\tau\eta_i}{\beta_\tau}\right)  \tag{\pref{lem: optimistm part}} \\
        &= \order\left(\frac{A\ln(SA\tau/\delta)\tau^{-k_\alpha -k_\eta + k_\beta}}{1-\gamma} \right), \tag{\pref{lem: max over i lemma}}\\
        \term_3 &= \order\left(\frac{A}{1-\gamma}\sum_{i=1}^\tau \beta_i\alpha^i_\tau + \frac{1}{1-\gamma}\sqrt{\ln(SA\tau/\delta)\sum_{i=1}^\tau (\alpha^i_\tau)^2}\right) \tag{\pref{lem: excessive loss part}}\\
        &= \order\left(\frac{A\ln(A\tau)\tau^{-k_\beta}}{1-\gamma} + \frac{1}{1-\gamma}\sqrt{\ln(SA\tau/\delta)\sum_{i=1}^\tau\alpha^i_\tau \alpha_i}\right) \tag{\pref{lem: useful lemma 1}} \\
        &= \order\left(\frac{A\ln(SA\tau/\delta)\left(\tau^{-k_\beta}+\tau^{-\frac{k_\alpha}{2}}\right) }{1-\gamma}\right), \tag{\pref{lem: useful lemma 1}}\\
        \term_4 
        &= \sum_{i=1}^\tau \alpha^i_\tau p_i \underzeta_{i}^{s}\left(\frac{1}{A}\one\right) + \sum_{i=1}^\tau \alpha^i_\tau (1-p_i) \underzeta_{i}^{s}(x^s) \tag{by the linearity of $\underzeta_{i}^{s}(\cdot)$} \\
        & = \order\left( \frac{\ln(SA\tau/\delta)}{1-\gamma}\max_{i\leq \tau} \frac{\alpha^i_\tau}{\beta_\tau} \right) \tag{\pref{lem: optimistm part}}\\
        &= \order\left(\frac{\ln(SA\tau/\delta)\tau^{-k_\alpha + k_\beta}}{1-\gamma}\right). \tag{\pref{lem: max over i lemma}}
    \end{align*}
    Combining all terms, we get 
    \begin{align}
        \sum_{i=1}^\tau \alpha^i_\tau \left(f^s_{i}(\xhat^s_{i}, \yhat^s_{i}) - f^s_{i}(\widetilde{x}^s_{i}, \yhat^s_{i})\right) = \order\left(\frac{A\ln^3(SA\tau/\delta)\left(\tau^{-k_\eta} + \tau^{-k_\beta}  + \tau^{-k_\alpha + k_\beta}\right)}{1-\gamma }\right).\label{eq: final temp}
    \end{align}
    Finally, 
    \begin{align}
        \sum_{i=1}^\tau \alpha^i_\tau \left( f^s_{i}(\widetilde{x}^s_{i}, \yhat^s_{i}) - f^s_{i}(x^s, \yhat^s_{i})\right) 
        &= \order\left(\frac{\ln(A\tau)}{1-\gamma}\sum_{i=1}^\tau \alpha^i_\tau \|\widetilde{x}^s_{i} - x^s\|_1 \right) \nonumber  \\
        &=  \order\left(\frac{\ln(A\tau)}{1-\gamma}\sum_{i=1}^\tau \alpha^i_\tau p_i\right) = \order\left(\frac{\ln(A\tau)}{\tau^2(1-\gamma)}\right).\label{eq: final temp2}
    \end{align}
    Adding up \pref{eq: final temp} and \pref{eq: final temp2} and applying union bound over all $s \in \calS$ and $\tau$ finish the proof. 
\end{proof}

\begin{lemma}\label{lem: irreducible value diff}
With probability at least $1-\order(\delta)$, for any state $s \in \calS$ and time $t \ge 1$, we have
    \begin{align*}
        \left|V_t^s - V_\star^s\right| \leq \order\left( \frac{A\ln(SAt/\delta)}{(1-\gamma)^2}\left(\frac{L\ln(St/\delta)}{1-\gamma}\ln\frac{t}{1-\gamma}\right)^{\frac{k_*}{1-k_\alpha}}\left(\frac{L\ln(St/\delta)}{t}\right)^{k_*} \right),
    \end{align*}
    where $k_*=\min\left\{k_\eta, k_\beta, k_\alpha-k_\beta, k_\epsilon\right\}$. 
\end{lemma}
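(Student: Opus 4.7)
The plan is to turn the weighted-regret bound from \pref{lem: irredu weighted regret} together with the minimax structure of $V^s_\star$ into the $\gamma$-contractive recursion anticipated in the analysis overview, and then to iterate the recursion $H\approx\ln(t)/(1-\gamma)$ times to exploit $\gamma<1$. Unrolling the value update $\hat V^s_{\tau+1}=(1-\alpha_\tau)\hat V^s_\tau+\alpha_\tau(\sigma_{t_\tau(s)}+\gamma V^{s_{t_\tau(s)+1}}_{t_\tau(s)})$ against $V^s_\star$ yields
\begin{align*}
\hat V^s_{\tau+1}-V^s_\star
=\alpha^0_\tau(\hat V^s_1-V^s_\star)
+\sum_{i=1}^\tau\alpha^i_\tau\bigl((\xhat^s_i)^\top\tilde Q^s_i\,\yhat^s_i-V^s_\star\bigr)
+M^s_\tau,
\end{align*}
where $\hat V^s_\tau=V^s_{t_\tau(s)}$, $\tilde Q^s_i=G^s+\gamma\E_{s'\sim P^s}[V^{s'}_{t_i(s)}]$, and $M^s_\tau$ is a bounded-difference martingale of the one-step Bellman noise controlled by Freedman's inequality (\pref{lem: azuma}). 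Applying \pref{lem: irredu weighted regret} with comparator $x^s_\star$ (clipped into $\Omega_{i+1}$ as in that lemma's proof), and its $y$-side counterpart, replaces $(\xhat^s_i)^\top\tilde Q^s_i\yhat^s_i$ by $(x^s_\star)^\top\tilde Q^s_i\yhat^s_i$ (and symmetrically by $(\xhat^s_i)^\top\tilde Q^s_i y^s_\star$), each at a cost of $\tilde O(A(1-\gamma)^{-1}\tau^{-k'})$ plus $O(\epsilon_\tau\ln A/(1-\gamma))$ absorbed from the entropy regularizer. The saddle-point identity $(x^s_\star)^\top(G^s+\gamma\E_{s'\sim P^s}[V^{s'}_\star])y\le V^s_\star\le x^\top(G^s+\gamma\E_{s'\sim P^s}[V^{s'}_\star])y^s_\star$ then pulls out exactly the factor $\gamma\max_{s'}|V^{s'}_{t_i(s)}-V^{s'}_\star|$, producing the contraction
$|\hat V^s_{\tau+1}-V^s_\star|\le \gamma\sum_{i=1}^\tau\alpha^i_\tau\max_{s'}|V^{s'}_{t_i(s)}-V^{s'}_\star|+E^s_\tau$, with $E^s_\tau=\tilde O(A(1-\gamma)^{-1}\tau^{-k_*})$.

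Next I iterate this recursion globally. Using \pref{cor: irreducible}, with high probability $n^{s'}_{t_i(s)}\ge i/c_L$ where $c_L:=6L\ln(St/\delta)$, so writing $B^s_\sigma:=|\hat V^s_\sigma-V^s_\star|$ the contraction becomes $B^s_{\tau+1}\le \gamma\sum_i\alpha^i_\tau\max_{s'}B^{s'}_{\lceil i/c_L\rceil}+E^s_\tau$. I unfold this $H:=\lceil\ln(t/(1-\gamma))/(1-\gamma)\rceil$ times. After $H$ levels the surviving initial-value residual is at most $\gamma^H/(1-\gamma)\le 1/t$, dominated by the target rate. At each level, \pref{lem: useful lemma 1} with exponent $h=k_\alpha$ and $k=k_*$ yields $\sum_i\alpha^i_\tau(i/c_L)^{-k_*}=O(c_L^{k_*}\ln(\tau)\tau^{-k_*})$, so every unfolding contributes a factor $\gamma\cdot c_L^{k_*}\ln\tau$ and shifts the effective visit index down by at most the width $\tau^{k_\alpha}$ of the $\alpha^i_\tau$-window around $i=\tau$ (since that is where the weights concentrate).

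Closing the induction is the main obstacle. A plain polynomial ansatz $B^s_\tau\lesssim\tau^{-k_*}$ does not survive the $\ln\tau$ and $c_L^{k_*}$ overheads accumulated across the $H$ unfoldings, so I use the strengthened ansatz $B^s_\tau\lesssim \tau^{-k_*}\bigl(c_L\ln(t/(1-\gamma))\bigr)^{k_*/(1-k_\alpha)}$. The exponent $k_*/(1-k_\alpha)$ is dictated by the balance between the per-level index shift and the iteration depth: the unfolded indices remain $\Theta(\tau)$ only when $H\cdot\tau^{k_\alpha}\ll\tau$, i.e.\ $\tau\gg H^{1/(1-k_\alpha)}$, which is precisely what is needed to offset the multiplicative overhead accumulated through the $H$ levels. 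Verifying self-consistency of this ansatz through the $H$-fold iteration — bookkeeping jointly (i) the $\gamma^H\le 1/t$ kill of the initial residual, (ii) the $c_L^{k_*}\ln\tau$ per-level multiplicative cost, and (iii) the $\tau^{k_\alpha}$ per-level index shift — and then substituting $c_L=6L\ln(St/\delta)$ and $H=\lceil\ln(t/(1-\gamma))/(1-\gamma)\rceil$ back, followed by a union bound over $(s,t)$, produces exactly the bound in the statement.
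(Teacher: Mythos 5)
Your first half --- unrolling the $V$-update, applying Freedman/Azuma to the one-step Bellman noise, invoking \pref{lem: irredu weighted regret} with a clipped comparator, and using the saddle-point inequality for $(x^s_\star,y^s_\star)$ to extract the contraction $|V^s_{t_\tau(s)}-V^s_\star|\le\gamma\sum_{i=1}^\tau\alpha^i_\tau\max_{s'}|V^{s'}_{t_i(s)}-V^{s'}_\star|+\order(A\ln^3(SA\tau/\delta)\tau^{-k_*}/(1-\gamma))$ --- is exactly the paper's derivation of its key recursion, and is fine. The gap is in how you close that recursion. Your unrolling is indexed by \emph{visitation counts} and replaces $\max_{s'}|V^{s'}_{t_i(s)}-V^{s'}_\star|$ by $B^{s'}_{\lceil i/c_L\rceil}$; this loses a multiplicative factor of $c_L$ in the index at \emph{every} level of the unfolding (the visit counts of two different states at the same time can differ by a factor of $c_L=6L\ln(St/\delta)$, and your own estimate $\sum_i\alpha^i_\tau(i/c_L)^{-k_*}=\order(c_L^{k_*}\ln(\tau)\,\tau^{-k_*})$ records this as a per-level factor $c_L^{k_*}\ln\tau>1$). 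Over $H=\Theta(\ln(t/(1-\gamma))/(1-\gamma))$ levels this accumulates to $(c_L^{k_*}\ln\tau)^H=t^{\Theta((k_*\ln c_L+\ln\ln\tau)/(1-\gamma))}$, which $\gamma^H\le 1/t$ cannot absorb. Strengthening the ansatz by the \emph{constant} prefactor $(c_L\ln(t/(1-\gamma)))^{k_*/(1-k_\alpha)}$ does not help: no constant prefactor closes an induction whose one-step map multiplies the bound by a quantity exceeding $1$. Your threshold condition $H\cdot\tau^{k_\alpha}\ll\tau$ controls only the \emph{additive} window shift, not this multiplicative per-level loss, so the argument as written does not produce the stated rate.

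The paper's resolution avoids the $H$-fold unrolling entirely. It runs a \emph{one-step} induction indexed by global time, with hypothesis $|V^s_t-V^s_\star|\le\Phi(t)\propto\frac{A\ln^3(SAt/\delta)}{(1-\gamma)^2}\bigl(L'(u(t)+1)/t\bigr)^{k_*}$ where $u(t)=\lceil(16 c_L\ln(t/(1-\gamma))/(1-\gamma))^{1/(1-k_\alpha)}\rceil$. For $t\le L'(u(t)+1)$ the bound is vacuous (this is the \emph{only} place the $(c_L H)^{k_*/(1-k_\alpha)}$ factor enters). Otherwise $\tau\ge u(t)$, and the $\alpha$-weighted sum is split at $v(\tau,t)=\lfloor\tau-3\tau^{k_\alpha}\ln\frac{t}{1-\gamma}\rfloor$: for $i\le v$ the weights satisfy $\alpha^i_\tau\le(1-\gamma)^3/t^3$ and the contribution is negligible, while for $i>v$ one has $t/t_i\le 1+\frac{1-\gamma}{2}$, hence $\Phi(t_i)\le(1+\frac{1-\gamma}{2})\Phi(t)$, and the strict contraction $\gamma(1+\frac{1-\gamma}{2})\le 1-\frac{1-\gamma}{2}$ leaves room to absorb the additive error term $\order(A\ln^3(SA\tau/\delta)\tau^{-k_*}/(1-\gamma))$ using $\tau\ge t/(2L')$. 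Reformulating your plan as this time-indexed single-step induction --- rather than a depth-$H$ unrolling in visitation-count coordinates --- is what is needed to make it go through.
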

\begin{proof}
    Fix an $s$ and a visitation count $\tau$. Let $t_i$ be the time index when the players visit $s$ for the $i$-th time. Then with probability at least $1- \frac{\delta}{S\tau^2}$,
    \allowdisplaybreaks
    \begin{align*}
         V^s_{t_\tau} &= \sum_{i=1}^\tau \alpha^i_\tau\left(\sigma_{t_i} + \gamma V_{t_i}^{s_{t_i+1}}\right) \\
         &\leq \sum_{i=1}^\tau \alpha^i_\tau  \xhat_{i}^{s\top} \left(G^s + \gamma \E_{s'\sim P^s}\left[V_{t_i}^{s'}\right] \right)\yhat_{i}^s  + \sum_{i=1}^\tau \alpha^i_\tau \left[ \sigma_{t_i} + \gamma V_{t_i}^{s_{t_i+1}} - \xhat_{i}^{s\top} \left(G^s + \gamma \E_{s'\sim P^s}\left[V_{t_i}^{s'}\right] \right)\yhat_{i}^s \right] \\
         &= \sum_{i=1}^\tau \alpha^i_\tau \left(f^s_i(\xhat^s_i, \yhat^s_i) + \epsilon_i \phi(\xhat^s_i) - \epsilon_i \phi(\yhat^s_i) \right) + \order\left(\frac{1}{1-\gamma}\sqrt{\ln(S\tau/\delta) \sum_{i=1}^\tau (\alpha^i_\tau)^2} \right) \tag{Azuma's inequality} \\
         &\leq \sum_{i=1}^\tau \alpha^i_\tau f_{i}^s(\xhat^s_i, \yhat_{i}^s) + \order\left(\epsilon_\tau \ln(A) + \frac{\ln(SA\tau/\delta)\tau^{-\frac{k_\alpha}{2}}}{1-\gamma} \right)\\
         &\leq \min_{x} \sum_{i=1}^\tau \alpha^i_\tau f_{i}^s(x^s, \yhat_{i}^s) + \order\left(\epsilon_\tau\ln(A) + \frac{A\ln^3(SA\tau/\delta)\tau^{-k'}}{1-\gamma }\right) \tag{$k'$ is defined in \pref{lem: irredu weighted regret} with $k'\leq \frac{1}{2}(k_\beta + k_\alpha - k_\beta)=\frac{k_\alpha}{2}$} \\
         &\leq \min_{x^s} \sum_{i=1}^\tau \alpha^i_\tau \left(x^s \right)^\top \left(G^s + \gamma\E_{s'\sim P^s}\left[V^{s'}_{t_i}\right]\right)\yhat^s_{i} + \order\left(\frac{A\ln^3(SA\tau/\delta)\tau^{-k_*}}{1-\gamma}\right) \\
         &\leq \min_{x^s} \sum_{i=1}^\tau \alpha^i_\tau \left(x^s \right)^\top \left(G^s + \gamma\E_{s'\sim P^s}\left[V^{s'}_{\star} \right]\right)\yhat^s_{i} + \gamma \sum_{i=1}^\tau  \alpha^i_\tau \max_{s'} \left|V^{s'}_{t_i}-V^{s'}_{\star}\right| + \order\left(\frac{A\ln^3(SA\tau/\delta)\tau^{-k_*}}{1-\gamma}\right) \\
         &\leq \min_{x^s}\max_{y^s} (x^s)^\top\left(G^s + \gamma \E_{s'\sim P^s}\left[V^{s'}_\star\right] \right)y^s + \gamma\sum_{i=1}^\tau \alpha^i_\tau \max_{s'} \left|V^{s'}_{t_i}-V^{s'}_{\star}\right| + \order\left(\frac{A\ln^3(SA\tau/\delta)\tau^{-k_*}}{1-\gamma}\right) \\
         &\leq V_\star^s + \gamma\sum_{i=1}^\tau \alpha^i_\tau \max_{s'}\left|V_{t_i}^{s'} - V_\star^{s'}\right| + \order\left(\frac{A\ln^3(SA\tau/\delta)\tau^{-k_*}}{1-\gamma}\right). 
     \end{align*}
     Similar inequality can be also obtained through the perspective of the other player: with probability at least $1- \frac{\delta}{S\tau^2}$
     \begin{align*}
         V^s_{t_\tau} \geq V^s_{\star} - \gamma \sum_{i=1}^\tau \alpha^i_\tau \max_{s'}\left|V^{s'}_{t_i} - V^{s'}_{\star}\right| - \order\left(\frac{A\ln^3(SA\tau/\delta)\tau^{-k_*}}{1-\gamma}\right), 
     \end{align*}
     which, combined with the previous inequality and union bound over $s \in \calS$ and $\tau \ge 1$, gives the following relation: with probability at least $1-\order(\delta)$, for any $s \in \calS$ and $\tau \ge 1$,
     \begin{align}
         \left|V^s_{t_\tau} - V^s_{\star}\right|\leq \gamma \sum_{i=1}^\tau \alpha^i_\tau \max_{s'}\left|V^{s'}_{t_i} - V^{s'}_{\star}\right| + \order\left(\frac{A\ln^3(SA\tau/\delta)\tau^{-k_*}}{1-\gamma}\right).  \label{eq: important relation} 
     \end{align}
     Before continuing, we first some auxiliary quantities. For a fixed $t$, define 
     \begin{align*}
         u(t) = \left\lceil\left(\frac{16\times 6L\ln(St/\delta)}{1-\gamma}\ln \frac{t}{1-\gamma}\right)^{\frac{1}{1-k_\alpha}}\right\rceil; 
     \end{align*}
     for fixed $(\tau, t)$ we further define 
     \begin{align*}
         v(\tau, t) = \left\lfloor \tau - 3\tau^{k_\alpha}\ln \frac{t}{1-\gamma} \right\rfloor. 
     \end{align*}
     Now we continue to prove a bound for $|V_t^s - V_\star^s|$. Suppose that \pref{eq: important relation} can be written as
     \begin{align}
         \left|V^s_{t_\tau} - V^s_{\star}\right|\leq \gamma \sum_{i=1}^\tau \alpha^i_\tau \max_{s'}\left|V^{s'}_{t_i} - V^{s'}_{\star}\right| + \frac{C_1A\ln^3(SA\tau/\delta)\tau^{-k_*}}{1-\gamma}   \label{eq: rewritten}
     \end{align}
     for a universal constant $C_1\geq 1$. 
     Below we use induction to show that for all $t$, 
     \begin{align}
         \left|V_t^s - V^{s}_\star\right|\leq \Phi(t)\triangleq  \frac{8C_1A\ln^3(SAt/\delta)}{(1-\gamma)^2}\left(\frac{6L\ln(St/\delta)(u(t)+1)}{t}\right)^{k_*}.   \label{eq: hypothesis}
     \end{align}
     This is trivial for $t=1$.
     
     Suppose that \pref{eq: hypothesis} holds for all time $1, \ldots, t-1$ and for all $s$. Now we consider time $t$ and a fixed state $s$. We denote $L' = 6L\ln(St/\delta)$. Let $\tau=n_{t+1}^s$ and let $1\leq t_1<t_2<\cdots<t_\tau \leq t$ be the time indices when the players visit state $s$. If $t\leq L'(u(t)+1)$, then \pref{eq: hypothesis} is trivial. If $t\geq L'(u(t)+1)$, we have $\tau \geq \frac{t}{L'}-1\geq u(t)$. Therefore, 
     \begin{align*}
         &|V_t^s-V^s_\star| \\ &= |V_{t_\tau}^s-V_{\star}^s| \tag{$t_\tau$ is the last time up to time $t$ when $V^s_t$ is updated}\\
         &\leq \gamma \sum_{i=1}^{\tau}\alpha^i_\tau \max_{s'}\left|V^{s'}_{t_i} - V^{s'}_{\star}\right| + \frac{C_1A\ln^3(SA\tau/\delta)\tau^{-k_*}}{1-\gamma}   \tag{by \pref{eq: rewritten}}\\
         &\leq \gamma \sum_{i=1}^{v(\tau, t)}\alpha^i_\tau \max_{s'}\left|V^{s'}_{t_i} - V^{s'}_{\star}\right| + \gamma \sum_{i=v(\tau, t)+1}^{\tau}\alpha^i_\tau \max_{s'}\left|V^{s'}_{t_i} - V^{s'}_{\star}\right| + \frac{C_1A\ln^3(SA\tau/\delta)\tau^{-k_*}}{1-\gamma} \\
         &\overset{(a)}{\leq} \gamma \tau \times \frac{(1-\gamma)^3}{t^3}\times \frac{1}{1-\gamma} + \gamma \sum_{i=v(\tau, t)+1}^\tau \alpha^i_\tau\Phi(t_{i}) + \frac{C_1A\ln^3(SA\tau/\delta)\tau^{-k_*}}{1-\gamma} \\
         &\leq \gamma \sum_{i=v(\tau, t)+1}^\tau \alpha^i_\tau\Phi(t_{i}) + \frac{2C_1A\ln^3(SA\tau/\delta)\tau^{-k_*}}{1-\gamma}   \tag{induction hypothesis}\\
         &\leq \gamma \sum_{i=v(\tau, t)+1}^\tau \alpha^i_\tau \Phi(t)\times \left(\frac{t}{t_{i}}\right)^{k_*} + \frac{2C_1A\ln^3(SA\tau/\delta)\tau^{-k_*}}{1-\gamma}  \tag{by the definition of $\Phi$ and that $u(\cdot)$ is an increasing function}\\
         &\overset{(b)}{\leq} \gamma\left(1+\frac{1-\gamma}{2}\right) \sum_{i=v(\tau, t)+1}^\tau \alpha^i_\tau \Phi(t) + \frac{2C_1A\ln^3(SA\tau/\delta)\tau^{-k_*}}{1-\gamma}  \\
         &\leq \gamma\left(1+\frac{1-\gamma}{2}\right) \Phi(t) + \frac{2C_1A\ln^3(SAt/\delta)\left(\frac{t}{2L'}\right)^{-k_*}}{1-\gamma} \tag{$t \ge \tau \geq \frac{t}{L'}-1 \geq \frac{t}{2L'}$ since $t\geq L'(u(t)+1)\geq 2L'$} \\
         &\leq \gamma\left(1+\frac{1-\gamma}{2}\right) \Phi(t) + \frac{1-\gamma}{2}\Phi(t) \\
         &= \Phi(t). 
     \end{align*}
     In $(a)$ we use the following property: if $\tau\geq u(t)$ and $i\leq v(\tau, t)$, then 
     \begin{align}
         \alpha^i_\tau &= i^{-k_\alpha} \prod_{j=i+1}^{\tau} \left(1-j^{-k_\alpha}\right) 
         \leq \left(1-\tau^{-k_\alpha}\right)^{\tau-i} \nonumber \\
         &\leq \left(1-\tau^{-k_\alpha}\right)^{3\tau^{k_\alpha}\ln \frac{t}{1-\gamma}} \leq \exp\left( -\tau^{-k_\alpha}\cdot 3\tau^{k_\alpha}\ln \frac{t}{1-\gamma}\right) = \frac{(1-\gamma)^3}{t^3}   \nonumber
     \end{align}
     In $(b)$ we use the following calculation: 
     \begin{align*}    \frac{t}{t_i} &\leq \frac{t_{\tau+1}}{t_{i}} = 1+\frac{t_{\tau+1}-t_{i}}{t_{i}} \leq 1+\frac{L'(\tau+1-i)}{i} \\
     &\leq 1 + L'\left(\frac{\tau+1}{v(\tau, t)}-1\right)     \\
     &\leq 1 + L'\left(\frac{\tau+1}{\tau-4\tau^{k_\alpha}\ln\frac{t}{1-\gamma}} - 1\right) \\
     &= 1+L'\left(\frac{1+\frac{1}{\tau}}{1-4\tau^{k_\alpha-1}\ln\frac{t}{1-\gamma}} -1\right) \\
     &\leq 1+L'\left(\frac{1 + \frac{1-\gamma}{16L'}}{1-\frac{1-\gamma}{4L'}}-1\right) \\
     &\leq 1+L'\left(\frac{1-\gamma}{2L'}\right) \\
     &= 1+\frac{1-\gamma}{2}
     \end{align*}
     where the first inequality is due to the fact that at time $t$, state $s$ has only been visited for $\tau$ times; the second inequality 
     is because for any $k>j$, we have $
         t_j\geq j$ and $
         t_{k} - t_{j}\leq L'(k-j)$; the third inequality is by $i\geq v(\tau,t)$; the fourth inequality is by the definition of $v(\tau,t)$; the fifth inequality is because $4\tau^{k_\alpha-1}\ln\frac{t}{1-\gamma}\leq \frac{1-\gamma}{4L'}$ since $\tau \geq u(t)$, and $\frac{1}{\tau} < \frac{1}{u(t)}\leq \frac{1-\gamma}{16L'}$ since $u(t)\geq \frac{16L'}{1-\gamma}$; the last inequality is because $\frac{1+\frac{1}{16}a}{1-\frac{1}{4}a}\leq 1+\frac{1}{2}a$ for $a\in[0,1]$. 

\end{proof}

\subsection{Part IV. Combining}
In this subsection, we combine previous lemmas to show last-iterate convergence rate of \pref{alg: irreducible markov game} and prove \pref{thm:convegence irreducible games}.
\begin{lemma}
    With probability at least $1-\order(\delta)$, for any time $t \ge 1$,
    \begin{align*}
        \max_{s, x,y} \InParentheses{V^s_{x_t, y} - V^s_{x, y_t}} \leq \order\InParentheses{\frac{AL^{2+1/\varepsilon}\ln^{4+1/\varepsilon}(SAt/\delta)\ln^{1/\varepsilon}(t/(1-\gamma))}{(1-\gamma)^{2+1/\varepsilon}} t^{-\frac{1}{9+\varepsilon}}} .
    \end{align*}
\end{lemma}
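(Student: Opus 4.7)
The plan is to combine the two technical building blocks established in Part II (policy convergence to the NE of the regularized game) and Part III (convergence of the value estimates) with the reduction from duality gap to per-state $Q$-duality (\pref{lem:reduce-duality-to-state}). Fix $t\ge 1$, a state $s$, and let $\tau = n_{t+1}^s$ be its visitation count. Since policies only update on visited states, $x_t^s = \xhat_\tau^s$ and $y_t^s = \yhat_\tau^s$. By \pref{lem:reduce-duality-to-state}, it suffices to bound $\max_{x,y \in \Delta_\calA}\bigl[\xhat_\tau^{s\top} Q^s_\star y - x^\top Q^s_\star \yhat_\tau^s\bigr]$ and pay an extra $\frac{2}{1-\gamma}$ factor.

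\textbf{Decomposition.} Write $Q^s_\star = G^s + \gamma \E_{s'\sim P^s}[V^{s'}_\star]$ and the ``regularized $Q$'' $\widetilde{Q}^s_\tau \triangleq G^s + \gamma \E_{s'\sim P^s}[V^{s'}_{t_\tau(s)}]$. Then
\begin{align*}
   \xhat_\tau^{s\top} Q^s_\star y - x^\top Q^s_\star \yhat_\tau^s
   \le 2\gamma \max_{s'}\bigl|V^{s'}_{t_\tau(s)} - V^{s'}_\star\bigr| + \bigl(\xhat_\tau^{s\top} \widetilde{Q}^s_\tau y - x^\top \widetilde{Q}^s_\tau \yhat_\tau^s\bigr).
\end{align*}
The first term is controlled by \pref{lem: irreducible value diff}. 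For the second term, I would insert $\zhat^s_{\tau\star}$ and split:
\begin{align*}
   \xhat_\tau^{s\top}\widetilde{Q}^s_\tau y - x^\top\widetilde{Q}^s_\tau\yhat_\tau^s
   \le \underbrace{\bigl(\xhat_{\tau\star}^{s\top}\widetilde{Q}^s_\tau y - x^\top\widetilde{Q}^s_\tau\yhat_{\tau\star}^s\bigr)}_{(\star)} + \tfrac{2}{1-\gamma}\|\zhat_\tau^s - \zhat_{\tau\star}^s\|_1,
\end{align*}
using $\|\widetilde{Q}^s_\tau\|_\infty \le \frac{1}{1-\gamma}$. For the gap $(\star)$ of the NE of the regularized game in the unregularized bilinear game over $\Delta_\calA \times \Delta_\calA$, I would invoke \pref{lem: diff between equilibrium} with $f_1 = f^s_\tau$ over $\Omega_\tau\times\Omega_\tau$ and $f_2 = $ the bilinear game $\widetilde{Q}^s_\tau$ over $\Delta_\calA\times\Delta_\calA$, giving $(\star) = \order(\epsilon_\tau\ln A + \frac{1}{\tau(1-\gamma)})$ (since $|f_1-f_2|\le 2\epsilon_\tau\ln A$ on $\Omega_\tau$, the set-distance is $\order(1/\tau)$, and $\|\nabla f_2\|_\infty \le \frac{1}{1-\gamma}$). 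For the remaining policy distance, Pinsker's inequality plus \pref{lem: irreducible KL convergence} yields $\|\zhat_\tau^s - \zhat_{\tau\star}^s\|_1 \le \order(\sqrt{AL^2 \ln^5(SAt/\delta)}\,\tau^{-k_\sharp/2})$.

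\textbf{Converting $\tau$ to $t$.} By \pref{cor: irreducible}, with high probability $\tau \ge \Omega(t/(L\ln(St/\delta)))$, which lets me replace every $\tau^{-a}$ bound by $\order((L\ln(St/\delta))^a\,t^{-a})$. Substituting the parameter choices $k_\alpha = \tfrac{9}{9+\varepsilon}$, $k_\eta = \tfrac{5}{9+\varepsilon}$, $k_\beta = \tfrac{3}{9+\varepsilon}$, $k_\epsilon = \tfrac{1}{9+\varepsilon}$ gives $k_\sharp = \min\{k_\beta-k_\epsilon,\, k_\eta-k_\beta,\, k_\alpha-k_\eta-2k_\epsilon\} = \tfrac{2}{9+\varepsilon}$ and $k_* = \min\{k_\eta,k_\beta,k_\alpha-k_\beta,k_\epsilon\} = \tfrac{1}{9+\varepsilon}$, while $\tfrac{k_*}{1-k_\alpha} = \tfrac{1}{\varepsilon}$. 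Each contribution then decays at rate $t^{-1/(9+\varepsilon)}$. The value-convergence term, whose logarithmic and $L$-factor prefactor comes from \pref{lem: irreducible value diff}, ends up dominating all other terms and produces the claimed $\mathrm{poly}(L, 1/(1-\gamma), \ln)$ prefactor (up to mild constants and union-bound losses).

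\textbf{Anticipated main obstacle.} The conceptual steps are standard once the two engine lemmas are in place; the main annoyance is bookkeeping of the exponents so that the $L^{2+1/\varepsilon}$, $\ln^{4+1/\varepsilon}(SAt/\delta)$, $\ln^{1/\varepsilon}(t/(1-\gamma))$, and $(1-\gamma)^{-(2+1/\varepsilon)}$ factors line up after (i) multiplying \pref{lem: irreducible value diff} by the $\frac{1}{1-\gamma}$ from \pref{lem:reduce-duality-to-state}, (ii) converting $\tau$ to $t$ via irreducibility, and (iii) verifying that the Pinsker and regularization terms are dominated. A subtle point to double-check is that the auxiliary comparator $\tilde x_i^s \in \Omega_{i+1}$ used in the regret/KL bounds is compatible with taking $x,y \in \Delta_\calA$ in $(\star)$; this is exactly why \pref{lem: diff between equilibrium} is invoked at the end rather than directly comparing inside $\Omega_\tau$.
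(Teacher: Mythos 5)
Your proposal follows essentially the same route as the paper's proof: reduce to per-state $Q$-duality via \pref{lem:reduce-duality-to-state}, swap $V_\star$ for the algorithm's value estimates (paying the value-convergence error from \pref{lem: irreducible value diff}), insert the regularized-game equilibrium $\zhat^s_{\tau\star}$ and control its unregularized gap via \pref{lem: diff between equilibrium}, bound the policy distance by Pinsker plus \pref{lem: irreducible KL convergence}, and convert $\tau$ to $t$ using irreducibility before substituting the exponents. The only cosmetic difference is that you pass directly from $Q^s_\star$ to the $Q$-matrix at time $t_\tau(s)$ rather than going through $V_t$ first, which is equivalent.
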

\begin{proof}
\notshow{
\chenyu{TODO}
\begin{align*}
     V^s_{x_t, y} - V^s_{x, y_t} &\lesssim \max_s \left(x_t^{s^\top} Q^s_\star y^s - x^{s^\top} Q^s_\star y_t^s\right) \\
     &= \max_s \left(x_t^{s^\top} \left(G^s + \gamma \E_{s'\sim P^s}[V_\star^{s'}]\right) y^s - x^{s^\top} \left(G^s + \gamma \E_{s'\sim P^s}[V_\star^{s'}]\right) y_t^s\right) \\
     &\lesssim \max_s \left(x_t^{s^\top} \left(G^s + \gamma \E_{s'\sim P^s}[V_t^{s'}]\right) y^s - x^{s^\top} \left(G^s + \gamma \E_{s'\sim P^s}[V_t^{s'}]\right) y_t^s\right) + \max_s |V_t^s - V_\star^s| \\
     &\lesssim \max_s \left( f_\tau^s(x_t, y) - f_\tau^s(x, y_t)  \right) + \epsilon_\tau \log t + \max_s |V_t^s - V_\star^s|.
\end{align*}
Then use \pref{lem: irreducible KL convergence} and  \pref{lem: irreducible value diff} to bound the first and the third terms. 
}
Using \pref{lem:reduce-duality-to-state}, we can bound the duality gap of the whole game by the duality gap on an individual state:
\begin{align*}
    \max_{s,x,y} \InParentheses{ V^s_{x_t,y} - V^s_{x,y_t}} &\le \frac{2}{1-\gamma} \max_{s,x,y} \InParentheses{x^s_t Q^s_\star y^s - x^s Q^s_\star y^s_t} \\
    &= \frac{2}{1-\gamma} \max_{s,x,y} \InParentheses{x_t^{s} \InParentheses{G^s + \gamma \E_{s'\sim P^s}\InBrackets{V_\star^{s'}}} y^s - x^{s} \InParentheses{G^s + \gamma \E_{s'\sim P^s}\InBrackets{V_\star^{s'}}} y_t^s} \\
    &\le \frac{2}{1-\gamma} \max_{s,x,y} \InParentheses{x_t^{s} \InParentheses{G^s + \gamma \E_{s'\sim P^s}\InBrackets{V_t^{s'}}} y^s - x^{s} \InParentheses{G^s + \gamma \E_{s'\sim P^s}\InBrackets{V_t^{s'}}} y_t^s} \\
    & \quad + \frac{4\gamma}{1-\gamma} \max_s |V_t^s - V_\star^s|
\end{align*}
With probability at least $1-\order(\delta)$, for any $s,x^s,y^s$, and $t \ge 1$, denote $\tau$ the number of visitation to state $s$ until time $t$, then
\begin{align*}
    &x_t^{s} \InParentheses{G^s + \gamma \E_{s'\sim P^s}\InBrackets{V_t^{s'}}} y^s - x^{s} \InParentheses{G^s + \gamma \E_{s'\sim P^s}\InBrackets{V_t^{s'}}} y_t^s \\
    & = \xhat^{s}_\tau \InParentheses{G^s + \gamma \E_{s'\sim P^s}\InBrackets{V_t^{s'}}} y^s - x^{s} \InParentheses{G^s + \gamma \E_{s'\sim P^s}\InBrackets{V_t^{s'}}} \yhat^s_{\tau} \\
    & \le \xhat^{s}_\tau \InParentheses{G^s + \gamma \E_{s'\sim P^s}\InBrackets{V_{t_\tau(s)}^{s'}}} y^s - x^{s} \InParentheses{G^s + \gamma \E_{s'\sim P^s}\InBrackets{V_{t_\tau(s)}^{s'}}} \yhat^s_{\tau} + 2 \max_{s'} |V_t^{s'} - V_{t_\tau(s)}^{s'}|\\
    & \le \xhat^{s}_{\tau\star} \InParentheses{G^s + \gamma \E_{s'\sim P^s}\InBrackets{V_{t_\tau(s)}^{s'}}} y^s - x^{s} \InParentheses{G^s + \gamma \E_{s'\sim P^s}\InBrackets{V_{t_\tau(s)}^{s'}}} \yhat^s_{\tau\star} + 2 \max_{s'} |V_t^{s'} - V_{t_\tau(s)}^{s'}| + \order(\InNorms{\zhat^s_\tau - \zhat^s_{\tau\star}}_1)\\
    & \le 2\epsilon_\tau \ln(A) + \order\left(\frac{1}{\tau}\right) + 2 \max_{s'} |V_t^{s'} - V_{t_\tau(s)}^{s'}| + \order(\sqrt{\KL(\zhat^s_\tau,\zhat^s_{\tau\star}}))\tag{\pref{lem: diff between equilibrium}}\\
    & \le \order\InParentheses{\frac{\ln(A)}{1-\gamma} \tau^{-k_\epsilon}} +  \order\left(\frac{L\ln(St/\delta)\tau^{-k_\alpha}}{1-\gamma}\right)+ \order\left(\sqrt{A\ln^5(SAt / \delta)L^2}\tau^{-\frac{k_{\sharp}}{2}}\right) \tag{\pref{lem: irreducible KL convergence}}.
\end{align*}
Combing the above two inequality with \pref{lem: irreducible KL convergence} and \pref{lem: irreducible value diff} and the choice of parameters $k_\alpha = \frac{9}{9+\varepsilon}$, $k_\varepsilon = \frac{1}{9+\varepsilon}$, $k_\beta = \frac{3}{9+\varepsilon}$, and $k_\eta = \frac{5}{9+\varepsilon}$, we have $k_{\sharp}=\min\{k_\beta-k_\epsilon, k_\eta-k_\beta, k_\alpha - k_\eta-2k_\epsilon\} = \frac{2}{9+\varepsilon}$, $k_*=\min\left\{k_\eta, k_\beta, k_\alpha-k_\beta, k_\epsilon\right\} = \frac{1}{9+\varepsilon}$, and 
\begin{align*}
    &\max_{s,x,y} \InParentheses{ V^s_{x_t,y} - V^s_{x,y_t}} \\
    &\le \frac{2}{1-\gamma} \max_{s,x,y} \InParentheses{x_t^{s} \InParentheses{G^s + \gamma \E_{s'\sim P^s}\InBrackets{V_t^{s'}}} y^s - x^{s} \InParentheses{G^s + \gamma \E_{s'\sim P^s}\InBrackets{V_t^{s'}}} y_t^s} + \frac{4\gamma}{1-\gamma} \max_s |V_t^s - V_\star^s| \\
    & \le \order\InParentheses{\frac{\sqrt{A}\ln^{5/2}(SAt/\delta)}{(1-\gamma)^2} L\tau^{- \min\{k_\epsilon, \frac{k_{\sharp}}{2}, k_{\alpha}\}}}  + \order\left( \frac{A\ln^3(SAt/\delta)}{(1-\gamma)^3}\left(\frac{L\ln(St/\delta)}{1-\gamma}\ln\frac{t}{1-\gamma}\right)^{\frac{k_*}{1-k_\alpha}}\left(\frac{L\ln(St/\delta)}{t}\right)^{k_*} \right) \\
    &= \order\InParentheses{\frac{AL^{2+1/\varepsilon}\ln^{4+1/\varepsilon}(SAt/\delta)\ln^{1/\varepsilon}(t/(1-\gamma))}{(1-\gamma)^{3+1/\varepsilon}} \cdot t^{-\frac{1}{9+\varepsilon}}} \tag{$\frac{t}{L} \le \tau \le t$}.
\end{align*} 
\end{proof}

\section{Convergent Analysis of \pref{alg: general markov game}}\label{app:general}

\subsection{Part I. Basic Iteration Properties}

\begin{definition}
Let $t_\tau(s)$ be the $\tau$-th time the players visit state $s$. Define $\xhat^s_\tau = x^s_{t_\tau(s)}$, $\yhat^s_\tau = y^s_{t_\tau(s)}$, $\ahat^s_\tau = a_{t_\tau(s)}$, $\bhat^s_\tau = b_{t_\tau(s)}$,. Furthermore, define
\begin{align*}
    \flow^s_\tau(x,y) &\triangleq  x^\top \left(G^s + \gamma \E_{s'\sim P^s} \left[\Vlow^{s'}_{t_\tau(s)}\right] \right)y - \epsilon \phi(x) +  \epsilon \phi(y),  \\
    \fup^s_\tau(x,y) &\triangleq x^\top \left(G^s + \gamma \E_{s'\sim P^s} \left[\Vup^{s'}_{t_\tau(s)}\right] \right) y  - \epsilon \phi(x) +  \epsilon \phi(y). 
\end{align*}
\end{definition}
\begin{lemma}\label{lem: single step regret general}
    For any $x^s\in\Omega$, 
    \begin{align*}
       &\flow_\tau^s(\xhat_\tau^s, \yhat_\tau^s) - \flow_\tau^s(x^s, \yhat_\tau^s)\\
       &\leq \frac{(1-\eta\epsilon)\KL(x^s, \xhat_\tau^s) - \KL(x^s, \xhat_{\tau+1}^s)}{\eta} + \frac{10\eta A\ln^2(AT)}{(1-\gamma)^2} + \frac{2\eta A}{(1-\gamma)^2}  \underlambda_\tau^{s}  + \underxi_\tau^{s} + \underzeta_\tau^{s}(x^s). 
    \end{align*}
    where 
    \begin{align*}
        \underlambda^s_\tau&= \frac{1}{A}\sum_a \left(\frac{\one[\hat{a}^s_\tau=a]}{\xhat^s_{\tau,a}+\beta}-1\right), \\
        \underxi^s_\tau &= (\xhat_\tau^s)^\top \left( \left(G^s + \gamma \E_{s'\sim P^s}\left[\underV^{s'}_{t}\right]\right) \yhat_\tau^s - \frac{\one[\ahat^s_{\tau}=a]\left(\sigma_{t} + \gamma \underV_{t}^{s_{t+1}}\right)}{\xhat_{\tau,a}^s+\beta}\right), \\
        \underzeta_\tau^{s}(x^s) &= (x^s)^\top \left(\frac{\one[\ahat^s_\tau=a]\left(\sigma_{t} + \gamma \underV_t^{s_{t+1}}\right)}{\xhat_{\tau,a}^s+\beta} - \left(G^s + \gamma \E_{s'\sim P^s}\left[\underV^{s'}_{t}\right]\right) \yhat_\tau^s \right).
    \end{align*}
\end{lemma}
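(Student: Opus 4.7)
The plan is to follow the same template used for the single-step bound in the matrix game (around equation \eqref{eq: single step convergence}) and for the irreducible Markov game in \pref{lem: single step regret irreducible}, with the main adjustment being that here the step size $\eta$, clipping parameter $\beta$, regularization $\epsilon$, and the domain $\Omega$ are all constants rather than time-varying, and the value function being fed into the gradient is $\Vlow^{s'}_t$ rather than $V^{s'}_t$. Fix $s$ and $\tau$, and let $t = t_\tau(s)$. I will start by writing
\[
\flow_\tau^s(\xhat_\tau^s, \yhat_\tau^s) - \flow_\tau^s(x^s, \yhat_\tau^s) = (\xhat_\tau^s - x^s)^\top \bigl(G^s + \gamma \E_{s'\sim P^s}[\Vlow^{s'}_t]\bigr) \yhat_\tau^s - \epsilon \phi(\xhat_\tau^s) + \epsilon \phi(x^s),
\]
and then use the standard identity $-\epsilon\phi(\xhat_\tau^s)+\epsilon\phi(x^s) = (\xhat_\tau^s-x^s)^\top(\epsilon\ln\xhat_\tau^s) - \epsilon\KL(x^s,\xhat_\tau^s)$ to rewrite the right-hand side as $(\xhat_\tau^s-x^s)^\top h_\tau^s - \epsilon\KL(x^s,\xhat_\tau^s)$, where $h_\tau^s$ is the true regularized gradient whose entries are $(G^s+\gamma\E_{s'\sim P^s}[\Vlow^{s'}_t])\yhat_\tau^s + \epsilon\ln\xhat_\tau^s$.

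Next I will pass from the true gradient $h_\tau^s$ to the bandit estimator $g_t$ used in the algorithm. By the definition of $g_t$ in \pref{alg: general markov game} together with the definitions of $\underxi_\tau^s$ and $\underzeta_\tau^s(x^s)$ given in the lemma statement, we get the identity
\[
(\xhat_\tau^s - x^s)^\top h_\tau^s = (\xhat_\tau^s - x^s)^\top g_t + \underxi_\tau^s + \underzeta_\tau^s(x^s).
\]
Now I apply \pref{lem: mirror descent 1 step} with $\ell_a = \frac{\one[\ahat_\tau^s=a](\sigma_t+\gamma\Vlow_t^{s_{t+1}})}{\xhat_{\tau,a}^s+\beta}$, entropy coefficient $\epsilon$, step size $\eta$, current iterate $\xhat_\tau^s$, next iterate $\xhat_{\tau+1}^s$, and comparator $x^s \in \Omega$, yielding
\[
(\xhat_\tau^s-x^s)^\top g_t \le \frac{\KL(x^s,\xhat_\tau^s)-\KL(x^s,\xhat_{\tau+1}^s)}{\eta} + \eta\sum_a \xhat_{\tau,a}^s\ell_a^2 + \eta\sum_a \epsilon^2\ln^2 \xhat_{\tau,a}^s.
\]
Combining this with the regularizer's $-\epsilon\KL(x^s,\xhat_\tau^s)$ term produces the contractive factor $(1-\eta\epsilon)$ in front of $\KL(x^s,\xhat_\tau^s)$.

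The remaining routine step is to bound the two quadratic terms. For the first, using $\ell_a^2 \le \frac{1}{\xhat_{\tau,a}^s+\beta}\cdot \frac{\one[\ahat_\tau^s=a]}{\xhat_{\tau,a}^s+\beta}$ and $\xhat_{\tau,a}^s\le \xhat_{\tau,a}^s+\beta$, I get $\eta\sum_a \xhat_{\tau,a}^s\ell_a^2 \le \frac{\eta}{(1-\gamma)^2}\sum_a \frac{\one[\ahat_\tau^s=a]}{\xhat_{\tau,a}^s+\beta}$ (noting that $\sigma_t + \gamma\Vlow^{s_{t+1}}_t\in[0,\tfrac{1}{1-\gamma}]$), which I rewrite as $\frac{2\eta A}{(1-\gamma)^2}\underlambda_\tau^s + \frac{2\eta A}{(1-\gamma)^2}$ by centering. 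For the second, since $\xhat_{\tau,a}^s\ge \frac{1}{AT}$ we have $\ln^2 \xhat_{\tau,a}^s \le \ln^2(AT)$, hence $\eta\sum_a \epsilon^2\ln^2 \xhat_{\tau,a}^s \le \eta A\ln^2(AT)$ (using $\epsilon \le 1$ up to the $(1-\gamma)^{-2}$ slack). Assembling all pieces gives exactly the stated inequality. There is no conceptually hard step here; the only thing to be careful about is consistently absorbing the constants so that the final form matches the displayed bound (e.g., combining the two $A\ln^2(AT)$-type terms into the $10\eta A\ln^2(AT)/(1-\gamma)^2$ coefficient, identically to the derivation in \pref{eq: single step convergence} and \pref{lem: single step regret irreducible}).
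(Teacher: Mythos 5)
Your proposal is correct and follows essentially the same route as the paper: the paper's proof of this lemma is literally a one-line reference back to \pref{lem: single step regret irreducible}, which in turn carries out exactly the decomposition you describe (rewrite the entropy difference as $(\xhat^s_\tau-x^s)^\top(\epsilon\ln\xhat^s_\tau)-\epsilon\KL(x^s,\xhat^s_\tau)$, swap the true gradient for $g_t$ to produce $\underxi^s_\tau$ and $\underzeta^s_\tau(x^s)$, apply \pref{lem: mirror descent 1 step}, and bound the local-norm and entropy-squared terms using $\sigma_t+\gamma\Vlow^{s_{t+1}}_t\le\frac{1}{1-\gamma}$ and $\xhat^s_{\tau,a}\ge\frac{1}{AT}$). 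The constant bookkeeping you flag at the end is the same absorption performed in \pref{eq: single step convergence}, so no gap.
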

\begin{proof}
   The proof is exactly the same as that of \pref{lem: single step regret irreducible}. 
\end{proof}

\subsection{Part II. Value Convergence}

\begin{lemma}[weighted regret bound]
\label{lem: weighted regret bound}
    There exists a large enough universal constant $\kappa$ (used in the definition of $\bns_\tau$) such that with probability $1-\order(\delta)$, for any state $s$, visitation count $\tau$, and any $x^s\in\Omega$, 
    \begin{align*}
        \sum_{i=1}^\tau \alpha^i_\tau \left(\flow^s_{i}(\xhat^s_{i}, \yhat^s_{i}) - \flow^s_{i}(x^s, \yhat^s_{i})\right) \leq \frac{1}{2}\bns_\tau.
    \end{align*}
\end{lemma}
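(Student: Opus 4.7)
The proof will follow the same three-step template used for \pref{lem: irredu weighted regret}, but adapted to the fixed-parameter setting of \pref{alg: general markov game}. I would begin by invoking the single-step regret bound \pref{lem: single step regret general} at each of the first $\tau$ visits to state $s$ and then forming the $\alpha^i_\tau$-weighted sum. Because the comparator set $\Omega$ is now fixed (rather than time-varying $\Omega_{\tau+1}$), no auxiliary comparator $\widetilde{x}^s_i$ is needed and one can plug in $x^s$ directly. This produces five contributions: a KL-telescope term, a deterministic $\eta$-term of size $O(\eta A\ln^2(AT)/(1-\gamma)^2)\sum_i\alpha^i_\tau$ (bounded by $\sum_i\alpha^i_\tau\le 1$), and the three stochastic terms $\underlambda^s_i$, $\underxi^s_i$, $\underzeta^s_i(x^s)$ coming from the EXP3-IX-style gradient estimator.

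The main technical difficulty is the KL-telescope. Because the Nash-V averaging weights $\alpha^i_\tau$ with $\alpha_i=(H{+}1)/(H{+}i)$ are not monotone in $i$ (in fact $\alpha^i_\tau/\alpha^{i-1}_\tau=(H{+}i{-}1)/(i{-}1)>1$), the sum $\sum_i(\alpha^i_\tau/\eta)[(1-\eta\epsilon)\KL(x^s,\xhat^s_i)-\KL(x^s,\xhat^s_{i+1})]$ does not telescope into a non-positive expression. I will handle this via an Abel summation, writing the sum as a boundary contribution plus $\sum_{i=2}^\tau\eta^{-1}[(1-\eta\epsilon)\alpha^i_\tau-\alpha^{i-1}_\tau]\KL(x^s,\xhat^s_i)$, and then exploit the crucial fact that both $x^s$ and $\xhat^s_i$ belong to $\Omega$, which forces $\KL(x^s,\xhat^s_i)\le O(\ln(AT))$ uniformly. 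Together with the standard Nash-V weighted-sum identities $\sum_i\alpha^i_\tau\le 1$, $\max_i\alpha^i_\tau=O(\alpha_\tau)$, and $\sum_i(\alpha^i_\tau)^2=O(\alpha_\tau)$, this collapses the KL piece into $O(\alpha_\tau\ln(AT)/\eta)$.

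For the three stochastic pieces I will invoke the EXP3-IX concentration lemmas from \pref{app:Properties related to EXP3-IX}. Applying \pref{lem: optimistm part} with $c_i=\alpha^i_\tau$, $\xstar=\one/A$, and $\ell_i=\one$ bounds $(2\eta A/(1-\gamma)^2)\sum_i\alpha^i_\tau\underlambda^s_i$ by $O(\eta A\alpha_\tau\ln(SAT/\delta)/(\beta(1-\gamma)^2))$; \pref{lem: excessive loss part} with $c_i=\alpha^i_\tau$ bounds $\sum_i\alpha^i_\tau\underxi^s_i$ by $O((A\beta+\sqrt{\alpha_\tau\ln(SAT/\delta)})/(1-\gamma))$; and \pref{lem: optimistm part 2} (used instead of \pref{lem: optimistm part} in order to get a bound that is uniform in $x^s\in\Omega\subset\Delta_A$) bounds $\sum_i\alpha^i_\tau\underzeta^s_i(x^s)$ by $O(\alpha_\tau\ln(SAT/\delta)/(\beta(1-\gamma)))$.

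Collecting every piece and using the ordering $\eta\le\beta\le\epsilon$ imposed in the algorithm, each of the resulting contributions is at most a constant multiple of $A\ln^2(SAT/\delta)(\beta+\alpha_\tau/\eta)/(1-\gamma)^2$. Choosing the absolute constant $\kappa$ in the definition of $\bns_\tau$ sufficiently large therefore yields the desired bound of $\tfrac{1}{2}\bns_\tau$. A union bound over $s\in\calS$ and $\tau\le T$, together with a standard $\ln(AT)$-scale covering of $\Omega$ (or a direct appeal to \pref{lem: optimistm part 2}, which is already uniform in the comparator) upgrades the statement to hold simultaneously over all $s$, $\tau$, and $x^s\in\Omega$ with probability at least $1-\order(\delta)$.
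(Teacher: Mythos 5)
Your proposal is correct and follows essentially the same route as the paper's proof: apply \pref{lem: single step regret general}, take the $\alpha^i_\tau$-weighted sum, control the non-telescoping KL term via $\alpha^{i-1}_\tau\geq(1-\alpha_i)\alpha^i_\tau$ together with $\KL(x^s,\xhat^s_i)\leq\ln(AT)$ on $\Omega$, bound the three stochastic terms with \pref{lem: optimistm part}, \pref{lem: excessive loss part}, and \pref{lem: optimistm part 2}, and absorb everything into $\tfrac12\bns_\tau$ using $\eta\leq\beta$ and a union bound. Your explicit use of \pref{lem: optimistm part 2} to get uniformity over $x^s\in\Omega$ is exactly the right (and needed) choice; no covering argument is required.
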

\begin{proof}
    Fix state $s$ and visitation count $\tau \le T$. Applying \pref{lem: single step regret general} and considering the weighted sum of the bounds, we get 
    \begin{align*}
        &\sum_{i=1}^\tau \alpha^i_\tau \left(\flow^s_{i}(\xhat^s_{i}, \yhat^s_{i}) - \flow^s_{i}(x^s, \yhat^s_{i})\right) \\
        &\leq \sum_{i=1}^\tau \alpha^i_\tau \left( \frac{(1-\eta\epsilon)\KL(x^s, \xhat^s_i) - \KL(x^s, \xhat_{i+1}^s)}{\eta} + \frac{10\eta A \ln^2(AT)}{(1-\gamma)^2} + \frac{2\eta A}{(1-\gamma)^2}\underlambda_{i}^{s} + \underxi_{i}^{s} + \underzeta_{i}^{s}(x^s) \right)  \\
        &\leq \underbrace{\frac{\alpha^1_\tau(1-\eta\epsilon)}{\eta}\KL(x^s, \xhat_1^s)  + \sum_{i=2}^{\tau} \left(\frac{\alpha^i_\tau(1-\eta\epsilon)}{\eta}\KL(x^s, \xhat^s_{i}) - \frac{\alpha^{i-1}_\tau}{\eta}\KL(x^s,  \xhat^s_{i})\right)}_{\term_0} \\
        &\qquad + \underbrace{\frac{10 \eta A\ln^2(AT)}{(1-\gamma)^2}\sum_{i=1}^\tau \alpha^i_\tau }_{\term_1}  + \underbrace{\frac{2\eta A}{(1-\gamma)^2}\sum_{i=1}^\tau \alpha^i_\tau  \underlambda_{i}^{s} }_{\term_2} + 
        \underbrace{\sum_{i=1}^\tau \alpha^i_\tau  \underxi_{i}^{s}}_{\term_3} + \underbrace{\sum_{i=1}^\tau \alpha^i_\tau \underzeta_{i}^{s}(x^s)}_{\term_4}.    
    \end{align*} 
   
    Since
    $\alpha^{i-1}_\tau = \frac{\alpha_{i-1}(1-\alpha_i)}{\alpha_i} \alpha^i_\tau \geq (1-\alpha_i)\alpha^i_\tau$,  
    \begin{align*}
        \term_0 &\leq  \frac{\alpha^1_\tau}{\eta}\KL(x^s, \xhat_{1}^s) + \sum_{i=2}^{\tau} \KL(x^s, \xhat_{i}^s)\left(\frac{\alpha^i_\tau(1-\eta\epsilon)}{\eta} - \frac{\alpha^{i}_\tau(1-\alpha_i)}{\eta}\right) \\ 
        &\leq \ln(AT) \left(\frac{\alpha^1_\tau}{\eta} + \sum_{i=2}^\tau \frac{\alpha^i_\tau\alpha_i}{\eta}\right) = \order\left(\frac{\ln(AT)\alpha_\tau}{\eta}\right). 
    \end{align*}
    We proceed to bound other terms as follows: wiht probability at least $1- \frac{\delta}{S\tau^2}$
    \begin{align*}
         \term_1 &= \order\left(\frac{ A\ln^2(AT)\eta}{(1-\gamma)^2}\right), \tag{$\sum_{i=1}^\tau \alpha_\tau^i = 1$} \\
        \term_2 &= \frac{2\eta A}{(1-\gamma)^2} \sum_{i=1}^\tau \alpha^i_\tau \InParentheses{ \frac{1}{A} \sum_a \left(\frac{\one[\ahat_{i}^s=a]}{\xhat^s_{i,a} + \beta}-1\right)}\\
        &\leq \frac{2A}{(1-\gamma)^2} \times \order\left(\ln(AS\tau/\delta)\max_{i\leq \tau} \frac{\alpha^i_\tau\eta}{\beta}\right) \tag{by \pref{lem: optimistm part}} \\
        &= \order\left(\frac{A\ln(AST/\delta)\alpha_\tau}{(1-\gamma)^2}\times \frac{\eta}{\beta}\right), \\
        \term_3 
        &= \order\left(\frac{A}{1-\gamma}\sum_{i=1}^\tau \beta\alpha^i_\tau + \frac{1}{1-\gamma}\sqrt{\ln(AS\tau/\delta)\sum_{i=1}^\tau (\alpha^i_\tau)^2}\right) \tag{by \pref{lem: excessive loss part}}\\
        &= \order\left(\frac{A\beta}{1-\gamma} + \frac{1}{1-\gamma}\sqrt{\ln(AS\tau/\delta)\sum_{i=1}^\tau\alpha^i_\tau \alpha_i}\right) \\
        &= \order\left( \frac{A\ln(AST/\delta)\left(\beta + \alpha_\tau\right)}{1-\gamma}\right), \\
        \term_4
        &= \order\left( \frac{\ln(AS\tau/\delta)}{1-\gamma}\max_{i\leq \tau} \frac{\alpha^i_\tau}{\beta} \right) 
        = \order\left(\frac{\ln(AST/\delta)\alpha_\tau}{(1-\gamma)\beta}\right). 
    \end{align*}
    Combining all terms and applying a union bound over $s \in \calS$ and $\tau$, we get with probability $1-\order(\delta)$ such that for any $s \in \calS$, visitation count $\tau$, and $x^s \in \Omega$, 
    \begin{align}
        \sum_{i=1}^\tau \alpha^i_\tau \left(\flow^s_{i}(\xhat^s_{i}, \yhat^s_{i}) - \flow^s_{i}(x^s, \yhat^s_{i})\right) 
        &= \order\left(\frac{A\ln^2(AST/\delta)\left(\eta + \beta + (\eta^{-1}+\beta^{-1})\alpha_\tau\right)}{(1-\gamma)^2}\right) \nonumber \\
        &= \order\left( \frac{A\ln^2(SAT/\delta)(\beta + \alpha_\tau/\eta)}{(1-\gamma)^2} \right)    \tag{using $\eta\leq \beta$}.
    \end{align}
    This implies the conclusion of the lemma. 
\end{proof}

\begin{lemma}\label{lem: vup > vlow}
    For all $t, s$,\, $\Vup^s_t \geq \Vlow^s_t$. 
\end{lemma}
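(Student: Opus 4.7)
The plan is to prove this by induction on $t$, using a strengthened hypothesis that also tracks the untruncated iterates. First I would unpack what $\Vup^s_t$ satisfies: since the paper asserts the $y$-player runs a symmetric algorithm, the quantity $\Vup^s_t$ obeys the symmetric recursion $\Vuptil_{t+1}^{s_t} = (1-\alpha_\tau)\Vuptil_t^{s_t} + \alpha_\tau(\sigma_t + \gamma \Vup_t^{s_{t+1}} + \bns_\tau)$ and $\Vup_{t+1}^{s_t} = \min\{\Vuptil^{s_t}_{t+1},\tfrac{1}{1-\gamma}\}$, initialized with $\Vup^s_1=\Vuptil^s_1=\tfrac{1}{1-\gamma}$. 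The crucial asymmetry is that $\Vup$ \emph{adds} the bonus $\bns_\tau\ge 0$ while $\Vlow$ subtracts it.

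I would then carry the induction with the strengthened hypothesis that, for every $s$, (i)~$\Vuptil^s_t\ge 0$, (ii)~$\Vlowtil^s_t\le \tfrac{1}{1-\gamma}$, and (iii)~$\Vup^s_t\ge \Vlow^s_t$. The base case at $t=1$ is immediate from the initial values. In the inductive step, only $s=s_t$ is updated, so the claim is trivial at the other states. For $s=s_t$: (i) and (ii) are preserved because $\Vuptil^{s_t}_{t+1}$ is a convex combination of $\Vuptil^{s_t}_t\ge 0$ and $\sigma_t+\gamma\Vup^{s_{t+1}}_t+\bns_\tau\ge 0$ (using that $\Vup^{s_{t+1}}_t\ge 0$, which follows from the inductive bound $\Vuptil^{s_{t+1}}_t\ge 0$ via the truncation), and analogously $\Vlowtil^{s_t}_{t+1}\le \tfrac{1}{1-\gamma}$.

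For the main claim, I would first note the chain $\Vuptil^s_t\ge \Vup^s_t\ge \Vlow^s_t\ge \Vlowtil^s_t$ that follows from the truncation identities together with (iii); in particular $\Vuptil^{s_t}_t\ge \Vlowtil^{s_t}_t$. Subtracting the two $\Vuptil$ and $\Vlowtil$ updates then gives
\[
\Vuptil^{s_t}_{t+1}-\Vlowtil^{s_t}_{t+1}=(1-\alpha_\tau)\bigl(\Vuptil^{s_t}_t-\Vlowtil^{s_t}_t\bigr)+\alpha_\tau\gamma\bigl(\Vup^{s_{t+1}}_t-\Vlow^{s_{t+1}}_t\bigr)+2\alpha_\tau \bns_\tau\ \ge\ 0,
\]
where all three terms are non-negative by the inductive hypothesis and $\bns_\tau\ge 0$. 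Finally, to upgrade to $\Vup^{s_t}_{t+1}\ge \Vlow^{s_t}_{t+1}$ I would do a short case analysis on whether the min- and max-truncations are active: if neither is active, the inequality is immediate; if only $\min\{\cdot,\tfrac{1}{1-\gamma}\}$ is active, then $\Vup^{s_t}_{t+1}=\tfrac{1}{1-\gamma}\ge \Vlowtil^{s_t}_{t+1}=\Vlow^{s_t}_{t+1}$ by (ii); the other cases are symmetric.

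The main (and really only) obstacle is this last bookkeeping around the truncations: the inequality $\Vup\ge\Vlow$ does not by itself upgrade to $\Vuptil\ge\Vlowtil$, and conversely truncating a pair satisfying $\Vuptil\ge\Vlowtil$ could in principle lose the ordering if one of $\Vuptil^s,\Vlowtil^s$ escapes the interval $[0,\tfrac{1}{1-\gamma}]$. Strengthening the induction to the three statements (i)--(iii) is what keeps every quantity inside the ``safe'' interval and makes the case analysis go through cleanly.
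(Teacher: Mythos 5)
Your proposal is correct and follows essentially the same route as the paper: induction on $t$, showing that the $2\bns_\tau\ge 0$ bonus keeps $\Vuptil^s-\Vlowtil^s\ge 0$, and then using the bounds $\Vuptil^s_t\ge 0$ and $\Vlowtil^s_t\le H$ (your (i)--(ii), which the paper derives on the fly from the induction hypothesis rather than carrying explicitly) to show the truncations preserve the ordering. The only cosmetic difference is that you work with the one-step recursion while the paper unrolls it into the weighted sum $\sum_i\alpha^i_\tau(\gamma\Vup-\gamma\Vlow+2\bns_i)$; both are equivalent.
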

\begin{proof}
     We prove it by induction on $t$. The inequality clearly holds for $t=1$ by the initialization. Suppose that the inequality holds for $1, 2, \ldots, t-1$ and for all $s$. Now consider time $t$ and state $s$. Let $\tau=n_t^s$, and let $1\leq t_1 < t_2 < \ldots < t_\tau < t$ be the time indices when the players visit state $s$. By the update rule, 
     \begin{align*}
         \Vuptil^s_t - \Vlowtil^s_t &= \sum_{i=1}^{\tau} \alpha^i_{\tau}\left(\gamma \Vup_{t_i}^{s_{t_i+1}} - \gamma \Vlow_{t_i}^{s_{t_i+1}} + 2\bns_i\right) > 0
     \end{align*}
     where the inequality is by the induction hypothesis. Therefore, 
     \begin{align*}
         \Vup^s_t - \Vlow^s_t = \min\left\{ \Vuptil^s_t, H \right\} - \max\left\{  \Vlowtil^s_t, 0 \right\} > 0. 
     \end{align*}
    In the last inequality we also use the fact that $\Vlowtil^s_t \leq H$ and $\Vuptil^s_t \geq 0$. Note that by the induction hypothesis and the update rule of $\Vup^s_t$ and $\Vlow^s_t$, we have $0 \le \Vlow^s_i < \Vup^s_i \le H$ for all $s$ and $1\le i \le t-1$. Thus $\Vlowtil^s_t = \sum_{i=1}^\tau \alpha^i_\tau (\gamma \Vlow^{s_{t_i +1}}_{t_i} - \bns_i) \leq H$ and similarly $\Vuptil^s_t\geq 0$. 
\end{proof}

\begin{lemma}\label{lem: self bounding lemma}
     Let $c=(c_1, \ldots, c_T)$ be any non-negative sequence with $c_i\leq c_{\max} \forall i$ and $\sum_{t=1}^T c_t=C$. Then
    \allowdisplaybreaks
    \begin{align*}
        \sum_{t=1}^T c_t \left(\Vup^{s_{t+1}}_{t} - \Vlow^{s_{t+1}}_{t}\right)\leq \order\left( \frac{CA\ln^3(AST/\delta)\beta}{(1-\gamma)^3} + \frac{c_{\max}AS \ln^4(AST/\delta)}{\eta(1-\gamma)^3}\right). 
    \end{align*}
\end{lemma}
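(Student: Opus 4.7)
The target is a self-bounding inequality for $W := \sum_{t=1}^T c_t \Delta^{s_{t+1}}_t$, where I denote $\Delta^s_t := \Vup^s_t - \Vlow^s_t \geq 0$ (nonnegativity by \pref{lem: vup > vlow}). The overall strategy is to iterate a one-step recursion roughly $H = \ln(T)/(1-\gamma)$ times, so that the prefactor $\gamma^H \leq 1/T$ eliminates the residual $\Delta$-contribution and only the accumulated bonus terms remain, which directly match the two pieces on the right-hand side of the claimed bound.

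For the one-step expansion, I will use that clipping preserves order ($\Vup^s \leq \Vuptil^s$, $\Vlow^s \geq \Vlowtil^s$) together with the symmetric updates for $\Vuptil$ (which uses $+\bns_\tau$) and $\Vlowtil$ (which uses $-\bns_\tau$) to derive, whenever state $s$ has visitation count $\tau = n^s_t \geq 1$,
$$\Delta^s_t \;\leq\; \sum_{i=1}^{\tau}\alpha^i_\tau\bigl(\gamma\,\Delta^{s_{t_i(s)+1}}_{t_i(s)} + 2\bns_i\bigr),$$
while if $n^s_t = 0$, initialization gives $\Delta^s_t \leq H$, which contributes at most $O(c_{\max}SH)$ to $W$ in total because each state is visited for the first time at most once. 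Substituting this expansion into $W$ and swapping the order of summation yields $W \leq \gamma W' + 2\sum_t c_t\sum_i \alpha^i_{n^{s_{t+1}}_t}\bns_i + O(c_{\max}SH)$, where $W' = \sum_{t'}c'_{t'}\Delta^{s_{t'+1}}_{t'}$ with new nonnegative weights satisfying (i) $\sum_{t'}c'_{t'} = C$ exactly (because $\sum_{j=1}^\tau \alpha^j_\tau = 1$) and (ii) $c'_{\max} \leq (1+1/H)c_{\max}$ (using the standard identity $\sum_{\tau\geq j}\alpha^j_\tau \leq 1+1/H$ for the Jin-style schedule $\alpha_\tau = (H+1)/(H+\tau)$).

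Iterating Step 2 for $h=0,1,\ldots,H-1$ produces
$$W \leq \gamma^H\, W^{(H)} + \sum_{h=0}^{H-1}\gamma^h\Bigl(2\sum_t c^{(h)}_t\sum_i \alpha^i_{n^{s_{t+1}}_t}\bns_i + O\!\bigl(c^{(h)}_{\max}SH\bigr)\Bigr).$$
The trivial estimate $W^{(H)} \leq CH$ together with $\gamma^H \leq 1/T$ gives $\gamma^H W^{(H)} \leq CH/T \leq c_{\max}H$, which is absorbed. Because $(1+1/H)^H \leq e$, the max weight stays $c^{(h)}_{\max}=O(c_{\max})$ for all $h \leq H$, and $\sum_{h=0}^{H-1}\gamma^h = O(1/(1-\gamma))$. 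I then split $\bns_\tau = \kappa A\ln^2(SAT/\delta)(\beta + \eta^{-1}\alpha_\tau)/(1-\gamma)^2$ into two pieces: the $\beta$-piece, using $\sum_i \alpha^i_{n^{s_{t+1}}_t} = 1$, telescopes to $O(CA\beta\ln^3(AST/\delta)/(1-\gamma)^3)$; the $\eta^{-1}\alpha_\tau$-piece uses a sharp bound on $\sum_i \alpha^i_\tau\alpha_i$ exploiting the explicit form of the Jin schedule, combined with the state-wise harmonic estimate $\sum_t c^{(h)}_t/n^{s_{t+1}}_t \leq c^{(h)}_{\max}S\ln T$, to produce $O(c_{\max}AS\ln^4(AST/\delta)/(\eta(1-\gamma)^3))$. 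Adding the absorbed $O(c_{\max}SH^2)$ contributions from first-visit states gives the claimed bound.

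The main obstacle is the weight bookkeeping in Step 2: carefully verifying that the reindexing preserves total mass $C$ exactly while the max grows by at most a factor $(1+1/H)$ per iteration, since this asymmetry is precisely what lets the $\beta$-bonus scale as $C\beta$ rather than $c_{\max}T\beta$ after $H$ iterations. A secondary technical challenge is obtaining the correct exponent of $(1-\gamma)$ in the $\eta^{-1}$-contribution, since a loose estimate on $\sum_i \alpha^i_\tau\alpha_i$ (e.g.\ one ignoring the tail decay of $\alpha^i_\tau$ for $i \ll \tau$) would introduce an additional $1/(1-\gamma)$ factor and violate the target bound.
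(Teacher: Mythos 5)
Your proposal follows essentially the same route as the paper's proof: expand $\Vup^s_t-\Vlow^s_t$ through the $\alpha^i_\tau$-weighted update recursion, reweight to get $Z_c\le\gamma Z_{c'}+(\text{bonus})$ with $\sum_t c'_t\le C$ and $c'_{\max}\le(1+1/H)c_{\max}$, unroll $H$ times using $\gamma^H\le 1/T$ and $(1+1/H)^H\le e$, and account for the $\beta$- and $\eta^{-1}\alpha_\tau$-pieces of $\bns_\tau$ separately. The only differences are cosmetic bookkeeping (the paper shifts indices $t\to t+1$ and pays a total-variation cost where you handle first-visit states directly, and your "$\sum_{t'}c'_{t'}=C$ exactly" should be "$\le C$", which is all that is needed), so the argument is correct.
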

\begin{proof}
   \allowdisplaybreaks
    \begin{align}
         &Z_c \triangleq \sum_{t=1}^T c_t \left(\Vup^{s_{t+1}}_{t} - \Vlow^{s_{t+1}}_t\right) \nonumber \\
         &\leq\sum_{t=1}^T c_t \left(\Vup^{s_{t+1}}_{t+1} - \Vlow^{s_{t+1}}_{t+1}\right) + \order\left(c_{\max} \sum_s \sum_{t=1}^T  \left(\left|\Vup_{t+1}^s - \Vup_t^s\right| +  \left|\Vlow_{t+1}^s - \Vlow_t^s\right|\right) \right) \nonumber \\
         &\leq\sum_{t=1}^T c_{t-1} \left(\Vup^{s_{t}}_{t} - \Vlow^{s_{t}}_{t}\right) + \order\left(c_{\max} \sum_s \sum_{i=1}^T  \frac{\alpha_i}{1-\gamma} \right)  \tag{shifting the indices and define $c_0=0$} \\
         &\leq \sum_{t=1}^T c_{t-1}\left(\Vuptil^{s_t}_t - \Vlowtil^{s_t}_t\right) + \order\left(\frac{c_{\max}S}{1-\gamma}\times H\ln T\right)   \tag{using $\alpha_i=\frac{H+1}{H+i}$} \\
         &= \sum_{s} \sum_{\tau=1}^{n_{T+1}(s)} c_{t_\tau(s)-1}\left(\Vuptil^{s}_{t_\tau(s)} - \Vlowtil^{s}_{t_\tau(s)}\right) + \order\left(\frac{c_{\max}S\ln^2 T}{(1-\gamma)^2}\right)  \tag{$H = \frac{\ln T}{1-\gamma}$} \\
         &= \gamma \sum_s \sum_{\tau=1}^{n_{T+1}(s)} c_{t_\tau(s)-1} \sum_{i=1}^{\tau-1} \alpha^i_{\tau-1} \left(\Vup^{s_{t_i(s)+1}}_{t_i(s)}-\Vlow^{s_{t_i(s)+1}}_{t_i(s)} + 2\bns_i\right) 
         + \order\left(\frac{c_{\max}S\ln^2 T}{(1-\gamma)^2}\right)  \nonumber\\
         &\leq \gamma \sum_s  \sum_{i=1}^{n_{T+1}(s)-1}\underbrace{\left(\sum_{\tau=i+1}^{n_{T+1}(s)} \alpha^i_{\tau-1}c_{t_\tau(s)-1}\right)}_{c'_{t_i(s)}}\left(\Vup^{s_{t_i(s)+1}}_{t_i(s)}-\Vlow^{s_{t_i(s)+1}}_{t_i(s)}\right)  \nonumber\\
         &\qquad \qquad + \order\left(\sum_s \sum_{\tau=2}^{n_{T+1}(s)}c_{t_\tau(s)-1}\bns_{\tau-1} + \frac{c_{\max}S\ln^2 T}{(1-\gamma)^2}\right)  \nonumber\\
         &\leq \gamma \sum_{t=1}^T c_t'\left(\Vup^{s_{t+1}}_t - \Vlow^{s_{t+1}}_t\right)  + \order\left(\sum_s \sum_{\tau=1}^{C_s/c_{
         \max}} c_{\max}\bns_\tau + \frac{c_{\max}S\ln^2 T}{(1-\gamma)^2}\right)   \tag{$C_s\triangleq \sum_{\tau=1}^{n_{T+1}(s)}c_{t_\tau(s)-1}$}   \nonumber\\
         &\leq  \gamma \sum_{t=1}^T c_t'\left(\Vup^{s_{t+1}}_t - \Vlow^{s_{t+1}}_t\right)  \nonumber\\
         &\qquad + \order\left(\sum_s \sum_{\tau=1}^{C_s/c_{\max}}  \frac{c_{\max}A\ln^2(AST/\delta)(\beta + \alpha_\tau/\eta)}{(1-\gamma)^2} +  \frac{c_{\max}\ln^2 T}{(1-\gamma)^2}\right)  \nonumber\\
         &\leq  \gamma \sum_{t=1}^T c_t'\left(\Vup^{s_{t+1}}_t - \Vlow^{s_{t+1}}_t\right) 
         + \order\left( \frac{CA\ln^2(AST/\delta)\beta }{(1-\gamma)^2} + \frac{c_{\max}AS \ln^3(AST/\delta)}{\eta(1-\gamma)^2}\right).   \label{eq: to be unrolled}  
     \end{align}
     Note that $c_t'$ is another sequence with 
     \begin{align*}
         c_i' \leq c_{\max}' \leq c_{\max}\sup_i\sum_{\tau=i}^{\infty}\alpha^i_\tau \leq \left(1+\frac{1}{H}\right)c_{\max}
     \end{align*}
     and \begin{align*}
         \sum_{t=1}^T c_t' \leq \sum_{t=1}^T c_t = C  
     \end{align*}
     since $\sum_{i=1}^{\tau}\alpha_\tau^i = 1$ for any $\tau \ge 1$.
     Thus, we can unroll the inequality \pref{eq: to be unrolled} for $H$ times, which gives 
     \begin{align*}
         Z_c 
         &\leq \gamma^{H} \left(1+\frac{1}{H}\right)^H \frac{c_{\max}T}{1-\gamma} + H\times\order\left( \frac{CA\ln^2(AST/\delta)\beta}{(1-\gamma)^2} + \frac{c_{\max}AS \ln^3(AST/\delta)}{\eta(1-\gamma)^2}\right) \\
         &= \order\left( \frac{CA\ln^3(AST/\delta)\beta}{(1-\gamma)^3} + \frac{c_{\max}AS \ln^4(AST/\delta)}{\eta(1-\gamma)^3}\right)
     \end{align*}
     where in the inequality we use that $(1+\frac{1}{H})^H\leq e$ and $\gamma^H=(1-(1-\gamma))^H \leq e^{-(1-\gamma)H}=\frac{1}{T}$. 
\end{proof}

\begin{corollary}\label{corr: Vup - Vlow}
    There exists a universal constant $C_1>0$ such that for any $\tilde{\epsilon}\geq \frac{C_1 A\ln^3(AST/\delta)\beta}{(1-\gamma)^3}$, with probability at least $1-\order(\delta)$, 
    \begin{align*}
        \sum_{t=1}^T \one\left[ x_t^{s_t\top} \left(\E_{s'\sim P^{s_t}}\left[\Vup_t^{s'} - \Vlow_t^{s'}\right]\right)y_t^{s_t} \geq \tilde{\epsilon} \right] \leq \order\left(\frac{AS \ln^4(AST/\delta)}{\eta\tilde{\epsilon}(1-\gamma)^3}\right). 
    \end{align*}
\end{corollary}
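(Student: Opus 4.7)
\begin{extraproof}{\pref{corr: Vup - Vlow}} (Proof sketch)
Denote the count we wish to bound by $C \triangleq \sum_{t=1}^T c_t$ where $c_t \triangleq \one\left[ x_t^{s_t\top} \left(\E_{s'\sim P^{s_t}}\left[\Vup_t^{s'} - \Vlow_t^{s'}\right]\right)y_t^{s_t} \geq \tilde{\epsilon} \right]$. The plan is to use the random sample $\Vup_t^{s_{t+1}} - \Vlow_t^{s_{t+1}}$ as an unbiased proxy for the conditional expectation and then plug this proxy into the self-bounding lemma (\pref{lem: self bounding lemma}) that is already proved.

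First, observe that $c_t$ is measurable with respect to the filtration $\calF_t$ generated by history up to time $t$ (including $s_t, x_t, y_t$ but before $a_t, b_t, s_{t+1}$ are drawn), and that $\E[\Vup_t^{s_{t+1}} - \Vlow_t^{s_{t+1}} \mid \calF_t] = x_t^{s_t\top}(\E_{s'\sim P^{s_t}}[\Vup_t^{s'} - \Vlow_t^{s'}]) y_t^{s_t}$. Therefore, by the definition of $c_t$,
\[
  P_T \triangleq \sum_{t=1}^T c_t \cdot x_t^{s_t\top}\bigl(\E_{s'\sim P^{s_t}}[\Vup_t^{s'} - \Vlow_t^{s'}]\bigr) y_t^{s_t} \;\ge\; C\tilde{\epsilon}.
\]

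Next, I would upper bound $P_T$ by $\sum_{t=1}^T c_t (\Vup_t^{s_{t+1}} - \Vlow_t^{s_{t+1}})$ plus a concentration error. Since $0 \le \Vup_t^{s} - \Vlow_t^{s} \le \tfrac{1}{1-\gamma}$ (by \pref{lem: vup > vlow} and the clipping), and since $c_t\in\{0,1\}$ is $\calF_t$-measurable, the martingale difference $M_t \triangleq c_t \bigl((\Vup_t^{s_{t+1}}-\Vlow_t^{s_{t+1}}) - x_t^{s_t\top}(\E_{s'\sim P^{s_t}}[\Vup_t^{s'} - \Vlow_t^{s'}]) y_t^{s_t}\bigr)$ is bounded by $\frac{1}{1-\gamma}$ and has conditional variance at most $\frac{1}{1-\gamma}\cdot c_t \cdot x_t^{s_t\top}(\E_{s'\sim P^{s_t}}[\Vup_t^{s'} - \Vlow_t^{s'}]) y_t^{s_t}$, which sums to $\frac{P_T}{1-\gamma}$. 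Freedman's inequality (\pref{lem: azuma}) then gives, with probability $1-\order(\delta)$,
\[
  P_T \;\le\; \sum_{t=1}^T c_t (\Vup_t^{s_{t+1}} - \Vlow_t^{s_{t+1}}) + \order\!\left(\sqrt{\tfrac{P_T \ln(1/\delta)}{1-\gamma}} + \tfrac{\ln(1/\delta)}{1-\gamma}\right),
\]
and by AM-GM this can be rearranged into $P_T \le 2\sum_{t} c_t (\Vup_t^{s_{t+1}} - \Vlow_t^{s_{t+1}}) + \order\bigl(\tfrac{\ln(1/\delta)}{1-\gamma}\bigr)$.

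Finally, I apply \pref{lem: self bounding lemma} to the sequence $(c_t)_{t=1}^T$ with $c_{\max}=1$ and $\sum_t c_t = C$, obtaining
\[
  \sum_{t=1}^T c_t (\Vup_t^{s_{t+1}} - \Vlow_t^{s_{t+1}}) \;\le\; \order\!\left(\tfrac{CA\ln^3(AST/\delta)\beta}{(1-\gamma)^3} + \tfrac{AS\ln^4(AST/\delta)}{\eta(1-\gamma)^3}\right).
\]
Combining the three displays yields
\[
  C\tilde{\epsilon} \;\le\; \order\!\left(\tfrac{CA\ln^3(AST/\delta)\beta}{(1-\gamma)^3}\right) + \order\!\left(\tfrac{AS\ln^4(AST/\delta)}{\eta(1-\gamma)^3}\right).
\]
Under the assumption that $\tilde{\epsilon} \ge \tfrac{C_1 A\ln^3(AST/\delta)\beta}{(1-\gamma)^3}$ for a sufficiently large absolute constant $C_1$, the first right-hand side term can be absorbed into $\tfrac{1}{2}C\tilde{\epsilon}$ on the left, producing the desired bound $C \le \order\bigl(\tfrac{AS\ln^4(AST/\delta)}{\eta \tilde{\epsilon}(1-\gamma)^3}\bigr)$.

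The main obstacle I anticipate is the concentration step: one must ensure that using the random $s_{t+1}$ sample instead of the conditional expectation does not destroy the self-bounding structure. The key is that $c_t\in\{0,1\}$ is $\calF_t$-measurable so the variance is itself a constant multiple of $P_T$, allowing a self-bounding Freedman argument; the remaining steps are routine union bounds and scalar inequalities.
\end{extraproof}
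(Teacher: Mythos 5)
Your proof is correct and follows essentially the same route as the paper's: lower-bound the weighted sum $P_T$ by $C\tilde{\epsilon}$, upper-bound it via \pref{lem: self bounding lemma} plus a martingale concentration step, and absorb the $C\beta$-type term using the assumed lower bound on $\tilde{\epsilon}$. The only cosmetic difference is that you invoke Freedman's inequality with a self-bounding variance proportional to $P_T$, whereas the paper uses a plain Azuma/Hoeffding bound of order $\frac{1}{1-\gamma}\sqrt{C\ln(S/\delta)}$ followed by AM--GM and $\eta\le\beta$; both yield error terms absorbed into the final bound (note only that the range of $\Vup^s_t-\Vlow^s_t$ is $H=\frac{\ln T}{1-\gamma}$ rather than $\frac{1}{1-\gamma}$, which costs at most a logarithmic factor already covered by the $\ln^4$ in the statement).
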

\begin{proof}
     We apply \pref{lem: self bounding lemma} with the following definition of $c_t$:  
     \begin{align*}
         c_t =\one\left[ x_t^{s_t}\left( \E_{s'\sim P^{s_t}}\left[\Vup^{s'}_t - \Vlow^{s'}_t\right] \right) y_t^{s_t} \geq \tilde{\epsilon}\right], 
     \end{align*}
     which gives 
      \begin{align}
         &\sum_{t=1}^T c_t \left(\Vup^{s_{t+1}}_t - \Vlow^{s_{t+1}}_{t}\right) \leq C_2 \times \left( \frac{CA\ln^3(AST/\delta)\beta}{(1-\gamma)^3} + \frac{AS \ln^4(AST/\delta)}{\eta(1-\gamma)^3}\right)  \label{eq: combine part 1}
     \end{align}
     for some universal constant $C_2$ and $C=\sum_{t=1}^T c_t$. By Azuma's inequality, for some universal constant $C_3>0$, with probability $1-\delta$,
     \begin{align}
         &\sum_{t=1}^T c_t x_t^{s_t}\left(\E_{s'\sim P^{s_t}}\left[ \Vup_t^{s'} - \Vlow_t^{s'} \right]\right)y^{s_t}_t - \sum_{t=1}^T c_t \left(\Vup^{s_{t+1}}_t - \Vlow^{s_{t+1}}_t\right)   \nonumber \\ 
         &\leq \frac{C_3}{1-\gamma}  \sqrt{\ln(S/\delta)\sum_{t=1}^T  c_t^2} =\frac{C_3}{1-\gamma}\sqrt{\ln(S/\delta)C}  \nonumber  \\
         &\leq C_3 \times\frac{C\beta}{1-\gamma} + C_3\times\frac{\ln (S/\delta)}{\eta(1-\gamma)}. \tag{by AM-GM and that $\eta\leq \beta$} \\
         &\label{eq: combine part 2}
     \end{align}
     Combining \pref{eq: combine part 1} and \pref{eq: combine part 2}, we get 
     \begin{align*}
         &\sum_{t=1}^T c_t x_t^{s_t}\left(\E_{s'\sim P^{s_t}}\left[ \Vup_t^{s'} - \Vlow_t^{s'} \right]\right)y^{s_t}_t \\
         &\leq (C_2+C_3) \times \left( \frac{CA\ln^3(AST/\delta)\beta}{(1-\gamma)^3} + \frac{AS \ln^4(AST/\delta)}{\eta(1-\gamma)^3}\right). 
     \end{align*}
     By the definition of $c_t$, the left-hand side above is lower bounded by $\tilde{\epsilon}\sum_{t=1}^T c_t = C\tilde{\epsilon}$. Define $C_1=2(C_2+C_3)$. Then by the condition on $\epsilon'$, the right-hand side above is above inequality is bounded by 
     \begin{align*} 
          \frac{C\tilde{\epsilon}}{2} + \frac{C_1}{2}\left(  \frac{AS \ln^4(AST/\delta)}{\eta(1-\gamma)^3}\right) 
     \end{align*}
     by the condition on $\tilde{\epsilon}$. 
     Combining the upper bound and the lower bound, we get 
     \begin{align*}
         C \leq C_1\times\left(  \frac{AS \ln^4(AST/\delta)}{\eta\tilde{\epsilon}(1-\gamma)^3}\right).  
     \end{align*}
\end{proof}

\begin{lemma}\label{lem: closeness between f and Q}
    With probability at least $1-\order(\delta)$, for any $t \ge 1$, 
    \begin{align*}
       \Vlow^s_t \leq V_\star^s + \order\left(\frac{\epsilon \ln(AT)}{1-\gamma}\right), \qquad  \Vup^s_t \geq V_\star^s - \order\left(\frac{\epsilon \ln(AT)}{1-\gamma}\right). 
    \end{align*}
\end{lemma}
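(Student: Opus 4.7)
\medskip
\noindent\textbf{Proof plan for \pref{lem: closeness between f and Q}.}
The plan is to prove both inequalities by induction on $t$, focusing on $\Vlow^s_t \le V_\star^s + B$ with $B = O(\epsilon \ln(AT)/(1-\gamma))$; the bound on $\Vup^s_t$ follows by a symmetric argument using $\fup^s_\tau$. The base case $t=1$ holds trivially from the initialization $\Vlow^s_1 = 0 \le V^s_\star$. For the inductive step, fix $s$ and let $\tau = n_t(s)$. Since $\Vlow^s_t = \max\{\Vlowtil^s_t, 0\}$ and $V_\star^s \ge 0$, it suffices to show $\Vlowtil^s_t \le V_\star^s + B$. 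Unrolling the update yields
\[
\Vlowtil^s_t \;=\; \sum_{i=1}^\tau \alpha^i_\tau\bigl(\sigma_{t_i(s)} + \gamma \Vlow^{s_{t_i(s)+1}}_{t_i(s)} - \bns_i\bigr).
\]

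\noindent The next step is to pass from the empirical sample $\sigma_{t_i(s)} + \gamma \Vlow^{s_{t_i(s)+1}}_{t_i(s)}$ to its conditional expectation $\xhat_i^{s\top}(G^s + \gamma\E_{s'\sim P^s}[\Vlow^{s'}_{t_i(s)}])\yhat_i^s$ using Freedman's inequality (\pref{lem: azuma}), at a cost of $O\bigl(\frac{1}{1-\gamma}\sqrt{\ln(ST/\delta)\sum_i(\alpha^i_\tau)^2}\bigr) = O\bigl(\frac{\sqrt{\alpha_\tau \ln(ST/\delta)}}{1-\gamma}\bigr)$, which by AM--GM is absorbed into a small multiple of $\bns_\tau$. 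The inner product equals $\flow^s_i(\xhat_i^s, \yhat_i^s) + \epsilon(\phi(\xhat_i^s) - \phi(\yhat_i^s))$, where the entropy terms contribute at most $O(\epsilon \ln(AT))$ after the weighted sum. Letting $x^\star_s$ be any minimax strategy of the original (unregularized) Markov game at state $s$, I would then apply the weighted regret bound of \pref{lem: weighted regret bound} with comparator $x^\star_s \in \Omega$:
\[
\sum_{i=1}^\tau \alpha^i_\tau \flow^s_i(\xhat_i^s, \yhat_i^s) \;\le\; \sum_{i=1}^\tau \alpha^i_\tau \flow^s_i(x^\star_s, \yhat_i^s) + \tfrac{1}{2}\bns_\tau.
\]

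\noindent The critical algebraic point is that the residual $\tfrac{1}{2}\bns_\tau$ from the regret bound is \emph{absorbed} by the $-\sum_i \alpha^i_\tau \bns_i$ term coming from the value update. By direct calculation using $\alpha_i = (H+1)/(H+i)$, one has $\sum_{i=1}^\tau \alpha^i_\tau \alpha_i \ge c\,\alpha_\tau$ for a universal constant $c>0$, so $\sum_i \alpha^i_\tau \bns_i \ge c\,\bns_\tau$; taking the constant $\kappa$ in the definition of $\bns_\tau$ large enough makes the weighted regret coefficient strictly smaller than $c$, so the net contribution of the bonus terms is nonpositive. After cancellation, what remains is
\[
\Vlowtil^s_t \;\le\; \sum_{i=1}^\tau \alpha^i_\tau\, x_s^{\star\top}\bigl(G^s + \gamma\E_{s'\sim P^s}[\Vlow^{s'}_{t_i(s)}]\bigr)\yhat_i^s + O\!\left(\frac{\epsilon \ln(AT)}{1-\gamma}\right).
\]
Applying the inductive hypothesis to each $t_i(s)<t$ gives $\Vlow^{s'}_{t_i(s)} \le V^{s'}_\star + B$, and since $x^\star_s$ is a minimax strategy, $x_s^{\star\top}(G^s + \gamma\E[V^{s'}_\star])y \le V^s_\star$ for every $y$; thus
\[
\Vlowtil^s_t \;\le\; V^s_\star + \gamma B + O\!\left(\frac{\epsilon \ln(AT)}{1-\gamma}\right).
\]
Choosing $B$ so that $\gamma B + O(\epsilon \ln(AT)/(1-\gamma)) \le B$, i.e.\ $B = \Theta(\epsilon \ln(AT)/(1-\gamma))$, closes the induction. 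A union bound over $s$ and $t \le T$ absorbs all high-probability events. The symmetric argument applied to $-\Vup^s_t$, using $\fup^s_\tau$ and the $y$-player's weighted regret bound (same proof, just switching roles), delivers $\Vup^s_t \ge V^s_\star - O(\epsilon\ln(AT)/(1-\gamma))$.

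\noindent\textbf{Main obstacle.} The delicate step is the bonus cancellation: one must verify that $\sum_i \alpha^i_\tau \bns_i$ actually dominates $\tfrac{1}{2}\bns_\tau$ term-by-term in both the $\beta$ and $\eta^{-1}\alpha_\tau$ components, which requires the $\sum \alpha^i_\tau \alpha_i \gtrsim \alpha_\tau$ inequality for the Jin--Allen-Zhu--style step size. The rest is straightforward induction, but the inductive constant $B$ must be chosen carefully to balance the contraction factor $\gamma$ against the per-step $O(\epsilon\ln(AT))$ slack.
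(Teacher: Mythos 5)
Your proposal matches the paper's proof essentially step for step: unroll the $\Vlowtil^s_t$ update, pass to conditional expectations via concentration (the paper uses Hoeffding/Azuma) and absorb the error into $\tfrac{1}{2}\bns_\tau$ through $\sqrt{\alpha_\tau}\le \alpha_\tau/\eta+\beta$, invoke \pref{lem: weighted regret bound} so that the $\tfrac{1}{2}\bns_\tau$ regret cancels against the $-\sum_i\alpha^i_\tau\bns_i$ from the update, pay $\order(\epsilon\ln(AT))$ for the entropy terms, and close the induction using the contraction factor $\gamma$ and the minimax property. The only blemish is the penultimate display: the per-step slack is $\order(\epsilon\ln(AT))$ (as you correctly state earlier), not $\order(\epsilon\ln(AT)/(1-\gamma))$ — with the latter, the recursion $\gamma B + \text{slack}\le B$ would only close with $B=\Theta(\epsilon\ln(AT)/(1-\gamma)^2)$, whereas the correct slack yields the claimed $B=\Theta(\epsilon\ln(AT)/(1-\gamma))$.
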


\begin{proof}
Fix a $t$ and $s$, let $\tau=n_t(s)$, and let $t_i$ be the time index in which $s$ is visited the $i$-th time. With probability at least $1-\frac{\delta}{ST}$, we have
     \begin{align*}
         \Vlowtil^s_t &= \sum_{i=1}^\tau \alpha^i_\tau\left(\sigma_{t_i} + \gamma \Vlow_{t_i}^{s_{t_i+1}} - \bns_i\right) \\
         &\leq \sum_{i=1}^\tau \alpha^i_\tau \left( \flow_{i}^s(x_{t_i}^s, y_{t_i}^s)  + \epsilon \phi(x_{t_i}^s) - \epsilon \phi(y_{t_i}^s)\right) + \sum_{i=1}^\tau \alpha^i_\tau\left(\sigma_{t_i} + \gamma \Vlow_{t_i}^{s_{t_i+1}} - x^{s^\top}_{t_i} (G^s + \E_{s'\sim P^{s_t}[V^{s'}_{t_i}] } ) y^{s}_{t_i}\right) \\
         &\quad - \sum_{i=1}^\tau \alpha^i_\tau \bns_i \\
         &\leq \sum_{i=1}^\tau \alpha^i_\tau \left( \flow_{i}^s(x_{t_i}^s, y_{t_i}^s)  + \epsilon \phi(x_{t_i}^s) - \epsilon \phi(y_{t_i}^s)\right) + \frac{1}{1-\gamma}\sqrt{2\sum_{i=1}^\tau (\alpha^i_\tau)^2\log(ST/\delta)} - \sum_{i=1}^\tau \alpha^i_\tau \bns_i \tag{by Hoeffding's inequality} \\
         &\leq \sum_{i=1}^\tau \alpha^i_\tau \left( \flow_{i}^s(x_{t_i}^s, y_{t_i}^s)  + \epsilon \phi(x_{t_i}^s) - \epsilon \phi(y_{t_i}^s)\right) + \frac{1}{1-\gamma}\sqrt{2\alpha_\tau\log(ST/\delta)} - \sum_{i=1}^\tau \alpha^i_\tau \bns_i \tag{$\sum_{i=1}^\tau \alpha^i_\tau\leq 1$}\\ 
         &\leq \sum_{i=1}^\tau \alpha^i_\tau \left( \flow_{i}^s(x_{t_i}^s, y_{t_i}^s)  + \epsilon \phi(x_{t_i}^s) - \epsilon \phi(y_{t_i}^s)\right) - \frac{1}{2}\bns_\tau \tag{$\sum_{i=1}^\tau \alpha_\tau^i \ge \frac{1}{2}$ and $\bns_\tau$ is decreasing, and $\sqrt{\alpha_\tau}\leq \frac{\alpha_\tau}{\eta}+\eta\leq \frac{\alpha_\tau}{\eta}+\beta$} \\
         &\leq \min_{x\in\Omega}  \sum_{i=1}^\tau \alpha^i_\tau \left(\flow_{i}^s(x^s, y_{t_i}^s)\right) + \sum_{i=1}^\tau \alpha^i_\tau \left(\epsilon \phi(x^s_{t_i}) - \epsilon \phi(y^s_{t_i}) \right) \tag{by \pref{lem: weighted regret bound}} \\
         &\leq \min_{x\in\Omega} \sum_{i=1}^\tau \alpha^i_\tau \left(x^s \right)^\top \left(G^s + \gamma\E_{s'\sim P^s}\left[\Vlow^{s'}_{t_i}\right] \right)y^s_{t_i} + \order(\epsilon \ln(AT)). \tag{$x_a \ge \frac{1}{AT}$ for any $x\in \Omega$.}   
     \end{align*}
Therefore, using a union bound over $s$ and $t$, we have with probability $1-\delta$, for all $s$ and $t$, 
\begin{align}
    \Vlow^s_t = \max\{\Vlowtil^s_t, 0\} \leq \min_x \sum_{i=1}^\tau \alpha^i_\tau \left(x^s \right)^\top \left(G^s + \gamma\E_{s'\sim P^s}\left[\Vlow^{s'}_{t_i}\right] \right)y^s_{t_i} + C_4\epsilon \ln(AT)  \label{eq: to use induction}
\end{align}
for some universal constant $C_4$. 
Next, we use induction to show the first inequality. Suppose that 
\begin{align*}
    \Vlow^s_{t'} \leq V^s_\star + \frac{C_4\epsilon \ln(AT)}{1-\gamma}
\end{align*}
for all $s$ and $t'<t$. Then by \pref{eq: to use induction}, 
\begin{align*}
    \Vlow^s_t 
    &\leq\min_x \sum_{i=1}^\tau \alpha^i_\tau \left(x^s \right)^\top \left(G^s + \gamma\E_{s'\sim P^s}\left[V^{s'}_{\star} + \frac{C_4\epsilon\ln(AT)}{1-\gamma}\right] \right)y^s_{t_i} + C_4\epsilon \ln(AT) \\
    &= \min_x \sum_{i=1}^\tau \alpha^i_\tau \left(x^s \right)^\top \left(G^s + \gamma\E_{s'\sim P^s}\left[V^{s'}_{\star}\right] \right)y^s_{t_i} + \frac{C_4\epsilon \ln(AT)}{1-\gamma} \\
    &\leq \min_x \sum_{i=1}^\tau \max_y \alpha^i_\tau \left(x^s \right)^\top \left(G^s + \gamma\E_{s'\sim P^s}\left[V^{s'}_{\star}\right] \right)y^s + \frac{C_4\epsilon \ln(AT)}{1-\gamma} \\
    &= \min_x \max_y \left(x^s \right)^\top \left(G^s + \gamma\E_{s'\sim P^s}\left[V^{s'}_{\star}\right] \right)y^s + \frac{C_4\epsilon \ln(AT)}{1-\gamma} \\
    &= V^s_{\star} + \frac{C_4\epsilon \ln(AT)}{1-\gamma},  
\end{align*}
which proves the first desired inequality. The other inequality can be proven in the same way. 
\end{proof}

\subsection{Part III. Policy Convergence to the Nash of the Regularized Game}

\begin{lemma}\label{lem: convergence on frequent state}
    Let $0\leq p\leq 1$ be arbitrarily chosen, and define 
    \begin{align*}
        f_\tau^s(x^s,y^s) &\triangleq p\flow_\tau^s(x^s,y^s) + (1-p)\fup_\tau^s(x^s,y^s)\\ 
        &= x^{s\top}\left(G^s + \E_{s'\sim P^s}\left[ p\Vlow^{s'}_{t_\tau(s)} + (1-p)\Vup^{s'}_{t_\tau(s)} \right]\right)y^s - \epsilon \phi(x^s) + \epsilon \phi(y^s). 
    \end{align*}
    Furthermore, let $\zhat^s_{\tau\star} = (\xhat^s_{\tau\star}, \yhat^s_{\tau\star})$ be the equilibrium of $f_\tau^s(x,y)$, and define $z^s_{t\star}=\zhat^{s}_{\tau\star}$ where $\tau=n_t(s)$. Then with probability at least $1-\order(\delta)$, the following holds for any $0<\epsilon'\leq 1$:  
    \begin{align*}
        \sum_s \sum_{i=1}^{n_{T+1}(s)}  \one\left[\KL(\zhat_{i\star}^s, \zhat_i^s) \geq \epsilon'\right] \leq           \order\left(\frac{S^2A\ln^5(SAT/\delta)}{\eta\epsilon^2\epsilon'(1-\gamma)^3}\right)
    \end{align*}
    if $\eta$ and $\beta$ satisfy the following   
    \begin{align}
        \beta &\leq \frac{C_5(1-\gamma)^3}{A\ln^3(AST/\delta)}\epsilon \epsilon'  \label{eq: beta condition}\\
        \eta &\leq \frac{C_6(1-\gamma)}{A\ln^3(AST/\delta)}\beta \epsilon' \label{eq: eta condition}
    \end{align}
    with sufficiently small universal constant $C_5, C_6>0$. 
\end{lemma}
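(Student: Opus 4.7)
The plan is to follow the three-step template used in the proof of \pref{lem: irreducible KL convergence}, adapted to the convex-combination game $f^s_\tau = p\flow^s_\tau + (1-p)\fup^s_\tau$. I will first apply \pref{lem: single step regret general} to both $\flow^s_\tau$ and $\fup^s_\tau$ for both players, with the equilibrium $\zhat^s_{\tau\star}$ as the comparator, and take the $(p,1-p)$ convex combination. Using the equilibrium property $f^s_\tau(\xhat^s_{\tau\star}, \yhat^s_\tau) - f^s_\tau(\xhat^s_\tau, \yhat^s_{\tau\star}) \leq 0$, this yields a one-step descent of the form
\[
\KL(\zhat^s_{\tau+1\star}, \zhat^s_{\tau+1}) \leq (1-\eta\epsilon)\KL(\zhat^s_{\tau\star}, \zhat^s_\tau) + b^s_\tau + v^s_\tau,
\]
where $b^s_\tau$ bundles an instability term $\order(\eta^2 A\ln^2(AT)/(1-\gamma)^2)$ together with noise terms $\order(\eta\xi^s_\tau + \eta\zeta^s_\tau + \eta^2 A\lambda^s_\tau/(1-\gamma)^2)$ (summed over $\flow$ and $\fup$ and over both players), and $v^s_\tau = \KL(\zhat^s_{\tau+1\star}, \zhat^s_{\tau+1}) - \KL(\zhat^s_{\tau\star}, \zhat^s_{\tau+1})$ captures the drift of the regularized equilibrium between consecutive visits to $s$.

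I will then convert this recursion into a potential-decrease count on ``bad'' visits where $\KL(\zhat^s_{\tau\star}, \zhat^s_\tau) \geq \epsilon'$. At each such visit the factor $(1-\eta\epsilon)$ guarantees a KL decrement of at least $\eta\epsilon\epsilon'$. The conditions \pref{eq: beta condition} on $\beta$ and \pref{eq: eta condition} on $\eta$ are tuned so that the instability contribution to $b^s_\tau$ is at most $\eta\epsilon\epsilon'/4$ deterministically; the random contributions, treated via \pref{lem: excessive loss part} and \pref{lem: optimistm part} applied to the geometrically weighted sums with weights $(1-\eta\epsilon)^{\tau-i}$ and union-bounded over $s\in\calS$ and $\tau\leq T$ (producing the $\ln^5(SAT/\delta)$ factor in the final bound), are with high probability at most another $\eta\epsilon\epsilon'/4$ per step. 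Consequently each bad visit contributes a net KL decrement of at least $\eta\epsilon\epsilon'/2$ modulo the drift, so the total number of bad visits is at most $\order(\eta^{-1}\epsilon^{-1}(\epsilon')^{-1})$ times the sum of initial potentials and $\sum_s\sum_\tau|v^s_\tau|$.

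The technically delicate step is bounding $\sum_s\sum_\tau v^s_\tau$. By \pref{lem: KL diff}, $|v^s_\tau| = \order(\ln(AT))\cdot\InNorms{\zhat^s_{\tau+1\star} - \zhat^s_{\tau\star}}_1$, and a strong-convexity sensitivity argument analogous to \pref{lem: equilibrium change irreducible} gives $\InNorms{\zhat^s_{\tau+1\star} - \zhat^s_{\tau\star}}_1 = \order(\epsilon^{-1})\max_{s'}\bigl(|\Vlow^{s'}_{t_{\tau+1}(s)} - \Vlow^{s'}_{t_\tau(s)}| + |\Vup^{s'}_{t_{\tau+1}(s)} - \Vup^{s'}_{t_\tau(s)}|\bigr)$, uniformly in $p$. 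The key combinatorial observation is that, for fixed $s$, the intervals $[t_\tau(s), t_{\tau+1}(s)]$ partition $[1,T]$: replacing $\max_{s'}$ by $\sum_{s'}$ and telescoping each fixed-$s'$ sum by the total variation of $\Vlow^{s'}$ on $[1,T]$, which is bounded by $\sum_i \alpha_i/(1-\gamma) = \order(\ln^2(T)/(1-\gamma)^2)$, gives $\sum_s\sum_\tau|v^s_\tau| = \order(S^2\ln^3(SAT)/((1-\gamma)^2\epsilon))$. Dividing by $\eta\epsilon\epsilon'/2$ and folding in the remaining logs plus the $A\ln^2(AT)/(1-\gamma)$ slack from the instability/concentration budget reserved above produces the claimed $\order(S^2 A\ln^5(SAT/\delta)/(\eta\epsilon^2\epsilon'(1-\gamma)^3))$ bound.

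The main obstacle I anticipate is establishing the equilibrium sensitivity $\order(\epsilon^{-1})$ uniformly in $p\in[0,1]$ on the clipped simplex $\Omega$ (boundary corrections from the $\frac{1}{AT}$ clipping are handled exactly as in \pref{lem: evolve equilibrium}), together with the $S^2$ partition argument for the drift, since a loose union bound over all $T$ time steps would lose a polynomial factor in $T$. A secondary bookkeeping challenge is tuning the absolute constants $C_5, C_6$ so that the deterministic instability and the high-probability concentration terms simultaneously fit inside the $\eta\epsilon\epsilon'/2$ per-step budget.
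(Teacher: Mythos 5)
There is a genuine gap, and it is the central difficulty of this lemma. Your plan assumes you can ``apply \pref{lem: single step regret general} to both $\flow^s_\tau$ and $\fup^s_\tau$ for both players'' and then take the $(p,1-p)$ combination so that the equilibrium property of $f^s_\tau$ cancels everything. But the $x$-player's update uses the estimator built from $\sigma_t+\gamma\Vlow_t^{s_{t+1}}$ and the $y$-player's uses $\sigma_t+\gamma\Vup_t^{s_{t+1}}$, so the mirror-descent one-step bound for the $x$-player is available \emph{only} with respect to $\flow^s_\tau$ and for the $y$-player \emph{only} with respect to $\fup^s_\tau$; neither player has a regret bound against $f^s_\tau$ itself. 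When you add the two one-step inequalities you get the cross term $\flow^s_\tau(\xhat^s_{\tau\star},\yhat^s_\tau)-\fup^s_\tau(\xhat^s_\tau,\yhat^s_{\tau\star})$, which can be killed using $\flow\le\fup$ (\pref{lem: vup > vlow}) together with the equilibrium property of $f^s_\tau$, but only at the cost of a leftover nonnegative residual $\Delta^s_\tau=\fup^s_\tau(\xhat^s_\tau,\yhat^s_\tau)-\flow^s_\tau(\xhat^s_\tau,\yhat^s_\tau)=\gamma\,\xhat^{s\top}_\tau\E_{s'\sim P^s}[\Vup^{s'}-\Vlow^{s'}]\yhat^s_\tau$ that does not contract and is not a martingale term. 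The paper's proof devotes its main effort to this residual: it writes $\eta\Delta^s_\tau\le\tfrac12\eta\epsilon\epsilon'+[\eta\Delta^s_\tau-\tfrac12\eta\epsilon\epsilon']_+$ and counts the steps where $\Delta^s_\tau$ is large via the self-bounding argument of \pref{lem: self bounding lemma} and \pref{corr: Vup - Vlow}, which is where the $\order(AS\ln^5(AST/\delta)/(\eta\epsilon(1-\gamma)^3))$ contribution and the condition \pref{eq: beta condition} really come from. Your proposal has no mechanism for this term, so the argument does not close.

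A secondary problem is your per-step noise budget. You claim the random contributions are ``at most $\eta\epsilon\epsilon'/4$ per step'' with high probability, but a single realization of $\eta\zeta^s_\tau$ can be of order $\eta/(\beta(1-\gamma))$, which under \pref{eq: eta condition} is only $\order(\epsilon'/(A\ln^3))$ and in general far exceeds $\eta\epsilon\epsilon'/4$. Only the geometrically weighted \emph{sums} $\sum_{i\le\tau}(1-\eta\epsilon)^{\tau-i}(\cdot)$ are small (of order $\epsilon'$ in total), which is why the paper unrolls the recursion first and then counts bad indices by a pigeonhole argument on $\sum_{s,\tau}\term_5(s,\tau)$ and $\sum_{s,\tau}\term_6(s,\tau)$, rather than running a per-step potential-decrement argument. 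Your handling of the drift $v^s_\tau$ (equilibrium sensitivity $\order(\epsilon^{-1})$ times the change in $\Vlow,\Vup$, telescoped against the total variation $\order(S\ln^3(AT)/((1-\gamma)^2))$) does match the paper and is fine; the missing ingredients are the asymmetric treatment of the two regularized games and the control of $\Vup-\Vlow$.
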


\begin{proof}
    In this proof, we write $\underzeta^{s}_{i}(\xhat^s_{i\star})$ as $\underzeta_{i}$. 
    By \pref{lem: single step regret general}, we have 
    \begin{align*}
       \KL(\xhat_{i\star}^s, \xhat_{i+1}^s) &\leq (1-\eta\epsilon)\KL(\xhat_{i\star}^s, \xhat_{i}^s) +  \eta \left(\flow_{i}^s(\hatx_{i\star}^s, \yhat_{i}^s) - \flow_{i}^s(\xhat_{i}^s, \yhat_{i}^s)\right) \\
       &\qquad  + \frac{10\eta^2 A\ln^2(AT)}{(1-\gamma)^2} + \frac{2\eta^2A}{(1-\gamma)^2}\underlambda_{i}^{s} + \eta\underxi_{i}^{s} + \eta\underzeta_{i}^{s}.
    \end{align*}
    Similarly, 
    \begin{align*}
       \KL(\yhat_{i\star}^{s}, \yhat_{i+1}^s) &\leq (1-\eta\epsilon)\KL(\yhat_{i\star}^s, \yhat_{i}^s) +  \eta \left(\fup_i^s(\xhat_i^s, \yhat_{i}^s) - \fup_i^s(\xhat_{i}^s,  \yhat_{i\star}^s)\right) \\
       &\qquad + \frac{10\eta^2 A \ln^2(AT)}{(1-\gamma)^2} + \frac{2\eta^2 A}{(1-\gamma)^2}\overlambda_{i}^{s}  + \eta\overxi_{i}^{s} + \eta\overzeta_{i}^{s}.  
    \end{align*}
    Adding the two inequalities up, we get 
    \begin{align}
       & \KL(\zhat_{i+1\star}^s, \zhat^s_{i+1}) \nonumber \\
        &\leq (1-\eta\epsilon)\KL(\zhat^s_{i\star}, \zhat^s_{i}) + \frac{20\eta^2 A\ln^2(AT)}{(1-\gamma)^2} + \frac{2\eta^2 A}{(1-\gamma)^2}\lambda^s_{i} + \eta \xi^s_{i} + \eta \zeta^s_{i} + v^s_i \nonumber \\
        &\qquad  + \eta\left(\fup^s_i(\xhat^s_i, \yhat^s_i) - \flow^s_i(\xhat^s_i, \yhat^s_i) + \flow_{i}^s(\hatx_{i\star}^s, \yhat_{i}^s) - \fup_i^s(\xhat_{i}^s,  \yhat_{i\star}^s)\right) \label{eq: temp recursion general}
    \end{align}
    where $v^s_i = \KL(\zhat_{i+1\star}^s, \zhat^s_{i+1}) - \KL(\zhat_{i\star}^s, \zhat^s_{i+1})$ and $\square^{s}=\underline{\square}^{s} + \overline{\square}^{s}$. 
    By \pref{lem: vup > vlow}, we have $\flow^s_i(x,y)\leq \fup^s_i(x,y)$ for all $x,y$, and thus $\flow_{i}^s(\hatx_{i\star}^s, \yhat_{i}^s) - \fup_i^s(\xhat_{i}^s,  \yhat_{i\star}^s)\leq f_{i}^s(\hatx_{i\star}^s, \yhat_{i}^s) - f_i^s(\xhat_{i}^s,  \yhat_{i\star}^s)\leq 0$. Therefore, \pref{eq: temp recursion general} further implies
    \begin{align*}
        &\KL(\zhat_{i+1\star}^s, \zhat^s_{i+1}) \\ &\leq  (1-\eta\epsilon)\KL(\zhat^s_{i\star}, \zhat^s_{i}) + \frac{20\eta^2 A\ln^2(AT)}{(1-\gamma)^2} + \frac{2\eta^2 A}{(1-\gamma)^2}\lambda^s_{i} + \eta \xi^s_{i} + \eta \zeta^s_{i} + v^s_i + \eta \Delta^s_i \\
        &\leq (1-\eta\epsilon)\KL(\zhat^s_{i\star}, \zhat^s_{i}) + \frac{20\eta^2 A\ln^2(AT)}{(1-\gamma)^2} + \frac{2\eta^2 A}{(1-\gamma)^2}\lambda^s_{i} + \eta \xi^s_{i} + \eta \zeta^s_{i} + v^s_i + \frac{1}{2}\eta\epsilon\epsilon'+ \left[\eta \Delta^s_i - \frac{1}{2}\eta\epsilon\epsilon'\right]_+
    \end{align*}
    where $\Delta_i^s=\fup^s_i(\xhat^s_i, \yhat^s_i) - \flow^s_i(\xhat^s_i, \yhat^s_i)$ and in the last step we use $a\leq [a-b]_+ + b$.  
    
    Unrolling the recursion, we get with probability at least $1-\order(\delta)$, for all $s$ and $\tau$ (we show that the inequality holds for any fix $s$ and $\tau$ with probability $1-\order(\frac{\delta}{ST})$ and then apply the union bound over $s$ and $\tau$), 
    \begin{align}
        &\KL(\zhat^s_{\tau+1\star}, \zhat^s_{\tau+1}) \nonumber \\
        &\leq (1-\eta\epsilon)^{\tau}\KL(\zhat^s_{1\star}, \zhat^s_1) + \underbrace{\frac{20\eta^2 A\ln^2(AT)}{(1-\gamma)^2}\sum_{i=1}^\tau (1-\eta\epsilon)^{\tau-i}}_{\term_1} +  \underbrace{\frac{2\eta^2 A}{(1-\gamma)^2}\sum_{i=1}^\tau (1-\eta\epsilon)^{\tau-i}\lambda^s_i}_{\term_2} \nonumber  \\
        &\qquad + \underbrace{\eta\sum_{i=1}^\tau (1-\eta\epsilon)^{\tau-i}\xi^s_i}_{\term_3} + \underbrace{\eta\sum_{i=1}^\tau (1-\eta\epsilon)^{\tau-i}\zeta^s_i}_{\term_4} \nonumber \\
        &\qquad + \underbrace{\sum_{i=1}^\tau (1-\eta\epsilon)^{\tau-i}v^s_i}_{\triangleq~\term_5(s,\tau)}  + \frac{1}{2}\eta\epsilon\epsilon'\sum_{i=1}^\tau(1-\eta\epsilon)^{\tau-i} + \underbrace{\eta\sum_{i=1}^\tau (1-\eta\epsilon)^{\tau-i}\left[\Delta^s_i - \frac{1}{2}\epsilon\epsilon'\right]_+}_{\triangleq\ \term_6(s,\tau)} \nonumber \\
        &\overset{(a)}{\leq} \order(e^{-\eta\epsilon\tau}\ln(AT)) + \ln^3(AST/\delta)\times\order\left( \frac{\eta A}{\epsilon(1-\gamma)^2} + \frac{\eta^2A}{\beta(1-\gamma)^2} + \frac{\beta A}{\epsilon(1-\gamma)} + \frac{1}{1-\gamma}\sqrt{\frac{\eta}{\epsilon}} + \frac{\eta}{\beta(1-\gamma)}\right)  \nonumber \\
        &\qquad + \term_5(s,\tau) + \frac{1}{2}\epsilon' +  \term_6(s,\tau) \nonumber \\
        &\overset{(b)}{\leq} \order\left(e^{-\eta\epsilon\tau}\ln(AT)\right) + \frac{3}{4}\epsilon' + \term_5(s,\tau) + \term_6(s,\tau)   \label{eq: KL bound temp}
    \end{align}
    where in $(a)$ we use the following calculation: 
    \begin{align*}
       \term_1 &\leq  \order\left(\frac{\eta^2 A\ln^2(AT)}{(1-\gamma)^2}\times \frac{1}{\eta\epsilon}\right) \leq \order\left(\frac{\eta A\ln^2(AT)}{\epsilon(1-\gamma)^2}\right). \\
       \term_2 &\leq \order\left(\frac{\eta^2 A}{(1-\gamma)^2}\frac{\max_{i\leq \tau} (1-\eta\epsilon)^{\tau-i}\ln(AST/\delta)}{\beta}\right) = \order\left(\frac{\eta^2 A}{(1-\gamma)^2}\frac{ \ln(AST/\delta)}{\beta}\right)  \tag{by \pref{lem: optimistm part 2}} \\
       \term_3 &\leq \order\left( 
\frac{\eta A}{1-\gamma}\sum_{i=1}^\tau \beta (1-\eta\epsilon)^{\tau-i} + \eta\sqrt{\ln(AST/\delta) 
\sum_{i=1}^\tau (1-\eta\epsilon)^{\tau-i} }\right)   \tag{by \pref{lem: excessive loss part}}\\
&= \order\left(\frac{\beta A}{\epsilon(1-\gamma)} + \sqrt{\ln(AS/\delta)\frac{\eta}{\epsilon}}\right). \\
\term_4 &\leq \order\left( \frac{\eta}{1-\gamma} \times \frac{\max_{i\leq \tau}(1-\eta\epsilon)^{\tau-i}\ln(AST/\delta)}{\beta} \right) = \order\left(\frac{\eta \ln(AST/\delta)}{\beta(1-\gamma)}\right),   \tag{by \pref{lem: optimistm part 2}}
    \end{align*}
and in $(b)$ we use the conditions \pref{eq: beta condition} and \pref{eq: eta condition}. 
    
We continue to bound the sum of $\term_5$ and $\term_6$ over $t$. Note that 
    \begin{align}
        \sum_s\sum_{\tau=1}^{n_{T+1}(s)}\term_5(s,\tau) &\leq \sum_s \sum_{\tau=1}^{n_{T+1}(s)}\sum_{i=1}^\tau (1-\eta\epsilon)^{\tau-i}v^s_i 
        \leq \frac{1}{\eta\epsilon} \sum_s \sum_{i=1}^{n_{T+1}(s)} v^s_i \leq \order\left(\frac{S^2\ln^3(AT)}{\eta\epsilon^2(1-\gamma)^2}\right),  \label{eq: term 5 bound general}
    \end{align}
    where in the last inequality we use the following calculation: 
    \begin{align*}
        \sum_{i=1}^{n_{T+1}(s)} |v_i^s|   
        &\leq
        \order\left(\ln(A\tau)\right)\times\sum_{i=1}^{n_{T+1}(s)}\|\zhat^s_{i\star}-\zhat^s_{i+1\star}\|_1 \tag{by \pref{lem: KL diff}}\\
        &=\order\left(\ln(AT)\right)\times   \frac{\ln(AT)}{\epsilon}\times \sum_{i=1}^{n_{T+1}(s)} \sup_{s'}\left(p\left|\Vlow^{s'}_{t_i}-\Vlow^{s'}_{t_{i+1}}\right| + (1-p)\left|\Vup^{s'}_{t_i}-\Vup^{s'}_{t_{i+1}}\right|\right) \tag{by the same calculation as \pref{eq: value diff in irreducible}} \\ 
        &\leq \order\left(\frac{\ln^2(AT)}{\epsilon}\right) \times
        \sum_{s'} \sum_{t=1}^T \left(\left| \Vlow^{s'}_t - \Vlow^{s'}_{t+1} \right| + \left| \Vup^{s'}_t - \Vup^{s'}_{t+1} \right| \right) \\
        &\leq \order\left(\frac{\ln^2(AT)}{\epsilon}\times \frac{S\ln T}{(1-\gamma)^2}\right) \tag{$|\Vlow^s_{t}-\Vlow_{t+1}^s|\leq \frac{H+1}{H+\tau}\times \frac{1}{1-\gamma}\one[s_t=s]$ by the update rule} \\
        &= \order\left(\frac{S\ln^3(AT)}{\epsilon(1-\gamma)^2}\right), 
    \end{align*}
    and that 
    \begin{align}
        &\sum_s\sum_{\tau=1}^{n_{T+1}(s)}\term_6(s,\tau)  \nonumber \\
        &= \sum_s \sum_{\tau=1}^{n_{T+1}(s)} \eta\sum_{i=1}^\tau (1-\eta\epsilon)^{\tau-i}\left[\Delta^s_i - 
        \frac{1}{2}\epsilon\epsilon'\right]_+  \nonumber \\
        &\leq \sum_s \sum_{i=1}^{n_{T+1}(s)} \sum_{\tau=i}^{n_{T+1}(s)} \eta (1-\eta\epsilon)^{\tau-i}\left[\Delta^s_i - \frac{1}{2}\epsilon\epsilon'\right]_+  \nonumber \\
        &\leq \frac{1}{\epsilon}\sum_s \sum_{i=1}^{n_{T+1}(s)} \left[\Delta^s_i - \frac{1}{2}\epsilon\epsilon'\right]_+  \nonumber \\
        &= \frac{1}{\epsilon}\sum_s \sum_{i=1}^{n_{T+1}(s)} \sum_{j=-1}^{j_{\max}} \one\left[\epsilon\epsilon'2^j\leq \Delta^s_i \leq \epsilon\epsilon'2^{j+1}\right]  \epsilon\epsilon'2^{j+1} \tag{define $j_{\max}=\log_2\left(\frac{1}{(1-\gamma)\epsilon\epsilon'}\right)$} \\
        &\leq \frac{1}{\epsilon}\sum_{j=-1}^{j_{\max}} \sum_{t=1}^T \one\left[\Delta^{s_t}_i \geq  \epsilon\epsilon'2^j\right] \epsilon\epsilon'2^{j+1}  \nonumber  \\
        &\leq \frac{1}{\epsilon}\sum_{j=-1}^{j_{\max}} \order\left(\frac{AS \ln^4(AST/\delta)}{\eta\epsilon\epsilon'2^j(1-\gamma)^3}\right) \times \epsilon\epsilon'2^{j+1} \tag{by \pref{corr: Vup - Vlow} with $\tilde{\epsilon}=\epsilon\epsilon'2^j$ and the assumption that $\epsilon\epsilon'\gtrsim \frac{A\ln^3(AST/\delta)\beta}{(1-\gamma)^3}$}   \nonumber \\
        &= \order\left(\frac{AS \ln^5(AST/\delta)}{\eta\epsilon(1-\gamma)^3}\right)  \tag{without loss of generality, assume $\log_2\left(\frac{1}{(1-\gamma)\epsilon\epsilon'}\right)\lesssim \log T$}   \\
        & \label{eq: term 6 bound general}
    \end{align}
    From \pref{eq: KL bound temp}, we have 
     \begin{align*}
        &\sum_s \sum_{\tau=1}^{n_{T+1}(s)}  \one\left[\KL(\zhat_{\tau\star}^s, \zhat_\tau^s) \geq \epsilon'\right] \\ 
        &\leq \sum_s \sum_{\tau=1}^{n_{T+1}(s)}  \one\left[\order(e^{-\eta\epsilon\tau} \ln(AT)) \geq \frac{1}{12}\epsilon'\right] + \sum_s \sum_{\tau=1}^{n_{T+1}(s)} \one\left[\term_5(s,\tau) > \frac{1}{12}\epsilon'\right] \\
        &\qquad + \sum_s \sum_{\tau=1}^{n_{T+1}(s)} \one\left[\term_6(s,\tau) > \frac{1}{12}\epsilon'\right] \\
        &\leq  S\times \order\left( \frac{\ln(AT)}{\eta\epsilon\epsilon'} \right) + \order\left(\frac{S^2\ln^3(AT)}{\eta\epsilon^2\epsilon'(1-\gamma)^2}\right) + \order\left(\frac{AS \ln^5(AST/\delta)}{\eta\epsilon\epsilon'(1-\gamma)^3}\right)\\
        &\leq \order\left(\frac{S^2A\ln^5(SAT/\delta)}{\eta\epsilon^2\epsilon'(1-\gamma)^3}\right)
    \end{align*}
    where in the second-to-last inequality we use \pref{eq: term 5 bound general} and \pref{eq: term 6 bound general}. This finishes the proof. 
\end{proof}

\subsection{Part IV. Combining}

\begin{theorem}
   For any $u\in\left[0,\frac{1}{1-\gamma}\right]$, there exists a proper choice of parameters $\epsilon, \beta, \eta$ such that
   \begin{align*}
       \sum_{t=1}^T \one\left[\max_{x,y}  \left(x_t^{s_t^\top}Q^{s_t}_\star y^{s_t} - x^{s_t^\top} Q^{s_t}_\star y_t^{s_t}\right) > u \right] \leq  \order\left(\frac{S^2A^3\ln^{17}(SAT/\delta)}{ u^9(1-\gamma)^{13}}\right). 
   \end{align*}
   with probability at least $1-\order(\delta)$. 
\end{theorem}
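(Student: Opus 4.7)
The plan is to execute the roadmap laid out in Section 5.2, combining the three sub-analyses (value convergence of $\underline{V}/\overline{V}$, policy convergence to NE of the regularized games, and contraction of $\overline{V}-\underline{V}$) into a single parameter choice. I would start from the decomposition~\pref{eq: duality decompostion-1} of the path duality gap, writing $Q^{s_t}_\star = G^{s_t}+\gamma\E_{s'\sim P^{s_t}}[V^{s'}_\star]$ and comparing against $G^{s_t}+\gamma\E[\underline{V}^{s'}_t]$ (resp.\ $\overline{V}^{s'}_t$). Applying \pref{lem: closeness between f and Q}, the bias $\max_{s'}\{V^{s'}_\star-\overline{V}^{s'}_t,\ \underline{V}^{s'}_t-V^{s'}_\star\}$ is $O(\epsilon\ln(AT)/(1-\gamma))$, so I first fix $\epsilon$ small enough that this bias is at most $u/8$; with this choice, $\one[\text{path gap}>u]\le \one[J_t>3u/4]$ pointwise, reducing the task to bounding the latter sum.

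Next, I would invoke the three-way split~\pref{eq: duality decompostion-2} of $\one[J_t>3u/4]$ into (i) a duality gap of $\overline{f}^s_\tau$ at $(\widehat{x}^s_\tau,\widehat{y}^s_\tau)$, (ii) a duality gap of $\underline{f}^s_\tau$ there, and (iii) the exploration term $x_t^{s_t\top}(\gamma\E[\overline{V}^{s'}_t-\underline{V}^{s'}_t])y_t^{s_t}\ge u/4$. For (i) and (ii) I would apply \pref{lem: convergence on frequent state} with $p=0$ and $p=1$ respectively, and convert the KL threshold to a duality-gap threshold using Pinsker together with the boundedness $\|\nabla\underline{f}^s_\tau\|_\infty,\|\nabla\overline{f}^s_\tau\|_\infty = O(\ln(AT)/(1-\gamma))$; this lets me pick $\epsilon'\asymp u^2(1-\gamma)^2/\ln^2(AT)$ so that $\KL(\widehat{z}^s_{\tau\star},\widehat{z}^s_\tau)\le\epsilon'$ forces those gaps below $u/8$. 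For (iii) I would apply \pref{corr: Vup - Vlow} with $\widetilde{\epsilon}=u/(4\gamma)$.

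The main step, and also the principal obstacle, is the simultaneous parameter tuning. The constraints I must respect are: $\epsilon\lesssim u(1-\gamma)/\ln(AT)$ (from value convergence); the pair \pref{eq: beta condition}-\pref{eq: eta condition} on $\beta,\eta$ in \pref{lem: convergence on frequent state} with the above $\epsilon,\epsilon'$; and the admissibility condition $\widetilde{\epsilon}\gtrsim A\ln^3(AST/\delta)\beta/(1-\gamma)^3$ of \pref{corr: Vup - Vlow} with $\widetilde{\epsilon}=u/4$. Saturating the first three constraints gives, up to polylog factors, $\beta\sim u^3(1-\gamma)^6/A$ and $\eta\sim u^5(1-\gamma)^9/A^2$; one then checks the Corollary~\ref{corr: Vup - Vlow} condition is implied by $u\le 1/(1-\gamma)$. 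Plugging these back into $S^2A\ln^5/(\eta\epsilon^2\epsilon'(1-\gamma)^3)$ from \pref{lem: convergence on frequent state} yields a bound of the claimed form $O(S^2A^3\mathrm{polylog}(SAT/\delta)/(u^9(1-\gamma)^{O(1)}))$, which dominates the $O(AS\ln^4/(\eta u(1-\gamma)^3))$ contribution from \pref{corr: Vup - Vlow}.

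What makes this delicate is the careful book-keeping of $(1-\gamma)$ and $\ln$ powers: each of the three analyses contributes a different monomial in $\eta,\epsilon,\beta$, and the feasibility region they cut out has a unique vertex (up to polylogs) that both saturates the constraints and minimizes the final bound. I expect the bulk of the write-up to consist of verifying that at this vertex all high-probability events invoked (\pref{lem: weighted regret bound}, \pref{lem: closeness between f and Q}, \pref{lem: convergence on frequent state}, \pref{corr: Vup - Vlow}) simultaneously hold with the stated $1-O(\delta)$ probability after a single union bound, and then explicitly computing the exponent on $u$, which should come out to $9$ from adding the contributions $5$ (from $\eta^{-1}$), $2$ (from $\epsilon^{-2}$), and $2$ (from $(\epsilon')^{-1}$).
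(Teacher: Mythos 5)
Your proposal follows essentially the same route as the paper's proof: the same reduction via \pref{lem: closeness between f and Q} with $\epsilon\asymp u(1-\gamma)/\ln(AT)$, the same three-way split \pref{eq: 3/4 terms} handled by \pref{lem: convergence on frequent state} (with $\epsilon'\asymp u^2(1-\gamma)^2/\ln^2(AT)$, converting KL to duality gap via Pinsker and the gradient bound) and \pref{corr: Vup - Vlow}, and the same parameter vertex $\beta\sim u^3(1-\gamma)^6/A$, $\eta\sim u^5(1-\gamma)^9/A^2$ yielding the $u^{-9}$ and $A^3$ dependence. The plan is correct and matches the paper's argument, including the $5+2+2$ accounting for the exponent of $u$.
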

\begin{proof}
    We will choose $\epsilon$ such that $u\geq C_7\frac{\epsilon\ln(AT)}{1-\gamma}$ with a sufficiently large universal constant $C_7$.  
    By \pref{lem: closeness between f and Q}, we have 
    \begin{align}
       &\max_{x,y}  \left(x_t^{s_t^\top}Q^{s_t}_\star y^{s_t} - x^{s_t^\top} Q^{s_t}_\star y_t^{s_t}\right) \nonumber \\
       &\leq \max_{x,y}  \left(x_t^{s_t^\top}\left(G^{s_t} + \gamma \E_{s'\sim P^{s_t}}\left[\overV_{t}^{s'}\right]\right) y^{s_t} - x^{s_t^\top} \left(G^s + \gamma \E_{s'\sim P^{s_t}}\left[\underV^{s'}_t\right] \right) y_t^{s_t}\right) + \order\left(\frac{\epsilon\ln(AT)}{1-\gamma}\right) \nonumber \\
       &\leq \max_{x,y}  \left(x_t^{s_t^\top}\left(G^{s_t} + \gamma \E_{s'\sim P^{s_t}}\left[\overV_{t}^{s'}\right]\right) y^{s_t} - x^{s_t^\top} \left(G^s + \gamma \E_{s'\sim P^{s_t}}\left[\underV^{s'}_t\right] \right) y_t^{s_t}\right) + \frac{u}{4}.  \nonumber
    \end{align}
    Therefore, we can upper bound the left-hand side of the desired inequality by 
    \begin{align}
        &\sum_{t=1}^T  \one\left[\max_{x,y}  \left(x_t^{s_t^\top}\left(G^{s_t} + \gamma \E_{s'\sim P^{s_t}}\left[\overV_{t}^{s'}\right]\right) y^{s_t} - x^{s_t^\top} \left(G^s + \gamma \E_{s'\sim P^{s_t}}\left[\underV^{s'}_t\right] \right) y_t^{s_t}\right)\geq \frac{3}{4}u\right]  \nonumber \\
        &\leq \sum_{t=1}^T \one\left[ \max_y
        x_t^{s_t^\top} \left(G^{s_t} + \gamma \E_{s'\sim P^{s_t}}\left[\overV_{t}^{s'}\right]\right) y^{s_t} - x_t^{s_t^\top} \left(G^{s_t} + \gamma \E_{s'\sim P^{s_t}}\left[\overV_{t}^{s'}\right]\right) y_t^{s_t} \geq \frac{u}{4} \right]  \nonumber\\
       &\qquad +\sum_{t=1}^T \one\left[ 
       x_t^{s_t^\top} \left(G^{s_t} + \gamma \E_{s'\sim P^{s_t}}\left[\overV_{t}^{s'}\right]\right) y_t^{s_t}  - x_t^{s_t^\top} \left(G^{s_t} + \gamma \E_{s'\sim P^{s_t}}\left[\underV_{t}^{s'}\right]\right) y_t^{s_t} \geq \frac{u}{4} \right]  \nonumber\\
       &\qquad + \sum_{t=1}^T \one\left[ 
       x_t^{s_t^\top} \left(G^{s_t} + \gamma \E_{s'\sim P^{s_t}}\left[\underV_{t}^{s'}\right]\right) y^{s_t}  - \min_{x} x^{s_t^\top} \left(G^{s_t} + \gamma \E_{s'\sim P^{s_t}}\left[\underV_{t}^{s'}\right]\right) y_t^{s_t} \geq \frac{u}{4}\right].   \label{eq: 3/4 terms}
\end{align}
For the first term in \pref{eq: 3/4 terms}, we can bound it by 
\begin{align*}
    &\sum_s \sum_{i=1}^{n_{T+1}(s)} \one\left[ \max_y 
 \fup_i^s(\xhat_i^s, y^s) - \fup_i^s(\xhat_i^s, \yhat_i^s)
\geq \frac{u}{4} - \order\left(\epsilon\ln(AT)\right) \right] \\ 
&\leq \sum_s \sum_{i=1}^{n_{T+1}(s)} \one\left[ \max_y 
 \fup_i^s(\xhat_i^s, y^s) - \fup_i^s(\xhat_i^s, \yhat_i^s)
\geq \frac{u}{8} \right] \\
&\leq \sum_s \sum_{i=1}^{n_{T+1}(s)} \one\left[ \max_{y} \fup_i^s(\xhat_{i\star}^s, y^s) - \fup_i^s(\xhat_{i\star}^s, \yhat_{i\star}^s) + \order\left(\|\zhat^s_i - \zhat^s_{i\star}\|_1\frac{\ln(AT)}{1-\gamma}\right)
\geq \frac{u}{8} \right]  \tag{because $\|\nabla \fup_i^s(x,y)\|_\infty \leq \order\left(\frac{\ln(AT)}{1-\gamma}\right)$ --- similar to the calculation in \pref{eq: direct 2}} \\
& \tag{here we choose $(\xhat^s_{i\star}, \yhat^s_{i\star})$ to be the equilibrium under $\fup^s_i(x,y)$}\\
&\leq \sum_s \sum_{i=1}^{n_{T+1}(s)} \one\left[  \order\left(\|\zhat^s_i - \zhat^s_{i\star}\|_1\frac{\ln(AT)}{1-\gamma}\right)
\geq \frac{u}{8} \right]\\
&\leq \sum_s \sum_{i=1}^{n_{T+1}(s)} \one\left[ \KL(\zhat^s_{i\star}, \zhat^s_i) \geq \Omega\left(\frac{u^2(1-\gamma)^2}{\ln^2(AT)}\right) \right] \\
&\leq \order\left(\frac{S^2A\ln^7(SAT/\delta)}{\eta\epsilon^2 u^2(1-\gamma)^5}\right)   \tag{by \pref{lem: convergence on frequent state} with $\epsilon' = \Theta\left(\frac{u^2(1-\gamma)^2}{\ln^2(AT)}\right)$}
\end{align*}
The third term in \pref{eq: 3/4 terms} can be bounded in the same way. The second term in \pref{eq: 3/4 terms} can be bounded using \pref{corr: Vup - Vlow} by 
\begin{align*}
    \order\left(\frac{SA \ln^4(SAT/\delta)}{\eta u(1-\gamma)^3}\right). 
\end{align*}
Overall, we have 
\begin{align}
     \sum_{t=1}^T \one\left[\max_{x,y}  \left(x_t^{s_t^\top}Q^{s_t}_\star y^{s_t} - x^{s_t^\top} Q^{s_t}_\star y_t^{s_t}\right) > u \right] \leq  \order\left(\frac{S^2A\ln^7(SAT/\delta)}{\eta\epsilon^2 u^2(1-\gamma)^5}\right). \label{eq: final overall}   
\end{align}
Notice that the parameters $\epsilon, \beta, \eta$ needs to satisfy the conditions specified in this lemma and \pref{lem: convergence on frequent state}, with which we apply $\epsilon'=\Theta\left(\frac{u^2(1-\gamma)^2}{\ln^2(SAT/\delta)}\right)$. The constraints suggest the following parameter choice (under a fixed $u$): 

\begin{align*}
    \epsilon &= \Theta\left(\frac{u(1-\gamma)}{\ln(SAT/\delta)}\right) \\
    \beta &=\Theta\left( \frac{(1-\gamma)^3}{A\ln^3(SAT/\delta)}\epsilon \epsilon' \right) = \Theta\left( \frac{u^3(1-\gamma)^6}{A\ln^6(SAT/\delta)} \right) \\
    \eta &=\Theta\left( \frac{(1-\gamma)}{A\ln^3(SAT/\delta)}\beta \epsilon' \right) = \Theta\left( \frac{u^5(1-\gamma)^9}{A^2\ln^{11}(SAT/\delta)} \right)
\end{align*}
Using these parameters in \pref{eq: final overall}, we get 
\begin{align*}
     \sum_{t=1}^T \one\left[\max_{x,y}  \left(x_t^{s_t^\top}Q^{s_t}_\star y^{s_t} - x^{s_t^\top} Q^{s_t}_\star y_t^{s_t}\right) > u \right] \leq  \order\left(\frac{S^2A^3\ln^{20}(SAT/\delta)}{ u^9(1-\gamma)^{16}}\right).   
\end{align*}
\end{proof}

\section{Discussions on Convergence Notions for General Markov Games}\label{app: implication}

In general Markov games, learning the equilibrium policy pair \emph{on every state} is impossible because some state might have exponentially small visitation probability under all policies. Therefore, a reasonable definition of convergence is the convergence of the following quantity to zero: 
\begin{align}
    \frac{1}{T} \sum_{t=1}^T \max_{x,y}\left( V_{x_t,y}^{s_t} - V_{x,y_t}^{s_t} \right),   \label{eq: our goal}
\end{align}
which is similar to the best-iterate convergence defined in \pref{sec: prelim}, but over the state sequence visited by the players instead of taking max over $s$. It is also a strict generalization of the sample complexity bound for single-player MDPs under the discounted criteria (see e.g., \citep{lattimore2014near,wangq}). 

The path convergence defined in our work is, on the other hand, that the following quantity converges to zero:  
\begin{align}
    \frac{1}{T}\sum_{t=1}^T \max_{x,y}\left(x_t^{s_t^\top} Q_\star^{s_t} y^{s_t} - x^{s_t^\top} Q_\star^{s_t} y_t^{s_t}\right).  \label{eq: our convergence notin} 
\end{align}
Since $\max_{y} (x^{s^\top}Q_\star^s y^s) \leq \max_{y} (x^{s^\top }Q_{x,y}^s y^s) = \max_y V^s_{x,y}$ for any $x$, the convergence of \pref{eq: our goal} is stronger than \pref{eq: our convergence notin}.  

\paragraph{Implications of Path Convergence} Although \pref{eq: our convergence notin} does not imply the more standard best-iterate guarantee \pref{eq: our goal}, it still has meaningful implications. By definition, It implies that frequent visits to a state bring players' policies closer to equilibrium, leading to both players using near-equilibrium policies for all but $o(T)$ number of steps over time. 

Path convergence also implies that both players have no regret compared to the game value $V_\star^s$, which has been considered and motivated in previous works such as \citep{brafman2002r, tian2020provably}. To see this more clearly, we apply the results to the \emph{episodic} setting, where in every step, with probability $1-\gamma$, the state is redrawn from $s\sim \rho$ for some initial distribution $\rho$ (every time the state is redrawn from $\rho$, we call it a new episode). We can show that if \pref{eq: our convergence notin} vanishes, then every player's long-term average payoff is at least the game value. First, notice that if \pref{eq: our convergence notin} converges to zero, then
\begin{align}
    \sum_{t=1}^T (V_\star^{s_t} - x_t^{s_t^\top}Q_\star^{s_t}y_t^{s_t}) 
    &\leq \max_y \sum_{t=1}^T \left(
 x_t^{s_t^\top} Q_\star^{s_t} y^{s_t} - x_t^{s_t^\top}Q_\star^{s_t}y_t^{s_t}\right) \nonumber \\
 &\leq \sum_{t=1}^T \left(\max_y 
 x_t^{s_t^\top} Q_\star^{s_t} y^{s_t} - x_t^{s_t^\top}Q_\star^{s_t}y_t^{s_t}\right) = o(T). \label{eq: sublinear guarantee}
\end{align}
Now fix an $i$ and let $t_i$ be time index at the beginning of episode $i$. Let $E_t=1$ indicate the event that episode $i$ has not ended at time $t$. Then
\begin{align*}
    &\E\left[\sum_{t=t_i}^{t_{i+1}-1} \left(V_\star^{s_t} - x_t^{s_t^\top}Q_\star^{s_t}y_t^{s_t}\right) \right] \\
    &= \E\left[\sum_{t=t_i}^{\infty} \one[E_t=1] \left(V_\star^{s_t} -x_t^{s_t^\top}G^{s_t}y_t^{s_t} - \gamma V_\star^{s_{t+1}}\right)\right] \\
    &= \E\left[\sum_{t=t_i}^{\infty} \one[E_t=1] \left(V_\star^{s_t} -x_t^{s_t^\top}G^{s_t}y_t^{s_t} - \one[E_{t+1}=1] V_\star^{s_{t+1}}\right)\right] \\
    &=\E\left[ V_\star^{s_{t_i}}\right] - \E\left[\sum_{t=t_i}^{\infty} \one[E_t=1]x_t^{s_t^\top}G^{s_t}y_t^{s_t}\right] \\
    &= \E_{s\sim \rho}\left[V_\star^s\right] - \E\left[\sum_{t=t_i}^{t_{i+1}-1}x_t^{s_t^\top}G^{s_t}y_t^{s_t}\right].
\end{align*}
Combining this with \pref{eq: sublinear guarantee}, we get
\begin{align*}
\E\left[\sum_{t=1}^{T} x_t^{s_t^\top}G^{s_t}y_t^{s_t}\right] &\geq  (\text{\# episodes in $T$ steps})\E_{s\sim\rho}[V^s_\star] - o(T) \\
&\geq (1-\gamma)\E_{s\sim\rho}[V^s_\star] T - o(T). 
\end{align*}
Hence the one-step average reward is at least $(1-\gamma)\E_{s\sim\rho}[V^s_\star]$. A symmetric analysis shows that it is also at most $(1-\gamma)\E_{s\sim\rho}[V^s_\star]$. This shows that both players have no regret compared to the game value. Notice that this is only a loose implication of the path convergence guarantee because of the loose second inequality in \pref{eq: sublinear guarantee}.

\paragraph{Remark on the notion of ``last-iterate convergence'' in general Markov games}

While \pref{eq: our goal} corresponds to best-iterate convergence for general Markov games, an even stronger notion one can pursue after is ``last-iterate convergence.'' As argued above, it is impossible to require that the policies on all states to converge to equilibrium. To address this issue, we propose to study this problem under the episodic setting described above, in which the state is reset after every trajectory whose expected length is $\frac{1}{1-\gamma}$. In this case, last-iterate convergence will be defined as the convergence of the following quantity to zero when $i\rightarrow \infty$: 
\begin{align*}
    \E_{s\sim \rho}\left[ \max_{x,y}\left( V^s_{x_{t_i}, y} - V^s_{x, y_{t_i}} \right) \right]
\end{align*}
where we recall that $i$ is the episode index and $(x_{t_i}, y_{t_i})$ are the policies used by the two players at the beginning of episode $i$. While last-iterate convergence seems reasonable and possibly achievable, we are unaware of such results even for the degenerated case of single-player MDPs --- the standard regret bound corresponds to best-iterate convergence, while the techniques we are aware of to prove last-iterate convergence in MDPs require additional assumptions on the dynamics. 

\end{document}